\documentclass[a4paper]{article}
\usepackage[top=1.5cm]{geometry}
\usepackage{amsmath,amsthm,amssymb}
\usepackage{caption,subcaption,graphicx}
\usepackage{mathtools} % for coloneqq
\usepackage{authblk}
\usepackage{algorithm,algpseudocode}

\usepackage{tabularx,multirow,calc}
\newcolumntype{Y}{>{\centering\arraybackslash}X}
\usepackage{thmtools}

\usepackage{hyperref}
\usepackage[nameinlink]{cleveref}

\usepackage{float}
\usepackage{enumitem}
\usepackage{nicematrix}

\newcommand{\ket}[1]{|{#1}\rangle}
\newcommand{\bra}[1]{\langle{#1}|}
\newcommand{\kb}[1]{\mathinner{|{#1}\rangle\langle{#1}|}}
\newcommand{\ketbra}[2]{\mathinner{|{#1}\rangle\langle{#2}|}}
\newcommand{\braket}[2]{\mathinner{\langle{#1}|{#2}\rangle}}
\newcommand{\norm}[1]{\left\lVert#1\right\rVert}
\DeclarePairedDelimiter{\ceil}{\lceil}{\rceil}
\DeclareMathOperator{\erf}{erf}

\let\originalleft\left
\let\originalright\right
\renewcommand{\left}{\mathopen{}\mathclose\bgroup\originalleft}
\renewcommand{\right}{\aftergroup\egroup\originalright}

\newcommand{\ee}[1]{\mathbb{E}\left[{#1}\right]}
\newcommand{\pp}[1]{\mathbb{P}\left({#1}\right)}

\newcommand*\diff{\mathop{}\!\mathrm{d}}
\newcommand{\independent}{\perp\!\!\!\perp}

\newcommand*\polylog{\mathop{}\!\mathrm{polylog}}

\makeatletter
\newcommand{\doublewidetilde}[1]{{
  \mathpalette\double@widetilde{#1}
}}
\newcommand{\double@widetilde}[2]{
  \sbox\z@{$\m@th#1\widetilde{#2}$}
  \ht\z@=.9\ht\z@
  \widetilde{\box\z@}
}
\makeatother

\numberwithin{equation}{section}
\newcommand{\tagaligneq}{\refstepcounter{equation}\tag{\theequation}}

\theoremstyle{plain}
\newtheorem{thm}{Theorem}[section]
\newtheorem{theorem}[thm]{Theorem}
\newtheorem{proposition}[thm]{Proposition}
\newtheorem{lemma}[thm]{Lemma}
\newtheorem{corollary}[thm]{Corollary}

\theoremstyle{definition}
\newtheorem{definition}[thm]{Definition}
\newtheorem{example}[thm]{Example}
\newtheorem{assumption}[thm]{Assumption}
\newtheorem{notation}[thm]{Notation}
\newtheorem{convention}[thm]{Convention}

\usepackage{refcount} % for \getrefnumber
\newcounter{assumptionplusdummy}
\makeatletter
\newenvironment{assumptionplus}[1]{%
  \refstepcounter{assumptionplusdummy} % unique anchor for hyperlink
  \phantomsection
  \protected@edef\@currentlabel{\getrefnumber{#1}$^{+}$}%
  \protected@edef\@currentlabelname{Assumption \getrefnumber{#1}$^{+}$}%
  \def\Hy@tempa{assumptionplusdummy}%
  \expandafter\hyper@makecurrent\expandafter{\Hy@tempa}%
  \begin{trivlist}
  \item[\hskip\labelsep \bfseries Assumption \getrefnumber{#1}$^{+}$.]%
}{%
  \end{trivlist}
}
\makeatother

\theoremstyle{remark}
\newtheorem{remark}[thm]{Remark}

\title{Quantum analog-encoding for correlated Gaussian vectors and their exponentiation with application to rough volatility}
\date{\today}
\author[1]{Tassa Thaksakronwong}
\author[2]{Koichi Miyamoto}
\affil[1]{Graduate School of Engineering Science, the University of Osaka, Toyonaka, Osaka, Japan}
\affil[2]{Center for Quantum Information and Quantum Biology, the University of Osaka, Toyonaka, Osaka, Japan}

\begin{document}
\maketitle

\begin{abstract}
Quantum computing may speed up numerical problems involving large matrices that are demanding for classical computers, and active research on this possibility is ongoing.
In this paper, we propose quantum algorithms for the exact simulation of a normalised correlated Gaussian random vector $\ket{x}=\vec{x}/\norm{\vec{x}}$, $\vec{x}\sim\mathcal{N}(0,\Sigma)$, and its exponentiation $\ket{e^{\vec{x}}}= e^{\vec{x}}/\lVert e^{\vec{x}}\rVert$. 
Assuming that an $O(\polylog N)$-gate-depth quantum data loader for the covariance matrix $\Sigma\in\mathbb{R}^{N\times N}$ is available, the gate-depth complexities for preparing $\ket{x}$ and $\ket{e^{\vec{x}}}$ are $\widetilde{O}\left(\frac{\norm{\Sigma}_F}{\lambda_{\max}}\kappa^{1.5}\right)$ and $\widetilde{O}\left(\norm{\vec{x}}\frac{\norm{\Sigma}_F}{\lambda_{\max}}\kappa^{1.5}\right)$ respectively, where $\norm{\Sigma}_F$, $\lambda_{\max}$, $\kappa$ denote the Frobenius norm, maximal eigenvalue, and
condition number of $\Sigma$.
Motivated by financial applications, we provide an end-to-end resource analysis when $\vec{x}$ represents a sample path of a Riemann-Liouville or standard fractional Brownian motion, or of a stationary fractional Ornstein-Uhlenbeck process.
As a concrete example, we construct the quantum state encoding the rough Bergomi variance process and analyse the extraction of its time integral (related to the realised variance) via quantum amplitude estimation.
Depending on the behaviour of $\norm{\Sigma}_F/\lambda_{\max}$ and $\kappa$, the overall gate-depth complexity may range from sublinear to supercubic in $N$. Under specific conditions, subcubic complexity is achieved, indicating a quantum advantage over classical Cholesky-based sampling methods.
To our knowledge, this constitutes the first quantum algorithmic framework for the amplitude encoding of exponentiated Gaussian processes, providing foundational primitives for quantum-enhanced financial modelling.
\end{abstract}

\section{Introduction}
The multivariate Gaussian distribution is a fundamental model for correlated real-valued random vectors arising across statistical science and engineering, including genetics, biology, agriculture, medicine, and economics \cite{And03}. 
Exact simulation of a Gaussian vector $X\sim\mathcal{N}(0,\Sigma)$
with a dense covariance matrix $\Sigma\in\mathbb{R}^{N\times N}$ and no exploitable structure
typically requires $O(N^3)$ time to compute a square-root factorisation $\Sigma=SS^T$, most commonly via the Cholesky decomposition, and $O(N^2)$ time to generate a single sample through matrix-vector multiplication $X\stackrel{d}{=}SZ$, $Z\sim\mathcal{N}(0,I_N)$, where $I_N$ is the $N \times N$ identity matrix.
Approximate and FFT-based methods can reduce cost by exploiting structural assumptions such as near low rank, stationarity, or Toeplitz form, but typically at the expense of exactness or universality \cite[Appendix 3]{BMRS19}.

Practically important instances of this problem arise in mathematical finance: rough volatility models.
Early works in quantum finance largely focused on the Black-Scholes-Merton framework, in which the volatility of an asset is assumed to be a constant \cite{RGB18,SES+20,HGL+23}.
However, it has become increasingly clear in the mathematical finance literature that realistic modelling of volatility smiles and related market stylised facts requires fractional stochastic volatility models driven by stochastic processes that are ``rougher'' \cite{GJR18,Fuk23,BFF+23}, or ``smoother'' \cite{CR98}, than the standard Brownian motion.
Among such models, the family of log-normal fractional stochastic volatility models, including the rough Bergomi model \cite{BFG16,BDM21,BFF+23}, are defined through the exponentiation of auto-correlated Gaussian processes. 
The ability to prepare and manipulate exponentiated Gaussian sample paths therefore constitutes an essential algorithmic foundation for quantum approaches to rough volatility.

Recent quantum algorithms for fractional processes include \cite{BDP23}, which prepares quantum states that encode sample paths of some kinds of processes in their amplitudes (analog encoding): fractional generalisations of the Brownian bridge via the discrete sine transform, and L\'{e}vy processes via discretised stochastic integrals, both implemented by the quantum Fourier transform (QFT).
The former's use of DST provides substantial quantum speedup, but, in effect, imposes boundary constriants $Y_0=Y_T=0$ and restricts state preparation to uniform grids. 
Both constructions rely on either truncated Karhunen-Lo\`{e}ve expansions or discretisations and hence produce sample paths with covariances that approximate, rather than equal, the target covariance.
Therefore, an exact, structure-free quantum analogue of the covariance square-root factorisation-based simulation, in particular the exact simulation of fractional Brownian motion sample paths, has not yet been achieved.
Similarly, analog encoding of exponentiated fractional Gaussian processes, which need to be computed in some models,  also remains an unresolved issue in quantum finance \cite{BDP23,PSC+24}.

In this work, we propose quantum algorithms for the analog encoding of correlated Gaussian random vectors with exact covariances and their exponentiation. 
Our first main result is the preparation of the normalised state $\ket{x}=\vec{x}/\norm{\vec{x}}$, where $\vec{x}$ is a realisation of $X\sim\mathcal{N}(0,\Sigma)$.
Under the assumption that $O(\polylog N)$-gate-depth data loading is available for the covariance matrix $\Sigma\in\mathbb{R}^{N\times N}$,
the gate-depth complexity is given by $\widetilde{O}\left(\frac{\norm{\Sigma}_F}{\lambda_{\max}}\kappa^{1.5}\right)$,
where $\norm{\Sigma}_F$ is the Frobenius norm,
$\lambda_{\max}$ is the maximal eigenvalue, and
$\kappa$ is the condition number of $\Sigma$.
When $\Sigma$ is well-conditioned, i.e.\ $\kappa=O(1)$, it holds that the ratio ${\norm{\Sigma}_F}/{\lambda_{\max}}$ scales as $O(\sqrt{N})$ and our method's gate-depth complexity has a sub-linear dependence on the dimension $N$.

Our second main result is the preparation of the normalised state of the form
$\ket{\vec{f} \odot e^{c\vec{x}}}=(\vec{f} \odot e^{c\vec{x}})/\lVert \vec{f} \odot e^{c\vec{x}} \rVert$, where $\vec{f}\in\mathbb{R}^N\backslash\{\vec{0}\}$, $c\in\mathbb{R}\backslash\{0\}$, and $\odot$ and $e^{c\vec{x}}$ denote the component-wise product and exponentiation, respectively.
This form includes exponentiated Gaussian processes and log-normal fractional stochastic volatility models as special cases.
The corresponding complexity has an additional multiplicative factor $\norm{\vec{x}}$ to the preparation cost of $\ket{x}$, namely $\widetilde{O}\left(\norm{\vec{x}}\frac{\norm{\Sigma}_F}{\lambda_{\max}}\kappa^{1.5}\right)$.
This framework applies to general Gaussian processes with positive definite covariance matrices,
including Riemann-Liouville fractional Brownian motion, standard fractional Brownian motion, and stationary fractional Ornstein-Uhlenbeck processes for any Hurst parameter $H\in(0,1)$.
For these concrete classes, we provide a numerical analysis to determine the growth of
${\norm{\Sigma}_F}/{\lambda_{\max}}$ and $\kappa$ with respect to $N$, where the power-law behaviour is observed and the empirically estimated exponents are used to determine the overall complexity of our algorithm in terms of $N$.

As a concrete application, we provide an end-to-end construction of the quantum state embedding the rough Bergomi variance process and analyse the extraction of its time integral, also known in a financial context as the continuous limit of the realised variance, using quantum amplitude estimation (QAE) applied to the corresponding Riemann sum approximation.
The resulting overall gate-depth complexity dependence on $N$ is sub-cubic but super-quadratic for all $H\in(0,1)$, which shows the advantage compared to the classical Cholesky decomposition-based approach with a cubic complexity. 
For rough Bergomi-type models driven by more general Gaussian processes whose covariance matrices are well-conditioned, the gate-depth complexity can be reduced further: super-linear but sub-quadratic in $N$. 
To our knowledge, this constitutes the first exact and structure-free quantum solution for the analog encoding of exponentiated fractional Gaussian processes, establishing a foundational primitive for quantum stochastic simulation and quantum models of rough volatility.

The organisation of this paper is as follows. 
\Cref{prelim-sec} summarises conventions, assumptions, and underlying principles used to develop the results presented in this work.
\Cref{preparing-ket-x-section} shows how to prepare $\ket{x}$ and estimate $\norm{\vec{x}}$.
\Cref{numerical-section-label} presents numerical experiments conducted to determine the exponents of the power law behaviour of ${\norm{\Sigma}_F}/{\lambda_{\max}}$ and $\kappa$ on $N$, for the case of a Riemann-Liouville fractional Brownian motion, a standard fractional Brownian motion, and a stationary fractional Ornstein-Uhlenbeck process on a uniform grid.
\Cref{exp-non-lin-section} shows how to adopt the non-linear transformation technique of \cite{RR23} to prepare the state of the form $\ket{\vec{f} \odot e^{c\vec{x}}}$, including $\ket{e^{\vec{x}}}$ and rough Bergomi variance $\ket{V}=V/\norm{V}$ as special cases.
Together, we show how to output the discrete sum of the form $\frac{1}{N}\sum_{i=1}^Nf_ie^{cx_i}$, including the sum $\frac{T}{N}\sum_{i=1}^N V_{t_i}$, which approximates the rough Bergomi integrated variance $\int_0^T V_t \diff t$, as a special case.

\section{Preliminaries}\label{prelim-sec}
Throughout this work, all assumptions are assumed hold, while those with the name `Assumption plus' are assumed hold only if explicitly stated at the moment.
\begin{definition}[Bachmann–Landau notations]\label{bachmann-landau-notn-defn-1}
Let $f,g$ be eventually non-negative functions defined over positive real numbers.
We write $f(x)=O(g(x))$ if there exist constants $C>0$ and $x_0>0$ such that $0\leq f(x)\leq Cg(x)$ for all $x\geq x_0$,
$f(x)=o(g(x))$ if
$\lim_{x\to\infty}{f(x)}/{g(x)}=0$,
and $f(x)=\Theta(g(x))$ if both $f(x)=O(g(x))$ and $g(x)=O(f(x))$ hold.
\end{definition}
\begin{definition}[Soft Bachmann–Landau notations]
Let $f,g$ be eventually non-negative functions defined over positive real numbers. For $\vartheta\in\{O,o\}$ (see \autoref{bachmann-landau-notn-defn-1}), we write $f(x)=\widetilde{\vartheta}\left(g(x)\right)$ if there exists some $k\in\mathbb{R}$ such that $f(x)=\vartheta\left(g(x) \log^k(x)\right)$. We write $f(x)=\widetilde{\Theta}\left(g(x)\right)$ if there exist some $k_1,k_2\in\mathbb{R}$ such that both $f(x)=O\left(g(x) \log^{k_1}(x)\right)$ and $g(x)=O\left(f(x) \log^{k_2}(x)\right)$ hold. Note that the logarithmic factors are allowed to have negative powers.
\end{definition}
\subsection{Quantum computing preliminaries}
\begin{convention}[Quantum embedding of classical entries]
\label{convention-embed-classic-info}
In subsequent discussion, we define $n=\ceil{\log_2\left(N+1\right)}$ to be the number of system qubits required to encode $N$ classical data points. 
Given a classical vector $\vec{w}=(w_1,\dots,w_N)\in\mathbb{R}^N$, define $\ket{w}\coloneqq \vec{w}/\norm{\vec{w}}$ to be the corresponding (normalised) quantum state.
We adopt the convention that for $i=1,\dots,N$, the value $w_i/\norm{\vec{w}}$ is embedded in the amplitude of the computational basis state $\ket{i}$, which implies that the basis state $\ket{0}$ is unused for storing classical information. Accordingly, to multiply a matrix to a quantum state with correct indices, an $N\times N$ matrix is also embedded using bases $\ketbra{i}{j}$ whose indices start from $1$, i.e.\ $i,j=1,\dots,N$.
\end{convention}
\begin{convention}[Gate set]
We use the term `elementary gates' to mean arbitrary single-qubit and two-qubit gates\footnote{Since any two-qubit unitary can be decomposed into at most 3 CNOT gates and 15 single-qubit gates \cite{VW04}, this essentially reduces to the same asymptotic cost when using only single-qubit gates and CNOT gates.}.
\end{convention}
\begin{convention}[Complexity unit]\label{convention-gate-count-vs-depth}
The implementation cost of any quantum circuit presented in this work is measured in terms of either
elementary gates (gate counts) required, or elementary gate depth required.
Note that since any quantum algorithm requiring $O(d)$ gates can have at most $O(d)$ gate depth,
an algorithm whose complexity is given in gate-count unit may be referred to in gate-depth unit without affecting correctness.
\end{convention}
\begin{assumption}\label{assump-vector-loader}
For a real vector $\vec{w}\in\mathbb{R}^N$ whose entries are known, there exists a state-preparation unitary $U_{\ket{w}}$ such that
\[
U_{\ket{w}}\ket{0}^{\otimes n} 
= \ket{w}
\coloneqq
\frac{\vec{w}}{\norm{\vec{w}}}=\frac{1}{\norm{\vec{w}}}\sum_{i=1}^N w_i \ket{i},
\]
where the amplitudes of $\ket{0}$ and of $\ket{j}$ for $N+1\leq j \leq 2^n-1$ (if exist) are understood to be zero. We further assume that $U_{\ket{w}}$ can be implemented using $T_{U_{\ket{w}}}$ elementary gates. 
\end{assumption}
Loading a classical vector into unary-based quantum amplitudes using $O(N)$ qubits and $O(\log N)$ gate depth 
has been realised on actual hardware \cite{JDM+21}.
To create the circuit description of the corresponding state-preparation unitary, their method also takes $O(N \polylog N)$ classical computation time, but this is only needed once.
Since our algorithms operate via the standard, binary-based, amplitude encoding\footnote{Some works in the literature use the term `binary encoding' to mean `basis encoding', aka digital encoding, but what we mean here is the amplitude encoding, aka analog encoding, with the usual usage of $n=\ceil{\log_2 (N+1)}$ (or $n=\ceil{\log_2 N}$) qubits to encode $N$ classical data, as opposed to the unary-based basis usage which uses $N$ qubits to encode $N$ classical data.},
an additional conversion step from unary-based to binary-based amplitude encoding is required. This procedure, described in \cite[Proposition C.1]{BDP23}, uses $O(N)$ ancillary qubits and $O(N\log N)$ gates, but with depth $O(\log^2 N)$.
We formalise this into the following assumption.
\begin{assumptionplus}{assump-vector-loader}\label{assump-plus-TU-ket-y-log2-N}
We assume that $U_{\ket{w}}$ from \autoref{assump-vector-loader} can be implemented using $O(N)$ ancillary qubits and 
\[
T_{U_{\ket{w}}}=O(\log^2 N)
\]
elementary gate depth, achievable via parallel arrangement of $O(N\log N)$ elementary gates, together with a one-time $O(N \polylog N)$ classical computation to create the circuit description of $U_{\ket{w}}$. The unitaries $U_{\ket{w}}$ and $U_{\ket{w}}^{\dagger}$ are reusable without requiring classical re-computation. 
\end{assumptionplus}

\begin{definition}[Block-encoding] \label{block-encoding-def}
Given a matrix $A\in\mathbb{C}^{N \times N}$, any unitary matrix $U_A$ is called an $(\alpha,a,\varepsilon)$-block-encoding of $A$ if
$
\norm{A-\alpha\Big(\bra{0}^{\otimes a}\otimes I_N\Big) U_A \Big(\ket{0}^{\otimes a}\otimes I_N\Big)}\leq \varepsilon
$
holds, where $\alpha>0$, $a\in\mathbb{N}\cup\{0\}$, and $\varepsilon\geq 0$.
\end{definition}
\begin{assumption}\label{assump-abstract-block-encoding}
Suppose we have access to an $(\alpha_{U_\Sigma},a_{U_\Sigma},\varepsilon_{U_\Sigma})$-block-encoding $U_{\Sigma}$ of $\Sigma$ for some $\alpha_{U_\Sigma}>0$, $a_{U_\Sigma}\in\mathbb{N}\cup\{0\}$, and $\varepsilon_{U_\Sigma}\geq 0$. Suppose that $U_{\Sigma}$ can be implemented using $T_{U_\Sigma}$ elementary gates.
\end{assumption}
In developments of quantum linear system solver, various works in the literature (e.g., \cite[Theorem 5.1 and Theorem A.1 in the full version]{KP16}, \cite[Lemma 1]{WZP18}, and \cite[Theorem 4]{CGJ19}\footnote{Theorem 1 of the full version: arXiv:1804.01973.}) assume state-preparation circuit in a controlled fashion that allows for the embedding of 
dense matrices, which covariance matrices in most interesting applications generally are, into quantum algorithms\footnote{See how \autoref{assump-QROM-like-unitary} implies \autoref{block-encoding-of-sigma} in the subsequent discussion.}.
Quantum random access memory (QRAM) \cite{GLM08} is known to provide a sufficient condition for the realisation of such a data-loading circuit.
However, we do not require the ability to rewrite or update the data structure, but only to embed a fixed matrix. 
In this sense, assuming QRAM may be excessive. 
Rather, we may instead refer to this as a data loader with polylogarithmic depth and formalise it as follows.
\begin{assumptionplus}{assump-abstract-block-encoding}\label{assump-QROM-like-unitary}
Let $A\in\mathbb{R}^{N\times N}$ be a matrix whose entries are known and let $a=\ceil{\log_2 (N+1)}$ be the number of qubits in the ancillary register so that it can represent $\{\ket{0},\ket{1},\dots,\ket{N}\}$ states. We assume the ability to construct the following unitary operators 
\begin{align*}
&U_{\mathcal{R}}:\ket{0}^{\otimes a}\ket{i}\mapsto \ket{\psi_i}\coloneqq\sum_{k=1}^{N} \frac{A_{i,k}}{\norm{A_{i,\cdot}}} \ket{i,k}, \\
&U_{\mathcal{L}}:\ket{0}^{\otimes a}\ket{j}\mapsto \ket{\phi_j}\coloneqq\sum_{\ell=1}^N\frac{\norm{A_{\ell,\cdot}}}{\norm{A}_F}\ket{\ell,j}
\end{align*}
that hold for all $i,j\in\{1,\dots,N\}$, by using $O(\polylog N)$ elementary gate depth. Here, the second register is assumed to be of $n=\ceil{\log_2 (N+1)}$ qubits, $A_{i,\cdot}$ denotes the $i$-th row vector of $A$, and $\norm{\cdot}_F$ denotes the Frobenius norm.
\end{assumptionplus}
\begin{remark}
Note the register swap between the control and ancilla registers in the operation of $U_{\mathcal{R}}$, which was not originally performed in
the statement of \cite{KP16,WZP18,CGJ19}.
We need this modification to ensure that the resulting block-encoding complies with \autoref{block-encoding-def}, i.e.\ the embedded matrix is indeed located in the upper-left corner of its block-encoding, cf.\ the proof of \autoref{block-encoding-of-sigma}. 
Since both registers consist of $a=\ceil{\log_2{(N+1)}}=n$ qubits, this modification can be done simply by applying SWAP gates qubit-by-qubit after the data-loading procedure, using at most $\ceil{\log_2{(N+1)}}$ SWAP gates. Hence, this modification does not affect the assumed implementation cost of $O(\polylog N)$ gate depth.
\end{remark}
The following proposition shows that if \autoref{assump-QROM-like-unitary} holds, we can construct a block-encoding of a dense matrix using $U_{\mathcal{L}}$ and $U_{\mathcal{R}}$.
Similar results were given in \cite{GSLW19}\footnote{Theorem 50 of the full version: arXiv:1806.01838.} and \cite[Lemma 6]{CGJ19}\footnote{Lemma 25 of the full version: arXiv:1804.01973.}, but with a block-encoding error $\varepsilon$. However, in this work, we assume that the preparations of the states specified in \autoref{assump-QROM-like-unitary} via $U_{\mathcal{L}}$ and $U_{\mathcal{R}}$ are exact. Under this assumption, the resulting block-encoding has zero error, cf.\ the proof of \autoref{block-encoding-of-sigma}.
\begin{proposition}\label{block-encoding-of-sigma}
Under \autoref{assump-QROM-like-unitary}, the unitary $U_A\coloneqq U_{\mathcal{R}}^{\dagger}U_{\mathcal{L}}$ is a $(\norm{A}_F,\ceil{\log_2 (N+1)},0)$-block-encoding of $A$, and can be implemented using $T_{U_A}=O(\polylog N)$ elementary gate depth.
\end{proposition}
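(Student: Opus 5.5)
The argument computes the top-left block of $U_{\mathcal{R}}^{\dagger}U_{\mathcal{L}}$ entrywise, working in the computational basis with the first register the ancilla of $a=n$ qubits and the second the system register. For $i,j\in\{1,\dots,N\}$, the $(i,j)$ entry of $\left(\bra{0}^{\otimes a}\otimes I\right)U_{\mathcal{R}}^{\dagger}U_{\mathcal{L}}\left(\ket{0}^{\otimes a}\otimes I\right)$ equals
\[
\bigl(\bra{0}^{\otimes a}\bra{i}\bigr)U_{\mathcal{R}}^{\dagger}U_{\mathcal{L}}\bigl(\ket{0}^{\otimes a}\ket{j}\bigr)=\braket{\psi_i}{\phi_j}.
\]
By \autoref{assump-QROM-like-unitary}, $\ket{\psi_i}=\sum_k \frac{A_{i,k}}{\norm{A_{i,\cdot}}}\ket{i,k}$ and $\ket{\phi_j}=\sum_\ell\frac{\norm{A_{\ell,\cdot}}}{\norm{A}_F}\ket{\ell,j}$. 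The only common basis vector is $\ket{i,j}$, so
\[
\braket{\psi_i}{\phi_j}=\frac{A_{i,j}}{\norm{A_{i,\cdot}}}\cdot\frac{\norm{A_{i,\cdot}}}{\norm{A}_F}=\frac{A_{i,j}}{\norm{A}_F},
\]
using that the entries of $A$ are real, so conjugation is harmless. Multiplying by $\alpha=\norm{A}_F$ recovers $A_{i,j}$ exactly, which gives zero error in \autoref{block-encoding-def}.

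The main obstacle is bookkeeping rather than computation. Under \autoref{convention-embed-classic-info}, the system register also contains the index $0$ and indices $N+1,\dots,2^n-1$, and the assumption specifies $U_{\mathcal{R}}$ and $U_{\mathcal{L}}$ only on inputs $\ket{0}^{\otimes a}\ket{i}$ with $1\le i\le N$. I will take the embedded $A$ to be the $2^n\times 2^n$ matrix that is zero outside the indices $1,\dots,N$. The block-encoding claim then concerns the $N\times N$ block indexed by $1,\dots,N$, which is how the paper uses it.

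A zero row of $A$ also needs care, since $\ket{\psi_i}$ is then undefined. I will read the assumption as implicitly requiring nonzero rows. Alternatively, any unit vector can be chosen for $\ket{\psi_i}$: the amplitude $\norm{A_{i,\cdot}}/\norm{A}_F$ in $\ket{\phi_j}$ is then zero, so the inner product is still $0=A_{i,j}/\norm{A}_F$. For covariance matrices, which are positive definite, this case does not arise.

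The cost bound follows from the assumption. $U_{\mathcal{L}}$ and $U_{\mathcal{R}}$ each have depth $O(\polylog N)$, and inverting $U_{\mathcal{R}}$ reverses the circuit and takes the adjoint of each gate, so the depth is unchanged. Composing the two circuits adds the depths, giving $T_{U_A}=O(\polylog N)$. The number of ancilla qubits is $a=\ceil{\log_2(N+1)}$, as stated.
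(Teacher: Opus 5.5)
Your proof is correct and follows essentially the same route as the paper. The paper likewise computes $\braket{\psi_i}{\phi_j}=A_{i,j}/\norm{A}_F$ and then assembles the top-left block as $\sum_{r,s=1}^{N}\ket{r}\braket{\psi_r}{\phi_s}\bra{s}=A/\norm{A}_F$, which is the same as your entrywise computation, and it gets the depth bound from one call each to $U_{\mathcal{L}}$ and $U_{\mathcal{R}}^{\dagger}$.

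One small nit in your optional zero-row aside: the substitute $\ket{\psi_i}$ cannot be an arbitrary unit vector, since it could overlap $\ket{\phi_j}$ on some $\ket{\ell,j}$ with $\ell\neq i$. Taking it of the form $\ket{i}\otimes\ket{u}$ is what guarantees $\braket{\psi_i}{\phi_j}=0$.
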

\begin{proof}
Note that for $i,j=1,\dots,N$, the states $\ket{\psi_i},\ket{\phi_j}$ preparable from $U_{\mathcal{R}},U_{\mathcal{L}}$ satisfy
\begin{align*}
\braket{\psi_i}{\phi_j}
=\sum_{k=1}^N\sum_{\ell=1}^N \frac{A_{i,k}}{\norm{A_{i,\cdot}}} \frac{\norm{A_{\ell,\cdot}}}{\norm{A}_F} \braket{i}{\ell}\braket{k}{j}
=\frac{A_{i,j}}{{\norm{A_{i,\cdot}}}} \frac{{\norm{A_{i,\cdot}}}}{\norm{A}_F}
=\frac{A_{i,j}}{\norm{A}_F}. \tagaligneq \label{inner-prod-psi-i-phi-j}
\end{align*}
Since the identity operator $I_N$ can be written in the computational basis\footnote{Recall that our convention uses indices starting from $\ket{1}$, not from $\ket{0}$, cf.\ \autoref{convention-embed-classic-info}.} as $I_N=\sum_{r=1}^N\kb{r}$,
 and $\ket{0}^{\otimes a} \otimes \kb{r}=\big(\ket{0}^{\otimes a}\ket{r}\big)\bra{r}$
by a standard property\footnote{More specifically, the so-called mixed-product property, which states $(A\otimes B)(C \otimes D)=AC\otimes BD$, provided that each matrix product makes sense. Note that scalars can be regarded as matrices of size $1\times 1$.} of tensor products, we have
\begin{align*}
&U_{\mathcal{R}}\left(\ket{0}^{\otimes a}\otimes I_N\right)
=U_{\mathcal{R}}\left(\ket{0}^{\otimes a}\otimes \sum_{r=1}^N\kb{r}\right)
=\sum_{r=1}^N U_{\mathcal{R}}\Big(\ket{0}^{\otimes a}\ket{r}\Big)\bra{r}
=\sum_{r=1}^N \ket{\psi_r}\bra{r}.
\end{align*}
In the same manner, $U_{\mathcal{L}}\left(\ket{0}^{\otimes a}\otimes I_N\right)
=\sum_{s=1}^N \ket{\phi_s}\bra{s}$.
Thus,
\begin{align*}
\Big(\bra{0}^{\otimes a}\otimes I_N\Big) U_{\mathcal{R}}^{\dagger}U_{\mathcal{L}} \Big(\ket{0}^{\otimes a}\otimes I_N\Big)
=\sum_{r=1}^N\sum_{s=1}^N \ket{r}
\braket{\psi_r}{\phi_s}
\bra{s}
\stackrel{\eqref{inner-prod-psi-i-phi-j}}{=}\frac{1}{\norm{A}_F}\sum_{r=1}^N\sum_{s=1}^N A_{r,s} \ketbra{r}{s}
=\frac{A}{\norm{A}_F},
\end{align*}
where the last equality follows according to \autoref{convention-embed-classic-info}. Consequently, we have that
\begin{align*}
&\norm{A-\norm{A}_F\Big(\bra{0}^{\otimes a}\otimes I_N\Big) U_{\mathcal{R}}^{\dagger}U_{\mathcal{L}} \Big(\ket{0}^{\otimes a}\otimes I_N\Big)}
=\norm{A-{\norm{A}_F}\frac{A}{{\norm{A}_F}}}
=0,
\end{align*}
where $a=\ceil{\log_2 (N+1)}$ under \autoref{assump-QROM-like-unitary}.
Following \autoref{block-encoding-def}, we have shown that $U_{\mathcal{R}}^{\dagger}U_{\mathcal{L}}$ is a $(\norm{A}_F,\ceil{\log_2 (N+1)},0)$-block-encoding of $A$. The total gate depth required follows directly as we use exactly one $U_{\mathcal{L}}$ and one $U_{\mathcal{R}}^{\dagger}$.
\end{proof}
\subsection{Mathematical preliminaries}
In accordance with the usual settings of the related classical algorithms,
we restrict our focus to only the simulation of a correlated Gaussian random vector with mean zero, $X\sim\mathcal{N}(0,\Sigma)$.
The letter $X$ is used solely for expressing a centered correlated Gaussian vector, with $\vec{x}\in\mathbb{R}^N$ denoting a realisation of $X$, while the letter $Z$ always denotes $Z\sim\mathcal{N}(0,I_N)$, a standard Gaussian vector, with $\vec{z}\in\mathbb{R}^N$ denoting a realisation of $Z$.
A Gaussian process $G=(G_t)_{t\in[0,T]}$ for $T>0$ is a stochastic process whose finite-dimensional distributions are Gaussian distributions. Similarly, we restrict our attention to only centered Gaussian processes. That is, for any finite index set $\mathcal{P}=(t_i)_{i=1}^N\subset [0,T]$, $(G_{t_1},\dots,G_{t_N})\sim\mathcal{N}(0,\Sigma)$ holds for some $\Sigma$.
Given a covariance matrix $\Sigma\in\mathbb{R}^{N\times N}$,
the simulation of a realisation $\vec{x}$ of $X\sim\mathcal{N}(0,\Sigma)$ is obtained by taking the product $\vec{x}\equiv S\vec{z}$ for $S$ being any matrix that satisfies $\Sigma = SS^T$.
When $\Sigma$ is given by the auto-covariance of a Gaussian process $G$ on some index set $\mathcal{P}$, $\vec{x}$ represents exact\footnote{It is called exact because the random vector $X$ has a mathematically identical law (distribution) to the law of $(G_{t_1},\dots,G_{t_N})$, since a centered Gaussian distribution is characterised solely by the covariance matrix.} values of a sample path of $G$ on that index set.
It can be shown easily that there exist inifinitely many such matrices $S$. 
The classical Cholesky-based simulation takes this $S$ to be the lower triangular matrix $L$ obtained from the Cholesky decomposition $\Sigma=LL^T$. 
In this work, we use the unique symmetric matrix square root $\Sigma^{1/2}$, as defined below.
\begin{proposition}[Uniqueness of the positive semi-definite square root]\label{horn-johnson-uniqueness-lemma}
Let $A\in\mathbb{R}^{N\times N}$ be a real symmetric positive semi-definite matrix.
Then there exists a unique real symmetric positive semi-definite matrix $B\in\mathbb{R}^{N\times N}$ such that $A = B^2$.
\end{proposition}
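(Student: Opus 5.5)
Existence will come from the spectral theorem and uniqueness from a polynomial-interpolation argument. Since $A$ is real symmetric, there is an orthogonal $Q\in\mathbb{R}^{N\times N}$ and a real diagonal $D=\mathrm{diag}(\lambda_1,\dots,\lambda_N)$ with $A=QDQ^T$. Positive semi-definiteness gives $\lambda_i=\vec{e}_i^TQ^TAQ\vec{e}_i\geq 0$. I will define
\[
B\coloneqq QD^{1/2}Q^T,\qquad D^{1/2}=\mathrm{diag}\big(\sqrt{\lambda_1},\dots,\sqrt{\lambda_N}\big).
\]
Then $B$ is real and symmetric, and $\vec{v}^TB\vec{v}=\sum_i\sqrt{\lambda_i}\,(Q^T\vec{v})_i^2\geq 0$, so $B$ is positive semi-definite. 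Moreover $B^2=QD^{1/2}Q^TQD^{1/2}Q^T=QDQ^T=A$.

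For uniqueness, let $\mu_1,\dots,\mu_m$ be the distinct eigenvalues of $A$. By Lagrange interpolation there is a real polynomial $p$ with $p(\mu_j)=\sqrt{\mu_j}$ for every $j$. Then $p(A)=Qp(D)Q^T=QD^{1/2}Q^T=B$.

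Now let $C$ be any real symmetric positive semi-definite matrix with $C^2=A$. Then $C$ commutes with $C^2=A$, hence with $p(A)=B$. Two commuting real symmetric matrices are simultaneously orthogonally diagonalisable, so $B=PD_BP^T$ and $C=PD_CP^T$ for a single orthogonal $P$, where $D_B,D_C$ are diagonal with non-negative entries by positive semi-definiteness. From $B^2=A=C^2$ we get $D_B^2=D_C^2$ entrywise. Because the square root is unique on $[0,\infty)$, this forces $D_B=D_C$, and therefore $C=B$.

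The main obstacle is the simultaneous diagonalisation step. To justify it, I would note that $C$ preserves each eigenspace of $B$, since $BC=CB$. On each such eigenspace the restriction of $C$ is symmetric, so it can be orthogonally diagonalised there. Concatenating these orthonormal bases gives $P$.

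An alternative that avoids this step is to argue eigenspace by eigenspace directly. If $A\vec{v}=\mu\vec{v}$, write $\vec{v}$ in an eigenbasis of $C$ with eigenvalues $c_k\geq 0$. Then $C^2\vec{v}=\mu\vec{v}$ forces every component with $c_k^2\neq\mu$ to vanish, so $C\vec{v}=\sqrt{\mu}\,\vec{v}$. Hence $C$ agrees with $B$ on every eigenspace of $A$, and these eigenspaces span $\mathbb{R}^N$. I would likely present this version, since it is shorter.
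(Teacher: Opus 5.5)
Your proof is correct. The paper gives no argument of its own: it cites \cite[Theorem~7.2.6]{HJ13} with $k=2$, a result about $k$-th roots of Hermitian positive semi-definite matrices in general. That result also records that the root is a polynomial in $A$ and is real when $A$ is real.

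Your first uniqueness argument is essentially the standard proof behind that citation, specialised to $k=2$ and carried out directly over $\mathbb{R}$. It runs: spectral theorem for existence, Lagrange interpolation to get $B=p(A)$, commutation of any other root $C$ with $A=C^2$ and hence with $B$, then simultaneous orthogonal diagonalisation, and finally uniqueness of non-negative square roots of the diagonal entries.

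Your eigenspace-by-eigenspace alternative is a genuinely different and more economical route. It needs neither the interpolating polynomial nor simultaneous diagonalisation, only that $C$ is orthogonally diagonalisable with non-negative eigenvalues.

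One step there is left implicit: you use $B\vec{v}=\sqrt{\mu}\,\vec{v}$ for $\vec{v}$ in the $\mu$-eigenspace of $A$ without saying why. The cleanest fix is to note that your computation applies verbatim to $B$ itself, since $B$ is also a symmetric positive semi-definite square root of $A$. So it shows directly that every such root acts as $\sqrt{\mu}$ on each eigenspace of $A$, and uniqueness follows at once.

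The citation buys the paper brevity and extra structure, such as $B=p(A)$ and the general $k$-th-root statement. Your version buys a self-contained proof in exactly the real setting the paper uses.
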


\begin{proof}
See \cite[Theorem~7.2.6]{HJ13} with $k=2$.
\end{proof}

\begin{definition}[Positive semi-definite square root]\label{defn-sigma12}
Given a real symmetric positive semi-definite matrix
$\Sigma\in\mathbb{R}^{N\times N}$, we denote by $\Sigma^{1/2}$ its unique real symmetric positive semi-definite square root from \autoref{horn-johnson-uniqueness-lemma} satisfying $\Sigma = (\Sigma^{1/2})^2$.
\end{definition}

\begin{proposition}[Spectral form of the positive semi-definite square root]\label{spectral-form-Sigma12-prop}
Let $\Sigma\in\mathbb{R}^{N\times N}$ be a real symmetric positive semi-definite matrix.
For any spectral decomposition $\Sigma=\sum_{k=1}^N \lambda_k \ket{v_k}\bra{v_k}$,
where $\{\lambda_k\}_{k=1}^N$ are the eigenvalues of $\Sigma$ and $\{\ket{v_k}\}_{k=1}^N$ is an orthonormal eigenbasis,
the positive semi-definite square root $\Sigma^{1/2}$ is given by
\[
\Sigma^{1/2} = \sum_{k=1}^N \sqrt{\lambda_k}\ket{v_k}\bra{v_k}. \tagaligneq \label{eigen-decom-express-sigma12}
\]
This expression holds for all choices of orthonormal eigenbasis.
\end{proposition}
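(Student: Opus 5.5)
Given any spectral decomposition $\Sigma=\sum_{k=1}^N \lambda_k \ket{v_k}\bra{v_k}$ with orthonormal eigenbasis $\{\ket{v_k}\}$, I will define $B\coloneqq\sum_{k=1}^N \sqrt{\lambda_k}\ket{v_k}\bra{v_k}$. The plan is to check that $B$ has the three properties characterising $\Sigma^{1/2}$ in \autoref{horn-johnson-uniqueness-lemma}, and then to conclude $B=\Sigma^{1/2}$ by that uniqueness. Since $B$ is built from an arbitrary orthonormal eigenbasis, this also gives the final sentence of the statement.

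First, $B$ is well defined and real. Because $\Sigma$ is positive semi-definite, every $\lambda_k=\bra{v_k}\Sigma\ket{v_k}\geq 0$, so $\sqrt{\lambda_k}$ is real and nonnegative. The orthonormal eigenbasis of a real symmetric matrix can be taken real. If a complex eigenbasis is allowed, I will instead group terms by distinct eigenvalue: $\sum_{k:\lambda_k=\mu}\ket{v_k}\bra{v_k}$ is the orthogonal projector onto the $\mu$-eigenspace, which is independent of the basis chosen and is a real matrix. Hence $B=\sum_{\mu}\sqrt{\mu}\,P_\mu$ is real.

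Next, $B$ is symmetric, since each $\ket{v_k}\bra{v_k}$ is Hermitian and the coefficients are real. It is positive semi-definite because, for any $\vec{w}$, $\bra{w}B\ket{w}=\sum_k\sqrt{\lambda_k}\,|\braket{v_k}{w}|^2\geq 0$. Finally, orthonormality $\braket{v_k}{v_l}=\delta_{kl}$ gives $B^2=\sum_{k,l}\sqrt{\lambda_k\lambda_l}\ket{v_k}\braket{v_k}{v_l}\bra{v_l}=\sum_k\lambda_k\ket{v_k}\bra{v_k}=\Sigma$.

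Uniqueness in \autoref{horn-johnson-uniqueness-lemma} then yields $B=\Sigma^{1/2}$ as in \autoref{defn-sigma12}, for every choice of eigenbasis. The only delicate step is the realness and basis-independence point, which the projector grouping handles.
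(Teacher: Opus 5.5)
Your proposal is correct and matches the paper's proof: define $B=\sum_k\sqrt{\lambda_k}\ket{v_k}\bra{v_k}$, check that it is real symmetric positive semi-definite with $B^2=\Sigma$, and conclude $B=\Sigma^{1/2}$ by the uniqueness in \autoref{horn-johnson-uniqueness-lemma}. Your projector-grouping argument for realness when the eigenbasis may be complex is extra care that the paper leaves implicit.
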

\begin{proof}
The facts that $\Sigma$ must admit a spectral decomposition and that ${}^\forall k$, $\lambda_k\geq 0$ are standard.
If $\Sigma=\sum_{k=1}^N \lambda_k \ket{v_k}\bra{v_k}$ for some orthonormal eigenbasis $\{\ket{v_k}\}_{k=1}^N$, define $B=\sum_{k=1}^N \sqrt{\lambda_k} \ket{v_k}\bra{v_k}$.
Then $B$ is real symmetric positive semi-definite and satisfies $B^2=\Sigma$. By uniqueness (\autoref{horn-johnson-uniqueness-lemma}), we must have $B=\Sigma^{1/2}$.
Since the same argument applies to any other choice of orthonormal eigenbasis, the expression \eqref{eigen-decom-express-sigma12} holds regardless of that choice.
\end{proof}
\begin{remark}
Although a covariance matrix $\Sigma$ can have zero eigenvalues, 
it is well-known that $\det \Sigma=0$ is equivalent to the existence of an almost surely linear dependence between the components of $X\sim\mathcal{N}(0,\Sigma)$.
Since there are no particular incentives for generating linearly dependent Gaussian variables even in the classical settings, we restrict our focus to only the strictly positive definite covariance matrices.
\end{remark}
We denote by $\lambda_{\min}$, $\lambda_{\max}$, and $\kappa\equiv \lambda_{\max}/\lambda_{\min}$ the minimal, maximal eigenvalues, and the condition number of $\Sigma$.
In the context of path simulations, usually the problem size $N$ is the number of blocks in the time discretisation $\mathcal{P}^N_{[0,T]}=(t_i)_{i=0}^N$ of an interval $[0,T]$ one wants to sample a Gaussian path from. 
In such cases, the $(i,j)$-entries of the covariance matrix $\Sigma$ are computed on the grid $\mathcal{P}^N_{[0,T]}$ and therefore are functions of $N$, usually of $(t_i,t_j,N)$. 
This means the parameters  $\lambda_{\min}$, $\lambda_{\max}$, $\kappa$, and also $\norm{\Sigma}_F$, the Frobenius norm of $\Sigma$, are inherently functions of $N$.
Unlike $\norm{\Sigma}_F$ whose value can readily be calculated from the already known entries of $\Sigma$, 
the quantities $\lambda_{\max}$ and $\kappa$ are often unknown,
but are required in order to use the pre-existing quantum algorithm results, e.g.\ \autoref{prop-for-Hc2}.
A full spectral decomposition costs as much as $O(N^3)$ classically, destroying the purpose of developing a quantum algorithm.
Therefore, we do not assume knowledge of the exact values of $\lambda_{\max}$ and $\kappa$, but we do assume some form of estimates, which we formulate in what follows.
Our numerical analysis in \autoref{num-results-sub-sec} supports the plausibility of this assumption.
\begin{assumption}\label{assmp-estimate-lamb-max-kappa}
Suppose that we can estimate $\lambda_{\max}$ and $\kappa$ by $\widetilde{\lambda}_{\max}$ and $\widetilde{\kappa}$ satisfying 
\begin{equation*}
\begin{gathered}
\lambda_{\max}\leq \widetilde{\lambda}_{\max} \leq L \lambda_{\max}, \\
\kappa\leq \widetilde{\kappa} \leq K \kappa, 
\end{gathered}\tagaligneq \label{main-thm-col-assump2}
\end{equation*}
for some $L,K\geq 1$ that may depend on $N$. The values $\widetilde{\lambda}_{\max},\widetilde{\kappa},L,K$ are assumed known and available for algorithmic use.
\end{assumption}
\begin{assumptionplus}{assmp-estimate-lamb-max-kappa}\label{assump-LK-plus}
In addition to \autoref{assmp-estimate-lamb-max-kappa}, suppose that $L$ and $K$ are $\widetilde{\Theta}(1)$.
The values $\widetilde{\lambda}_{\max},\widetilde{\kappa},L,K$ are assumed known and available for algorithmic use.
\end{assumptionplus}
\begin{remark}\label{rem-abt-LK-assmp-plus}
\autoref{assump-LK-plus} implies that $L$ and $K$ may depend on $N$, but are at most polylogarithmic in $N$, and, essentially, 
our estimates are assumed to satisfy
$\widetilde{\lambda}_{\max}=\widetilde{\Theta}(\lambda_{\max})$ and $\widetilde{\kappa}=\widetilde{\Theta}(\kappa)$.
Empirical results show that this can be achieved in practical settings, see \autoref{remark-emp-result-support-assump}.
\end{remark}
In quantum computation, the accessible objects are normalised states such as $Z/\norm{Z}$ and $X/\norm{X}$. 
From a geometric perspective, it is a well-known result that $Z/\norm{Z}$ is uniformly distributed on the unit sphere 
$\mathcal{S}^{N-1}=\{\widehat{u}\in\mathbb{R}^N \mid \norm{\widehat{u}}=1\}$, 
while $\norm{Z}$ follows a chi-distribution. 
More importantly, the random variables $Z/\norm{Z}$ and $\norm{Z}$ are independent (see e.g.\ \cite[Exercise 3.3.7]{Ver18}). 
This independence has an important consequence that,
given a realisation $\ket{z}=\vec{z}/\norm{\vec{z}}$, where $\vec{z}$ is a realisation of a standard Gaussian vector $Z$, 
the transformed vector $\Sigma^{1/2}\ket{z}$, its norm $\lVert \Sigma^{1/2}\ket{z}\rVert$, and the normalised state
$
\ket{x}=\frac{\Sigma^{1/2}\ket{z}}{\lVert \Sigma^{1/2}\ket{z}\rVert}
$
are all in fact independent of $\norm{\vec{z}}$, since $\ket{z}\independent \norm{\vec{z}}$. 
Therefore, once we prepare $\ket{x}$ by an algorithm proprosed in \autoref{preparing-ket-x-section}
and we want to recover the corresponding $\norm{\vec{x}}$ such that $\norm{\vec{x}}\ket{x}\stackrel{d}{=} \vec{x}$ becomes a realisation of $X$ with correct covariance matrix $\Sigma$,
we can use this independence property to create $\norm{\vec{x}}$ from multiplying $\norm{\vec{z}}$ with $\lVert \Sigma^{1/2} \ket{z} \rVert$ since $\norm{\vec{z}}\ket{z}\stackrel{d}{=} \vec{z}$.
This means that only the normalised state $\ket{z}$ (and subsequently $\ket{x}$) is manipulated throughout the quantum circuit, while the information of $\norm{\vec{z}}$ is kept entirely as classical data and used only when classical reconstruction is required. 
This observation will be used repeatedly in our treatment of normalisation factors in the subsequent sections.

Finally, we introduce a technical property and formulate a corresponding assumption on bounds for the components of $\vec{x}$, particularly when $\vec{x}$ represents a sample path of a Gaussian process on a finite grid. 
This assumption (\autoref{assump-inf-norm-bound-for-vec-x}) is required and used only for the exponentiation procedure, to be developed in \autoref{exp-non-lin-section}.
\begin{definition}[$\ell_{\infty}$-norm]
For any $\vec{v}=(v_1,\dots,v_N)\in\mathbb{R}^N$, denote $\norm{\vec{v}}_\infty\coloneqq \max_{i=1,\dots,N}|v_i|$.
\end{definition}
It is widely known that the running supremum of a centered and continuous Gaussian process $G$ on $[0,1]$ has an exponential tail (see e.g.\ {\cite[Theorem 4.2]{Nou12}}, aka the Borell-TIS inequality).
This means that, with high probability $\beta\in[0,1)$, 
there exists a constant $\Xi>0$ such that $\pp{\sup_{u\in[0,1]}|G_u|\leq \Xi} \geq \beta$ (see \autoref{bound-xi-on-norm-x-infty}).
When $\vec{x}$ is generated by the covariance matrix of $G$ on a finite grid, as it is obtained from an exact simulation,
$\norm{\vec{x}}_\infty \leq \sup_{u\in[0,1]}|G_u|\leq \Xi$ holds regardless of $N$ with probability at least $\beta$.
If $G$ is self-similar, the same principle extends naturally to $[0,T]$, for a fixed $T>0$.
This motivates the following assumption.
We give a supplementary discussion about how to determine the size of $\Xi$ given $\beta$ and other related characteristics of $G$ in \autoref{appndx-inf-norm-bnd}.
\begin{assumption}\label{assump-inf-norm-bound-for-vec-x}
We restrict our attention to the realisations $\vec{x}$ of $X\sim\mathcal{N}(0,\Sigma)$ that satisfy
\[
{}^\forall N\in\mathbb{N},\quad\norm{\vec{x}}_\infty \leq \Xi,
\]
for some constant $\Xi>0$ independent of $N$. In other words, we assume $\Xi=\Theta(1)$.
\end{assumption}
\begin{remark}
Note that $\vec{x}$ depends on $N$ through both $\Sigma$ and the sampled $\vec{z}$, and \autoref{assump-inf-norm-bound-for-vec-x} does not hold for an arbitrary covariance matrix. 
However, it holds for (but not limited to) the cases of centered continuous Gaussian processes, which suffices our current purpose of preparing quantum states embedding exponentiated Gaussian processes.
\end{remark}
\begin{remark}\label{gaussian-sample-path-2-norm-sqrt-N}
Let $\vec{w}=(G_{t_i})_{i=1}^N$ be a vector representing a sample path of a centered and continuous Gaussian process on arbitrary time discretisation $\mathcal{P}_{[0,1]}^N=(0=t_0<t_1<\dots<t_N=1)$ of $[0,1]$. 
By \autoref{bound-xi-on-norm-x-infty}, there exists a constant $\Xi>0$ such that $\sup_{t\in[0,1]}|G_t|\leq \Xi$ holds with high probability, so we have that
$\norm{\vec{w}}_2=\sqrt{\sum_{i=1}^N |w_i|^2}\leq \Xi \sqrt{N}=O(\sqrt{N})$ with high probability.
\end{remark}

\section{Preparation of a correlated Gaussian vector}\label{preparing-ket-x-section}
\begin{proposition}[{\cite{CGJ19} Lemma 10 of the full version: arXiv:1804.01973.}]\label{prop-for-Hc2}
Let $\widetilde{\varepsilon}>0$ and $\widehat{\kappa}\geq2$. Let $H\in\mathbb{C}^{N\times N}$ be a Hermitian matrix whose eigenvalues lie in the interval $[1/\widehat{\kappa},1]$. Suppose there exists a unitary $U_H$ that is an $(\alpha,a,\delta)$-block-encoding of $H$, where $\delta=o(\widetilde{\varepsilon}/\left(\widehat{\kappa}\log^3(\widehat{\kappa}/\widetilde{\varepsilon})\right))$, and that $U_H$ can be implemented using $T_{U_H}$ elementary gates. Then, for any $c\in[0,1]$, there exists a unitary $U_{H^c}$ that is a $(2,a+O(\log\log(1/\widetilde{\varepsilon})),\widetilde{\varepsilon})$-block-encoding of $H^c$, and $U_{H^c}$ can be implemented using
$O\left(\alpha\widehat{\kappa}(a+T_{U_H})\log^2(\widehat{\kappa}/\widetilde{\varepsilon})\right)$
elementary gates.
\end{proposition}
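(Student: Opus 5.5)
This result is quoted from \cite{CGJ19}. To reprove it, I would take the standard route through a polynomial approximation of $x\mapsto x^c$ on $[1/\widehat{\kappa},1]$, implemented as a linear combination of Chebyshev polynomials of $H/\alpha$ evaluated by quantum walks (qubitization) and combined by linear combination of unitaries (LCU).

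\textbf{Step 1: polynomial approximation.} Since $U_H$ block-encodes $H/\alpha$, whose spectrum lies in $[1/(\alpha\widehat{\kappa}),1/\alpha]$, I would rewrite
\[
H^c=\alpha^c\,(H/\alpha)^c .
\]
This reduces the task to approximating $f(y)=y^c$ on $[\delta_0,1]$ with $\delta_0=1/(\alpha\widehat{\kappa})$, to accuracy $\widetilde{\varepsilon}/\alpha^c$.

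I would use the Taylor expansion of $(1-(1-y))^c$, truncated at degree $J=O(\delta_0^{-1}\log(1/(\delta_0\widetilde{\varepsilon})))$. The binomial coefficients are all bounded in absolute value by $1$. I would then convert each monomial $(1-y)^j$ into a Chebyshev series by the usual truncation argument. The result is a sum $\sum_k c_k T_k$ of degree $O(\alpha\widehat{\kappa}\log(\widehat{\kappa}/\widetilde{\varepsilon}))$ whose coefficient $\ell_1$-norm is $O(1)$. Normalising by that norm, at most $2$, is what produces the subnormalisation $2$ in the statement.

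\textbf{Step 2: implementation.} Each $T_k(H/\alpha)$ is obtained exactly by $k$ alternating applications of $U_H$ and a reflection about $\ket{0}^{\otimes a}$. Each reflection costs $O(a)$ gates, so every walk step costs $O(a+T_{U_H})$.

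I would combine the terms either by LCU, with a coefficient-state register of $O(\log J)$ qubits, or by the variable-time LCU trick used in \cite{CGJ19}. The count $a+O(\log\log(1/\widetilde{\varepsilon}))$ of ancillas suggests the construction truncates to $O(\log(1/\widetilde{\varepsilon}))$ terms of a geometrically decaying expansion rather than summing over all degrees. This counting is where the extra $\log$ factor in the gate complexity comes from.

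\textbf{Step 3: error propagation.} An imperfect block-encoding with error $\delta$ yields polynomial degree-$d$ errors of size $O(d\delta)$ per Chebyshev term, times the $\ell_1$ weight. This is why $\delta=o(\widetilde{\varepsilon}/(\widehat{\kappa}\log^3))$ suffices.

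I expect Step 1 to be the main obstacle. The difficulty is the careful bound on the Chebyshev coefficients' $\ell_1$-norm together with the truncation degree, uniformly in $c\in[0,1]$, while keeping the subnormalisation at the constant $2$ rather than growing with $\widehat{\kappa}$. I would handle it by exploiting that the binomial coefficients of $(1-x)^c$ have alternating-free signs summing to at most $1$.
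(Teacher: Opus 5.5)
The paper gives no argument of its own here; it imports Lemma~10 of \cite{CGJ19}. Judged as a standalone proof, your sketch has genuine gaps. The two main ones are in Step~1.

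\textbf{Gap 1: the subnormalisation.} Even granting your approximation, writing $H^c=\alpha^c(H/\alpha)^c$ and approximating $y^c$ on $[1/(\alpha\widehat{\kappa}),1]$ yields a block-encoding of $(H/\alpha)^c/2=H^c/(2\alpha^c)$. That is a $(2\alpha^c,\cdot,\widetilde{\varepsilon})$-block-encoding of $H^c$, not a $(2,\cdot,\widetilde{\varepsilon})$ one. The factor $\alpha^c$ disappears when you say the $\ell_1$-normalisation ``produces the subnormalisation $2$''.

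This is not cosmetic. The amplitude bound $a_\ell$ in the proof of Proposition~\ref{main-thm2} relies on the subnormalisation being exactly $2$, and there $\alpha=\norm{\Sigma}_F/\widetilde{\lambda}_{\max}$ may grow with $N$. Amplifying by $\alpha^c$ afterwards multiplies the cost by roughly $\alpha^c$.

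To get subnormalisation $2$ at cost linear in $\alpha$, you need a polynomial that approximates $(\alpha y)^c/2$ on the window $[1/(\alpha\widehat{\kappa}),1/\alpha]$ and is bounded by $1$ on all of $[-1,1]$, even though $(\alpha y)^c$ exceeds $1$ on $(1/\alpha,1]$. Two ways to do this:
\begin{itemize}
\item Use the local-Taylor-series approximation lemma of \cite{GSLW19}. Expand $(1+\alpha x)^c/2$ about $x_0=1/\alpha$ with radius $r=(\widehat{\kappa}-1)/(\alpha\widehat{\kappa})$ and margin $\delta'=1/(\alpha\widehat{\kappa})$; the hypothesis $\widehat{\kappa}\geq2$ is exactly $\delta'\leq r$. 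Then $\sum_l(r+\delta')^l|a_l|=\frac12\sum_l|\binom{c}{l}|\leq1$. The lemma gives a polynomial of degree $O(\alpha\widehat{\kappa}\log(1/\widetilde{\varepsilon}))$ that is $\widetilde{\varepsilon}$-accurate on the window, at most $1+\widetilde{\varepsilon}$ on $[-1,1]$, and $\widetilde{\varepsilon}$-small on $[-1,0]$. Its even part can then be fed to QSVT with $O(1)$ extra ancillas.
\item Amplify $U_H$ from $\alpha$ to $O(1)$ \emph{first}, by uniform singular-value amplification at cost $O(\alpha\log(\alpha/\widetilde{\varepsilon}))$. Only then apply a degree-$O(\widehat{\kappa}\log(\widehat{\kappa}/\widetilde{\varepsilon}))$ polynomial.
\end{itemize}

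\textbf{Gap 2: your polynomial is not bounded where it has to be.} The estimate $\sum_j|\binom{c}{j}|\leq2$ controls $p_J(y)=\sum_{j\leq J}\binom{c}{j}(y-1)^j$ only for $0\leq1-y\leq1$. For $c\in(0,1)$, every term with $j\geq1$ equals $-|\binom{c}{j}|\,|1-y|^j$. So on $[-1,0)$, where $|1-y|>1$, nothing cancels, and $p_J(-1)=1-\sum_{j=1}^J|\binom{c}{j}|2^j$ is exponentially large in $J$.

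Since $\sum_k|c_k|\geq\sup_{[-1,1]}|\sum_kc_kT_k|$, no Chebyshev rewriting of $p_J$ has $O(1)$ $\ell_1$-norm, and it cannot be fed to QSVT without an exponentially large rescaling. The usual truncation argument (writing $x^j$ as a binomial average of the $T_k(x)$) controls $x^j$ only for $|x|\leq1$, whereas $1-y$ ranges over $[0,2]$. The one-signedness of the coefficients does not help there.

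Expanding in $1-y^2$ instead, i.e.\ $|y|^c=(1-(1-y^2))^{c/2}$, repairs this. The truncated series lies in $[0,1]$ on all of $[-1,1]$ and is even. Its Chebyshev truncation genuinely lowers the degree from $O(\delta_0^{-2}\log(1/\widetilde{\varepsilon}))$ to $O(\delta_0^{-1}\log(1/(\delta_0\widetilde{\varepsilon})))$, the same mechanism as in the Chebyshev-based approach to $1/x$ of Childs--Kothari--Somma. By itself, however, it still produces $(H/\alpha)^c$, so it must be combined with the amplify-first fix above.

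\textbf{Steps 2--3 do not deliver the stated resources as written.}
\begin{itemize}
\item An LCU over $\Theta(J)$ Chebyshev terms needs $O(\log(\alpha\widehat{\kappa}\log(1/\widetilde{\varepsilon})))$ extra ancillas, not $O(\log\log(1/\widetilde{\varepsilon}))$. QSVT avoids this.
\item For $\delta>0$ the top-left block of $U_H$ need not be Hermitian. The generic robustness bound is then the $O(d\sqrt{\delta/\alpha})$ of Lemma~\ref{lem-qsvt}, not your $O(d\delta)$. Matching the stated condition on $\delta$ needs a separate argument; this is moot in the paper's own use, where $\delta=0$.
\end{itemize}
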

\begin{definition}[Orthogonal projectors]
A square matrix $\Pi\in\mathbb{C}^{N\times N}$ is called an orthogonal projector if $\Pi^2=\Pi$ and $\Pi=\Pi^{\dagger}$, where $\Pi^{\dagger}$ denotes the conjugate transpose of $\Pi$.
\end{definition}
\begin{definition}[Multi-controlled $X$ gate / $q$-qubit Toffoli gate]
The multi-controlled-$X$ gate, denoted by $\mathrm{MC}X$, or $\mathrm{MC}_{\Pi}X$, when $\Pi\in\mathbb{C}^{2^q\times 2^q}$ is an orthogonal projector for some $q\in\mathbb{N}$, is defined as the unitary
\[
\mathrm{MC}_{\Pi}X \coloneqq X\otimes \Pi + I_2\otimes (I_{2^q}-\Pi),
\]
acting on the Hilbert space $\mathbb{C}^2\otimes\mathbb{C}^{2^q}$. 
This operator applies an $X$ gate to a target qubit if the state of a control register lies in the image of $\Pi$, and acts as the identity otherwise.
When $\Pi$ is of the form $\Pi=(\kb{0})^{\otimes q}$, $\mathrm{MC}_{\Pi}X$ is also known as a $q$-qubit Toffoli gate.
If the control register has a name (e.g.\ register $\mathrm{Q}$) and the control condition is clear from context
(e.g., applying an $X$ gate if the state of register $\mathrm{Q}$ is $\ket{0}^{\otimes q}$), the notation $\mathrm{MC}_{\mathrm{Q}}X$ may be used.
\end{definition}

\begin{lemma}[Implementation complexity of $q$-qubit Toffoli gate]\label{implement-q-qubit-toffoli}
Let $q\in\mathbb{N}$.
A $q$-qubit Toffoli gate can be implemented using $O(q)$ elementary gates and an additional single ancillary qubit.
\end{lemma}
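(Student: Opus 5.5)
The plan is to build the gate with a compute--copy--uncompute ladder. The single borrowed ancilla is enough to make the construction linear, in the spirit of the standard Barenco et al.\ decompositions.

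\textbf{Step 1: reduce to all-ones control.} The control condition is that each of the $q$ control qubits is in $\ket{0}$. Conjugating every control qubit by an $X$ gate turns this into an all-ones control. This costs $2q$ single-qubit gates, so from here on it suffices to implement a standard $q$-controlled NOT, which we write $C^qX$.

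\textbf{Step 2: small cases.} For $q\le 2$ the gate is an elementary gate (for $q=1$) or a Toffoli gate (for $q=2$), and a Toffoli has a constant-size decomposition into CNOTs and single-qubit gates. So assume $q\ge 3$.

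\textbf{Step 3: split with the ancilla.} Split the controls into two halves of sizes $m_1=\ceil{q/2}$ and $m_2=\lfloor q/2\rfloor$, and let $a$ denote the extra ancilla. The identity to use is
\[
C^qX = \bigl(C^{m_1}X_{a}\; C^{m_2+1}X_{t}\bigr)^2 .
\]
Here the first gate is controlled by the first half with target $a$, and the second is controlled by the second half together with $a$, with target $t$. This holds for an ancilla in an arbitrary (dirty) state: the ancilla is toggled twice, so it is restored, and the target flips exactly once precisely when all controls are $1$.

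\textbf{Step 4: linear cost for each sub-gate.} Each of the four sub-gates has roughly $q/2$ controls. There are then at least roughly $q/2$ qubits, outside its controls and target, that it can borrow as dirty ancillas: the other half of the controls and/or $t$. A $C^mX$ gate with $m-2$ dirty ancillas can be implemented with $O(m)$ Toffoli gates by the V-chain construction. Concretely, a ladder of Toffolis computes the cascaded ANDs into the borrowed qubits, applies the final Toffoli, and then repeats the ladder to cancel the garbage. The total count is about $4(m-2)$ Toffolis, and each Toffoli is $O(1)$ elementary gates. Summing the four pieces gives $O(q)$ gates overall.

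\textbf{Main obstacle.} The main obstacle is Step 4: verifying that the dirty-ancilla V-chain really restores the borrowed qubits for arbitrary initial states. I would check this by tracking each borrowed qubit's value as an XOR of its initial value with the partial AND, and confirming that the doubled ladder cancels every such term. Alternatively, since the lemma is standard, one could simply cite Barenco et al.\ (Lemma 7.2 and Corollary 7.4) for the construction.
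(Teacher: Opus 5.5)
Your argument is correct, but it is not what the paper does. The paper gives no construction: it simply cites \cite{GSLW19}, which quotes \cite{HLZ+17}, for an $O(q)$-gate implementation with one ancilla. You instead give a self-contained Barenco-style construction. You split $C^qX=(C^{m_1}X_a\,C^{m_2+1}X_t)^2$ around one borrowed qubit $a$, and then use dirty-ancilla V-chains for each piece, each borrowing the other half's qubits (and $t$). The bookkeeping checks out: $C^{m_1}X_a$ needs $m_1-2$ borrowed qubits and has $m_2+1$ available, and $C^{m_2+1}X_t$ needs $m_2-1$ and has $m_1$ available. For $q\ge 3$ this gives roughly $8q$ Toffolis plus $2q$ single-qubit $X$ gates.

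Your route buys transparency and a slightly stronger statement: the extra qubit may start in an arbitrary state and is returned unchanged. This directly justifies the paper's footnote in the QAA proposition that a single ancilla can be shared by both multi-controlled gates. The citation buys brevity. The $\Theta(q)$ depth of your V-chains is harmless here, since the paper only uses the gate count.

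One correction to the wording of Step 4. As you describe it in prose, the network computes the cascaded ANDs into the borrowed qubits, applies the final Toffoli once, then repeats the ladder. That is the clean-ancilla V-chain, and with dirty ancillas it leaves garbage on the target. Already for $m=3$, with the borrowed qubit initially $\alpha$, it outputs $t\oplus x_1x_2x_3\oplus x_3\alpha$. The target Toffoli has to be applied twice: target Toffoli, ancilla pass, target Toffoli, ancilla pass. This is the same cancellation trick as your Step 3, and it is what produces the $4(m-2)$ count you quote from Barenco et al. With that structure, your XOR-tracking verification goes through.
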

\begin{proof}
As suggested in \cite{GSLW19} under the section of block-encoding quoting \cite{HLZ+17}, the cost to implement $q$-qubit Toffoli gate is $O(q)$ uses of two-qubit gates, with an additional single ancillary qubit.
\end{proof}

\begin{lemma}[Fixed-point amplitude amplification {\cite[Theorem 27 from full paper: arXiv:1806.01838]{GSLW19}}]\label{fixed-point-qaa-from-gslw}
Let $U$ be a unitary, $\Pi$ be an orthogonal projector, and $\ket{\psi_0}$ be a normalised quantum state such that $\Pi U\ket{\psi_0}=\widetilde{a}\ket{\Psi_G}$ holds for a normalised quantum state $\ket{\Psi_G}$ and some $\widetilde{a}\in\mathbb{C}$. Suppose that there exists $a_\ell\in\mathbb{R}$ such that $|\widetilde{a}|\geq a_\ell >0$. Then, for a given $\varepsilon\in(0,1]$, one can construct a unitary circuit $\widetilde{U}_{G}$ such that $\norm{\widetilde{U}_{G}\ket{\psi_0}-\ket{\Psi_G}}_2\leq \varepsilon$, by using a single ancilla qubit and $O\left(\frac{1}{{a_\ell}}\log\left(\frac{1}{\varepsilon}\right)\right)$ calls to $U$, $U^{\dagger},\mathrm{MC}_{\Pi}X,\mathrm{MC}_{\kb{\psi_0}}X$, and $e^{i\phi \sigma_z}$ gates. 
\end{lemma}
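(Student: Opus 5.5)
The statement to prove is the last one displayed, \autoref{fixed-point-qaa-from-gslw}, which is cited verbatim from \cite{GSLW19}. My plan is to reconstruct it from quantum singular value transformation (QSVT) applied to a rank-one projected unitary.

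\textbf{Setting up the block.} I would view $\Pi U \ket{\psi_0}\bra{\psi_0}$ as a projected unitary encoding. Its only nonzero singular value is $|\widetilde{a}|$, with left singular vector $\ket{\Psi_G}$ and right singular vector $\ket{\psi_0}$. So the input projector is $\widetilde{\Pi}=\kb{\psi_0}$, the output projector is $\Pi$, and the encoded operator is $\widetilde{a}\ket{\Psi_G}\bra{\psi_0}$.

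\textbf{Choice of polynomial.} I would pick an odd real polynomial $P$ of degree $d=O\left(\frac{1}{a_\ell}\log\frac{1}{\varepsilon}\right)$. It should satisfy $|P(x)|\le 1$ on $[-1,1]$ and $|P(x)-\mathrm{sign}(x)|\le \varepsilon'$ for $|x|\ge a_\ell$. The standard construction is to approximate $\mathrm{sign}$ by $\erf(kx)$ with $k=\Theta\left(\frac{1}{a_\ell}\sqrt{\log(1/\varepsilon)}\right)$, truncate its Chebyshev/Jacobi–Anger expansion, and rescale so the bound by $1$ holds. This approximation step is where I expect the main technical work to lie. The degree must be linear in $1/a_\ell$ (not quadratic) and logarithmic in $1/\varepsilon$, and the bound $|P|\le 1$ must hold on all of $[-1,1]$. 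I would simply invoke the polynomial approximation of the sign function from \cite{GSLW19} rather than redo the estimate.

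\textbf{Implementing the transformation.} The odd singular value transformation theorem then gives phases $\phi_1,\dots,\phi_d$. The resulting circuit alternates $U$ and $U^\dagger$ with the projector-controlled phase operators $e^{i\phi_j(2\Pi-I)}$ and $e^{i\phi_j(2\widetilde{\Pi}-I)}$. Each such phase operator is built from one $\mathrm{MC}_{\Pi}X$ (respectively $\mathrm{MC}_{\kb{\psi_0}}X$) onto a single ancilla, an $e^{i\phi\sigma_z}$ on that ancilla, and an uncomputing $\mathrm{MC}X$. This implements $P^{(SV)}(\Pi U\widetilde{\Pi})=P(|\widetilde{a}|)\,e^{i\arg\widetilde{a}}\ket{\Psi_G}\bra{\psi_0}$ up to the standard real-part trick. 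The gate counts are exactly those stated: $d$ uses of $U$ or $U^\dagger$, $O(d)$ multi-controlled $X$ gates, and $O(d)$ $z$-rotations, all with one ancilla.

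\textbf{Error bound.} Since $|\widetilde{a}|\ge a_\ell$, we have $|P(|\widetilde{a}|)-1|\le\varepsilon'$. The output state is therefore $\widetilde{U}_G\ket{\psi_0}=P(|\widetilde{a}|)\ket{\Psi_G}+\ket{\perp}$ up to a global phase, and unitarity forces $\norm{\ket{\perp}}^2\le 1-(1-\varepsilon')^2$. Hence $\norm{\widetilde{U}_G\ket{\psi_0}-\ket{\Psi_G}}_2\le O(\sqrt{\varepsilon'})$. Choosing $\varepsilon'=\Theta(\varepsilon^2)$ gives error at most $\varepsilon$ and only changes $\log(1/\varepsilon)$ by a constant factor. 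The phase $e^{i\arg\widetilde{a}}$ can be absorbed into the definition of $\ket{\Psi_G}$, or handled by the fact that the fixed-point construction acts on $|\widetilde{a}|$.
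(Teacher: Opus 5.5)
The paper does not prove this lemma itself but imports it from \cite{GSLW19}, and your reconstruction---an odd sign-approximating polynomial of degree $O(a_\ell^{-1}\log(1/\varepsilon))$, applied by real-polynomial QSVT to the rank-one block $\Pi U\kb{\psi_0}=\widetilde a\ket{\Psi_G}\bra{\psi_0}$ (with the single ancilla serving both the projector-controlled phases and the real-part trick), followed by bounding the squared distance by $2-2P(|\widetilde a|)\le 2\varepsilon'$---is exactly the argument behind that theorem. One caveat: for genuinely complex $\widetilde a$ the circuit lands near $e^{i\arg\widetilde a}\ket{\Psi_G}$, not $\ket{\Psi_G}$, so only your first remedy (absorbing the phase into $\ket{\Psi_G}$) is valid; QSVT acts on the singular value $|\widetilde a|$ and leaves the phase carried by the singular vector untouched, so your second justification does not work, though the issue is harmless in the paper's application, where $\widetilde a=\norm{A\ket{b}}/\alpha>0$.
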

\begin{proposition}[Preparation of the normalised desired part of a state obtained from applying a block-encoding via QAA]\label{normalising-block-encoding-prep-state-via-qaa}
Let $\ket{\chi}=\frac{A\ket{b}}{\norm{A\ket{b}}}$ for some matrix $A$ and an $n$-qubit normalised quantum state $\ket{b}$.
Suppose we have access to $U_A$, an $(\alpha,a,0)$-block-encoding of $A$, and a state preparation unitary $U_{\ket{b}}:\ket{0}^{\otimes n}\mapsto \ket{b}$.
If there exists $a_\ell>0$ such that $\norm{A\ket{b}}/\alpha \geq a_\ell$, then for ${\varepsilon_{\chi}} \in (0,1]$ we can implement a unitary $U_{\ket{\widetilde{\varphi}},\varepsilon_\chi}$ that prepares an $(a+n)$-qubit state $\ket{\widetilde{\varphi}}\coloneqq U_{\ket{\widetilde{\varphi}},\varepsilon_\chi}\ket{0}^{\otimes (a+n)}$ satisfying $\norm{\ket{\widetilde{\varphi}}-\ket{0}^{\otimes a}\ket{\chi}}_2\leq {\varepsilon_{\chi}}$, by using $O\left(\frac{1}{{a_\ell}}\log\left(\frac{1}{{\varepsilon_{\chi}}}\right)\right)$ calls to $U_{A} \left(I^{\otimes a}\otimes U_{\ket{b}}\right)$ and its inverse, $O\left(\frac{(a+n)}{{a_\ell}}\log\left(\frac{1}{{\varepsilon_{\chi}}}\right)\right)$ other single and two-qubit gates, and two additional single ancillary qubits. In particular, if it takes $T_{U_A}$ and $T_{U_{\ket{b}}}$ elementary gates to implement $U_A$ and $U_{\ket{b}}$ respectively, then the overall cost to prepare $\ket{\widetilde{\varphi}}$ becomes $O\left(\frac{1}{{a_\ell}}\log\left(\frac{1}{{\varepsilon_{\chi}}}\right)\left(a+n+T_{U_A}+T_{U_{\ket{b}}}\right)\right)$ elementary gates.
\end{proposition}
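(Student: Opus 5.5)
The plan is to reduce the statement to \autoref{fixed-point-qaa-from-gslw}. Take $U\coloneqq U_A\left(I^{\otimes a}\otimes U_{\ket{b}}\right)$ acting on $a+n$ qubits, initial state $\ket{\psi_0}\coloneqq\ket{0}^{\otimes(a+n)}$, and projector $\Pi\coloneqq(\kb{0})^{\otimes a}\otimes I_{2^n}$. The first step is to compute $\Pi U\ket{\psi_0}$. Since the block-encoding is exact,
\[
\left(\bra{0}^{\otimes a}\otimes I\right)U_A\left(\ket{0}^{\otimes a}\otimes I\right)=A/\alpha .
\]
Inserting $\ket{0}^{\otimes a}\ket{b}=(I^{\otimes a}\otimes U_{\ket{b}})\ket{\psi_0}$ then gives
\[
\Pi U\ket{\psi_0}=\ket{0}^{\otimes a}\otimes \frac{A\ket{b}}{\alpha}=\frac{\norm{A\ket{b}}}{\alpha}\,\ket{0}^{\otimes a}\ket{\chi}.
\]
So the hypothesis of the lemma holds with $\widetilde{a}=\norm{A\ket{b}}/\alpha$ and $\ket{\Psi_G}=\ket{0}^{\otimes a}\ket{\chi}$. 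The assumption $\norm{A\ket{b}}/\alpha\geq a_\ell>0$ supplies the required lower bound on $|\widetilde a|$. In particular $A\ket{b}\neq 0$, so $\ket{\chi}$ is well defined.

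Applying \autoref{fixed-point-qaa-from-gslw} with $\varepsilon=\varepsilon_\chi$ yields $\widetilde U_G$ with $\norm{\widetilde U_G\ket{\psi_0}-\ket{0}^{\otimes a}\ket{\chi}}_2\leq\varepsilon_\chi$. We set $U_{\ket{\widetilde\varphi},\varepsilon_\chi}\coloneqq\widetilde U_G$. This uses $O\!\left(\frac{1}{a_\ell}\log\frac{1}{\varepsilon_\chi}\right)$ calls to $U$, $U^\dagger$, $\mathrm{MC}_\Pi X$, $\mathrm{MC}_{\kb{\psi_0}}X$ and phase rotations $e^{i\phi\sigma_z}$. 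The lemma already accounts for one ancilla qubit. Strictly, the lemma's $\widetilde{U}_G$ acts on the system together with this ancilla, so the output state lives on $a+n+1$ qubits, with the ancilla returned to $\ket{0}$ as in \cite{GSLW19}. I would state this identification explicitly, or absorb the ancilla into the statement's accounting of extra qubits.

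Next I cost the remaining gates. $\mathrm{MC}_\Pi X$ is a Toffoli controlled on the $a$ ancilla qubits being $\ket{0}^{\otimes a}$, targeting the QAA ancilla. Both it and $\mathrm{MC}_{\kb{\psi_0}}X$ are $q$-qubit Toffolis with $q\leq a+n$. By \autoref{implement-q-qubit-toffoli}, each costs $O(a+n)$ elementary gates plus one extra ancilla qubit, which can be reused. Together with the QAA ancilla, this accounts for the two additional single ancillary qubits. Each $e^{i\phi\sigma_z}$ is a single-qubit gate, and $U^\dagger$ costs the same as $U$. Multiplying the per-iteration cost $O(a+n+T_{U_A}+T_{U_{\ket b}})$ by the $O\!\left(\frac{1}{a_\ell}\log\frac{1}{\varepsilon_\chi}\right)$ iterations gives the claimed total.

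The main obstacle is mostly bookkeeping: matching the lemma's conventions precisely. Three points need care. First, $\Pi U\ket{\psi_0}$ must have the stated product form even though $U_A$ acts on the full space. Second, $\mathrm{MC}_{\kb{\psi_0}}X$ is a Toffoli on all $a+n$ qubits, not only on the ancillas. Third, the ancilla used by \autoref{fixed-point-qaa-from-gslw} must be returned to $\ket{0}$, so that the prepared state is indeed an $(a+n)$-qubit state up to that ancilla. None of these requires new ideas.
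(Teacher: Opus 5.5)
Your proposal is correct and follows the paper's proof. Both instantiate the fixed-point QAA lemma with $U=U_A(I^{\otimes a}\otimes U_{\ket{b}})$, $\ket{\psi_0}=\ket{0}^{\otimes(a+n)}$ and $\Pi=(\kb{0})^{\otimes a}\otimes I$, read off $\widetilde{a}=\norm{A\ket{b}}/\alpha\geq a_\ell$, and cost $\mathrm{MC}_\Pi X$ and $\mathrm{MC}_{\kb{\psi_0}}X$ as $O(a+n)$-gate Toffolis sharing one reusable ancilla, plus the QAA ancilla, for two ancillas in total. Your remark that the QAA ancilla must be returned to $\ket{0}$ makes explicit a point the paper leaves implicit.
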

\begin{proof}
Denote register $\mathrm{Q}$ as the $a$-qubit ancillary register for working with the block-encoding $U_A$, and $\mathrm{I}$ as the system register. By definition of a block-encoding,
\begin{align*}
U_{A} \left(I^{\otimes a}\otimes U_{\ket{b}}\right) \ket{0}_{\mathrm{Q}}\ket{0}_{\mathrm{I}}
&=\frac{1}{\alpha}\ket{0}_{\mathrm{Q}} A\ket{b}_{\mathrm{I}} + \ket{\varphi^{\perp}}_{\mathrm{Q}\mathrm{I}}
\eqqcolon \frac{\norm{A\ket{b}}}{\alpha}\ket{0}_{\mathrm{Q}} \ket{\chi}_{\mathrm{I}} + \norm{\ket{\varphi^{\perp}}}\ket{\Psi_B}_{\mathrm{Q}\mathrm{I}},
\end{align*}
where $\ket{\varphi^{\perp}}_{\mathrm{Q}\mathrm{I}}$ is an unnormalized quantum state such that $\Pi \ket{\varphi^{\perp}}_{\mathrm{Q}\mathrm{I}}=0$ with $\Pi\coloneqq \ket{0}_{\mathrm{Q}} \bra{0}_{\mathrm{Q}} \otimes I^{\otimes n}$, and $\ket{\Psi_B}_{\mathrm{Q}\mathrm{I}}=\ket{\varphi^{\perp}}_{\mathrm{Q}\mathrm{I}}/\norm{\ket{\varphi^{\perp}}}$.
Then, by considering 
$\ket{\psi_0}\coloneqq \ket{0}_{\mathrm{Q}}\ket{0}_{\mathrm{I}}$, 
$U\coloneqq U_{A} \left(I^{\otimes a}\otimes U_{\ket{b}}\right)$,
$\ket{\Psi_G}\coloneqq \ket{0}_{\mathrm{Q}} \ket{\chi}_{\mathrm{I}}$, 
we can utilise \autoref{fixed-point-qaa-from-gslw} to construct $\widetilde{U}_G \equiv U_{\ket{\widetilde{\varphi}},\varepsilon_\chi}$ that maps $\ket{0}_{\mathrm{Q}}\ket{0}_{\mathrm{I}} \mapsto \ket{\widetilde{\varphi}}_{\mathrm{Q}\mathrm{I}}$ with property $\norm{\ket{\widetilde{\varphi}}-\ket{0}^{\otimes a}\ket{\chi}}_2\leq {\varepsilon_{\chi}}$, satisfying the claim. Note that $\Pi \left(U_{A} \left(I^{\otimes a}\otimes U_{\ket{b}}\right)\right) \ket{0}_{\mathrm{Q}}\ket{0}_{\mathrm{I}}=\frac{\norm{A\ket{b}}}{\alpha}\ket{\Psi_G}$, so the parameter $\widetilde{a}$ in \autoref{fixed-point-qaa-from-gslw} becomes $\norm{A\ket{b}}/\alpha$, and hence the number of calls required to create QAA circuit depends on the lower bound $a_\ell$ of this quantity. By definitions of $\Pi$ and $\ket{\psi_0}$, $\mathrm{MC}_{\Pi}X$ and $\mathrm{MC}_{\kb{\psi_0}}X$ become $a$-qubit and $(a+n)$-qubit multi-controlled-NOT (Toffoli) gates respectively. By \autoref{implement-q-qubit-toffoli}, the cost to implement $a$-qubit and $(a+n)$-qubit Toffoli gates are $O(a)$ and $O(a+n)$ elementary gates respectively, with an additional single ancillary qubit each\footnote{However, as suggested in \cite{HLZ+17}, which is the version of the $q$-qubit Toffoli gate from \autoref{implement-q-qubit-toffoli}, the ancillary qubit is recyclable and hence only one is required.}. 
Combining this with the one single ancillary qubit requirement from \autoref{fixed-point-qaa-from-gslw}, we need a total of two single ancillary qubits.
As we need $O\left(\frac{1}{{a_\ell}}\log\left(\frac{1}{{\varepsilon_{\chi}}}\right)\right)$ copies of each of these gates, the total cost for implementing $\widetilde{U}_G$ follows straightforwardly by addition. 
In particular, if $U_A$ and $U_{\ket{b}}$ take $T_{U_A}$ and $T_{U_{\ket{b}}}$ elementary gates to implement, then the implementation of $U_{A} \left(I^{\otimes a}\otimes U_{\ket{b}}\right)$ costs $T_{U_A}+T_{U_{\ket{b}}}$ elementary gates, and hence the claim follows.
\end{proof}
Suppose we want to prepare a normalised version of a vector $\vec{v}$, i.e.\ $\widehat{v}=\vec{v}/\norm{\vec{v}}$, but in practice we can only prepare a normalised version of a vector $\vec{\omega}$, i.e.\ $\widehat{\omega}=\vec{\omega}/\norm{\vec{\omega}}$, where $\vec{\omega}$ approximates $\vec{v}$ in the $\ell_2$-norm. The following lemma quantifies how an additive approximation error in the unnormalised vectors translates into an approximation error between their normalised versions, measured with respect to the intended target $\widehat{v}$.
\begin{lemma}[Approximation error relative to a target vector] \label{lem-for-unit-vector-approx2}
Let $\vec{v},\vec{\omega}\in\mathbb{C}^N\backslash\{0\}$ satisfy $\norm{\vec{v}-\vec{\omega}}\leq \varepsilon$ for some $\varepsilon>0$. Denote $\widehat{v}=\vec{v}/\norm{\vec{v}}$ (the target unit vector) and $\widehat{\omega}=\vec{\omega}/\norm{\vec{\omega}}$ (the approximating unit vector). Then,
\[
\norm{\widehat{v}-\widehat{\omega}}\equiv\norm{\frac{\vec{v}}{\norm{\vec{v}}}-\frac{\vec{\omega}}{\norm{\vec{\omega}}}}\leq
\frac{2\varepsilon}{\norm{\vec{v}}}.
\]
\end{lemma}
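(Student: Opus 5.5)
The plan is to insert and subtract the intermediate vector $\vec{\omega}/\norm{\vec{v}}$ and apply the triangle inequality:
\[
\norm{\widehat{v}-\widehat{\omega}}
\leq \norm{\frac{\vec{v}}{\norm{\vec{v}}}-\frac{\vec{\omega}}{\norm{\vec{v}}}}
+\norm{\frac{\vec{\omega}}{\norm{\vec{v}}}-\frac{\vec{\omega}}{\norm{\vec{\omega}}}}.
\]
Normalising both vectors by the same quantity $\norm{\vec{v}}$, the norm of the target rather than of the approximant, is what makes the final bound depend only on $\norm{\vec{v}}$, as the statement requires.

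For the first term, pull out the scalar to get $\norm{\vec{v}-\vec{\omega}}/\norm{\vec{v}}\leq \varepsilon/\norm{\vec{v}}$. For the second term, write it as
\[
\norm{\vec{\omega}}\left|\frac{1}{\norm{\vec{v}}}-\frac{1}{\norm{\vec{\omega}}}\right|
=\frac{\bigl|\norm{\vec{\omega}}-\norm{\vec{v}}\bigr|}{\norm{\vec{v}}}.
\]
Then apply the reverse triangle inequality $\bigl|\norm{\vec{\omega}}-\norm{\vec{v}}\bigr|\leq\norm{\vec{v}-\vec{\omega}}\leq\varepsilon$, which bounds this term by $\varepsilon/\norm{\vec{v}}$ as well. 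Adding the two bounds gives $2\varepsilon/\norm{\vec{v}}$.

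There is no real obstacle here. The only point needing care is choosing the intermediate vector normalised by $\norm{\vec{v}}$; using $\vec{v}/\norm{\vec{\omega}}$ instead would produce $\norm{\vec{\omega}}$ in the denominator. The hypothesis $\vec{v},\vec{\omega}\neq 0$ ensures that all the divisions are well defined. The argument works verbatim over $\mathbb{C}^N$, since only norm properties are used.
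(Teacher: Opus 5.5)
Your proposal is correct and is essentially the paper's own proof: the paper writes $\widehat{v}-\widehat{\omega}$ as $(\vec{v}-\vec{\omega})/\norm{\vec{v}}$ plus $\vec{\omega}\bigl(1/\norm{\vec{v}}-1/\norm{\vec{\omega}}\bigr)$, which is your intermediate vector $\vec{\omega}/\norm{\vec{v}}$. It then bounds each term by $\varepsilon/\norm{\vec{v}}$ via the triangle and reverse triangle inequalities, exactly as you do.
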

\begin{proof}
Observe that
\[
\widehat{v}-\widehat{\omega}
= \frac{\vec{v}-\vec{\omega}}{\norm{\vec{v}}}
+ \vec{\omega} \left(\frac{1}{\norm{\vec{v}}}-\frac{1}{\norm{\vec{\omega}}}\right).
\]
Taking norms and using the triangle inequality gives
\[
\norm{\widehat{v}-\widehat{\omega}}
\leq \frac{\norm{\vec{v}-\vec{\omega}}}{\norm{\vec{v}}}
+ \norm{\vec{\omega}}
\frac{\bigl|\norm{\vec{\omega}}-\norm{\vec{v}}\bigr|}{\norm{\vec{v}} \norm{\vec{\omega}}}.
\]
By the reverse triangle inequality, $\bigl|\lVert\vec{v}\rVert-\lVert\vec{\omega}\rVert\bigr|\leq \lVert\vec{v}-\vec{\omega}\rVert\leq\varepsilon$. Therefore, both terms on the RHS are bounded by $\varepsilon/\norm{\vec{v}}$, and hence the claim follows.
\end{proof}
\begin{lemma}\label{norm-sqrt-sigma-times-ket-z}
Let $\Sigma\in\mathbb{R}^{N\times N}$ be a real symmetric positive definite matrix, whose eigenvalues lie in $\left[a,b\right]$ for some $0<a<b$. 
Then, for an arbitrary quantum state $\ket{z}=\vec{z}/\norm{\vec{z}}$, where $\vec{z}\in\mathbb{R}^N\backslash\{0\}$, we have
\[
\sqrt{a}
\leq\norm{\Sigma^{1/2}\ket{z}} 
\leq \sqrt{b}. \tagaligneq \label{bounds-for-norm-sigma12-z}
\]
In particular, for $\vec{x}=\Sigma^{1/2}\vec{z}$ and $\lambda_{\min},\lambda_{\max}$ denoting the minimal and the maximal eigenvalues of $\Sigma$, it holds that
\[
\sqrt{\lambda_{\min}}
\leq 
\frac{\norm{\vec{x}}}{\norm{\vec{z}}}
\leq
\sqrt{\lambda_{\max}}. \tagaligneq \label{bounds-for-norm-x-div-by-norm-z}
\]
\end{lemma}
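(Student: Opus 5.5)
The plan is to reduce everything to the spectral form of $\Sigma^{1/2}$ from \autoref{spectral-form-Sigma12-prop} and then compute the squared norm directly. Fix an orthonormal eigenbasis $\{\ket{v_k}\}_{k=1}^N$ of $\Sigma$ with eigenvalues $\lambda_k\in[a,b]$, so that $\Sigma^{1/2}=\sum_k\sqrt{\lambda_k}\kb{v_k}$. Since $\ket{z}$ is a unit vector, we can expand it in this basis as $\ket{z}=\sum_k c_k\ket{v_k}$ with $c_k=\braket{v_k}{z}$ and $\sum_k|c_k|^2=1$ (Parseval).

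Applying $\Sigma^{1/2}$ gives $\Sigma^{1/2}\ket{z}=\sum_k\sqrt{\lambda_k}\,c_k\ket{v_k}$. By orthonormality, $\norm{\Sigma^{1/2}\ket{z}}^2=\sum_k\lambda_k|c_k|^2$. This is a convex combination of the eigenvalues, so it lies in $[\min_k\lambda_k,\max_k\lambda_k]\subseteq[a,b]$. Taking square roots, which is monotone on $[0,\infty)$, yields \eqref{bounds-for-norm-sigma12-z}. An equivalent route is the Rayleigh quotient $\bra{z}\Sigma\ket{z}$, using $\norm{\Sigma^{1/2}\ket{z}}^2=\bra{z}(\Sigma^{1/2})^2\ket{z}=\bra{z}\Sigma\ket{z}$, which holds because $\Sigma^{1/2}$ is symmetric.

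For \eqref{bounds-for-norm-x-div-by-norm-z}, take $a=\lambda_{\min}$ and $b=\lambda_{\max}$. The first part requires $a<b$; if $\lambda_{\min}=\lambda_{\max}$ the same computation gives equality, so the bound still holds. Linearity then gives $\vec{x}=\Sigma^{1/2}\vec{z}=\norm{\vec{z}}\,\Sigma^{1/2}\ket{z}$, hence $\norm{\vec{x}}/\norm{\vec{z}}=\norm{\Sigma^{1/2}\ket{z}}$, and the bound follows, using $\vec{z}\neq 0$ to divide.

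There is no real obstacle here. The only points needing care are that the eigenbasis is orthonormal, which is guaranteed for real symmetric $\Sigma$, and the degenerate case $a=b$ just mentioned.
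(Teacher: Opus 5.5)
Your proposal is correct and follows essentially the same argument as the paper: expand $\ket{z}$ in an orthonormal eigenbasis, observe that $\norm{\Sigma^{1/2}\ket{z}}^2=\sum_k\lambda_k|c_k|^2$ is a convex combination of the eigenvalues, and use $\vec{x}=\norm{\vec{z}}\,\Sigma^{1/2}\ket{z}$ for the ratio bound. Your remark on the case $\lambda_{\min}=\lambda_{\max}$ is a small improvement, since the paper deduces the second bound from the first without noting that its hypothesis $a<b$ may then fail.
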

\begin{proof}
The existence and uniqueness of $\Sigma^{1/2}$ (see \autoref{defn-sigma12}) is guaranteed by \autoref{horn-johnson-uniqueness-lemma}. Similarly to \autoref{spectral-form-Sigma12-prop}, we fix one orthonormal eigenbasis $\{\ket{v_k}\}_{k=1}^N$
and we have that any unit vector $\ket{z}$ admits an expression with respect to this orthonormal basis, as $\ket{z}=\sum_i z_i \ket{i}=\sum_k \xi_k \ket{v_k}$, for some $(\xi_k)_{k=1}^N\in\mathbb{R}^N$. This means $\norm{\Sigma^{1/2}\ket{z}}^2=\norm{\sum_k \sqrt{\lambda_k}\xi_k \ket{v_k}}^2=\sum_k \lambda_k |\xi_k|^2$. Since ${}^\forall k$, $a\leq \lambda_k\leq b$ by assumption and $\sum_k|\xi_k|^2 = \norm{\ket{z}}^2=1$, \eqref{bounds-for-norm-sigma12-z} follows. Additionally, \eqref{bounds-for-norm-x-div-by-norm-z} follows from the fact that ${}^\forall k$, $\lambda_k\in\left[\lambda_{\min},\lambda_{\max}\right]\subseteq [a,b]$ and the definition $\ket{z}=\vec{z}/\norm{\vec{z}}$.
\end{proof}
\subsection{Quantum algorithm for preparing $\ket{x}$}
\begin{proposition}\label{main-thm2}
Let $\varepsilon\in\left(0,2\right]$ and
$\Sigma\in\mathbb{R}^{N\times N}$ be a positive definite covariance matrix whose eigenvalues lie in $\left[\frac{1}{\widehat{\kappa}},1\right]$ for some known constant $\widehat{\kappa}\geq 2$.
Suppose we can implement an $(\alpha_{U_\Sigma},a_{U_\Sigma},\varepsilon_{U_\Sigma})$-block-encoding $U_{\Sigma}$ of $\Sigma$ and a state-preparation unitary $U_{\ket{z}}:\ket{0}\mapsto\ket{z}$, where $\ket{z}=\vec{z}/\norm{\vec{z}}$ and $\vec{z}$ is a realisation of the random vector $Z\sim\mathcal{N}(0,I_N)$, by using $T_{U_\Sigma}$ and $T_{U_{\ket{z}}}$ elementary gates respectively.
If the block-encoding error $\varepsilon_{U_\Sigma}$ of $U_\Sigma$ satisfies
\[
\varepsilon_{U_\Sigma}=o\left(\varepsilon\cdot \widehat{\kappa}^{-1.5} \cdot \log^{-3}\left(\frac{\widehat{\kappa}}{\varepsilon}\right)\right),
\tagaligneq \label{condn-for-eps-sig-main-thm}
\]
then we can construct a unitary $U_{\ket{x},\varepsilon}$ that prepares a quantum state $\ket{\widetilde{\varphi}}_{\mathrm{Q}\mathrm{I}}=U_{\ket{x},\varepsilon}\ket{0}_{\mathrm{Q}\mathrm{I}}$ satisfying $\norm{\ket{\widetilde{\varphi}}_{\mathrm{Q}\mathrm{I}}-\ket{0}_{\mathrm{Q}}\ket{x}_{\mathrm{I}}}_2\leq\varepsilon$. 
Here, $\mathrm{I}$ denotes the system register of $n=\ceil{\log_2(N+1)}$ qubits, $\mathrm{Q}$ denotes an ancillary register of $\widetilde{q}=a_{U_\Sigma}+O\left(\log\log(\widehat{\kappa}/{\varepsilon})\right)$ qubits, and $\ket{x}=\vec{x}/\norm{\vec{x}}$, where $\vec{x}$ is a realisation of the random vector $X\sim\mathcal{N}(0,\Sigma)$.
Implementing $U_{\ket{x},\varepsilon}$ takes
\[
T_{U_{\ket{x},\varepsilon}}
=O\left(\sqrt{\widehat{\kappa}}
\cdot
\log\left(\frac{1}{{\varepsilon}}\right)
\cdot
\Biggl(\widetilde{q}+n+T_{U_{\ket{z}}}+\alpha_{U_\Sigma}\,\widehat{\kappa}\,\Big(a_{U_\Sigma}+T_{U_\Sigma}\Big)\log^2\left(\frac{\widehat{\kappa}}{\varepsilon}\right)\Biggr)
\right)
\]
elementary gates. 
In particular, if \autoref{assump-plus-TU-ket-y-log2-N} and \autoref{assump-QROM-like-unitary} hold, \autoref{block-encoding-of-sigma} implies $\varepsilon_{U_{\Sigma}}=0$, and condition \eqref{condn-for-eps-sig-main-thm} is fulfilled automatically. Under such a circumstance, the implementation of $U_{\ket{x},\varepsilon}$ requires
\[
T_{U_{\ket{x},\varepsilon}}
=
O\left(
\norm{\Sigma}_F\,\widehat{\kappa}^{1.5}\,\polylog(N)\log^3\left(\frac{\widehat{\kappa}}{\varepsilon}\right)
\right) \tagaligneq \label{main-thm-cost-eq-2}
\]
elementary gate depth and an ancillary register $\mathrm{Q}$ of $\widetilde{q}=O\left(\log N+\log\log(\widehat{\kappa}/{\varepsilon})\right)$ qubits.
\end{proposition}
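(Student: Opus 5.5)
The construction has three stages. First we block-encode $\Sigma^{1/2}$ with \autoref{prop-for-Hc2}. Then we apply that block-encoding to $\ket{z}$. Finally we amplify the good branch with \autoref{normalising-block-encoding-prep-state-via-qaa}. The error has two sources, the block-encoding error of $\Sigma^{1/2}$ and the QAA error, and we control them separately using \autoref{lem-for-unit-vector-approx2} and the lower bound of \autoref{norm-sqrt-sigma-times-ket-z}.

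\textbf{Step 1 (block-encoding of $\Sigma^{1/2}$).} Apply \autoref{prop-for-Hc2} with $H=\Sigma$, $c=1/2$, $U_H=U_\Sigma$ and a target accuracy $\widetilde{\varepsilon}$ to be fixed later. This gives a $(2,\widetilde{q},\widetilde{\varepsilon})$-block-encoding $U_{\Sigma^{1/2}}$ of $\Sigma^{1/2}$ with $\widetilde{q}=a_{U_\Sigma}+O(\log\log(1/\widetilde{\varepsilon}))$. Its cost is $O(\alpha_{U_\Sigma}\widehat{\kappa}(a_{U_\Sigma}+T_{U_\Sigma})\log^2(\widehat{\kappa}/\widetilde{\varepsilon}))$ gates. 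The proposition requires $\varepsilon_{U_\Sigma}=o(\widetilde{\varepsilon}/(\widehat{\kappa}\log^3(\widehat{\kappa}/\widetilde{\varepsilon})))$. With $\widetilde{\varepsilon}=\Theta(\varepsilon/\sqrt{\widehat{\kappa}})$ this is exactly \eqref{condn-for-eps-sig-main-thm}, because $\log(\widehat{\kappa}/\widetilde{\varepsilon})=\Theta(\log(\widehat{\kappa}/\varepsilon))$ for $\widehat{\kappa}\ge2$. For this reason I will set $\widetilde{\varepsilon}=\varepsilon/(8\sqrt{\widehat{\kappa}})$.

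\textbf{Step 2 (handling the inexact block-encoding).} Let $A$ be the matrix actually encoded, i.e.\ $A=2(\bra{0}\otimes I)U_{\Sigma^{1/2}}(\ket{0}\otimes I)$. Then $U_{\Sigma^{1/2}}$ is an exact $(2,\widetilde{q},0)$-block-encoding of $A$, and $\norm{A-\Sigma^{1/2}}\le\widetilde{\varepsilon}$. This is what lets us use \autoref{normalising-block-encoding-prep-state-via-qaa}, which needs zero error.

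\begin{itemize}
\item Target state. The actually prepared target is $\ket{\chi}=A\ket{z}/\norm{A\ket{z}}$. Since $\norm{A\ket{z}-\Sigma^{1/2}\ket{z}}\le\widetilde{\varepsilon}$, \autoref{lem-for-unit-vector-approx2} gives $\norm{\ket{\chi}-\ket{x}}\le 2\widetilde{\varepsilon}/\norm{\Sigma^{1/2}\ket{z}}\le 2\widetilde{\varepsilon}\sqrt{\widehat{\kappa}}=\varepsilon/4$. Here we used $\norm{\Sigma^{1/2}\ket{z}}\ge\widehat{\kappa}^{-1/2}$ from \autoref{norm-sqrt-sigma-times-ket-z}, and $\ket{x}=\Sigma^{1/2}\ket{z}/\norm{\Sigma^{1/2}\ket{z}}$ by the independence discussion in \autoref{prelim-sec}.
\item Amplitude lower bound. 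We have $\norm{A\ket{z}}/2\ge(\widehat{\kappa}^{-1/2}-\widetilde{\varepsilon})/2\ge \widehat{\kappa}^{-1/2}/4$, using $\varepsilon\le2$, so $\widetilde{\varepsilon}\le\widehat{\kappa}^{-1/2}/4$. We therefore take $a_\ell=\widehat{\kappa}^{-1/2}/4$.
\end{itemize}

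\textbf{Step 3 (amplification and totals).} Run \autoref{normalising-block-encoding-prep-state-via-qaa} with $\varepsilon_\chi=\varepsilon/2$, $U_A=U_{\Sigma^{1/2}}$ and $U_{\ket{b}}=U_{\ket{z}}$. The triangle inequality then gives total error at most $3\varepsilon/4\le\varepsilon$.

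The cost is $O(\sqrt{\widehat{\kappa}}\log(1/\varepsilon)(\widetilde{q}+n+T_{U_{\ket{z}}}+T_{U_{\Sigma^{1/2}}}))$. Substituting the Step 1 cost of $T_{U_{\Sigma^{1/2}}}$ gives the first displayed bound.

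For the specialisation:
\begin{itemize}
\item \autoref{block-encoding-of-sigma} gives $\alpha_{U_\Sigma}=\norm{\Sigma}_F$, $a_{U_\Sigma}=n$, $\varepsilon_{U_\Sigma}=0$ and $T_{U_\Sigma}=O(\polylog N)$.
\item \autoref{assump-plus-TU-ket-y-log2-N} gives $T_{U_{\ket{z}}}=O(\log^2N)$.
\item The dominant term is therefore $\norm{\Sigma}_F\widehat{\kappa}^{1.5}\polylog(N)\log(1/\varepsilon)\log^2(\widehat{\kappa}/\varepsilon)$. Since $\log(1/\varepsilon)\lesssim\log(\widehat{\kappa}/\varepsilon)$, this is bounded by \eqref{main-thm-cost-eq-2}.
\item The remaining terms, $\widetilde{q}+n+T_{U_{\ket{z}}}$, are absorbed because $\norm{\Sigma}_F\ge\lambda_{\max}$ is bounded below. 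I treat that lower bound as a mild normalisation point to state explicitly.
\item Depth is bounded by gate count for the QSVT part, and by the assumed depth for the loaders.
\end{itemize}

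The main obstacle is Step 2: \autoref{normalising-block-encoding-prep-state-via-qaa} is stated only for exact block-encodings. Redefining the target as the exactly encoded $A$ is what avoids needing an approximate version of that proposition. It also explains where the extra $\sqrt{\widehat{\kappa}}$ in \eqref{condn-for-eps-sig-main-thm} comes from.
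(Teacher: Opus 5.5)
Your proposal is correct and follows the paper's proof step for step. You block-encode $\Sigma^{1/2}$ via \autoref{prop-for-Hc2} with $\widetilde{\varepsilon}=\Theta(\varepsilon/\sqrt{\widehat{\kappa}})$, lower-bound the good-branch amplitude by $1/(4\sqrt{\widehat{\kappa}})$, amplify with \autoref{normalising-block-encoding-prep-state-via-qaa} at accuracy $\varepsilon/2$, and bound $\norm{\ket{\chi}-\ket{x}}$ via \autoref{lem-for-unit-vector-approx2} and \autoref{norm-sqrt-sigma-times-ket-z}. The only differences are cosmetic: you choose $\widetilde{\varepsilon}=\varepsilon/(8\sqrt{\widehat{\kappa}})$ instead of the paper's $\varepsilon/(4\sqrt{\widehat{\kappa}})$, which leaves total error $3\varepsilon/4$, and you name the exactly encoded block $A$ where the paper writes $f(\Sigma)$, which is slightly cleaner since that block need not be an exact function of $\Sigma$.
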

\begin{proof}
Let $\widetilde{\varepsilon}\coloneqq \frac{1}{2\sqrt{\widehat{\kappa}}}\frac{\varepsilon}{2}$.
In order to use \autoref{prop-for-Hc2} to implement a block-encoding of $\Sigma^{1/2}$ with parameter $\widetilde{\varepsilon}$, we need that $\varepsilon_{U_\Sigma}=o\left(\widetilde{\varepsilon}/(\widehat{\kappa}\log^{3}\left(\widehat{\kappa}/\widetilde{\varepsilon}\right)\right)$ holds. This, together with other requirements for using \autoref{prop-for-Hc2}, is already satisfied by conditions assumed in the theorem statement.
Hence, via \autoref{prop-for-Hc2} we can construct $U_{\Sigma^{1/2}}$, a $(2,\widetilde{q},\widetilde{\varepsilon})$-block-encoding of $\Sigma^{1/2}$, where $\mathbb{Z}^+\ni\widetilde{q}=a_{U_\Sigma}+O\left(\log\log(1/\widetilde{\varepsilon})\right)$, using 
$$
T_{U_{\Sigma^{1/2}}}=O\left(\alpha_{U_\Sigma}\,\widehat{\kappa}\,\left(a_{U_\Sigma}+T_{U_\Sigma}\right)\log^2\left(\frac{\widehat{\kappa}}{\widetilde{\varepsilon}}\right)\right)
$$ 
elementary gates. By definition of block-encodings, we have
\begin{align*}
U_{\Sigma^{1/2}}\left(I^{\otimes \widetilde{q}}\otimes U_{\ket{z}}\right)\ket{0}_{\mathrm{Q}}\ket{0}_{\mathrm{I}}
&=\frac{1}{2}\ket{0}_{\mathrm{Q}}f(\Sigma)\ket{z}_{\mathrm{I}}+\ket{\widetilde{z}^{\perp}}_{\mathrm{Q}\mathrm{I}}, \tagaligneq \label{state-after-apply-sigma12-main-thm}
\end{align*}
where $\norm{f(\Sigma)-\Sigma^{1/2}}\leq\widetilde{\varepsilon}$ and $\ket{\widetilde{z}^{\perp}}_{\mathrm{Q}\mathrm{I}}$ is an unnormalised state whose $\mathrm{Q}$ register is orthogonal to the state $\ket{0}_{\mathrm{Q}}$.
From this, it holds that
\begin{align*}
\frac12\norm{f(\Sigma)\ket{z}} 
\,\geq\, \frac12\norm{\Sigma^{1/2}\ket{z}}-\frac{\widetilde{\varepsilon}}{2} 
\,\stackrel{(\dagger)}{\geq}\,
\frac{1}{2\sqrt{\widehat{\kappa}}}-\frac{\widetilde{\varepsilon}}{2}
\,\stackrel{(\ddagger)}{\geq}\,
\frac{1}{4\sqrt{\widehat{\kappa}}}\eqqcolon a_\ell, \tagaligneq \label{find-psucc-lower-bound-main-thm}
\end{align*}
where $(\dagger)$ follows from \autoref{norm-sqrt-sigma-times-ket-z} and $(\ddagger)$ holds since $\widetilde{\varepsilon}\in\left(0,\frac{1}{2\sqrt{\widehat{\kappa}}}\right]$ by definition.
Then, by letting $\varepsilon_\chi\coloneqq \frac{\varepsilon}{2}$ and denoting $\ket{\chi}\coloneqq \frac{f(\Sigma)\ket{z}}{\norm{f(\Sigma)\ket{z}}}$,
we can use \autoref{normalising-block-encoding-prep-state-via-qaa} to construct a unitary $U_{\ket{\widetilde{\varphi}},\varepsilon_\chi}\eqqcolon U_{\ket{x},\varepsilon}$ that maps $\ket{0}_{\mathrm{Q}\mathrm{I}}\mapsto\ket{\widetilde{\varphi}}_{\mathrm{Q}\mathrm{I}}$ satisfying
\[
\Big\lVert 
\ket{\widetilde{\varphi}}_{\mathrm{Q}\mathrm{I}}-\ket{0}_{\mathrm{Q}}\ket{\chi}_{\mathrm{I}}
\Big\rVert_2
\leq
\varepsilon_\chi = \frac{\varepsilon}{2}, \tagaligneq \label{main-thm-apprx-err-varphi-to-chi-1}
\]
by using 
\begin{align*}
&O\left(\frac{1}{{a_\ell}}\log\left(\frac{1}{{\varepsilon_{\chi}}}\right)\left(\widetilde{q}+n+T_{U_{\Sigma^{1/2}}}+T_{U_{\ket{z}}}\right)\right)\\
&=
O\left(\sqrt{\widehat{\kappa}}\log\left(\frac{1}{{\varepsilon}}\right)\left(\widetilde{q}+n+\alpha_{U_\Sigma}\,\widehat{\kappa}\,\Big(a_{U_\Sigma}+T_{U_\Sigma}\Big)\log^2\left(\frac{\widehat{\kappa}}{\varepsilon}\right)+T_{U_{\ket{z}}}\right)\right) \tagaligneq \label{main-thm-cost-eq-1}
\end{align*}
elementary gates.
Finally, since
\[
\norm{f(\Sigma)\ket{z}_{\mathrm{I}}-\Sigma^{1/2}\ket{z}_{\mathrm{I}}}\leq\norm{f(\Sigma)-\Sigma^{1/2}}\leq\widetilde{\varepsilon},
\] 
\autoref{lem-for-unit-vector-approx2} suggests that
\[
\norm{\ket{\chi}-\ket{x}}=\norm{\frac{f(\Sigma)\ket{z}_{\mathrm{I}}}{\norm{f(\Sigma)\ket{z}_{\mathrm{I}}}}-\frac{\Sigma^{1/2}\ket{z}_{\mathrm{I}}}{\norm{\Sigma^{1/2}\ket{z}_{\mathrm{I}}}}}
\leq 
\frac{2\widetilde{\varepsilon}}{\norm{\Sigma^{1/2}\ket{z}_{\mathrm{I}}}}= \frac{(\varepsilon/2)}{\sqrt{\widehat{\kappa}}\norm{\Sigma^{1/2}\ket{z}_{\mathrm{I}}}}\leq\frac{\varepsilon}{2},
\]
where the last inequality follows from $\sqrt{\widehat{\kappa}}\norm{\Sigma^{1/2}\ket{z}_{\mathrm{I}}}\geq 1$; cf.\ \autoref{norm-sqrt-sigma-times-ket-z}.
Combining this with \eqref{main-thm-apprx-err-varphi-to-chi-1}, we now know that the prepared quantum state $\ket{\widetilde{\varphi}}$ satisfies:
\begin{align*}
\Big\lVert
\ket{\widetilde{\varphi}}_{\mathrm{Q}\mathrm{I}}-\ket{0}_{\mathrm{Q}}\ket{x}_{\mathrm{I}}
\Big\rVert
&\leq
\Big\lVert
\ket{\widetilde{\varphi}}-\ket{0}_{\mathrm{Q}}\ket{\chi}_{\mathrm{I}}
\Big\rVert
+
\Big\lVert
\ket{0}_{\mathrm{Q}}\ket{\chi}_{\mathrm{I}}-\ket{0}_{\mathrm{Q}}\ket{x}_{\mathrm{I}}
\Big\rVert
\leq \frac{\varepsilon}{2}+\frac{\varepsilon}{2}=\varepsilon.
\end{align*}
To show \eqref{main-thm-cost-eq-2}, plugging $\alpha_{U_\Sigma}=\lVert \Sigma \rVert_F$, $a_{U_\Sigma}=\ceil{\log_2(N+1)}$, $T_{U_\Sigma}=O(\polylog N)$ (see \autoref{block-encoding-of-sigma}), and $T_{U_{\ket{z}}}=O\left(\log^2 N\right)$ (see \autoref{assump-plus-TU-ket-y-log2-N}) into \eqref{main-thm-cost-eq-1} suffices\footnote{Note the change in the unit from ``gates'' (count) to ``gate \textit{depth}'' in \eqref{main-thm-cost-eq-2}, cf. \autoref{convention-gate-count-vs-depth}.}.
\end{proof}
\begin{theorem}\label{main-thm-for-gen-mat}
Let $\varepsilon\in\left(0,2\right]$ and $\Sigma\in\mathbb{R}^{N \times N}$ be a positive definite covariance matrix whose maximal eigenvalue and condition number are denoted by $\lambda_{\max}$ and $\kappa$ respectively.
Suppose \autoref{assmp-estimate-lamb-max-kappa} holds, so that we have the knowledge of the estimates $\widetilde{\lambda}_{\max}$ and $\widetilde{\kappa}$, and the multiplicative-error bounds $L$ and $K$, satisfying \eqref{main-thm-col-assump2}.
Suppose we can implement an $(\alpha_{U_\Sigma},a_{U_\Sigma},\varepsilon_{U_\Sigma})$-block-encoding $U_{\Sigma}$ of $\Sigma$ and a state-preparation unitary $U_{\ket{z}}:\ket{0}\mapsto\ket{z}$, where $\ket{z}=\vec{z}/\norm{\vec{z}}$ and $\vec{z}$ is a realisation of the random vector $Z\sim\mathcal{N}(0,I_N)$, by using $T_{U_\Sigma}$ and $T_{U_{\ket{z}}}$ elementary gates respectively.
If the block-encoding error $\varepsilon_{U_\Sigma}$ of $U_\Sigma$ satisfies
\[
\frac{\varepsilon_{U_\Sigma}}{\widetilde{\lambda}_{\max}}=o\left(\varepsilon\cdot (L\widetilde{\kappa})^{-1.5} \cdot \log^{-3}\left(\frac{L\widetilde{\kappa}}{\varepsilon}\right)\right), \tagaligneq \label{eps-U-sig-cond-main-thm-for-gen-mat}
\]
then we can construct a unitary $U_{\ket{x},\varepsilon}$ that prepares a quantum state $\ket{\widetilde{\varphi}}_{\mathrm{Q}\mathrm{I}}=U_{\ket{x},\varepsilon}\ket{0}_{\mathrm{Q}\mathrm{I}}$ satisfying $\norm{\ket{\widetilde{\varphi}}_{\mathrm{Q}\mathrm{I}}-\ket{0}_{\mathrm{Q}}\ket{x}_{\mathrm{I}}}_2\leq\varepsilon$. 
Here, 
$\mathrm{I}$ denotes the system register of $n=\ceil{\log_2(N+1)}$ qubits, 
$\mathrm{Q}$ denotes an ancillary register of $\widetilde{q}=a_{U_\Sigma}+O\left(\log\log(L\widetilde{\kappa}/{\varepsilon})\right)$ qubits, 
and $\ket{x}=\vec{x}/\norm{\vec{x}}$, where $\vec{x}$ is a realisation of the random vector $X\sim\mathcal{N}(0,\Sigma)$. 
Implementing $U_{\ket{x},\varepsilon}$ takes
\[
T_{U_{\ket{x},\varepsilon}}
=
O\left(\sqrt{L\widetilde{\kappa}}
\cdot
\log\left(\frac{1}{{\varepsilon}}\right)
\cdot
\Biggl(
\widetilde{q}
+
n+T_{U_{\ket{z}}}+\frac{\alpha_{U_\Sigma}}{\widetilde{\lambda}_{\max}}\,L\widetilde{\kappa}\,\Big(a_{U_\Sigma}+T_{U_{\Sigma}}\Big)\log^2\left(\frac{L\widetilde{\kappa}}{\varepsilon}\right)\Biggr)
\right)
\]
elementary gates.
In particular, if \autoref{assump-LK-plus} holds, implementing $U_{\ket{x},\varepsilon}$ takes
\[
T_{U_{\ket{x},\varepsilon}}
=
O\left(
\sqrt{\kappa}
\,
\log\left(\frac{1}{{\varepsilon}}\right)
\,
\polylog(N) \,
\Biggl(
\widetilde{q}
+
n+T_{U_{\ket{z}}}+\frac{\alpha_{U_\Sigma}}{\lambda_{\max}}\,\kappa\,\Big(a_{U_\Sigma}+T_{U_{\Sigma}}\Big)\log^2\left(\frac{\kappa}{\varepsilon}\right)\Biggr)
\right)
\]
elementary gates, where $\widetilde{q}=a_{U_\Sigma}+O\left(\log\log\log(N)+\log\log({\kappa}/{\varepsilon})\right)$ qubits.
\end{theorem}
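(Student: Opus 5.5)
The plan is to reduce \autoref{main-thm-for-gen-mat} to \autoref{main-thm2} by rescaling. Define $\Sigma' \coloneqq \Sigma/\widetilde{\lambda}_{\max}$ and $\widehat{\kappa}\coloneqq \max\{2, L\widetilde{\kappa}\}$. By \eqref{main-thm-col-assump2}, the eigenvalues of $\Sigma'$ satisfy $\lambda_{\max}/\widetilde{\lambda}_{\max}\le 1$ and
\[
\frac{\lambda_{\min}}{\widetilde{\lambda}_{\max}}=\frac{\lambda_{\max}}{\kappa\,\widetilde{\lambda}_{\max}}\ge \frac{1}{L\kappa}\ge\frac{1}{L\widetilde{\kappa}}.
\]
Hence the spectrum of $\Sigma'$ lies in $[1/\widehat{\kappa},1]$, where $\widehat{\kappa}$ is a known constant; the case $L\widetilde{\kappa}<2$ is absorbed by taking the max with $2$, which changes only constants.

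The key observation is that $U_\Sigma$ is itself a block-encoding of $\Sigma'$. Since $\Sigma = \alpha_{U_\Sigma}\langle 0|U_\Sigma|0\rangle + E$ with $\|E\|\le \varepsilon_{U_\Sigma}$, dividing by $\widetilde{\lambda}_{\max}$ shows that $U_\Sigma$ is an $(\alpha_{U_\Sigma}/\widetilde{\lambda}_{\max},\, a_{U_\Sigma},\, \varepsilon_{U_\Sigma}/\widetilde{\lambda}_{\max})$-block-encoding of $\Sigma'$, with the same gate cost $T_{U_\Sigma}$. Condition \eqref{eps-U-sig-cond-main-thm-for-gen-mat} is then exactly \eqref{condn-for-eps-sig-main-thm} with $\widehat{\kappa}=L\widetilde{\kappa}$. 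The normalized target is also unchanged: because $(\Sigma')^{1/2}=\widetilde{\lambda}_{\max}^{-1/2}\Sigma^{1/2}$ by \autoref{spectral-form-Sigma12-prop}, we have $(\Sigma')^{1/2}\ket{z}/\|(\Sigma')^{1/2}\ket{z}\| = \Sigma^{1/2}\ket{z}/\|\Sigma^{1/2}\ket{z}\|=\ket{x}$, so $\ket{x}$ still has the law of $\vec{x}/\|\vec{x}\|$ with $X\sim\mathcal{N}(0,\Sigma)$. This invariance under positive rescaling is the one point I would state explicitly, since \autoref{main-thm2} is phrased in terms of a covariance matrix with spectrum in $[1/\widehat{\kappa},1]$. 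Applying \autoref{main-thm2} to $\Sigma'$ then gives the unitary $U_{\ket{x},\varepsilon}$, the error bound, the ancilla count $\widetilde{q}$, and the first gate bound after substituting $\alpha_{U_\Sigma}\mapsto\alpha_{U_\Sigma}/\widetilde{\lambda}_{\max}$ and $\widehat{\kappa}\mapsto L\widetilde{\kappa}$.

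For the second statement, under \autoref{assump-LK-plus} we have $L,K=\widetilde{\Theta}(1)$, which gives:
\begin{itemize}
\item $L\widetilde{\kappa}\le LK\kappa = \kappa\,\polylog N$;
\item $\sqrt{L\widetilde{\kappa}}\le \sqrt{\kappa}\,\polylog N$;
\item $1/\widetilde{\lambda}_{\max}\le 1/\lambda_{\max}$.
\end{itemize}
We also have $\log^2(L\widetilde{\kappa}/\varepsilon)=O(\log^2(\kappa/\varepsilon)+\log^2\log N)$. The extra $\log\log N$ contributions are absorbed into the $\polylog(N)$ factor: with $u=\log(\kappa/\varepsilon)\ge 1$ and $v=\log\log N$, one has $(u+v)^2\le u^2(1+v)^2$. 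Finally, $\log\log(L\widetilde{\kappa}/\varepsilon)=O(\log\log\log N+\log\log(\kappa/\varepsilon))$, which gives the stated $\widetilde{q}$.

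The main obstacle is this bookkeeping in the soft-$O$ substitution. It requires taking care that $L$ and $K$ are only polylogarithmic and that the logarithmic factors combine multiplicatively, not additively. It also requires ensuring $\widehat{\kappa}\ge 2$.
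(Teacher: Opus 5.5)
Your proposal is correct and follows the paper's proof essentially step for step. You rescale to $\widetilde{\Sigma}=\Sigma/\widetilde{\lambda}_{\max}$ and reuse $U_\Sigma$ as an $(\alpha_{U_\Sigma}/\widetilde{\lambda}_{\max},a_{U_\Sigma},\varepsilon_{U_\Sigma}/\widetilde{\lambda}_{\max})$-block-encoding of it. You place its spectrum in $[1/(L\widetilde{\kappa}),1]$, apply \autoref{main-thm2} with $\widehat{\kappa}=\max\{L\widetilde{\kappa},2\}$, and observe that the normalised target $\ket{x}$ is invariant under the rescaling.

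Your bookkeeping for the substitution under \autoref{assump-LK-plus} is more explicit than the paper's, which simply asserts the resulting bound. In particular, you absorb the $\log\log N$ contributions multiplicatively into the $\polylog(N)$ factor.
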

\begin{proof}
Let us denote $\widetilde{\Sigma}\coloneqq \Sigma/\widetilde{\lambda}_{\max}$. 
Note that, as we have access to the block-encoding $U_{\Sigma}$ of $\Sigma$, we can use this unitary as the block-encoding for $\widetilde{\Sigma}$ as well, because
\begin{align*}
&\norm{\Sigma-\alpha_{U_\Sigma}\left(\bra{0}^{\otimes a_{U_\Sigma}}\otimes I_N\right)U_{\Sigma}\left(\ket{0}^{\otimes a_{U_\Sigma}}\otimes I_N\right)}
\leq
\varepsilon_{U_\Sigma}\\
\iff&
\norm{\widetilde{\Sigma}-\frac{\alpha_{U_\Sigma}}{\widetilde{\lambda}_{\max}}\left(\bra{0}^{\otimes a_{U_\Sigma}}\otimes I_N\right)U_{\Sigma}\left(\ket{0}^{\otimes a_{U_\Sigma}}\otimes I_N\right)}
\leq
\frac{\varepsilon_{U_\Sigma}}{\widetilde{\lambda}_{\max}},
\end{align*}
which means that $U_{\Sigma}$ is an $\left(\frac{\alpha_{U_\Sigma}}{\widetilde{\lambda}_{\max}},a_{U_\Sigma},\frac{\varepsilon_{U_\Sigma}}{\widetilde{\lambda}_{\max}}\right)$-block-encoding of $\widetilde{\Sigma}$.
Now, by definition of $\widetilde{\lambda}_{\max}$, we have that all eigenvalues of $\widetilde{\Sigma}$ lie in $\left[\frac{1}{L\kappa},1\right]\subseteq\left[\frac{1}{L\widetilde{\kappa}},1\right]$.
Therefore, we can apply \autoref{main-thm2} to $\widetilde{\Sigma}$ with choice $\widehat{\kappa}\equiv \max\{L\widetilde{\kappa},2\}$ as long as $\varepsilon_{U_\Sigma}/\widetilde{\lambda}_{\max}=o\left(\varepsilon\cdot \widehat{\kappa}^{-1.5} \cdot \log^{-3}\left(\frac{\widehat{\kappa}}{\varepsilon}\right)\right)$ holds, but this is already assumed in the claim (see \eqref{eps-U-sig-cond-main-thm-for-gen-mat}).
Doing so allows us to prepare the quantum state $\ket{\widetilde{\varphi}}$ satisfying
\[
\norm{\ket{\widetilde{\varphi}}_{\mathrm{Q}\mathrm{I}}-\ket{0}_{\mathrm{Q}}\frac{\widetilde{\Sigma}^{1/2}\ket{z}_{\mathrm{I}}}{\norm{\widetilde{\Sigma}^{1/2}\ket{z}_{\mathrm{I}}}}}
\stackrel{(\dagger)}{=}
\Big\lVert
\ket{\widetilde{\varphi}}_{\mathrm{Q}\mathrm{I}}-\ket{0}_{\mathrm{Q}}\ket{x}_{\mathrm{I}}
\Big\rVert
\leq
\varepsilon,
\]
where 
$\mathrm{Q}$ is an ancillary register of $\widetilde{q}=a_{U_\Sigma}+O\left(\log\log(L\widetilde{\kappa}/{\varepsilon})\right)$ qubits, 
and
$(\dagger)$ holds because $\frac{\widetilde{\Sigma}^{1/2}\ket{z}}{\norm{\widetilde{\Sigma}^{1/2}\ket{z}}}
=
\frac{{\Sigma^{1/2}\ket{z}}/{{\sqrt{\widetilde{\lambda}_{\max}}}}}{\norm{{\Sigma^{1/2}\ket{z}}/{{\sqrt{\widetilde{\lambda}_{\max}}}}}}
=\frac{\Sigma^{1/2}\ket{z}}{\norm{\Sigma^{1/2}\ket{z}}}=\ket{x}$.
According to \autoref{main-thm2}, the total gate complexity to prepare $\ket{\widetilde{\varphi}}$ is
\[
T_{U_{\ket{x},\varepsilon}}
=
O\left(\sqrt{L\widetilde{\kappa}}
\cdot
\log\left(\frac{1}{{\varepsilon}}\right)
\cdot
\Biggl(
\widetilde{q}
+
n+T_{U_{\ket{z}}}+\frac{\alpha_{U_\Sigma}}{\widetilde{\lambda}_{\max}}\,L\widetilde{\kappa}\,\Big(a_{U_\Sigma}+T_{U_{\Sigma}}\Big)\log^2\left(\frac{L\widetilde{\kappa}}{\varepsilon}\right)\Biggr)
\right).
\]
In particular, if \autoref{assump-LK-plus} holds, we have $L,K=\widetilde{\Theta}(1)$, which implies $\widetilde{\lambda}_{\max}=\widetilde{\Theta}(\lambda_{\max})$ and $\widetilde{\kappa}=\widetilde{\Theta}(\kappa)$. In such a case,
\[
T_{U_{\ket{x},\varepsilon}}
=
O\left(
\sqrt{\kappa}
\,
\log\left(\frac{1}{{\varepsilon}}\right)
\,
\polylog(N) \,
\Biggl(
\widetilde{q}
+
n+T_{U_{\ket{z}}}+\frac{\alpha_{U_\Sigma}}{\lambda_{\max}}\,\kappa\,\Big(a_{U_\Sigma}+T_{U_{\Sigma}}\Big)\log^2\left(\frac{\kappa}{\varepsilon}\right)\Biggr)
\right),
\]
where $\widetilde{q}=a_{U_\Sigma}+O\left(\log\log\log(N)+\log\log({\kappa}/{\varepsilon})\right)$ qubits.
\end{proof}
\begin{corollary}\label{main-thm-gen-mat-concrete-cost}
Under the setting of \autoref{main-thm-for-gen-mat}, if the assumptions about data loaders in polylogarithmic time (\autoref{assump-plus-TU-ket-y-log2-N} and \autoref{assump-QROM-like-unitary}), and the assumption about parameter estimation (\autoref{assump-LK-plus}) hold, then we can construct a unitary $U_{\ket{x},\varepsilon}$ that prepares a quantum state $\ket{\widetilde{\varphi}}_{\mathrm{Q}\mathrm{I}}=U_{\ket{x},\varepsilon}\ket{0}_{\mathrm{Q}\mathrm{I}}$ satisfying $\norm{\ket{\widetilde{\varphi}}_{\mathrm{Q}\mathrm{I}}-\ket{0}_{\mathrm{Q}}\ket{x}_{\mathrm{I}}}_2\leq\varepsilon$, where $\ket{x}=\vec{x}/\norm{\vec{x}}$, $\vec{x}$ is a realisation of the random vector $X\sim\mathcal{N}(0,\Sigma)$, and $\mathrm{Q}$ is an ancillary register of $\widetilde{q}=O\left(\log(N)+\log\log({\kappa}/{\varepsilon})\right)$ qubits. The implementation of $U_{\ket{x},\varepsilon}$ requires
\[
T_{U_{\ket{x},\varepsilon}}
=
{O}\left(
\frac{\lVert \Sigma \rVert_F}{\lambda_{\max}}\,\kappa^{1.5}\,\polylog(N)
\log^3\left(\frac{\kappa}{\varepsilon}\right)
\right),
\]
elementary gate depth.
\end{corollary}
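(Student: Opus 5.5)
The plan is to specialise the general bound of \autoref{main-thm-for-gen-mat} by plugging in the concrete data-loader parameters. Under \autoref{assump-QROM-like-unitary}, \autoref{block-encoding-of-sigma} (applied with $A=\Sigma$) gives $U_\Sigma\coloneqq U_{\mathcal{R}}^\dagger U_{\mathcal{L}}$. This is a $(\norm{\Sigma}_F,\ceil{\log_2(N+1)},0)$-block-encoding with $T_{U_\Sigma}=O(\polylog N)$ gate depth, so
\[
\alpha_{U_\Sigma}=\norm{\Sigma}_F,\qquad a_{U_\Sigma}=\ceil{\log_2(N+1)},\qquad \varepsilon_{U_\Sigma}=0.
\]
Since $\varepsilon_{U_\Sigma}=0$, the error condition \eqref{eps-U-sig-cond-main-thm-for-gen-mat} holds trivially. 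The left-hand side is identically zero, and the right-hand side is a positive quantity. Hence \autoref{main-thm-for-gen-mat} applies and produces $U_{\ket{x},\varepsilon}$ with the stated accuracy $\varepsilon$. Under \autoref{assump-plus-TU-ket-y-log2-N} we also have $T_{U_{\ket{z}}}=O(\log^2 N)$.

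Next I take the second (\autoref{assump-LK-plus}) form of the cost in \autoref{main-thm-for-gen-mat} and substitute these values. The ancilla count becomes $\widetilde{q}=\ceil{\log_2(N+1)}+O(\log\log\log N+\log\log(\kappa/\varepsilon))$, which is $O(\log N+\log\log(\kappa/\varepsilon))$ as claimed. The bracket then contains two groups of terms. The first group is $\widetilde{q}+n+T_{U_{\ket{z}}}$, which is $O(\polylog N+\log\log(\kappa/\varepsilon))$. The second group is $\frac{\norm{\Sigma}_F}{\lambda_{\max}}\kappa\,O(\polylog N)\log^2(\kappa/\varepsilon)$.

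To show that the second group dominates, I use $\norm{\Sigma}_F\geq\lambda_{\max}$, since the Frobenius norm is at least the spectral norm. I also use $\kappa\geq1$, and I take $\log(\kappa/\varepsilon)$ to be bounded below by a constant. This holds because $\varepsilon\le2$ and $\widehat\kappa\ge2$ inside the proof, so one may replace $\kappa$ by $\max\{\kappa,2\}$ without changing the asymptotics. Consequently $\log\log(\kappa/\varepsilon)=O(\log^2(\kappa/\varepsilon))$, and the first group is absorbed into the second.

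Multiplying by the outer factor $\sqrt{\kappa}\log(1/\varepsilon)\polylog(N)$ gives
\[
O\!\left(\frac{\norm{\Sigma}_F}{\lambda_{\max}}\kappa^{1.5}\polylog(N)\log(1/\varepsilon)\log^2(\kappa/\varepsilon)\right).
\]
I then bound $\log(1/\varepsilon)\le\log(\kappa/\varepsilon)+O(1)=O(\log(\kappa/\varepsilon))$, again using $\kappa\ge1$ and $\log(\kappa/\varepsilon)$ bounded below. This yields the claimed $\log^3(\kappa/\varepsilon)$ factor.

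Finally, the gate counts of \autoref{main-thm-for-gen-mat} bound the depth (\autoref{convention-gate-count-vs-depth}). The data loaders are stated in depth, so the result is read in elementary gate depth, exactly as in the passage from \eqref{main-thm-cost-eq-1} to \eqref{main-thm-cost-eq-2}.

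The main point needing care is absorbing the soft factors $L,K=\widetilde{\Theta}(1)$. These enter through $\sqrt{L\widetilde\kappa}$, through $L\widetilde\kappa/\widetilde\lambda_{\max}$, and inside the logarithms. I convert them into a $\polylog(N)$ multiplier and $\log(\kappa/\varepsilon)+O(\log\log N)$ terms. This conversion is already performed in the statement of \autoref{main-thm-for-gen-mat}, and the residual $\log\log N$ pieces are swallowed by $\polylog(N)$. This requires $\widetilde\lambda_{\max}\ge\lambda_{\max}$, so that $\alpha_{U_\Sigma}/\widetilde\lambda_{\max}\le\norm{\Sigma}_F/\lambda_{\max}$; the reverse direction is not needed.
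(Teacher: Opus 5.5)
Your proposal is correct and matches the paper's proof. Both apply \autoref{block-encoding-of-sigma} to obtain $\alpha_{U_\Sigma}=\norm{\Sigma}_F$, $a_{U_\Sigma}=\ceil{\log_2(N+1)}$ and $\varepsilon_{U_\Sigma}=0$, which makes the error condition hold automatically, and then substitute these together with $T_{U_\Sigma}=O(\polylog N)$ and $T_{U_{\ket{z}}}=O(\log^2 N)$ into the \autoref{assump-LK-plus} form of \autoref{main-thm-for-gen-mat}; your absorption of the lower-order terms (via $\norm{\Sigma}_F\geq\lambda_{\max}$ and merging $\log(1/\varepsilon)$ into $\log(\kappa/\varepsilon)$) just spells out bookkeeping the paper leaves implicit.
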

\begin{proof}
Similarly to the proof of \eqref{main-thm-cost-eq-2}, 
\autoref{assump-QROM-like-unitary} implies \autoref{block-encoding-of-sigma}, which further implies $\varepsilon_{U_{\Sigma}}=0$, and hence condition \eqref{eps-U-sig-cond-main-thm-for-gen-mat} is satisfied automatically. So, the claimed complexity follows by applying \autoref{main-thm-for-gen-mat}
with parameters $\alpha_{U_\Sigma}=\lVert \Sigma \rVert_F$, $a_{U_\Sigma}=\ceil{\log_2(N+1)}$, $T_{U_\Sigma}=O(\polylog N)$ (see \autoref{block-encoding-of-sigma}), and $T_{U_{\ket{z}}}=O\left(\log^2 N\right)$ (see \autoref{assump-plus-TU-ket-y-log2-N}).
\end{proof}
\subsection{Cumulative sum of a correlated Gaussian vector}
When we want to sample a sample path of a Gaussian process $G$, 
sometimes preparing the vector of increments $(\Delta G_{t_i})_{i=0}^N$, where $\Delta G_{t_i}\equiv G_{t_i}-G_{t_{i-1}}$ with the convention $G_{t_{-1}}\equiv 0$, 
and then taking the cumulative sum $G_{t_j} = \sum_{i=0}^j \Delta G_{t_i}$ to recover the path values
can result in smaller overall complexity, since the covariance matrix of the increments $\Sigma^{\mathrm{ns}}\equiv (\Sigma^{\mathrm{ns}}_{ij}=\ee{\Delta G_{t_i} \Delta G_{t_j}})_{i,j}$ may be better conditioned (i.e.\ $\norm{\Sigma}_F/\lambda_{\max}$ and $\kappa$ may have smaller dependence on $N$) than the covariance matrix of the path values $\Sigma^{\mathrm{pv}}\equiv (\Sigma^{\mathrm{pv}}_{ij}=\ee{G_{t_i}G_{t_j}})_{i,j}$.
We give a more detailed discussion in \autoref{num-sec-setting-pv-vs-ns-cov-mat}.
In quantum computing, taking a cumulative sum of a vector can be achieved by applying the matrix $\mathcal{L}_N$, defined below, to a quantum state of interest. However, the affected normalising factor should also be taken into account.
\begin{notation}\label{defn-of-LN}
Denote a lower triangular matrix $\mathcal{L}_N \in \mathbb{R}^{N \times N}$ as
\[
\mathcal{L}_N \coloneqq
\begin{pNiceMatrix}
1 & & & \Block{3-2}<\huge>{\mathbf{0}}  \\
1 & 1 & & & \\
1 & 1 & \ddots &  & \\
\vdots & \vdots & \ddots & \ddots & \\
1 & 1 & \ldots & 1 & 1
\end{pNiceMatrix}.
\]
\end{notation}
\begin{lemma}\label{prop-for-LN-times-unit-vec2}
Let $N\in\mathbb{N}$.
Any $\widehat{u}\in\mathbb{R}^N$ with $\norm{\widehat{u}}=1$ satisfies
\[
\frac12 
\leq
\sigma_{\min}^N{(\mathcal{L}_N)}
\leq
\big\lVert\mathcal{L}_N\widehat{u}\big\rVert
\leq 
\sigma_{\max}^N{(\mathcal{L}_N)}
\leq
N,
\]
where $\sigma_{\min}^N{(\mathcal{L}_N)}$, $\sigma_{\max}^N{(\mathcal{L}_N)}$ denote the minimal and maximal singular values of $\mathcal{L}_N$ respectively.
\end{lemma}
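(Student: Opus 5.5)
The two middle inequalities are immediate from the variational characterisation of singular values. For any unit vector $\widehat{u}$ we have $\sigma_{\min}^N(\mathcal{L}_N)\le \norm{\mathcal{L}_N\widehat{u}}\le \sigma_{\max}^N(\mathcal{L}_N)$. This holds because $\norm{\mathcal{L}_N\widehat{u}}^2=\widehat{u}^T\mathcal{L}_N^T\mathcal{L}_N\widehat{u}$ is a Rayleigh quotient of the positive definite matrix $\mathcal{L}_N^T\mathcal{L}_N$; it is the same argument as in the proof of \autoref{norm-sqrt-sigma-times-ket-z}, applied to $\mathcal{L}_N^T\mathcal{L}_N$ in its eigenbasis. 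The work therefore reduces to the two outer bounds.

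For the upper bound I would bound the spectral norm by the Frobenius norm. The matrix $\mathcal{L}_N$ has $N(N+1)/2$ entries equal to $1$, so
\[
\sigma_{\max}^N(\mathcal{L}_N)\le\norm{\mathcal{L}_N}_F=\sqrt{N(N+1)/2}\le N
\]
for all $N\ge1$. Alternatively one can use $\norm{\mathcal{L}_N}\le\sqrt{\norm{\mathcal{L}_N}_1\norm{\mathcal{L}_N}_\infty}=N$.

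For the lower bound I would invert $\mathcal{L}_N$ explicitly. Its inverse is the bidiagonal difference matrix $D_N=\mathcal{L}_N^{-1}$, with $1$ on the diagonal and $-1$ on the subdiagonal, since differencing undoes cumulative summation. Hence
\[
\sigma_{\min}^N(\mathcal{L}_N)=\frac{1}{\sigma_{\max}(D_N)}=\frac{1}{\norm{D_N}}.
\]
Writing $D_N=I_N-S$, where $S$ is the lower shift with $\norm{S}\le1$, the triangle inequality gives $\norm{D_N}\le2$. Therefore $\sigma_{\min}^N(\mathcal{L}_N)\ge1/2$.

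I expect no real obstacle. The only step requiring care is the identification $\mathcal{L}_N^{-1}=I-S$, which can be verified by checking that $(I-S)\mathcal{L}_N=I$ entrywise. A sharper value of the lower bound, $1/(2\cos(\pi/(2N+1)))$, is not needed.
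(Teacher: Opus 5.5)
Your proof is correct, but it reaches the two outer bounds by a more elementary route than the paper does. Both arguments identify $\mathcal{L}_N^{-1}$ as the bidiagonal difference matrix, and both get the middle inequalities from the same Rayleigh-quotient/SVD reasoning.

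The paper then forms $\Lambda=(\mathcal{L}_N^{-1})^T\mathcal{L}_N^{-1}$, a tridiagonal matrix with diagonal $(2,\dots,2,1)$ and off-diagonals $-1$. It invokes Yueh's closed-form spectrum $\lambda_k=2+2\cos\bigl(2k\pi/(2N+1)\bigr)$ to obtain the singular values of $\mathcal{L}_N$ exactly. From there, $\sigma_{\min}\geq 1/2$ follows from $\lambda_1\leq 4$, and $\sigma_{\max}\leq N$ follows from a separate trigonometric estimate ($\sin\theta\geq 3\theta/\pi$ on $(0,\pi/6]$).

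You instead bound the extreme singular values directly. You use $\sigma_{\max}\leq\norm{\mathcal{L}_N}_F=\sqrt{N(N+1)/2}\leq N$, and $\sigma_{\min}=1/\norm{I-S}\geq 1/2$ by the triangle inequality. Your version is shorter and self-contained: it needs neither the external eigenvalue theorem nor the auxiliary trigonometric lemma. The Frobenius value it uses is one the paper computes anyway in the cumulative-sum corollaries.

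What the exact spectrum buys the paper is sharpness information that the lemma does not require. It shows $\sigma_{\min}\downarrow 1/2$ and $\sigma_{\max}\geq 2N/\pi$, so both bounds are tight up to constant factors. Your remark that the exact minimum is $1/(2\cos(\pi/(2N+1)))$ matches the paper's $k=1$ eigenvalue.
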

\begin{proof}
We give the proof in \autoref{appndx-proof-of-LN-property}.
\end{proof}
The following corollary on the preparation of a quantum state encoding the cumulative sum of a correlated Gaussian vector
mirrors the proofs of \autoref{main-thm2} and \autoref{main-thm-for-gen-mat}. 
A more general statement with complexity expressed via the abstract block-encoding implementation cost $T_{\mathcal{L}_N}$ for $\mathcal{L}_N$ can be given similarly to \autoref{main-thm-for-gen-mat} by the same reasoning.
We therefore do not repeat this derivation here. Instead, we state only the result under \autoref{assump-QROM-like-unitary}, giving an algorithmic complexity expressed only in concrete parameters, which suffices for subsequent discussion regarding explicit comparison with classical complexity.
\begin{corollary}\label{main-thm-coro-concrete-cost-sum-of-inc}
Under the setting of \autoref{main-thm-for-gen-mat}, if the assumptions about data loaders in polylogarithmic time (\autoref{assump-plus-TU-ket-y-log2-N} and \autoref{assump-QROM-like-unitary}), and the assumption about parameter estimation (\autoref{assump-LK-plus}) hold, then we can construct a unitary $U_{\ket{y},\varepsilon}$ that prepares a quantum state $\ket{\widetilde{\varphi}}_{\mathrm{Q}\mathrm{I}}=U_{\ket{y},\varepsilon}\ket{0}_{\mathrm{Q}\mathrm{I}}$ satisfying $\norm{\ket{\widetilde{\varphi}}_{\mathrm{Q}\mathrm{I}}-\ket{0}_{\mathrm{Q}}\ket{y}_{\mathrm{I}}}_2\leq\varepsilon$, where $\ket{y}=\vec{y}/\norm{\vec{y}}$ for $\vec{y}\coloneqq \mathcal{L}_N\vec{x}$, $\vec{x}$ is a realisation of the random vector $X\sim\mathcal{N}(0,\Sigma)$, and $\mathrm{Q}$ is an ancillary register of $\widetilde{q}=O\left(\log(N)+\log\log({\kappa}/{\varepsilon})\right)$ qubits. The implementation of $U_{\ket{x},\varepsilon}$ requires
\[
T_{U_{\ket{x},\varepsilon}}
=
{O}\left(
\frac{\lVert \Sigma \rVert_F}{\lambda_{\max}}\,\kappa^{1.5}
\cdot
N\polylog(N)
\cdot
\log^3\left(\frac{\kappa}{\varepsilon}\right)
\right)
\]
elementary gate depth.
\end{corollary}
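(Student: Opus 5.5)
First, the plan is to build the state in two stages. We take the unnormalised block-encoded output $\Sigma^{1/2}\ket{z}$ coming from \autoref{main-thm2}, applied to $\widetilde{\Sigma}=\Sigma/\widetilde{\lambda}_{\max}$ as in \autoref{main-thm-for-gen-mat}. We then multiply it by a block-encoding of $\mathcal{L}_N$ and perform a single fixed-point amplitude amplification at the end via \autoref{normalising-block-encoding-prep-state-via-qaa}. Concretely, \autoref{assump-QROM-like-unitary} applies to $\mathcal{L}_N$, since its entries are known, and \autoref{block-encoding-of-sigma} then gives a $(\norm{\mathcal{L}_N}_F,\ceil{\log_2(N+1)},0)$-block-encoding $U_{\mathcal{L}_N}$ with $O(\polylog N)$ depth. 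Here $\norm{\mathcal{L}_N}_F=\sqrt{N(N+1)/2}=\Theta(N)$.

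Next, we construct $U_{\widetilde{\Sigma}^{1/2}}$ from \autoref{prop-for-Hc2} with $\widehat{\kappa}=\max\{L\widetilde{\kappa},2\}$; it is a $(2,\widetilde{q},\widetilde{\varepsilon})$-block-encoding. The product $U_{\mathcal{L}_N}U_{\widetilde{\Sigma}^{1/2}}$, acting on disjoint ancilla registers, is a $(2\norm{\mathcal{L}_N}_F,\widetilde{q}+a,\cdot)$-block-encoding of $\mathcal{L}_N f(\widetilde{\Sigma})$, where $\norm{f(\widetilde{\Sigma})-\widetilde{\Sigma}^{1/2}}\le\widetilde{\varepsilon}$. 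To obtain the success-amplitude lower bound $a_\ell$, we combine \autoref{prop-for-LN-times-unit-vec2}, which gives $\norm{\mathcal{L}_N\widehat{u}}\ge 1/2$ for every unit $\widehat{u}$, with \autoref{norm-sqrt-sigma-times-ket-z}. This yields $\norm{\mathcal{L}_N\widetilde{\Sigma}^{1/2}\ket{z}}\ge \frac12\widehat{\kappa}^{-1/2}$. The perturbation costs at most $\norm{\mathcal{L}_N}\widetilde{\varepsilon}\le N\widetilde{\varepsilon}$. So we choose $\widetilde{\varepsilon}=\Theta(\varepsilon/(N\sqrt{\widehat{\kappa}}))$, which gives $a_\ell=\Omega(1/(N\sqrt{\widehat{\kappa}}))$.

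For the error, \autoref{lem-for-unit-vector-approx2} with $\vec v=\mathcal{L}_N\widetilde{\Sigma}^{1/2}\ket{z}$ and $\vec\omega=\mathcal{L}_Nf(\widetilde{\Sigma})\ket{z}$ gives a normalised error of at most $2N\widetilde{\varepsilon}/(\tfrac12\widehat{\kappa}^{-1/2})\le\varepsilon/2$. We then pick $\varepsilon_\chi=\varepsilon/2$ in the amplitude amplification and apply the triangle inequality. The target is correct because the normalisation is scale-invariant: $\mathcal{L}_N\widetilde{\Sigma}^{1/2}\ket{z}\propto\mathcal{L}_N\vec{x}=\vec y$.

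For the cost, \autoref{normalising-block-encoding-prep-state-via-qaa} gives $O(N\sqrt{\widehat{\kappa}}\log(1/\varepsilon))$ calls to a circuit of depth $O(\frac{\norm{\Sigma}_F}{\widetilde{\lambda}_{\max}}\widehat{\kappa}\polylog N\log^2(\widehat{\kappa}/\widetilde{\varepsilon}))+O(\polylog N)$. Since $\widetilde{\varepsilon}$ now depends on $N$, $\log^2(\widehat{\kappa}N/\varepsilon)$ is absorbed into $\polylog(N)\log^2(\kappa/\varepsilon)$. \autoref{assump-LK-plus} then turns $\widehat{\kappa}$ and $\widetilde{\lambda}_{\max}$ into $\kappa$ and $\lambda_{\max}$ up to polylog factors, giving the claimed $\frac{\norm{\Sigma}_F}{\lambda_{\max}}\kappa^{1.5}N\polylog(N)\log^3(\kappa/\varepsilon)$. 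The block-encoding-error condition holds automatically because $\varepsilon_{U_\Sigma}=0$.

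The main obstacle I expect is the bookkeeping in the success amplitude. The factor $N$ comes from the ratio $\norm{\mathcal{L}_N}_F/\sigma_{\min}(\mathcal{L}_N)=\Theta(N)$ and must not be lost, and the precision $\widetilde{\varepsilon}$ must be tightened by $N$ without breaking the polylog absorption. An alternative is to amplify twice, preparing $\ket{x}$ first via \autoref{main-thm-gen-mat-concrete-cost} and then applying $U_{\mathcal{L}_N}$ with amplification. This also gives a factor $N$, and the resulting cost is additive rather than multiplicative. That would be a cheaper bound, but I would present the single-amplification route since it matches the stated complexity directly.
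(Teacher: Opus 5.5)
Your proposal is correct and follows the paper's proof essentially step for step. It uses a single fixed-point amplification of $\widetilde U_{\mathcal{L}_N}U_{\widetilde\Sigma^{1/2}}U_{\ket{z}}$ with $\widetilde\varepsilon=\varepsilon/(8N\sqrt{\widehat\kappa})$, obtains $a_\ell=\Omega(1/(N\sqrt{\widehat\kappa}))$ from $\norm{\mathcal{L}_N}_F=\Theta(N)$ and $\sigma_{\min}(\mathcal{L}_N)\ge 1/2$, and bounds the error by $2N\widetilde\varepsilon/\norm{\vec v}$ via \autoref{lem-for-unit-vector-approx2}.

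Your closing aside is wrong, however: nesting the two amplifications is neither additive nor cheaper. The outer $O(N)$-round amplification re-runs the inner $O(\sqrt{\widehat\kappa})$-round preparation of $\ket{x}$ in every round, so the costs multiply to the same $N\sqrt{\widehat\kappa}\,T_{U_{\widetilde\Sigma^{1/2}}}$ order, up to logarithmic factors.
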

\begin{proof}
Under \autoref{assump-QROM-like-unitary}, we can construct the $(\norm{\Sigma}_F,\ceil{\log_2(N+1)},0)$-block-encoding $U_\Sigma$ of $\Sigma$ in $T_{U_\Sigma}=O(\polylog N)$ elementary gate depth by using \autoref{block-encoding-of-sigma}.
Following the proof of \autoref{main-thm-for-gen-mat} and \autoref{main-thm-gen-mat-concrete-cost}, we define $\widetilde{\Sigma}\coloneqq \Sigma/\widetilde{\lambda}_{\max}$ and use $U_\Sigma$ as the $\left(\frac{\alpha_{U_\Sigma}}{\widetilde{\lambda}_{\max}},a_{U_\Sigma},\frac{\varepsilon_{U_\Sigma}}{\widetilde{\lambda}_{\max}}\right)$-block-encoding of $\widetilde{\Sigma}$,
in which $\alpha_{U_\Sigma}=\norm{{\Sigma}}_F$, $a_{U_\Sigma}=\ceil{\log_2 (N+1)}$, and $\varepsilon_{U_\Sigma}=0$. Again, this makes the condition $\varepsilon_{U_\Sigma}/\widetilde{\lambda}_{\max}=o(\widetilde{\varepsilon}/\left(\widehat{\kappa}\log^3(\widehat{\kappa}/\widetilde{\varepsilon})\right))$ required to use \autoref{prop-for-Hc2} (similarly to \eqref{eps-U-sig-cond-main-thm-for-gen-mat}) hold trivially for any $\widetilde{\varepsilon}>0$ and $\widetilde{\kappa}\geq 2$. 
So,
by setting $\widehat{\kappa}\coloneqq \max\{L\widetilde{\kappa},2\}$ and $\widetilde{\varepsilon}\coloneqq \frac{1}{2N(2\sqrt{\widehat{\kappa}})}\frac{\varepsilon}{2}$ in \autoref{prop-for-Hc2},
we can create 
a $(2,a_{U_\Sigma}+O(\log\log(1/\widetilde{\varepsilon})),\widetilde{\varepsilon})$-block-encoding $U_{\widetilde{\Sigma}^{1/2}}$ of $\widetilde{\Sigma}^{1/2}$.
Implementing $U_{\widetilde{\Sigma}^{1/2}}$ takes
\[
T_{U_{\widetilde{\Sigma}^{1/2}}}
=O\left(\frac{\alpha_{U_\Sigma}}{\widetilde{\lambda}_{\max}}\,\widehat{\kappa}\,\left(a_{U_\Sigma}+T_{U_\Sigma}\right)\log^2\left(\frac{\widehat{\kappa}}{\widetilde{\varepsilon}}\right)\right)
=O\left(\frac{\norm{{\Sigma}}_F}{\widetilde{\lambda}_{\max}}\,\widehat{\kappa}\,\polylog(N)\log^2\left(\frac{\widehat{\kappa}}{\widetilde{\varepsilon}}\right)\right)
\]
elementary gate depth.
Define $\mathrm{Q}_1$ to be an ancillary register of $\widehat{q}=a_{U_\Sigma}+O\left(\log\log(1/\widetilde{\varepsilon})\right)=O\left(\log(N)+\log\log({N\kappa}/{\varepsilon})\right)=O\left(\log(N)+\log\log({\kappa}/{\varepsilon})\right)$ qubits and $\mathrm{Q}_2$ to be an ancillary register of $\widehat{a}\coloneqq\ceil{\log_2 (N+1)}$ qubits respectively.
Denote the overall ancillary register $\mathrm{Q}$ to include $\mathrm{Q}_1$ and $\mathrm{Q}_2$.
That is, $\mathrm{Q}$ is an ancillary register of 
$\widetilde{q}=\widehat{q}+\widehat{a}=O\left(\log(N)+\log\log({\kappa}/{\varepsilon})\right)$ qubits,
and
$\ket{0}_{\mathrm{Q}}\equiv \ket{0}_{\mathrm{Q}_2}\ket{0}_{\mathrm{Q}_1}$.
Similarly to \eqref{state-after-apply-sigma12-main-thm}, applying $U_{\widetilde{\Sigma}^{1/2}}$ gives
\begin{align*}
\left(I^{\otimes \widehat{a}}\otimes U_{\widetilde{\Sigma}^{1/2}}\right)\left(I^{\otimes \widetilde{q}}\otimes U_{\ket{z}}\right)\ket{0}_{\mathrm{Q}}\ket{0}_{\mathrm{I}}
&=
\ket{0}_{\mathrm{Q}_2}
\left(
\frac{1}{2}\ket{0}_{\mathrm{Q}_1}f(\widetilde{\Sigma})\ket{z}_{\mathrm{I}}
+\sum_{i=1}^{2^{\widehat{q}}-1}\ket{i}_{\mathrm{Q}_1}\ket{\widetilde{z}^{\perp}_i}_{\mathrm{I}}
\right)\\
&\eqqcolon \ket{0}_{\mathrm{Q}_2}\ket{\psi}_{\mathrm{Q}_1\mathrm{I}}, \tagaligneq \label{state-after-apply-sigma12-corollary-sum-of-inc}
\end{align*}
where $\big\lVert f(\widetilde{\Sigma})-\widetilde{\Sigma}^{1/2}\big\rVert\leq\widetilde{\varepsilon}$
and
$\{\ket{\widetilde{z}^{\perp}_i}\}_{i}$ are some unnormalised states. 
By using \autoref{block-encoding-of-sigma},
we can construct an $(\lVert\mathcal{L}_N\rVert_F,\ceil{\log_2 (N+1)},0)$-block-encoding $U_{\mathcal{L}_N}$ of $\mathcal{L}_N$ over the $\widehat{a}$-qubit ancillary register $\mathrm{Q}_2$, 
such that $U_{\mathcal{L}_N}$ applies $\mathcal{L}_N/\lVert\mathcal{L}_N\rVert_F$ to the state in register $\mathrm{I}$, conditioned on the state in register $\mathrm{Q}_2$ being $\ket{0}_{\mathrm{Q}_2}$. For ease of expression, we denote $\widetilde{U}_{\mathcal{L}_N}$ to be the unitary acting on all registers $\mathrm{Q}\mathrm{I}$, where it applies the identity operator $I^{\otimes \widehat{q}}$ to the register $\mathrm{Q}_1$ and applies $U_{\mathcal{L}_N}$ to the registers $\mathrm{Q}_2$ and $\mathrm{I}$ as described.
By applying $\widetilde{U}_{\mathcal{L}_N}$ to \eqref{state-after-apply-sigma12-corollary-sum-of-inc}, we obtain
\begin{align*}
\widetilde{U}_{\mathcal{L}_N}\ket{0}_{\mathrm{Q}_2}\ket{\psi}_{\mathrm{Q}_1\mathrm{I}}
&=
\ket{0}_{\mathrm{Q}_2}
\left(
\frac{1}{2\lVert\mathcal{L}_N\rVert_F}\ket{0}_{\mathrm{Q}_1}\mathcal{L}_Nf(\widetilde{\Sigma})\ket{z}_{\mathrm{I}}
+\sum_{i=1}^{2^{\widehat{q}}-1}\ket{i}_{\mathrm{Q}_1}\frac{\mathcal{L}_N}{\lVert\mathcal{L}_N\rVert_F}\ket{\widetilde{z}^{\perp}_i}_{\mathrm{I}}
\right)
+\ket{\widetilde{\psi}^\perp}_{\mathrm{Q}\mathrm{I}}\\
&\equiv
\frac{1}{2\lVert\mathcal{L}_N\rVert_F}\ket{0}_{\mathrm{Q}}\mathcal{L}_Nf(\widetilde{\Sigma})\ket{z}_{\mathrm{I}}
+
\ket{0}_{\mathrm{Q}_2}
\ket{\doublewidetilde{z}{}^{\perp}_i}_{\mathrm{Q}_1\mathrm{I}}
+\ket{\widetilde{\psi}^\perp}_{\mathrm{Q}\mathrm{I}}, \tagaligneq \label{state-after-sigma12-and-LN}
\end{align*}
where $\ket{\doublewidetilde{z}{}^{\perp}}_{\mathrm{Q}_1\mathrm{I}}$ 
and
$\ket{\widetilde{\psi}^\perp}_{\mathrm{Q}\mathrm{I}}$
are unnormalised states such that $\left(\ket{0}\bra{0}_{\mathrm{Q}_1} \otimes I^{\otimes n}\right)\ket{\doublewidetilde{z}{}^{\perp}}_{\mathrm{Q}_1\mathrm{I}}=0$ and $\left(\ket{0}\bra{0}_{\mathrm{Q}_2} \otimes I^{\otimes (\widehat{q}+n)}\right)\ket{\widetilde{\psi}^\perp}_{\mathrm{Q}\mathrm{I}}=0$, respectively.
This means that $\left(\ket{0}\bra{0}_{\mathrm{Q}}\otimes I^{\otimes n}\right)\ket{0}_{\mathrm{Q}_2}
\ket{\doublewidetilde{z}{}^{\perp}_i}_{\mathrm{Q}_1\mathrm{I}}=0$ and $\left(\ket{0}\bra{0}_{\mathrm{Q}}\otimes I^{\otimes n}\right)\ket{\widetilde{\psi}^\perp}_{\mathrm{Q}\mathrm{I}}=0$.
Similarly to \autoref{main-thm2}, we want to use fixed-point QAA procedure (\autoref{normalising-block-encoding-prep-state-via-qaa}) to construct a unitary $U_{\ket{\widetilde{\varphi}},\varepsilon_\chi}\eqqcolon U_{\ket{y},\varepsilon}$ that prepares a quantum state $U_{\ket{\widetilde{\varphi}},\varepsilon_\chi}\ket{0}_{\mathrm{Q}\mathrm{I}}\eqqcolon\ket{\widetilde{\varphi}}_{\mathrm{Q}\mathrm{I}}$ satisfying
\[
\Big\lVert 
\ket{\widetilde{\varphi}}_{\mathrm{Q}\mathrm{I}}-\ket{0}_{\mathrm{Q}}\ket{\chi}_{\mathrm{I}}
\Big\rVert_2
\leq
\varepsilon_\chi = \frac{\varepsilon}{2}, 
\tagaligneq \label{diff-varphi-chi-coro-sum-of-inc}
\]
where $\varepsilon_\chi\coloneqq \frac{\varepsilon}{2}$ and $\ket{\chi}\coloneqq \frac{\mathcal{L}_Nf(\widetilde{\Sigma})\ket{z}}{\lVert\mathcal{L}_Nf(\widetilde{\Sigma})\ket{z}\rVert}$. To do this, we need to lower bound the amplitude of the state $\ket{0}_{\mathrm{Q}}$ in the RHS of \eqref{state-after-sigma12-and-LN}, which is
\begin{align*}
\frac{1}{\lVert\mathcal{L}_N\rVert_F}
\cdot
\left\lVert
\mathcal{L}_N\frac{f(\widetilde{\Sigma})\ket{z}}{\left\lVert f(\widetilde{\Sigma})\ket{z} \right\rVert}
\right\rVert
\cdot
\frac12 \left\lVert f(\widetilde{\Sigma})\ket{z} \right\rVert
\geq 
\frac{1}{\sqrt{N}\sqrt{N+1}}\cdot\frac{1}{4\sqrt{2}\sqrt{\widehat{\kappa}}}
\eqqcolon a_\ell. 
\end{align*}
Here, in the left-hand side, 
$\lVert\mathcal{L}_N\rVert_F=\sqrt{N(N+1)/2}$, 
and the second term is lower bounded by $1/2$ (see \autoref{prop-for-LN-times-unit-vec2}). For the third term, $\varepsilon\in(0,2]$
implies 
$\widetilde{\varepsilon}\in\left(0,\frac{1}{2N(2\sqrt{\widehat{\kappa}})}\right]\subset\left(0,\frac{1}{2\sqrt{\widehat{\kappa}}}\right]$, 
so the same calculation as \eqref{find-psucc-lower-bound-main-thm} tells us that the third term is bounded below by $1/(4\sqrt{\widehat{\kappa}})$.
Then, according to \autoref{normalising-block-encoding-prep-state-via-qaa}, we can implement $U_{\ket{\widetilde{\varphi}},\varepsilon_\chi} \equiv U_{\ket{y},\varepsilon}$ by using $\widetilde{U}_P\coloneqq \widetilde{U}_{\mathcal{L}_N}\left(I^{\otimes \widehat{a}}\otimes U_{\widetilde{\Sigma}^{1/2}}\right) \left(I^{\otimes \widetilde{q}}\otimes U_{\ket{z}}\right)$ and additional elementary gates, which in total takes
\begin{align*}
T_{U_{\ket{y},\varepsilon}}
\equiv
T_{U_{\ket{\widetilde{\varphi}},\varepsilon_\chi}}
&=
O\left(\frac{1}{{a_\ell}}\log\left(\frac{1}{{\varepsilon_{\chi}}}\right)\left(\widetilde{q}+n+T_{\widetilde{U}_P}\right)\right)
\end{align*}
elementary gates.
Here, 
$\widetilde{U}_P$'s gate depth is 
\[
T_{\widetilde{U}_P}\equiv T_{\widetilde{U}_{\mathcal{L}_N}}+T_{U_{\widetilde{\Sigma}^{1/2}}}+T_{U_{\ket{z}}}
=O\left(\frac{\norm{{\Sigma}}_F}{\widetilde{\lambda}_{\max}}\,\widehat{\kappa}\,\polylog(N)\log^2\left(\frac{\widehat{\kappa}}{\widetilde{\varepsilon}}\right)\right),
\]
where we have used the fact that $T_{\widetilde{U}_{\mathcal{L}_N}}= T_{U_{\mathcal{L}_N}}=O(\polylog N)$ and $T_{U_{\ket{z}}}=O(\log^2 N)$ under \autoref{block-encoding-of-sigma} and \autoref{assump-plus-TU-ket-y-log2-N} respectively.
Plugging in all the parameters including the definitions of $\widetilde{\varepsilon}$ and $\widehat{\kappa}$, and $L=\widetilde{\Theta}(1)=O(\polylog N)$, $\widetilde{\lambda}_{\max}=\widetilde{\Theta}(\lambda_{\max})$, $\widetilde{\kappa}=\widetilde{\Theta}(\kappa)$ under \autoref{assump-LK-plus}, we then arrive at
\[
T_{U_{\ket{y},\varepsilon}}
=
{O}\left(
\frac{\lVert \Sigma \rVert_F}{\lambda_{\max}}\,\kappa^{1.5}
\cdot
N\polylog(N)
\cdot
\log^3\left(\frac{\kappa}{\varepsilon}\right)
\right).
\]
Note that the $N$ factor in $\widetilde{\varepsilon}=\Theta\big(\varepsilon/(\sqrt{\widehat{\kappa}}\cdot N) \big)$ is absorbed into $\polylog (N)$ term due to logarithmic dependence on $\widetilde{\varepsilon}$.
Finally, we are to prove that the prepared $\ket{\widetilde{\varphi}}_{\mathrm{Q}\mathrm{I}}$ is $\varepsilon$-close to the target state $\ket{0}_{\mathrm{Q}}\ket{y}_{\mathrm{I}}$, where $\ket{y}=\mathcal{L}_N\vec{x}/\norm{\mathcal{L}_N\vec{x}}$. To see this, note that since
\[
\norm{\mathcal{L}_Nf(\widetilde{\Sigma})\ket{z}-\mathcal{L}_N\widetilde{\Sigma}^{1/2}\ket{z}}
\leq
\big\lVert\mathcal{L}_N \big\rVert
\cdot
\left\lVert \left( f(\widetilde{\Sigma})-\widetilde{\Sigma}^{1/2}\right) \ket{z}
\right\rVert
\leq
N
\cdot
\widetilde{\varepsilon}
\] 
holds by
\autoref{prop-for-LN-times-unit-vec2},
we have that
\autoref{lem-for-unit-vector-approx2} suggests 
\begin{align*}
\Big\lVert\ket{\chi}-\ket{y}\Big\rVert
&=\norm{\frac{\mathcal{L}_Nf(\widetilde{\Sigma})\ket{z}}{\left\lVert\mathcal{L}_Nf(\widetilde{\Sigma})\ket{z}\right\rVert}-\frac{\mathcal{L}_N\widetilde{\Sigma}^{1/2}\ket{z}}{\norm{\mathcal{L}_N\widetilde{\Sigma}^{1/2}\ket{z}}}}
\leq 
\frac{2 N
\cdot
\widetilde{\varepsilon}}{\norm{\mathcal{L}_N\widetilde{\Sigma}^{1/2}\ket{z}}}
\eqqcolon (\star).
\end{align*}
Plugging in the definition of $\widetilde{\varepsilon}$ gives
\begin{align}
\Big\lVert\ket{\chi}-\ket{y}\Big\rVert
\leq
(\star)
=
\frac{2 N
\cdot
\frac{1}{2N}
\cdot
({\varepsilon}/{2})}
{
2\norm{\mathcal{L}_N\frac{\widetilde{\Sigma}^{1/2}\ket{z}}{\norm{\widetilde{\Sigma}^{1/2}\ket{z}}}}
\cdot 
\sqrt{\widehat{\kappa}}
\norm{\widetilde{\Sigma}^{1/2}\ket{z}}
}
\leq
\frac{\varepsilon}{2}.
\label{eq:chi_y_diff}
\end{align}
Here, the last inequality follows from 
$\norm{\mathcal{L}_N{\widetilde{\Sigma}^{1/2}\ket{z}}/{\big\lVert\widetilde{\Sigma}^{1/2}\ket{z}\big\rVert}}\geq 1/2$,
cf.\ \autoref{prop-for-LN-times-unit-vec2},
and
$\sqrt{\widehat{\kappa}}\norm{\widetilde{\Sigma}^{1/2}\ket{z}_{\mathrm{I}}}\geq 1$, which follows from \autoref{norm-sqrt-sigma-times-ket-z} and the fact that all the eigenvalues of $\widetilde{\Sigma}$ lie in $\left[\frac{1}{L\kappa},1\right]\subseteq\left[\frac{1}{L\widetilde{\kappa}},1\right]=\left[\frac{1}{\widehat{\kappa}},1\right]$, as seen in the proof of \autoref{main-thm-for-gen-mat}.
Combining \eqref{eq:chi_y_diff} with \eqref{diff-varphi-chi-coro-sum-of-inc}, we now know that the prepared quantum state $\ket{\widetilde{\varphi}}$ satisfies:
\begin{align*}
\Big\lVert
\ket{\widetilde{\varphi}}_{\mathrm{Q}\mathrm{I}}-\ket{0}_{\mathrm{Q}}\ket{y}_{\mathrm{I}}
\Big\rVert
&\leq
\Big\lVert
\ket{\widetilde{\varphi}}-\ket{0}_{\mathrm{Q}}\ket{\chi}_{\mathrm{I}}
\Big\rVert
+
\Big\lVert
\ket{0}_{\mathrm{Q}}\ket{\chi}_{\mathrm{I}}-\ket{0}_{\mathrm{Q}}\ket{y}_{\mathrm{I}}
\Big\rVert
\leq \frac{\varepsilon}{2}+\frac{\varepsilon}{2}=\varepsilon. \qedhere
\end{align*}
\end{proof}
\subsection{QAE and normalising factor estimation}
\begin{proposition}[{\cite[Theorem 12]{BHMT02}} restated]\label{lem-QAE}
Suppose we have access to a quantum circuit $U$ that maps $\ket{0}\ket{0}^{\otimes n}$ to the $(n+1)$-qubit state of the form
\[
\ket{s}=a\ket{1}\ket{\Psi_G}+b\ket{0}\ket{\Psi_B},
\]
where $\ket{\Psi_G},\ket{\Psi_B}$ are $n$-qubit normalised quantum states, not necessarily orthogonal.
Then, for $\varepsilon_{\mathrm{QAE}}>0$, one can construct a quantum circuit that can output either $\widetilde{a}$ or $\widetilde{p}$ satisfying
\begin{gather}
\big| \widetilde{a} - |a|\big| \leq \varepsilon_{\mathrm{QAE}}, \label{qae-prop-amp}\\
\big| \widetilde{p} - |a|^2\big| \leq \varepsilon_{\mathrm{QAE}} \label{qae-prop-prob},
\end{gather}
with probability at least
$0.99$\footnote{The original version of QAE was presented as a method with a constant success probability $8/\pi^2\approx 81\%$. We can enhance this to an arbitrary value $1-\delta$ by repeating the QAE operation $O\left(\log\left({1}/{\delta}\right)\right)$ times and taking the median as the estimate (the so-called powering lemma, see e.g.\ \cite[Lemma 2.1 (Lemma 1 of the full version: {arXiv:1504.06987})]{Mon15}).
In this paper, we fix $1-\delta$ to a specific value of 99\% and treat the overheads in gate count and gate depth arising from the repetition of QAE as $O(1)$.
}, by using $M=O(1/\varepsilon_{\mathrm{QAE}})$ calls to reflection operators $\mathcal{S}_s=U\left(2(\kb{0})^{\otimes (n+1)}-I^{\otimes (n+1)}\right)U^{\dagger}$ and $\mathcal{S}_G=(I-2\kb{1})\otimes I^{\otimes n}$, and $O(\log^2 M)$ rotation gates and Hadamard gates for implementing QFT and inverse QFT circuit. If implementing $U$ takes $T_U$ elementary gates, then the overall gate cost for outputting either $\widetilde{a}$ or $\widetilde{p}$ becomes $O\left(\frac{1}{\varepsilon_{\mathrm{QAE}}}\left(T_U + n\right)\right)$.
\end{proposition}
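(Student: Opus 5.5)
This restated form of \cite[Theorem 12]{BHMT02} will be reduced to the standard amplitude estimation theorem plus the powering lemma. First, set up the two-dimensional invariant subspace. Write $\ket{s}=U\ket{0}^{\otimes(n+1)}$ and define the Grover-type iterate $\mathcal{Q}\coloneqq-\mathcal{S}_s\mathcal{S}_G$. Since $\mathcal{S}_G$ flips the sign of exactly the component with first qubit $\ket{1}$, the normalised vectors $\ket{\psi_1}\coloneqq\ket{1}\ket{\Psi_G}$ and $\ket{\psi_0}\coloneqq\ket{0}\ket{\Psi_B}$ are orthogonal because of the flag qubit, even when $\ket{\Psi_G},\ket{\Psi_B}$ are not. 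Hence $\ket{s}=a\ket{\psi_1}+b\ket{\psi_0}$ with $|a|^2+|b|^2=1$.

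Any global phases of $a$ and $b$ can be absorbed into $\ket{\psi_1},\ket{\psi_0}$, so we may assume $\sin\theta_a=|a|$ with $\theta_a\in[0,\pi/2]$. On $\mathrm{span}\{\ket{\psi_0},\ket{\psi_1}\}$, $\mathcal{Q}$ is then a rotation by angle $2\theta_a$. Its eigenvectors $\ket{\psi_\pm}$ have eigenvalues $e^{\pm 2i\theta_a}$, and $\ket{s}$ is an equal-weight superposition of them.

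Next comes phase estimation. Apply the standard phase estimation circuit with $m=\lceil\log_2 M\rceil$ control qubits. It consists of Hadamards, controlled powers $\mathcal{Q}^{j}$ for $j<M$, and an inverse QFT on $m$ qubits, which needs $O(m^2)=O(\log^2M)$ rotation and Hadamard gates. Measuring gives $y$, and we output $\widetilde{p}=\sin^2(\pi y/M)$ or $\widetilde{a}=\sin(\pi y/M)$.

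The analysis follows \cite{BHMT02}. The phase estimation distribution yields $|y/M-\theta_a/\pi|\le 1/M$ (modulo the $\pm$ symmetry, which is harmless since $\sin^2$ is even) with probability at least $8/\pi^2$. Because $\sin$ and $\sin^2$ are $1$- and $1$-Lipschitz in the angle up to a factor $\pi$, this gives
\[
\big|\widetilde{a}-|a|\big|\le \frac{\pi}{M},\qquad \big|\widetilde{p}-|a|^2\big|\le \frac{2\pi\sqrt{|a|^2(1-|a|^2)}}{M}+\frac{\pi^2}{M^2}.
\]
Choosing $M=\Theta(1/\varepsilon_{\mathrm{QAE}})$ therefore gives \eqref{qae-prop-amp} and \eqref{qae-prop-prob}.

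To raise the success probability from $8/\pi^2$ to $0.99$, repeat the procedure $O(1)$ times and take the median; this is the powering lemma \cite{Mon15}, which works because failures are independent and each trial succeeds with probability above $1/2$.

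Finally, count the cost. Each controlled $\mathcal{Q}$ uses one $U$, one $U^\dagger$, and an $(n+1)$-qubit reflection about $\ket{0}^{\otimes(n+1)}$. By \autoref{implement-q-qubit-toffoli}, that reflection costs $O(n)$ elementary gates, and $\mathcal{S}_G$ is a single $Z$-type gate. The controlled versions add only constant overhead per gate. With $O(M)$ applications in total, the cost is $O(M(T_U+n))+O(\log^2M)=O\big(\frac{1}{\varepsilon_{\mathrm{QAE}}}(T_U+n)\big)$.

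The main obstacle is the non-orthogonality of $\ket{\Psi_G},\ket{\Psi_B}$. The key point to check carefully is that the flag qubit restores orthogonality, so that the two-dimensional rotation picture, and with it the phase estimation analysis, applies unchanged.
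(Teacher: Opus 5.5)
Your route matches the paper's: BHMT's angle estimate within $\pi/M$ with probability $8/\pi^2$, Lipschitz continuity of $\sin$ and $\sin^2$, the powering lemma, and an $O(n)$ Toffoli-based reflection for the cost. Your check that the flag qubit makes $\ket{1}\ket{\Psi_G}\perp\ket{0}\ket{\Psi_B}$ is a useful detail the paper leaves implicit. However, your iterate has the wrong sign, so as written the procedure estimates the wrong quantity.

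You set $\mathcal{Q}\coloneqq-\mathcal{S}_s\mathcal{S}_G$, copying BHMT's $\mathbf{Q}=-\mathcal{A}\mathbf{S}_0\mathcal{A}^{-1}\mathbf{S}_\chi$. There, $\mathbf{S}_0=I-2\kb{0}$. The statement's $\mathcal{S}_s=U(2(\kb{0})^{\otimes(n+1)}-I)U^{\dagger}=2\ket{s}\bra{s}-I$ already absorbs that minus sign, so BHMT's $\mathbf{Q}$ equals $\mathcal{S}_s\mathcal{S}_G$. In the basis $(\ket{\psi_0},\ket{\psi_1})$, $\mathcal{S}_s\mathcal{S}_G$ is the rotation by $2\theta_a$. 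Hence your $\mathcal{Q}$ is the rotation by $2\theta_a+\pi$, with eigenvalues $-e^{\pm 2i\theta_a}$. Phase estimation then gives $\pi y/M\approx \pi/2\pm\theta_a$. Your outputs $\sin(\pi y/M)$ and $\sin^2(\pi y/M)$ therefore approximate $\cos\theta_a=|b|$ and $1-|a|^2$, not $|a|$ and $|a|^2$. For instance, when $a=1$ your $\mathcal{Q}$ fixes $\ket{s}$, so $y=0$ and you output $0$.

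The repair is immediate: take $\mathcal{Q}=\mathcal{S}_s\mathcal{S}_G$, or keep your sign and output $|\cos(\pi y/M)|$ and $\cos^2(\pi y/M)$. After this correction your argument goes through and coincides with the paper's proof. The paper simply quotes BHMT for $|\widetilde\theta-\theta|\le\pi/M$ rather than re-deriving the two-dimensional rotation.

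One minor point on the $\pm$ symmetry and wrap-around. For $\widetilde a=\sin(\pi y/M)$, what you need is that $t\mapsto|\sin(\pi t)|$ is even and $1$-periodic, together with $\sin(\pi y/M)\ge 0$ for $y\in\{0,\dots,M-1\}$. The evenness of $\sin^2$ alone does not cover this. The paper sidesteps the issue by outputting $|\sin\widetilde\theta|$.
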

\begin{proof}
\cite[Theorem 12]{BHMT02} has already proved the error bound \eqref{qae-prop-prob}, but not \eqref{qae-prop-amp}. However, \eqref{qae-prop-amp} follows as straightforwardly. As already shown in the proof of \cite[Theorem 12]{BHMT02}, QAE circuit outputs a probabilistic $\widetilde{\theta}$ satisfying $|\widetilde{\theta}-\theta|\leq \frac{\pi}{M}\leq \varepsilon_{\mathrm{QAE}}$ for $\theta\in\{\theta_a,\pi-\theta_a\}$ such that $\sin^2(\theta_a)=|a|^2$ with probability at least $8/\pi^2$. 
Define $\widetilde{a}\coloneqq \lvert\sin(\widetilde{\theta})\rvert$ and $\widetilde{p}\coloneqq \sin^2(\widetilde{\theta})$.
Since both functions $f_1(x)\coloneqq \lvert\sin x\rvert$ and $f_2(x)\coloneqq \sin^2 x$ are Lipschitz continuous with Lipschitz constant equal to $1$, $|\widetilde{\theta}-\theta|\leq \varepsilon_{\mathrm{QAE}}$ implies both \eqref{qae-prop-amp} and \eqref{qae-prop-prob}.
To deduce the overall cost for constructing QAE circuit, note that 
$2(\kb{0})^{\otimes (n+1)}-I^{\otimes (n+1)}$ can be realised via $-X^{\otimes (n+1)} \mathrm{MC}Z X^{\otimes (n+1)}$ where $\mathrm{MC}Z$ (multi-controlled-$Z$) could be realised via $(n+1)$-qubit Toffoli gate costing $O(n)$ elementary gates (see \autoref{implement-q-qubit-toffoli}) and two single-qubit Hadamard gates\footnote{A similar argument can be observed in \cite[Definition 7]{RR23} quoting \cite{SP13}.}. Therefore, the reflection about initial state $\mathcal{S}_s$ costs $2T_U+O(n)$ elementary gates. The reflection about good state $\mathcal{S}_G$ is just a single-qubit $Z$ gate costing $O(1)$ each. Hence, the overall cost for a QAE operation becomes $O\left(\frac{1}{\varepsilon_{\mathrm{QAE}}}\left(T_U + n + 1\right)+\log^2\left(\frac1{\varepsilon_{\mathrm{QAE}}}\right)\right)=O\left(\frac{1}{\varepsilon_{\mathrm{QAE}}}\left(T_U + n\right)\right)$ elementary gates.
\end{proof}
\begin{proposition}\label{qae-for-main-thm}
Let $\varepsilon_{\mathrm{QAE}}>0$.
Let $\Sigma$ be a matrix satisfying the same assumption as \autoref{main-thm-for-gen-mat}, and suppose that we have access to the unitaries $U_{\Sigma}$ and $U_{\ket{z}}$ in \autoref{main-thm-for-gen-mat}.
If the block-encoding error $\varepsilon_{U_\Sigma}$ of $U_{\Sigma}$ satisfies
\[
\frac{\varepsilon_{U_\Sigma}}{\widetilde{\lambda}_{\max}}=o\left(\frac{\varepsilon_{\mathrm{QAE}}} {L\widetilde{\kappa}} \cdot \log^{-3}\left(\frac{L\widetilde{\kappa}}{\varepsilon_{\mathrm{QAE}}}\right)\right),
\tagaligneq \label{cond-for-using-chakraborty-et-al-qae-prop1}
\] 
then we can output the value of $\norm{\widetilde{\Sigma}^{1/2}\ket{z}}$, where $\widetilde{\Sigma}\coloneqq \Sigma/\widetilde{\lambda}_{\max}$, to within $\varepsilon_{\mathrm{QAE}}$ with probability $0.99$ by using 
\[
O
\left(
\frac{1}{\varepsilon_{\mathrm{QAE}}}
\left(
\widetilde{q}
+
n
+
\frac{\alpha_{U_\Sigma}}{\widetilde{\lambda}_{\max}}
\,L\widetilde{\kappa}\,\left(a_{U_\Sigma}+T_{U_\Sigma}\right)\log^2\left(\frac{L\widetilde{\kappa}}{\varepsilon_{\mathrm{QAE}}}\right)
+T_{U_{\ket{z}}}
\right)
\right) \tagaligneq \label{qae-cost-thm-1st-abstract-cost}
\]
elementary gates, where $\mathbb{Z}^{+}\ni\widetilde{q}=a_{U_\Sigma}+O\left(\log\log(1/{\varepsilon_{\mathrm{QAE}}})\right)$, and an ancillary register of $\widetilde{q}+2$ qubits.
In particular, if the assumptions about data loaders in polylogarithmic time (\autoref{assump-plus-TU-ket-y-log2-N} and \autoref{assump-QROM-like-unitary}), and the assumption about parameter estimation (\autoref{assump-LK-plus}) hold,
we have $\varepsilon_{U_\Sigma}=0$ and condition \eqref{cond-for-using-chakraborty-et-al-qae-prop1} is satisfied automatically. Under such a circumstance,
the quantum circuit to output $\norm{\widetilde{\Sigma}^{1/2}\ket{z}}$ to within $\varepsilon_{\mathrm{QAE}}$ with probability at least $0.99$ has
\[
O
\left(
\frac{1}{\varepsilon_{\mathrm{QAE}}}
\frac{\norm{\Sigma}_F}{{\lambda}_{\max}}
\,{\kappa}\,
\polylog(N)
\log^2\left(\frac{{\kappa}}{\varepsilon_{\mathrm{QAE}}}\right)
\right) \tagaligneq \label{qae-cost-thm-2nd-concrete-cost-0}
\]
elementary gate depth and uses an ancillary register of $\widetilde{q}+2=O\left(\log(N)+\log\log({1}/{\varepsilon_{\mathrm{QAE}}})\right)$ qubits.
\end{proposition}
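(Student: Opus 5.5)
The plan is to mirror the first half of the proof of \autoref{main-thm2}, with the fixed-point amplitude amplification step replaced by quantum amplitude estimation (\autoref{lem-QAE}). We rescale $\widetilde{\Sigma}\coloneqq\Sigma/\widetilde{\lambda}_{\max}$ exactly as in the proof of \autoref{main-thm-for-gen-mat}. Then $U_\Sigma$ is an $\left(\alpha_{U_\Sigma}/\widetilde{\lambda}_{\max},a_{U_\Sigma},\varepsilon_{U_\Sigma}/\widetilde{\lambda}_{\max}\right)$-block-encoding of $\widetilde{\Sigma}$, and the spectrum of $\widetilde{\Sigma}$ lies in $[1/\widehat{\kappa},1]$ with $\widehat{\kappa}\coloneqq\max\{L\widetilde{\kappa},2\}$. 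We invoke \autoref{prop-for-Hc2} with $c=1/2$ and target error $\widetilde{\varepsilon}\coloneqq\varepsilon_{\mathrm{QAE}}$ (or a constant fraction of it). This choice is exactly what condition \eqref{cond-for-using-chakraborty-et-al-qae-prop1} is tailored for. It yields a $(2,\widetilde{q},\widetilde{\varepsilon})$-block-encoding $U_{\widetilde{\Sigma}^{1/2}}$ with $\widetilde{q}=a_{U_\Sigma}+O(\log\log(1/\varepsilon_{\mathrm{QAE}}))$, at cost $O\bigl(\frac{\alpha_{U_\Sigma}}{\widetilde{\lambda}_{\max}}L\widetilde{\kappa}(a_{U_\Sigma}+T_{U_\Sigma})\log^2(L\widetilde{\kappa}/\varepsilon_{\mathrm{QAE}})\bigr)$.

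Applying $U\coloneqq U_{\widetilde{\Sigma}^{1/2}}(I^{\otimes\widetilde{q}}\otimes U_{\ket{z}})$ to $\ket{0}_{\mathrm{Q}}\ket{0}_{\mathrm{I}}$ gives, as in \eqref{state-after-apply-sigma12-main-thm}, the amplitude $\frac12\norm{f(\widetilde{\Sigma})\ket{z}}$ on the flagged subspace $\ket{0}_{\mathrm{Q}}$. To put this in the form required by \autoref{lem-QAE}, we append one extra qubit and flip it with a $\widetilde{q}$-qubit Toffoli controlled on $\mathrm{Q}$ being $\ket{0}$. By \autoref{implement-q-qubit-toffoli} this costs $O(\widetilde{q})$ gates and one recyclable ancilla, which accounts for the ``$+2$'' ancillas. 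The resulting state is $a\ket{1}\ket{\Psi_G}+b\ket{0}\ket{\Psi_B}$ with $|a|=\frac12\norm{f(\widetilde{\Sigma})\ket{z}}$. Running QAE with precision $\varepsilon_{\mathrm{QAE}}/4$ returns $\widetilde{a}$ with $|\widetilde{a}-|a||\le\varepsilon_{\mathrm{QAE}}/4$ with probability $0.99$, by \eqref{qae-prop-amp}. We output $2\widetilde{a}$.

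For the error analysis we use $\bigl|\norm{f(\widetilde{\Sigma})\ket{z}}-\norm{\widetilde{\Sigma}^{1/2}\ket{z}}\bigr|\le\norm{f(\widetilde{\Sigma})-\widetilde{\Sigma}^{1/2}}\le\widetilde{\varepsilon}$. Taking $\widetilde{\varepsilon}=\varepsilon_{\mathrm{QAE}}/2$ gives a total error $2\cdot\varepsilon_{\mathrm{QAE}}/4+\varepsilon_{\mathrm{QAE}}/2=\varepsilon_{\mathrm{QAE}}$. The gate count then follows from \autoref{lem-QAE}: $O\bigl(\frac{1}{\varepsilon_{\mathrm{QAE}}}(T_U+\widetilde{q}+n)\bigr)$ with $T_U=T_{U_{\widetilde{\Sigma}^{1/2}}}+T_{U_{\ket{z}}}$, which is \eqref{qae-cost-thm-1st-abstract-cost}. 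Constant rescalings of $\varepsilon_{\mathrm{QAE}}$ are absorbed into the $O$ and $o$ notation.

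For the concrete bound, \autoref{block-encoding-of-sigma} gives $\varepsilon_{U_\Sigma}=0$, $\alpha_{U_\Sigma}=\norm{\Sigma}_F$, $a_{U_\Sigma}=\ceil{\log_2(N+1)}$ and $T_{U_\Sigma}=O(\polylog N)$. \autoref{assump-plus-TU-ket-y-log2-N} gives $T_{U_{\ket{z}}}=O(\log^2N)$, and \autoref{assump-LK-plus} lets us replace $L\widetilde{\kappa}$ and $\widetilde{\lambda}_{\max}$ by $\kappa$ and $\lambda_{\max}$ up to polylogarithmic factors in $N$. Substituting yields \eqref{qae-cost-thm-2nd-concrete-cost-0}.

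The main subtlety, rather than a true obstacle, is matching the precision condition. \autoref{prop-for-Hc2} requires $\delta=o(\widetilde{\varepsilon}/(\widehat{\kappa}\log^3))$, so choosing $\widetilde{\varepsilon}=\Theta(\varepsilon_{\mathrm{QAE}})$ is what produces the $\widehat{\kappa}^{-1}$ in \eqref{cond-for-using-chakraborty-et-al-qae-prop1}, in contrast with the $\widehat{\kappa}^{-1.5}$ of \autoref{main-thm2}. Note also that no amplitude lower bound is needed here, since QAE gives additive error.
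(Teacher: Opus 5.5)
Your proposal is correct and follows the paper's proof essentially step for step: the same rescaling to $\widetilde{\Sigma}$ with $\widehat{\kappa}=\max\{L\widetilde{\kappa},2\}$, the same choice $\widetilde{\varepsilon}=\varepsilon_{\mathrm{QAE}}/2$ in \autoref{prop-for-Hc2}, the same flag qubit plus multi-controlled-$X$ construction giving the $\widetilde{q}+2$ ancillas, QAE at precision $\varepsilon_{\mathrm{QAE}}/4$ on $\frac12\norm{f(\widetilde{\Sigma})\ket{z}}$, and the same error budget $2\cdot\varepsilon_{\mathrm{QAE}}/4+\varepsilon_{\mathrm{QAE}}/2=\varepsilon_{\mathrm{QAE}}$. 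Your observation is also exactly right: taking $\widetilde{\varepsilon}=\Theta(\varepsilon_{\mathrm{QAE}})$ rather than $\Theta(\varepsilon/\sqrt{\widehat{\kappa}})$ is what turns the $\widehat{\kappa}^{-1.5}$ of \autoref{main-thm2} into the $\widehat{\kappa}^{-1}$ of \eqref{cond-for-using-chakraborty-et-al-qae-prop1}, with no amplitude lower bound needed.
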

\begin{proof}
Similarly to the proof of \autoref{main-thm-for-gen-mat}, letting $\widehat{\kappa}\equiv \max\{L\widetilde{\kappa},2\}$ 
and $\widetilde{\varepsilon}\equiv \varepsilon_{\mathrm{QAE}}/2$,
we apply \autoref{prop-for-Hc2} to $\widetilde{\Sigma}$, by using $U_\Sigma$ as the $\left(\frac{\alpha_{U_\Sigma}}{\widetilde{\lambda}_{\max}},a_{U_\Sigma},\frac{\varepsilon_{U_\Sigma}}{\widetilde{\lambda}_{\max}}\right)$-block-encoding of $\widetilde{\Sigma}$.
This is possible as long as $\frac{\varepsilon_{U_\Sigma}}{\widetilde{\lambda}_{\max}}=o(\widetilde{\varepsilon}/\left(\widehat{\kappa}\log^3(\widehat{\kappa}/\widetilde{\varepsilon})\right))$ holds, which is already assumed in \eqref{cond-for-using-chakraborty-et-al-qae-prop1}.
Doing so allows us to construct a unitary $U_{\widetilde{\Sigma}^{1/2}}\left(I^{\otimes \widetilde{q}}\otimes U_{\ket{z}}\right)$, which requires a block-encoding ancillary register $\mathrm{Q}$ of $\widetilde{q}=a_{U_\Sigma}+O\left(\log\log(1/{\widetilde{\varepsilon}})\right)$ qubits, and prepare the state of the form
\begin{align*}
U_{\widetilde{\Sigma}^{1/2}}\left(I^{\otimes \widetilde{q}}\otimes U_{\ket{z}}\right)\ket{0}_{\mathrm{Q}}\ket{0}_{\mathrm{I}}
&=\frac{1}{2}\ket{0}_{\mathrm{Q}}f(\widetilde{\Sigma})\ket{z}_{\mathrm{I}}+\ket{\widetilde{z}^{\perp}}_{\mathrm{Q}\mathrm{I}}, \tagaligneq \label{state-after-apply-tilde-sigma12-qae-prop}
\end{align*}
where $\ket{\widetilde{z}^{\perp}}_{\mathrm{Q}\mathrm{I}}$ is an unnormalised state such that $\left(\kb{0}_{\mathrm{Q}}\otimes I^{\otimes n}\right)\ket{\widetilde{z}^{\perp}}_{\mathrm{Q}\mathrm{I}}=0$,
mirroring \eqref{state-after-apply-sigma12-main-thm} and \eqref{state-after-apply-sigma12-corollary-sum-of-inc}.
Here, $U_{\widetilde{\Sigma}^{1/2}}$ exactly block-encodes the matrix $f(\widetilde{\Sigma})$ with property $\big\lVert f(\widetilde{\Sigma})-\widetilde{\Sigma}^{1/2}\big\rVert\leq\widetilde{\varepsilon}$.
By adding a single-qubit flag register to the state \eqref{state-after-apply-tilde-sigma12-qae-prop} and applying multi-controlled-$X$ (MC$X$) on the flag register, conditioned on register $\mathrm{Q}$ being in $\ket{0}^{\otimes \widetilde{q}}$, we can obtain a state in the form where QAE (\autoref{lem-QAE}) is applicable.
Hence, the unitary operator that QAE needs to query is $U\coloneqq (\mathrm{MC}_{\mathrm{Q}}X\otimes I^{\otimes n}) (I\otimes U_{\widetilde{\Sigma}^{1/2}})\left(I^{\otimes (1+ \widetilde{q})}\otimes U_{\ket{z}}\right)$. 
By \autoref{implement-q-qubit-toffoli}, $\mathrm{MC}X$ controlled on $\widetilde{q}$ qubits costs $O(\widetilde{q})$ two-qubit gates and a single-qubit ancillary register. 
Therefore, totally, the number of ancillary qubits required is $\widetilde{q}+2$, where we need $\widetilde{q}$ qubits for the block-encoding ancilla, another one for the flag register for QAE, and the last one for the implementation of $\mathrm{MC}X$ gate. 
Implementing $U$ takes
\[
T_U
=
O\left(
\widetilde{q}
+T_{U_{\widetilde{\Sigma}^{1/2}}}+T_{U_{\ket{z}}}
\right)
\]
elementary gates in total. Here, recall that by \autoref{prop-for-Hc2}, we have
\[
T_{U_{\widetilde{\Sigma}^{1/2}}}
=
O
\left(
\frac{\alpha_{U_\Sigma}}{\widetilde{\lambda}_{\max}}\, L\widetilde{\kappa}\,\left(a_{U_\Sigma}+T_{U_\Sigma}\right)\log^2\left(\frac{L\widetilde{\kappa}}{\widetilde{\varepsilon}}\right)
\right),
\]
and $\widetilde{\varepsilon}=\Theta(\varepsilon_{\mathrm{QAE}})$ by definition.
We then apply \autoref{lem-QAE} to output an estimate $\widetilde{\mathcal{X}}$ of $\mathcal{X}\coloneqq \norm{f(\widetilde{\Sigma})\ket{z}}/2$ to within $\varepsilon_{\mathcal{X}}\coloneqq \varepsilon_{\mathrm{QAE}}/4$ with probability at least $0.99$, by using $O\left(\frac{1}{\varepsilon_{\mathcal{X}}}\left(\widetilde{q}+n+T_U\right)\right)$ elementary gates, which equals to \eqref{qae-cost-thm-1st-abstract-cost} as claimed.
Lastly, the estimate $\widetilde{\mathcal{X}}$ satisfies
\begin{align*}
\bigg|
2\widetilde{\mathcal{X}} - \norm{\widetilde{\Sigma}^{1/2}\ket{z}}
\bigg|
&\leq 
2
\left|
\widetilde{\mathcal{X}} - \mathcal{X}
\right|
+
\bigg|
\norm{f(\widetilde{\Sigma})\ket{z}}
 - \norm{\widetilde{\Sigma}^{1/2}\ket{z}}
\bigg|
\leq
2\varepsilon_{\mathcal{X}}
+
\widetilde{\varepsilon}
\leq
\varepsilon_{\mathrm{QAE}}.
\end{align*}
The concrete complexity \eqref{qae-cost-thm-2nd-concrete-cost-0} under \autoref{assump-QROM-like-unitary}, \autoref{assump-plus-TU-ket-y-log2-N}, and \autoref{assump-LK-plus} follows straightforwardly, in the same manner as the proof of \autoref{main-thm-gen-mat-concrete-cost}.
\end{proof}
\begin{corollary}\label{qae-for-main-thm-sum-of-inc-case}
Let $\varepsilon_{\mathrm{QAE}}>0$.
Let $\Sigma$ be a matrix satisfying the same assumption as \autoref{main-thm-for-gen-mat} and $\ket{z}$ be as in \autoref{main-thm-for-gen-mat}.
Suppose that the assumptions about data loaders in polylogarithmic time (\autoref{assump-plus-TU-ket-y-log2-N} and \autoref{assump-QROM-like-unitary}), and the assumption about parameter estimation (\autoref{assump-LK-plus}) hold in the same way as
\autoref{main-thm-coro-concrete-cost-sum-of-inc}. 
Then, we can output the value of $\norm{\mathcal{L}_N\widetilde{\Sigma}^{1/2}\ket{z}}$ to within $\varepsilon_{\mathrm{QAE}}$, where $\widetilde{\Sigma}\coloneqq \Sigma/\widetilde{\lambda}_{\max}$, with probability at least $0.99$ by using a quantum circuit with
\[
O
\left(
\frac{1}{\varepsilon_{\mathrm{QAE}}}
\frac{\norm{\Sigma}_F}{{\lambda}_{\max}}
\,{\kappa}\,
N\polylog(N)
\log^2\left(\frac{{\kappa}}{\varepsilon_{\mathrm{QAE}}}\right)
\right) \tagaligneq \label{qae-cost-thm-2nd-concrete-cost}
\]
elementary gate depth and an ancillary register of $\widetilde{q}+2=O\left(\log(N)+\log\log({1}/{\varepsilon_{\mathrm{QAE}}})\right)$ qubits.
\end{corollary}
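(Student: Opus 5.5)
The proof will combine the constructions in the proofs of \autoref{qae-for-main-thm} and \autoref{main-thm-coro-concrete-cost-sum-of-inc}. Set $\widehat{\kappa}\coloneqq\max\{L\widetilde{\kappa},2\}$. Under \autoref{assump-QROM-like-unitary}, \autoref{block-encoding-of-sigma} gives an exact $(\norm{\Sigma}_F,\ceil{\log_2(N+1)},0)$-block-encoding of $\Sigma$. This is also a $(\norm{\Sigma}_F/\widetilde{\lambda}_{\max},\ceil{\log_2(N+1)},0)$-block-encoding of $\widetilde{\Sigma}$, whose spectrum lies in $[1/\widehat{\kappa},1]$. Applying \autoref{prop-for-Hc2} with $c=1/2$ and an accuracy $\widetilde{\varepsilon}$ to be fixed below yields a $(2,\widehat{q},\widetilde{\varepsilon})$-block-encoding $U_{\widetilde{\Sigma}^{1/2}}$. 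This block-encoding exactly encodes some $f(\widetilde{\Sigma})$ with $\lVert f(\widetilde{\Sigma})-\widetilde{\Sigma}^{1/2}\rVert\le\widetilde{\varepsilon}$. Its gate depth is $O\big(\frac{\norm{\Sigma}_F}{\widetilde{\lambda}_{\max}}\widehat{\kappa}\polylog(N)\log^2(\widehat{\kappa}/\widetilde{\varepsilon})\big)$. Because the error condition of \autoref{prop-for-Hc2} is vacuous when $\varepsilon_{U_\Sigma}=0$, $\widetilde{\varepsilon}$ can be chosen freely.

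The circuit is then assembled as in \eqref{state-after-sigma12-and-LN}. Apply $U_{\ket{z}}$, then $U_{\widetilde{\Sigma}^{1/2}}$, then $\widetilde{U}_{\mathcal{L}_N}$ built from the $(\lVert\mathcal{L}_N\rVert_F,\ceil{\log_2(N+1)},0)$-block-encoding of $\mathcal{L}_N$ given by \autoref{block-encoding-of-sigma}. The $\ket{0}_{\mathrm{Q}}$ branch of the resulting state is $\frac{1}{2\lVert\mathcal{L}_N\rVert_F}\ket{0}_{\mathrm{Q}}\mathcal{L}_Nf(\widetilde{\Sigma})\ket{z}$. Next, add a flag qubit and apply $\mathrm{MC}_{\mathrm{Q}}X$, as in the proof of \autoref{qae-for-main-thm}. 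Then \autoref{lem-QAE} returns an estimate $\widetilde{\mathcal{X}}$ of $\mathcal{X}\coloneqq\lVert\mathcal{L}_Nf(\widetilde{\Sigma})\ket{z}\rVert/(2\lVert\mathcal{L}_N\rVert_F)$ to additive accuracy $\varepsilon_{\mathcal{X}}$, with probability at least $0.99$. The output is $2\lVert\mathcal{L}_N\rVert_F\widetilde{\mathcal{X}}$, where $\lVert\mathcal{L}_N\rVert_F=\sqrt{N(N+1)/2}$ is known classically.

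The error budget is where the extra factor $N$ arises, and it is the main step to get right. Multiplying by $2\lVert\mathcal{L}_N\rVert_F=\Theta(N)$ amplifies the QAE error, so we choose $\varepsilon_{\mathcal{X}}\coloneqq\varepsilon_{\mathrm{QAE}}/(4\lVert\mathcal{L}_N\rVert_F)=\Theta(\varepsilon_{\mathrm{QAE}}/N)$. By \autoref{prop-for-LN-times-unit-vec2}, the approximation error satisfies $\lvert\lVert\mathcal{L}_Nf(\widetilde{\Sigma})\ket{z}\rVert-\lVert\mathcal{L}_N\widetilde{\Sigma}^{1/2}\ket{z}\rVert\rvert\le\lVert\mathcal{L}_N\rVert\widetilde{\varepsilon}\le N\widetilde{\varepsilon}$, so we take $\widetilde{\varepsilon}\coloneqq\varepsilon_{\mathrm{QAE}}/(2N)$. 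The total error is then at most $2\lVert\mathcal{L}_N\rVert_F\varepsilon_{\mathcal{X}}+N\widetilde{\varepsilon}\le\varepsilon_{\mathrm{QAE}}$. The choice $\widetilde{\varepsilon}=\Theta(\varepsilon_{\mathrm{QAE}}/N)$ only adds $\log N$ factors to the $\log^2(\widehat{\kappa}/\widetilde{\varepsilon})$ term and to $\widehat{q}=O(\log N+\log\log(1/\varepsilon_{\mathrm{QAE}}))$, and these are absorbed into $\polylog(N)$. The QAE cost from \autoref{lem-QAE} is $O\big(\frac{1}{\varepsilon_{\mathcal{X}}}(T_U+n)\big)=O\big(\frac{N}{\varepsilon_{\mathrm{QAE}}}T_U\big)$. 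Here $T_U$ is dominated by $T_{U_{\widetilde{\Sigma}^{1/2}}}$, since $T_{U_{\mathcal{L}_N}}=O(\polylog N)$, $T_{U_{\ket{z}}}=O(\log^2N)$, and the MC$X$ gate costs $O(\widehat{q})$.

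Finally, invoke \autoref{assump-LK-plus} to replace $\widetilde{\lambda}_{\max}$, $L\widetilde{\kappa}$ by $\lambda_{\max}$, $\kappa$ up to polylog factors, which gives \eqref{qae-cost-thm-2nd-concrete-cost}. For the ancilla count, the register $\mathrm{Q}=\mathrm{Q}_1\mathrm{Q}_2$ has $\widetilde{q}=\widehat{q}+\ceil{\log_2(N+1)}$ qubits. Adding one flag qubit and one MC$X$ ancilla gives $\widetilde{q}+2=O(\log N+\log\log(1/\varepsilon_{\mathrm{QAE}}))$.
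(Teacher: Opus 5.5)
Your proposal is correct and follows the paper's proof essentially step for step. It uses the same QAE oracle $(\mathrm{MC}_{\mathrm{Q}}X\otimes I^{\otimes n})(I\otimes\widetilde{U}_{\mathcal{L}_N})(I\otimes U_{\widetilde{\Sigma}^{1/2}})(I\otimes U_{\ket{z}})$, the same $\widetilde{\varepsilon}=\varepsilon_{\mathrm{QAE}}/(2N)$, and the same error split $2\lVert\mathcal{L}_N\rVert_F\,\varepsilon_{\mathcal{X}}+\lVert\mathcal{L}_N\rVert_2\,\widetilde{\varepsilon}\le\varepsilon_{\mathrm{QAE}}$; the only (cosmetic) difference is your $\varepsilon_{\mathcal{X}}=\varepsilon_{\mathrm{QAE}}/(4\lVert\mathcal{L}_N\rVert_F)$ versus the paper's $\varepsilon_{\mathrm{QAE}}/(2\sqrt{2}(N+1))$, both $\Theta(\varepsilon_{\mathrm{QAE}}/N)$ and both closing the budget.
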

\begin{proof}
The proof goes in the same manner as that of \autoref{qae-for-main-thm}, 
but with 
$\widetilde{\varepsilon}\coloneqq \varepsilon_{\mathrm{QAE}}/(2N)$ and
$\varepsilon_{\mathcal{X}}\coloneqq \varepsilon_{\mathrm{QAE}}/(2\sqrt{2}(N+1))=\Theta\left(\varepsilon_{\mathrm{QAE}}/N\right)$
instead due to the error bound \eqref{QAE-sum-of-inc-error-bound-ineq} given below.
Note that the number of the required ancillary qubits, $\widetilde{q}$, remains of the same order even though $\widetilde{\varepsilon}$ now depends on $N$, because $O\left(\log(N)+\log\log\left(N/\varepsilon_{\mathrm{QAE}}\right)\right)=O\left(\log(N)+\log\log(1/\varepsilon_{\mathrm{QAE}})\right)$.
Again, we need to add a flag register and an MC$X$ gate, and perform QAE to the state \eqref{state-after-sigma12-and-LN}, similarly to what we do to \eqref{state-after-apply-tilde-sigma12-qae-prop} in \autoref{qae-for-main-thm}. The unitary operator that QAE needs to query becomes $U\coloneqq (\mathrm{MC}_{\mathrm{Q}}X\otimes I^{\otimes n}) (I\otimes\widetilde{U}_{\mathcal{L}_N})(I^{\otimes (\widehat{a}+1)}\otimes U_{\widetilde{\Sigma}^{1/2}})\left(I^{\otimes (\widetilde{q}+1)}\otimes U_{\ket{z}}\right)$, where $\widehat{a}=\ceil{\log_2(N+1)}$, and its implementation requires $T_U
=
O\left(
\widetilde{q}
+T_{U_{\mathcal{L}_N}}
+T_{U_{\widetilde{\Sigma}^{1/2}}}+T_{U_{\ket{z}}}
\right)$ elementary gates.
By using QAE (\autoref{lem-QAE}), we can output the estimate $\widetilde{\mathcal{X}}$ of $\mathcal{X}\coloneqq \big\lVert \mathcal{L}_N f(\widetilde{\Sigma}) \ket{z} \big\rVert / (2\norm{\mathcal{L}_N}_F)$ satisfying $\left| \widetilde{\mathcal{X}}-\mathcal{X} \right|\leq \varepsilon_{\mathcal{X}}$ with probability at least $0.99$.
Note that, under \autoref{assump-plus-TU-ket-y-log2-N}, \autoref{assump-QROM-like-unitary}, and \autoref{assump-LK-plus},
the $N$ factor in $\widetilde{\varepsilon}$ is absorbed into $\polylog (N)$ term due to $T_{U_{\widetilde{\Sigma}^{1/2}}}$'s logarithmic dependence on $\widetilde{\varepsilon}$, 
and the overall cost required becomes \eqref{qae-cost-thm-2nd-concrete-cost}.
Since $\norm{\mathcal{L}_N}_F=\sqrt{N(N+1)/2}\leq (N+1)/\sqrt{2}$ holds by definition and $\norm{\mathcal{L}_N}_2\leq N$ by \autoref{prop-for-LN-times-unit-vec2}, the estimate $\widetilde{\mathcal{X}}$ satisfies
\begin{align*}
\left\lvert 2\norm{\mathcal{L}_N}_F\widetilde{\mathcal{X}}-
\big\lVert\mathcal{L}_N\widetilde{\Sigma}^{1/2}\ket{z}\big\rVert
\right\rvert
&\leq
2\norm{\mathcal{L}_N}_F
\left\lvert \widetilde{\mathcal{X}}-\mathcal{X}\right\rvert
+
\left\lvert \big\lVert \mathcal{L}_N f(\widetilde{\Sigma}) \ket{z} \big\rVert-
\big\lVert\mathcal{L}_N\widetilde{\Sigma}^{1/2}\ket{z}\big\rVert
\right\rvert\\
&\leq
2\norm{\mathcal{L}_N}_F
\cdot
\varepsilon_{\mathcal{X}}
+
\norm{\mathcal{L}_N}_2
\cdot
\widetilde{\varepsilon} \tagaligneq \label{QAE-sum-of-inc-error-bound-ineq}
\\
&\leq
\varepsilon_{\mathrm{QAE}}. \qedhere
\end{align*}
\end{proof}
\begin{corollary}[Absolute-error estimate to $\norm{\vec{x}}$]\label{qae-w-abs-err-est}
Under the setting of \autoref{qae-for-main-thm},
the norm $\norm{\vec{x}}$ of $\vec{x}=\Sigma^{1/2}\vec{z}$ can be estimated to within $\widehat{\varepsilon}>0$ with probability at least $0.99$ by using 
\[
O
\left(
\frac{\norm{\vec{z}}_2\sqrt{\widetilde{\lambda}_{\max}}}{\widehat{\varepsilon}}
\left(
\widetilde{q}
+
n
+
\frac{\alpha_{U_\Sigma}}{\widetilde{\lambda}_{\max}}
\,L\widetilde{\kappa}\,\left(a_{U_\Sigma}+T_{U_\Sigma}\right)\log^2\left(\frac{\norm{\vec{z}}_2\sqrt{\widetilde{\lambda}_{\max}} \, L\widetilde{\kappa}}{\widehat{\varepsilon}}\right)
+T_{U_{\ket{z}}}
\right)
\right)
\]
elementary gates, where $\mathbb{Z}^{+}\ni\widetilde{q}=a_{U_\Sigma}+O\left(\log\log\left(\norm{\vec{z}}_2\sqrt{\widetilde{\lambda}_{\max}}\Big/{\widehat{\varepsilon}}\right)\right)$, 
and an ancillary register of $\widetilde{q}+2$ qubits.
If \autoref{assump-plus-TU-ket-y-log2-N}, \autoref{assump-QROM-like-unitary}, and \autoref{assump-LK-plus} hold,
the quantum circuit to output $\norm{\vec{x}}$ to within $\widehat{\varepsilon}$ with probability at least $0.99$ has
\[
O
\left(
\frac{\norm{\vec{z}}_2\sqrt{\lambda_{\max}}}{\widehat{\varepsilon}}
\frac{\norm{\Sigma}_F}{{\lambda}_{\max}}
\,{\kappa}\,
\polylog(N)
\log^2\left(\frac{\norm{\vec{z}}_2\sqrt{\lambda_{\max}}\cdot\kappa}{\widehat{\varepsilon}}\right)
\right)  \tagaligneq \label{qae-cost-abs-err-concrete-cost}
\]
elementary gate depth and an ancillary register of $\widetilde{q}+2=O\left(\log(N)+\log\log(\norm{\vec{z}}_2\sqrt{\lambda_{\max}}/{\widehat{\varepsilon}})\right)$ qubits.
\end{corollary}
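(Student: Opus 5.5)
The plan is to reduce to \autoref{qae-for-main-thm} by a rescaling. Since $\vec{x}=\Sigma^{1/2}\vec{z}=\norm{\vec{z}}_2\,\Sigma^{1/2}\ket{z}$ and $\widetilde{\Sigma}=\Sigma/\widetilde{\lambda}_{\max}$, we have the exact identity
\[
\norm{\vec{x}}=\norm{\vec{z}}_2\sqrt{\widetilde{\lambda}_{\max}}\,\norm{\widetilde{\Sigma}^{1/2}\ket{z}}.
\]
Both $\norm{\vec{z}}_2$ and $\widetilde{\lambda}_{\max}$ are known classically. The first is kept as classical data by the independence argument of \autoref{prelim-sec}, and the second comes from \autoref{assmp-estimate-lamb-max-kappa}. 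So it suffices to estimate $\norm{\widetilde{\Sigma}^{1/2}\ket{z}}$ to additive accuracy
\[
\varepsilon_{\mathrm{QAE}}\coloneqq \frac{\widehat{\varepsilon}}{\norm{\vec{z}}_2\sqrt{\widetilde{\lambda}_{\max}}}
\]
and multiply the output by the known factor $\norm{\vec{z}}_2\sqrt{\widetilde{\lambda}_{\max}}$. If $\widetilde{\mathcal{Y}}$ denotes the output of \autoref{qae-for-main-thm}, then
\[
\bigl|\norm{\vec{z}}_2\sqrt{\widetilde{\lambda}_{\max}}\,\widetilde{\mathcal{Y}}-\norm{\vec{x}}\bigr|\leq \norm{\vec{z}}_2\sqrt{\widetilde{\lambda}_{\max}}\,\varepsilon_{\mathrm{QAE}}=\widehat{\varepsilon}.
\]
This holds with the same success probability $0.99$, because the post-processing is deterministic.

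The first step is to substitute this $\varepsilon_{\mathrm{QAE}}$ into \eqref{qae-cost-thm-1st-abstract-cost}. The prefactor $1/\varepsilon_{\mathrm{QAE}}$ becomes $\norm{\vec{z}}_2\sqrt{\widetilde{\lambda}_{\max}}/\widehat{\varepsilon}$. The logarithm $\log^2(L\widetilde{\kappa}/\varepsilon_{\mathrm{QAE}})$ becomes the stated $\log^2\bigl(\norm{\vec{z}}_2\sqrt{\widetilde{\lambda}_{\max}}\,L\widetilde{\kappa}/\widehat{\varepsilon}\bigr)$. The ancilla count $\widetilde{q}=a_{U_\Sigma}+O(\log\log(1/\varepsilon_{\mathrm{QAE}}))$ becomes the claimed expression. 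The same substitution turns the block-encoding error condition \eqref{cond-for-using-chakraborty-et-al-qae-prop1} into the corresponding condition with $\widehat{\varepsilon}/(\norm{\vec{z}}_2\sqrt{\widetilde{\lambda}_{\max}})$; I read this as part of ``under the setting''.

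The second step handles the concrete regime. Under \autoref{assump-plus-TU-ket-y-log2-N}, \autoref{assump-QROM-like-unitary} and \autoref{assump-LK-plus}, the error $\varepsilon_{U_\Sigma}=0$, so the error condition is automatic. I then substitute into \eqref{qae-cost-thm-2nd-concrete-cost-0} and replace $\widetilde{\lambda}_{\max}$ by $\lambda_{\max}$, using $\lambda_{\max}\le\widetilde{\lambda}_{\max}\le L\lambda_{\max}$ with $L=\widetilde{\Theta}(1)$. This gives $\sqrt{\widetilde{\lambda}_{\max}}\le\sqrt{L}\sqrt{\lambda_{\max}}$, and the $\sqrt{L}$ factor is absorbed into $\polylog(N)$. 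Likewise $L\widetilde{\kappa}\le LK\kappa$ is absorbed inside the logarithms and the $\polylog(N)$ factor, which yields \eqref{qae-cost-abs-err-concrete-cost}.

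The main obstacle is bookkeeping for the logarithmic factors. Inside $\log^2(\cdot)$, a term like $\log(LK)$ must be split off additively and bounded by $\polylog(N)$, so it multiplies the $\polylog(N)$ prefactor rather than changing the displayed logarithm. The same care is needed for the $\log\log$ term in the ancilla count. Everything else follows the proof of \autoref{qae-for-main-thm} verbatim.
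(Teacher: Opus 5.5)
Your proposal is correct and follows the paper's proof essentially verbatim. Like the paper, you set $\varepsilon_{\mathrm{QAE}}=\widehat{\varepsilon}/(\norm{\vec{z}}_2\sqrt{\widetilde{\lambda}_{\max}})$ in \autoref{qae-for-main-thm} and rescale the output using $\sqrt{\widetilde{\lambda}_{\max}}\,\widetilde{\Sigma}^{1/2}=\Sigma^{1/2}$ and $\norm{\vec{z}}\ket{z}=\vec{z}$. You then substitute into \eqref{qae-cost-thm-1st-abstract-cost} and \eqref{qae-cost-thm-2nd-concrete-cost-0}, again as the paper does; your explicit bookkeeping for absorbing $\sqrt{L}$ and $\log(LK)$ into the $\polylog(N)$ factor is only a more careful version of the paper's ``plugging in''.
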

\begin{proof}
Using \autoref{qae-for-main-thm} with $\varepsilon_{\mathrm{QAE}}\equiv \widehat{\varepsilon}\Big/\Big(\norm{\vec{z}}_2\sqrt{\widetilde{\lambda}_{\max}}\Big)$,
we get an estimate $\widetilde{\mathcal{X}}$ of $\mathcal{X}\coloneqq \lVert\widetilde{\Sigma}^{1/2}\ket{z}\rVert
$ such that $\left|\widetilde{\mathcal{X}}-\mathcal{X}\right|\leq \varepsilon_{\mathrm{QAE}}$ with probability at least $0.99$. 
Multiplying both sides of the inequality by $\norm{\vec{z}}_2\sqrt{\widetilde{\lambda}_{\max}}$, 
we have $\left|
\norm{\vec{z}}_2\sqrt{\widetilde{\lambda}_{\max}}
\cdot
\widetilde{\mathcal{X}}
-
\lVert\Sigma^{1/2}\vec{z}\rVert
\right|
\leq 
\widehat{\varepsilon}$,
as $\sqrt{\widetilde{\lambda}_{\max}}\widetilde{\Sigma}^{1/2}={\Sigma}^{1/2}$ and $\norm{\vec{z}}\ket{z}=\vec{z}$ by construction. This proves the error bound since $\norm{{\Sigma}^{1/2}\vec{z}}=\norm{\vec{x}}$.
The resource requirements follow directly by
plugging 
$\varepsilon_{\mathrm{QAE}}\equiv \widehat{\varepsilon}\Big/\Big(\norm{\vec{z}}_2\sqrt{\widetilde{\lambda}_{\max}}\Big)$
into
\eqref{qae-cost-thm-1st-abstract-cost} and \eqref{qae-cost-thm-2nd-concrete-cost-0} of \autoref{qae-for-main-thm}.
\end{proof}
\begin{lemma}\label{bounds-for-estimate-for-lamb-min} Let $\lambda_{\max},\lambda_{\min}>0$ and $\kappa=\lambda_{\max}/\lambda_{\min}$. 
If $\widetilde{\lambda}_{\max},\widetilde{\kappa}>0$ and $L,K\geq 1$ are such that \eqref{main-thm-col-assump2} in \autoref{assmp-estimate-lamb-max-kappa} holds, then
we have
\[
\frac{1}{K}\lambda_{\min}\leq \frac{\widetilde{\lambda}_{\max}}{\widetilde{\kappa}}\leq L\lambda_{\min}.
\]
\end{lemma}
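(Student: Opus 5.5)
The claim is elementary. It follows by combining the two two-sided bounds in \eqref{main-thm-col-assump2} and using $\lambda_{\min}=\lambda_{\max}/\kappa$. I will bound the ratio $\widetilde{\lambda}_{\max}/\widetilde{\kappa}$ from above and from below separately. Since every quantity involved is strictly positive, I can invert inequalities and multiply them freely without sign issues.

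For the upper bound, I take the upper bound on the numerator and the lower bound on the denominator. From $\widetilde{\lambda}_{\max}\leq L\lambda_{\max}$ and $\widetilde{\kappa}\geq\kappa>0$, which gives $1/\widetilde{\kappa}\leq 1/\kappa$, I multiply the two nonnegative inequalities to get
\[
\frac{\widetilde{\lambda}_{\max}}{\widetilde{\kappa}}\leq \frac{L\lambda_{\max}}{\kappa}=L\lambda_{\min}.
\]

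For the lower bound, I do the reverse. From $\widetilde{\lambda}_{\max}\geq\lambda_{\max}$ and $\widetilde{\kappa}\leq K\kappa$, which gives $1/\widetilde{\kappa}\geq 1/(K\kappa)$, I obtain
\[
\frac{\widetilde{\lambda}_{\max}}{\widetilde{\kappa}}\geq \frac{\lambda_{\max}}{K\kappa}=\frac{1}{K}\lambda_{\min}.
\]
Together these two bounds give the claim.

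No step is a real obstacle. The only care needed is to pair the estimates correctly: the upper bound on $\widetilde{\lambda}_{\max}$ goes with the lower bound on $\widetilde{\kappa}$, and vice versa. One should also note that positivity of $\kappa$ and $\widetilde{\kappa}$ justifies taking reciprocals while reversing the inequality direction.
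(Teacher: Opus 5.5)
Your proposal is correct and follows essentially the same route as the paper: pair the upper bound on $\widetilde{\lambda}_{\max}$ with the lower bound on $\widetilde{\kappa}$ (and vice versa) to get $\frac{\lambda_{\max}}{K\kappa}\leq\frac{\widetilde{\lambda}_{\max}}{\widetilde{\kappa}}\leq\frac{L\lambda_{\max}}{\kappa}$, then substitute $\lambda_{\max}/\kappa=\lambda_{\min}$. Your explicit remark on positivity justifying the reciprocals is a small extra that the paper leaves implicit.
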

\begin{proof}
From the bounds \eqref{main-thm-col-assump2} in \autoref{assmp-estimate-lamb-max-kappa}, it holds that
\[
\frac{\lambda_{\max}}{K\kappa}
\leq
\frac{\widetilde{\lambda}_{\max}}{\widetilde{\kappa}}
\leq
\frac{L\lambda_{\max}}{\kappa},
\]
and the proof follows since $\lambda_{\max}/\kappa=\lambda_{\min}$.
\end{proof}
\begin{lemma}[Relative-error bound in QAE estimation for $\norm{\vec{x}}$]\label{epsilon-QAE-implies-epsilon-hat}
Let ${C}>0$ be a constant and assume parameters under the setting of \autoref{qae-for-main-thm}.
Then,
$\varepsilon_{\mathrm{QAE}}\in \left(0,\frac{1}{{C}\sqrt{L\widetilde{\kappa}}}\right]$ 
implies 
$\widehat{\varepsilon}\equiv \norm{\vec{z}}_2\sqrt{\widetilde{\lambda}_{\max}}\ \varepsilon_{\mathrm{QAE}}\in\left(0,\frac{\norm{\vec{x}}_2}{{C}}\right]$.
\end{lemma}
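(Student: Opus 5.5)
The plan is to bound $\widehat{\varepsilon}$ above by chaining the hypothesis on $\varepsilon_{\mathrm{QAE}}$ with a lower bound on $\norm{\vec{x}}_2/\norm{\vec{z}}_2$ from \autoref{norm-sqrt-sigma-times-ket-z}, using \autoref{bounds-for-estimate-for-lamb-min} to convert $\widetilde{\lambda}_{\max}/\widetilde{\kappa}$ into a multiple of $\lambda_{\min}$. Positivity of $\widehat{\varepsilon}$ is immediate, since $\vec{z}\neq 0$ almost surely, $\widetilde{\lambda}_{\max}\geq\lambda_{\max}>0$, and $\varepsilon_{\mathrm{QAE}}>0$. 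So only the upper bound needs work.

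First, from $\varepsilon_{\mathrm{QAE}}\leq 1/(C\sqrt{L\widetilde{\kappa}})$ we get
\[
\widehat{\varepsilon}=\norm{\vec{z}}_2\sqrt{\widetilde{\lambda}_{\max}}\,\varepsilon_{\mathrm{QAE}}\leq \frac{\norm{\vec{z}}_2}{C}\sqrt{\frac{\widetilde{\lambda}_{\max}}{L\widetilde{\kappa}}}.
\]
Next, \autoref{bounds-for-estimate-for-lamb-min} gives $\widetilde{\lambda}_{\max}/\widetilde{\kappa}\leq L\lambda_{\min}$. Hence $\widetilde{\lambda}_{\max}/(L\widetilde{\kappa})\leq\lambda_{\min}$, which yields $\widehat{\varepsilon}\leq \norm{\vec{z}}_2\sqrt{\lambda_{\min}}/C$.

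Finally, \eqref{bounds-for-norm-x-div-by-norm-z} in \autoref{norm-sqrt-sigma-times-ket-z} gives $\sqrt{\lambda_{\min}}\,\norm{\vec{z}}_2\leq\norm{\vec{x}}_2$, and therefore $\widehat{\varepsilon}\leq\norm{\vec{x}}_2/C$.

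There is no real obstacle here. The only point needing care is that the factor $L$ in the hypothesis exactly cancels the $L$ in the upper bound of \autoref{bounds-for-estimate-for-lamb-min}. This is why the hypothesis uses $\sqrt{L\widetilde{\kappa}}$ and not $\sqrt{\widetilde{\kappa}}$. Here $K$ plays no role, since only the upper bound $\widetilde{\lambda}_{\max}/\widetilde{\kappa}\leq L\lambda_{\min}$ from \autoref{bounds-for-estimate-for-lamb-min} is used.
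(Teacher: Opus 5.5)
Your proposal is correct and follows essentially the same route as the paper's proof: both bound $\widehat{\varepsilon}$ using the hypothesis on $\varepsilon_{\mathrm{QAE}}$, then use $\sqrt{\widetilde{\lambda}_{\max}/(L\widetilde{\kappa})}\leq\sqrt{\lambda_{\min}}$ from \autoref{bounds-for-estimate-for-lamb-min}, and finish with $\norm{\vec{z}}_2/\norm{\vec{x}}_2\leq 1/\sqrt{\lambda_{\min}}$ from \eqref{bounds-for-norm-x-div-by-norm-z}. The only difference is cosmetic: the paper multiplies and divides by $\norm{\vec{x}}_2$ in a single chain, while you apply the two bounds one after the other.
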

\begin{proof}
The lower bound $\widehat{\varepsilon}>0$ is trivial. To prove the upper bound for $\widehat{\varepsilon}$, observe that
\begin{align*}
\widehat{\varepsilon}
&= \norm{\vec{z}}_2\sqrt{\widetilde{\lambda}_{\max}}\ \varepsilon_{\mathrm{QAE}}
\leq
\frac{\norm{\vec{z}}_2\sqrt{\widetilde{\lambda}_{\max}}}{{C}\sqrt{L\widetilde{\kappa}}}
=
\frac{\norm{\vec{z}}_2}{\norm{\vec{x}}_2}
\cdot
\frac{\norm{\vec{x}}_2}{{C}} 
\cdot
\frac{\sqrt{\widetilde{\lambda}_{\max}}}{\sqrt{L\widetilde{\kappa}}}
\leq 
\frac{\norm{\vec{x}}_2}{{C}}, \tagaligneq \label{widehat-eps-ineq-relative-err}
\end{align*}
where the last inequality follows since $\sqrt{\widetilde{\lambda}_{\max}/(L\widetilde{\kappa})}\leq \sqrt{\lambda_{\min}}$ by \autoref{bounds-for-estimate-for-lamb-min} and 
${\norm{\vec{z}}_2}/{\norm{\vec{x}}_2}\leq (1/\sqrt{\lambda_{\min}})$ by \eqref{bounds-for-norm-x-div-by-norm-z} in \autoref{norm-sqrt-sigma-times-ket-z}.
\end{proof}
\begin{corollary}[Relative-error estimate to $\norm{\vec{x}}$]\label{qae-w-rel-err-est}
Under the setting of \autoref{qae-for-main-thm} and for a constant ${C}>0$,
the norm $\norm{\vec{x}}$ of $\vec{x}=\Sigma^{1/2}\vec{z}$ can be estimated to within $\widehat{\varepsilon}\in\left(0,\frac{\norm{\vec{x}}_2}{{C}}\right]$ with probability at least $0.99$ by using 
\[
O
\left(
{C}{\sqrt{L\widetilde{\kappa}}}
\left(
\widetilde{q}
+
n
+
\frac{\alpha_{U_\Sigma}}{\widetilde{\lambda}_{\max}}
\,L\widetilde{\kappa}\,\left(a_{U_\Sigma}+T_{U_\Sigma}\right)\log^2\left({C}{L\widetilde{\kappa}}\right)
+T_{U_{\ket{z}}}
\right)
\right)
\]
elementary gates, where $\mathbb{Z}^{+}\ni\widetilde{q}=a_{U_\Sigma}+O\left(\log\log\left({C}L\widetilde{\kappa}\right)\right)$, 
and an ancillary register of $\widetilde{q}+2$ qubits.
If \autoref{assump-plus-TU-ket-y-log2-N}, \autoref{assump-QROM-like-unitary}, and \autoref{assump-LK-plus} hold,
the quantum circuit to output $\norm{\vec{x}}$ to within $\widehat{\varepsilon}\in\left(0,\frac{\norm{\vec{x}}_2}{{C}}\right]$  with probability at least $0.99$ has
\[
O
\left(
{C}
\frac{\norm{\Sigma}_F}{{\lambda}_{\max}}
\,{\kappa}^{1.5}\,
\polylog(N)
\log^2\left({C}{\kappa}\right)
\right) 
\]
elementary gate depth and an ancillary register of $\widetilde{q}+2=O\left(\log(N)+\log\log({C}\kappa)\right)$ qubits.
\end{corollary}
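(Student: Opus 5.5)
The plan is to reduce this to \autoref{qae-w-abs-err-est}, using \autoref{epsilon-QAE-implies-epsilon-hat} to convert the relative-error requirement into a concrete choice of $\varepsilon_{\mathrm{QAE}}$. Fix the constant $C>0$ and take the largest admissible value,
\[
\varepsilon_{\mathrm{QAE}}\coloneqq \frac{1}{C\sqrt{L\widetilde{\kappa}}}.
\]
Run the procedure of \autoref{qae-for-main-thm} with this accuracy to obtain an estimate $\widetilde{\mathcal{X}}$ of $\lVert\widetilde{\Sigma}^{1/2}\ket{z}\rVert$. Then output $\norm{\vec{z}}_2\sqrt{\widetilde{\lambda}_{\max}}\,\widetilde{\mathcal{X}}$, exactly as in the proof of \autoref{qae-w-abs-err-est}. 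Since $\sqrt{\widetilde{\lambda}_{\max}}\,\widetilde{\Sigma}^{1/2}=\Sigma^{1/2}$ and $\norm{\vec{z}}\ket{z}=\vec{z}$, this output is within $\widehat{\varepsilon}=\norm{\vec{z}}_2\sqrt{\widetilde{\lambda}_{\max}}\,\varepsilon_{\mathrm{QAE}}$ of $\norm{\vec{x}}$, with probability at least $0.99$. \autoref{epsilon-QAE-implies-epsilon-hat} then gives $\widehat{\varepsilon}\in(0,\norm{\vec{x}}_2/C]$. Note that $\norm{\vec{z}}_2$ is classical data we hold, so the rescaling is legitimate even though $\norm{\vec{x}}$ itself is unknown.

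For the resources, substitute $1/\varepsilon_{\mathrm{QAE}}=C\sqrt{L\widetilde{\kappa}}$ into \eqref{qae-cost-thm-1st-abstract-cost}. The prefactor becomes $C\sqrt{L\widetilde{\kappa}}$, and the logarithm becomes $\log^2(C(L\widetilde{\kappa})^{1.5})=O(\log^2(CL\widetilde{\kappa}))$. Likewise $\log\log(1/\varepsilon_{\mathrm{QAE}})=O(\log\log(CL\widetilde{\kappa}))$, which gives the stated $\widetilde{q}$.

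One point must be checked: the block-encoding error condition \eqref{cond-for-using-chakraborty-et-al-qae-prop1} has to hold with this $\varepsilon_{\mathrm{QAE}}$. I will state that it is part of "the setting of \autoref{qae-for-main-thm}", taken with this $\varepsilon_{\mathrm{QAE}}$. Under the data-loader assumptions it holds trivially, since $\varepsilon_{U_\Sigma}=0$.

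For the concrete bound, plug into \eqref{qae-cost-thm-2nd-concrete-cost-0} the values $\alpha_{U_\Sigma}=\norm{\Sigma}_F$, $a_{U_\Sigma}=\ceil{\log_2(N+1)}$, $T_{U_\Sigma}=O(\polylog N)$ and $T_{U_{\ket{z}}}=O(\log^2N)$. Under \autoref{assump-LK-plus}, $L$ and $K$ are polylogarithmic in $N$ and $\widetilde{\lambda}_{\max},\widetilde{\kappa}$ match $\lambda_{\max},\kappa$ up to $\widetilde{\Theta}(1)$ factors, so all such factors are absorbed into $\polylog(N)$. The product $\sqrt{\kappa}\cdot\kappa$ yields $\kappa^{1.5}$, and the ancilla count becomes $O(\log N+\log\log(C\kappa))$.

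I expect no real obstacle. The only delicate bookkeeping is that $\log(CL\widetilde{\kappa})$ may contain $\log L=O(\log\log N)$, which must be absorbed into $\polylog(N)$ rather than written as $\log(C\kappa)$; this is harmless. I would also remark that a smaller $\widehat{\varepsilon}$ is obtained by increasing $C$, so the interval statement is consistent.
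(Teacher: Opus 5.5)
Your proposal is correct and follows essentially the same route as the paper. You set $\varepsilon_{\mathrm{QAE}}=1/(C\sqrt{L\widetilde{\kappa}})$ in \autoref{qae-for-main-thm}, rescale by $\norm{\vec{z}}_2\sqrt{\widetilde{\lambda}_{\max}}$ as in \autoref{qae-w-abs-err-est}, invoke \autoref{epsilon-QAE-implies-epsilon-hat} to get $\widehat{\varepsilon}\leq\norm{\vec{x}}_2/C$, and substitute into \eqref{qae-cost-thm-1st-abstract-cost} and \eqref{qae-cost-thm-2nd-concrete-cost-0}. Your remarks on the block-encoding error condition and on absorbing $\log L$ into $\polylog(N)$ are accurate bookkeeping that the paper leaves implicit.
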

\begin{proof}
By letting $\varepsilon_{\mathrm{QAE}}\coloneqq \frac{1}{{C}\sqrt{L\widetilde{\kappa}}}$ in \autoref{qae-for-main-thm},
we can use \autoref{epsilon-QAE-implies-epsilon-hat} and
the error bound of the QAE estimate follows in the same manner as in the proof of \autoref{qae-w-abs-err-est} but with
$\widehat{\varepsilon}\leq 
\frac{\norm{\vec{x}}_2}{{C}}$. 
The resource requirements also follow directly by
plugging
$\varepsilon_{\mathrm{QAE}}= \frac{1}{{C}\sqrt{L\widetilde{\kappa}}}$
into
\eqref{qae-cost-thm-1st-abstract-cost} and \eqref{qae-cost-thm-2nd-concrete-cost-0} of \autoref{qae-for-main-thm}.
\end{proof}
\begin{corollary}[QAE estimation for $\norm{\vec{y}}=\norm{\mathcal{L}_N\Sigma^{1/2}\vec{z}}$]\label{qae-for-norm-y}
Under the setting of
\autoref{qae-for-main-thm-sum-of-inc-case},
the norm $\norm{\vec{y}}=\norm{\mathcal{L}_N\Sigma^{1/2}\vec{z}}$ can be estimated to within (absolute error) $\widehat{\varepsilon}>0$ with probability at least $0.99$ by use of a quantum circuit with
\[
O
\left(
\frac{\norm{\vec{z}}_2\sqrt{\lambda_{\max}}}{\widehat{\varepsilon}}
\frac{\norm{\Sigma}_F}{{\lambda}_{\max}}
\,{\kappa}\,
N\polylog(N)
\log^2\left(\frac{\norm{\vec{z}}_2\sqrt{\lambda_{\max}}\cdot\kappa}{\widehat{\varepsilon}}\right)
\right) \tagaligneq \label{qae-cost-sum-of-inc-abs}
\]
elementary gate depth and an ancillary register of $\widetilde{q}+2=O\left(\log(N)+\log\log(\norm{\vec{z}}_2\sqrt{\lambda_{\max}}/{\widehat{\varepsilon}})\right)$ qubits.
Alternatively,
for a constant ${C}>0$,
the norm $\norm{\vec{y}}=\norm{\mathcal{L}_N\Sigma^{1/2}\vec{z}}$ can be estimated to within (relative error) $\widehat{\varepsilon}\in\left(0,\frac{\norm{\vec{y}}_2}{{C}}\right]$ with probability at least $0.99$ by use of a quantum circuit with
\[
O
\left(
{C}
\frac{\norm{\Sigma}_F}{{\lambda}_{\max}}
\,{\kappa}^{1.5}\,
N\polylog(N)
\log^2\left({C}{\kappa}\right)
\right) \tagaligneq \label{qae-cost-sum-of-inc-rel}
\]
elementary gate depth and an ancillary register of $\widetilde{q}+2=O\left(\log(N)+\log\log({C}\kappa)\right)$ qubits.
\end{corollary}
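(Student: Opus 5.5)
The plan is to mirror the proofs of \autoref{qae-w-abs-err-est} and \autoref{qae-w-rel-err-est}, with \autoref{qae-for-main-thm-sum-of-inc-case} playing the role of \autoref{qae-for-main-thm}. The key identity is
\[
\vec{y}=\mathcal{L}_N\Sigma^{1/2}\vec{z}=\norm{\vec{z}}_2\sqrt{\widetilde{\lambda}_{\max}}\,\mathcal{L}_N\widetilde{\Sigma}^{1/2}\ket{z}.
\]
It holds because $\sqrt{\widetilde{\lambda}_{\max}}\,\widetilde{\Sigma}^{1/2}=\Sigma^{1/2}$ and $\norm{\vec{z}}\ket{z}=\vec{z}$. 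So an estimate $\widetilde{\mathcal{X}}$ of $\mathcal{X}\coloneqq\lVert\mathcal{L}_N\widetilde{\Sigma}^{1/2}\ket{z}\rVert$ with additive error $\varepsilon_{\mathrm{QAE}}$, rescaled by the classically known factor $\norm{\vec{z}}_2\sqrt{\widetilde{\lambda}_{\max}}$, gives an estimate of $\norm{\vec{y}}$ with error $\norm{\vec{z}}_2\sqrt{\widetilde{\lambda}_{\max}}\,\varepsilon_{\mathrm{QAE}}$.

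\emph{Absolute error.} Set $\varepsilon_{\mathrm{QAE}}\coloneqq\widehat{\varepsilon}/(\norm{\vec{z}}_2\sqrt{\widetilde{\lambda}_{\max}})$ and invoke \autoref{qae-for-main-thm-sum-of-inc-case}. With probability at least $0.99$ this gives $|\norm{\vec{z}}_2\sqrt{\widetilde{\lambda}_{\max}}\,\widetilde{\mathcal{X}}-\norm{\vec{y}}|\le\widehat{\varepsilon}$. Substituting this $\varepsilon_{\mathrm{QAE}}$ into \eqref{qae-cost-thm-2nd-concrete-cost} gives \eqref{qae-cost-sum-of-inc-abs}, after using $\widetilde{\lambda}_{\max}\le L\lambda_{\max}$ with $L=\widetilde{\Theta}(1)$ (\autoref{assump-LK-plus}). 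That assumption lets $\sqrt{\widetilde{\lambda}_{\max}}$ be replaced by $\sqrt{\lambda_{\max}}$, with the leftover factor absorbed into $\polylog(N)$. The ancilla count follows from the same substitution in the qubit count of \autoref{qae-for-main-thm-sum-of-inc-case}.

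\emph{Relative error.} This needs an analogue of \autoref{epsilon-QAE-implies-epsilon-hat} for $\vec{y}$, and that is the only genuinely new step. I would prove that
\[
\varepsilon_{\mathrm{QAE}}\le\frac{1}{2C\sqrt{L\widetilde{\kappa}}}\quad\text{implies}\quad\widehat{\varepsilon}\coloneqq\norm{\vec{z}}_2\sqrt{\widetilde{\lambda}_{\max}}\,\varepsilon_{\mathrm{QAE}}\le\frac{\norm{\vec{y}}_2}{C}.
\]
The chain of inequalities is the one in \eqref{widehat-eps-ineq-relative-err}, plus one extra step:
\[
\norm{\vec{y}}=\Big\lVert\mathcal{L}_N\tfrac{\vec{x}}{\norm{\vec{x}}}\Big\rVert\,\norm{\vec{x}}\ge\tfrac12\norm{\vec{x}}\quad\text{by \autoref{prop-for-LN-times-unit-vec2}},
\]
and
\[
\norm{\vec{x}}\ge\sqrt{\lambda_{\min}}\,\norm{\vec{z}}\quad\text{by \autoref{norm-sqrt-sigma-times-ket-z}}.
\]
Together with $\sqrt{\widetilde{\lambda}_{\max}/(L\widetilde{\kappa})}\le\sqrt{\lambda_{\min}}$ from \autoref{bounds-for-estimate-for-lamb-min}, this gives $\norm{\vec{z}}\sqrt{\widetilde{\lambda}_{\max}}/\sqrt{L\widetilde{\kappa}}\le 2\norm{\vec{y}}$. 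Hence the constant $2$ in the choice of $\varepsilon_{\mathrm{QAE}}$.

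With $\varepsilon_{\mathrm{QAE}}=1/(2C\sqrt{L\widetilde{\kappa}})$, \eqref{qae-cost-thm-2nd-concrete-cost} becomes
\[
O\Big(C\sqrt{\kappa}\,\polylog(N)\cdot\tfrac{\norm{\Sigma}_F}{\lambda_{\max}}\,\kappa\,N\polylog(N)\log^2(C\kappa)\Big),
\]
again using $L,K=\widetilde{\Theta}(1)$. This is \eqref{qae-cost-sum-of-inc-rel}, and the ancilla count $O(\log N+\log\log(C\kappa))$ follows as before.

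The main obstacle is this relative-error lemma. Everything else is parameter substitution and polylog bookkeeping, in particular checking that the extra $\polylog(N)$ factors from $L$ and $K$ stay inside $\widetilde{O}$.
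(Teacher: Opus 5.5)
Your proposal is correct and matches the paper's proof essentially step for step. For the absolute-error case it uses the same rescaling identity $\norm{\vec{y}}=\norm{\vec{z}}_2\sqrt{\widetilde{\lambda}_{\max}}\,\lVert\mathcal{L}_N\widetilde{\Sigma}^{1/2}\ket{z}\rVert$ with $\varepsilon_{\mathrm{QAE}}=\widehat{\varepsilon}/(\norm{\vec{z}}_2\sqrt{\widetilde{\lambda}_{\max}})$, and for the relative-error case it uses the same factor-$2$ modification of \autoref{epsilon-QAE-implies-epsilon-hat} via $\norm{\vec{y}}\ge\tfrac12\sqrt{\lambda_{\min}}\norm{\vec{z}}$, which leads to $\varepsilon_{\mathrm{QAE}}=1/(2C\sqrt{L\widetilde{\kappa}})$.
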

\begin{proof}
Since $\norm{\vec{z}}_2\sqrt{\widetilde{\lambda}_{\max}}\norm{\mathcal{L}_N\widetilde{\Sigma}^{1/2}\ket{z}}=\norm{\vec{y}}$,
the absolute error case is proved similarly to the proof of \autoref{qae-w-abs-err-est} by using \autoref{qae-for-main-thm-sum-of-inc-case} with $\varepsilon_{\mathrm{QAE}}\equiv \widehat{\varepsilon}\Big/\Big(\norm{\vec{z}}_2\sqrt{\widetilde{\lambda}_{\max}}\Big)$.
The relative error case also follows similarly to \autoref{qae-w-rel-err-est}, where we use the bound similar to \autoref{epsilon-QAE-implies-epsilon-hat}, but with $\norm{\vec{y}}$ instead of $\norm{\vec{x}}$.
Here, note that by \autoref{prop-for-LN-times-unit-vec2} and \eqref{bounds-for-norm-sigma12-z} in \autoref{norm-sqrt-sigma-times-ket-z}, it holds that
\[
\norm{\vec{y}}
=
\norm{\mathcal{L}_N\Sigma^{1/2}\vec{z}}
=\norm{\mathcal{L}_N\frac{\Sigma^{1/2}\vec{z}}{\norm{\Sigma^{1/2}\vec{z}}}}
\norm{\Sigma^{1/2}\frac{\vec{z}}{\norm{\vec{z}}}}
\norm{\vec{z}}
\geq
\frac12
\norm{\Sigma^{1/2}\frac{\vec{z}}{\norm{\vec{z}}}}
\norm{\vec{z}}
\geq \frac{\sqrt{\lambda_{\min}}}{2}\norm{\vec{z}}, \tagaligneq \label{norm-y-norm-z-ratio-ineq}
\]
which implies $\norm{\vec{z}}/\norm{\vec{y}}\leq 2/\sqrt{\lambda_{\min}}$. By \eqref{widehat-eps-ineq-relative-err}, it holds that if $\varepsilon_{\mathrm{QAE}}\in \left(0,\frac{1}{{2C}\sqrt{L\widetilde{\kappa}}}\right]$,
then
$\widehat{\varepsilon}\equiv \norm{\vec{z}}_2\sqrt{\widetilde{\lambda}_{\max}}\ \varepsilon_{\mathrm{QAE}}\in\left(0,\frac{\norm{\vec{y}}_2}{{C}}\right]$, so setting $\varepsilon_{\mathrm{QAE}}\coloneqq \frac{1}{{2C}\sqrt{L\widetilde{\kappa}}}$ in \autoref{qae-for-main-thm-sum-of-inc-case} proves the claim.
\end{proof}
\subsection{Application and post-processing examples}\label{application-ex-and-post-process-subsec}
\autoref{main-thm-gen-mat-concrete-cost} says that our algorithm offers the state preparation of a correlated Gaussian vector with complexity $\widetilde{O}\left(\frac{\norm{\Sigma}_F}{\lambda_{\max}}\kappa^{1.5}\right)$.
Therefore, the more ill-conditioned the covariance matrix is, the poorer quantum speedup we could gain from using our algorithm.
On the other hand, when the covariance matrix is well-conditioned, i.e.\ $\kappa=\Theta(1)$, our algorithm may be advantageous.
In such a case, all eigenvalues are of the same order, $\lambda_i=\Theta(\lambda_{\max})$, so no small subset of principal components captures a dominant fraction of the variance, which means low-rank approximations do not work. 
Moreover, the condition number does not imply sparsity or decay of correlations; there exist infinitely many dense, well-conditioned covariance matrices\footnote{This follows from the eigendecomposition $\Sigma=Q\Lambda Q^T$, where $\kappa$ depends solely on $\Lambda$, while $Q$ may be dense.}. 
Dense and well-conditioned covariance matrices, in fact, constitute a particularly challenging regime in classical settings, where neither low-rank approximations nor structure-specific approximations (e.g., Toeplitz structures) can be exploited, and exact sampling of correlated Gaussian vectors requires a full Cholesky decomposition-based simulation, resulting in $O(N^3)$ complexity.
In such instances,
our algorithm has complexity $\widetilde{O}\left(\frac{\norm{\Sigma}_F}{\lambda_{\max}}\kappa^{1.5}\right)=\widetilde{O}(\sqrt{N})$\footnote{$\kappa=\Theta(1) \iff {}^\forall i$, $\lambda_i=\Theta(\lambda_{\max}) \implies \frac{\norm{\Sigma}_F}{\lambda_{\max}}=\sqrt{\sum_i (\frac{\lambda_i}{\lambda_{\max}})^2}=\Theta(\sqrt{N})$.}, hence offering a sextic speedup over the classical cubic-time complexity for correlated Gaussian state preparation.

If the dependence of $\kappa$ on $N$ is not too large (the case we refer to as mildly ill-conditioned), and in particular when the product\footnote{Unlike the case $\kappa=\Theta(1)$, the ratio $\frac{\norm{\Sigma}_F}{\lambda_{\max}}$ may be of strictly smaller order than $\Theta(\sqrt{N})$, due to the equivalence of matrix norms: $\lambda_{\max}=\norm{\Sigma}_2\leq \norm{\Sigma}_F \leq \sqrt{N} \norm{\Sigma}_2=\sqrt{N}\lambda_{\max}$.} $\widetilde{O}\left(\frac{\norm{\Sigma}_F}{\lambda_{\max}}\kappa^{1.5}\right)$ grows only mildly with $N$, we can still obtain a quantum speed-up, albeit weaker than in the well-conditioned case.
As a concrete application example, numerical results in \autoref{num-results-subsubsec-ns-cases} show that the covariance matrices of fractional Gaussian noises (fGNs; the increments of fBM) satisfy the condition $\frac{\norm{\Sigma}_F}{\lambda_{\max}}\kappa^{1.5}=\widetilde{\Theta}(N^{1-\varepsilon})$,
$\varepsilon\in(0,1/2]$, for $H\in(1/3,5/6)$\footnote{This corresponds to the pink line in \autoref{cost-comparison-graph-numericals} minus $1$, i.e., a preparation without taking the cumulative sum.}. 
This implies that quantum states embedding normalised fGNs can be prepared strictly faster than linear time, and our method may therefore offer a quantum advantage over the classical counterpart $O(N\log N)$ FFT-based methods.
As fractional Gaussian noises are used in the modelling of
hydrology, geophysics, climate \cite{GGWF17}, network traffic \cite{LTWW93}, among others,
our algorithm can be used for simulations of these phenomena.

We should note the issue of our algorithm that, as with any analog-encoding state-preparation quantum algorithms such as the Harrow-Hassidim-Lloyd (HHL) algorithm \cite{HHL09} and its variants, the components of the prepared state cannot be accessed directly. 
To make use of such a state, one must either feed it as input to subsequent quantum algorithms, or extract partial information or some summarised statistics of the state via measurements.
It is important to note that linear combinations of Gaussian variables are themselves Gaussian, and their distributions can be deduced straightforwardly from the covariance matrix alone. 
Therefore, if the objective is merely to evaluate linear functionals of the generated components, there is little incentive to perform simulations, even in classical settings.
However, classical outputs of a quantum circuit can be fairly non-linear in nature\footnote{Note that the normalisation procedure $\vec{x}\mapsto \vec{x}/\norm{\vec{x}}$ itself is already non-linear.}. For example, the quantum expectation with respect to an observable $A$, i.e.\ $\bra{x}A\ket{x}$, is a quadratic form in $\ket{x}$. 
In the correlated setting, the distribution of such a quantity can generally be non-trivial, and sampling may be necessary when these observables constitute the object of computational or statistical interest.
Another important direction is to apply a non-linear function $f$ to each component of the prepared state, using the non-linear transformation algorithm of \cite{RR23}, 
which we describe in more detail in \autoref{exp-non-lin-section}. 
In such a setting, one may want to compute Monte-Carlo estimates of the form
$\frac{1}{N}\sum_{i=1}^N f(\zeta_i)$ for $\zeta_i\equiv x_i/\norm{\vec{x}}$, or the $\norm{\vec{x}}$-compensated quantity $\frac{1}{N}\sum_{i=1}^N g(\zeta_i)=\frac{1}{N}\sum_{i=1}^Nf(x_i)$, where $g(\zeta)\coloneqq f(\norm{\vec{x}} \zeta)$.
These quantities are highly non-trivial, and it is usually the case that simulations are necessary.
An analysis of outputting such quantities is given in \autoref{qae-to-estimate-riemann-sum-time-int-subsec}.
Even though \autoref{exp-non-lin-section} is dedicated to the demonstration of the case of $\norm{\vec{x}}$-compensated exponentiation, i.e.\ $g(\zeta)=e^{\norm{\vec{x}}\zeta}$, the same technique developed there can be used for general functions that are well approximable by polynomials. 

\section{Numerical experiments} \label{numerical-section-label}
Although the algorithms for preparing a quantum state embedding a correlated Gaussian vector and for outputting the normalising factor proposed in \autoref{preparing-ket-x-section} have complexities depending explicitly on $N$ as $\widetilde{O}(1)$ or $\widetilde{O}(N)$, in practice, we usually have implicit dependence on $N$ through the parameters $\lVert\Sigma \rVert_F/\lambda_{\max}$ and $\kappa\equiv \lambda_{\max}/\lambda_{\min}$ as well. 
In this section, we want to investigate how these parameters scale with respect to $N$ for the cases of the major Gaussian processes used in financial modelling, which are the Riemann-Liouville fractional Brownian motion (RL-fBM), as being used in modelling rough Bergomi volatility (see e.g.\ \cite[Section 1.1]{MP18} or \cite[Section 2--4]{BFG16}), the standard Brownian motion (std-fBM), and the stationary fractional Ornstein-Uhlenbeck (fOU) process (see e.g.\ \cite[Section 3.1]{GJR18}),
by simulating the covariance matrix $\Sigma$ for each $N$ and calculating these characteristics numerically.
\subsection{Covariance structures}\label{subsec-cov-structures}
The following covariance structures of the processes under consideration are used for computing the entries of $\Sigma$.
\subsubsection{Riemann-Liouville fractional Brownian motion (RL-fBM)}
\begin{definition}
\label{defn-of-rl-fbm-bfg16-ver}
Let $H\in(0,1)$,
and $(B_u)_{u\geq 0}$ be a standard Brownian motion\footnote{A standard Brownian motion is an std-fBM of $H=1/2$, see \autoref{defn-std-fBM-num-sec}.}.
For $t\geq 0$,
define the process
$
\widetilde{W}^H_t = \sqrt{2H}\int_0^{t}(t-s)^{H-1/2}\diff B_s
$
to be a Riemann-Liouville fractional Brownian motion (RL-fBM) with Hurst index $H$.
\end{definition}
By definition, $\widetilde{W}^H$ is a centered Gaussian process.
The coefficient before the integral, here $\sqrt{2H}$, varies among different references. The current version is adopted from \cite[Section~4]{BFG16}, where the factor $\sqrt{2H}$ ensures that $\ee{\big(\widetilde{W}_1^H\big)^2}=1$.
\begin{proposition}[{\cite[Section~4]{BFG16}}] \label{cov-of-RL-fbm}
Let $v\geq u$.
The auto-covariance of $\widetilde{W}^H$ is given by 
\begin{gather*}
\ee{\widetilde{W}^H_u\,\widetilde{W}^H_v}=u^{2H}G\left(\frac vu\right),
\shortintertext{where for $x \geq 1$,}
G(x)=\frac{2H}{\frac12+H}\left(\frac{1}x\right)^{\frac12-H}{}_2F_1\left(\frac12-H,1;\frac32+H;\frac1x\right). \tagaligneq \label{eq-for-G-in-cov-of-RL-fBM}
\shortintertext{Here, ${}_2F_1$ is the hypergeometric function, defined by}
{}_2F_1(a,b;c;z)\coloneqq \sum_{n=0}^\infty \frac{(a)_n(b)_n}{(c)_n}\frac{z^n}{n!}, \tagaligneq \label{defn-of-2f1-inf-sum}
\end{gather*}
where for any $a>0$, $(a)_n\coloneqq {\Gamma(a+n)}/{\Gamma(a)}=a(a+1)\dots(a+n-1)$.
\end{proposition}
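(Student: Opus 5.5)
The plan is a direct computation from the Itô isometry. By \autoref{defn-of-rl-fbm-bfg16-ver}, both $\widetilde{W}^H_u$ and $\widetilde{W}^H_v$ are Wiener integrals of deterministic kernels against the same Brownian motion. For $v\geq u$, the isometry gives
\[
\ee{\widetilde{W}^H_u\,\widetilde{W}^H_v}=2H\int_0^{u}(u-s)^{H-1/2}(v-s)^{H-1/2}\diff s .
\]
The kernels are in $L^2$ because $2H-1>-1$.

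Next, we make the substitution $s=u(1-r)$, so that $u-s=ur$ and $v-s=u(x-1+r)$ with $x=v/u\geq 1$. The integral becomes
\[
u^{2H}\cdot 2H\int_0^1 r^{H-1/2}(x-1+r)^{H-1/2}\diff r .
\]
This already exhibits the scaling $u^{2H}G(x)$, and it remains to identify
\[
G(x)=2H\int_0^1 r^{H-1/2}(x-1+r)^{H-1/2}\diff r
\]
with the stated hypergeometric expression.

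The main step is this identification. I would try Euler's integral representation
\[
{}_2F_1(a,b;c;z)=\frac{\Gamma(c)}{\Gamma(b)\Gamma(c-b)}\int_0^1 t^{b-1}(1-t)^{c-b-1}(1-zt)^{-a}\diff t,
\]
valid for $c>b>0$ and $z<1$, with $a=\tfrac12-H$, $b=1$, $c=\tfrac32+H$, $z=1/x$. Here $\Gamma(c)/(\Gamma(1)\Gamma(c-1))=c-1=\tfrac12+H$, so the target equals
\[
\frac{2H}{\frac12+H}\,x^{H-1/2}\cdot\left(\tfrac12+H\right)\int_0^1 (1-t)^{H-1/2}\left(1-\tfrac{t}{x}\right)^{H-1/2}\diff t .
\]
Factoring $x^{H-1/2}$ inside, this is $2H\int_0^1 (1-t)^{H-1/2}(x-t)^{H-1/2}\diff t$. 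The substitution $t=1-r$ turns it into exactly $2H\int_0^1 r^{H-1/2}(x-1+r)^{H-1/2}\diff r$, which matches $G(x)$.

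Finally, I would check that the Euler representation legitimately agrees with the series definition \eqref{defn-of-2f1-inf-sum}. The endpoint case $x=1$ ($z=1$) needs care: there the series converges because $c-a-b=2H>0$, and continuity (Abel's theorem, or monotone convergence in the integral) extends the identity. As a sanity check, $x=1$ gives $2H\int_0^1 r^{2H-1}\diff r=1$, consistent with $\ee{(\widetilde{W}^H_1)^2}=1$.
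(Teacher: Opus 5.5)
Your proposal is correct and follows the paper's proof essentially step for step. The steps are the isometry (equivalently, independence of increments) giving $2H\int_0^u (u-s)^{H-1/2}(v-s)^{H-1/2}\diff s$, a rescaling by $u$, and Euler's integral representation with $a=\tfrac12-H$, $b=1$, $c=\tfrac32+H$, $z=1/x$. Your reflected substitution $s=u(1-r)$ is cosmetic and is undone by $t=1-r$. Your remark on the endpoint $x=1$ (series convergence since $c-a-b=2H>0$, extended by Abel's theorem) adds a bit of rigour that the paper leaves implicit.
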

\begin{remark}
We have made a correction to \cite[Eq.~(4.1)]{BFG16} in \eqref{eq-for-G-in-cov-of-RL-fBM}, and provide a proof in the appendix, see \autoref{proof-of-cov-RL-fbm}.
\end{remark}
The RL-fBM is used in modelling the rough Bergomi variance and volatility processes, as defined below.
\begin{definition}[{\cite[Section 5.1]{FH21}} and {\cite[Section 4]{BFG16}}]
\label{rBer-S-V-defn}
The rough Bergomi (rBergomi) model asset price process $S$ and its variance process $V$ are given by
\begin{align*}
S_t&=S_0\exp\left(\int_0^t\sqrt{V_u}\diff W_u-\frac12\int_0^tV_u \diff u\right),\\
V_u&=\xi_0(u)\exp\left(
\eta
\int_0^u\frac{\sqrt{2H}}{(u-s)^{\frac12-H}}\diff B_s-\frac{\eta^2}{2}\int_0^u\frac{2H}{(u-s)^{1-2H}}\diff s\right),
\end{align*}
for constants $\eta,S_0>0$, $t\mapsto\xi_0(t)$ a continuous deterministic function defined by $\xi_0(t)\coloneqq\ee{V_t}$ called the forward variance curve, and  $W\coloneqq \rho B+\sqrt{1-\rho^2}B^{\perp}$, where $B,B^{\perp}$ are independent standard Brownian motions and $\rho\in[0,1]$ representing correlation between $W$ and $B$.
\end{definition}
\subsubsection{Standard fractional Brownian motion (std-fBM)}
\begin{definition}\label{defn-std-fBM-num-sec}
A standard fractional Brownian motion (std-fBM) with Hurst index $H\in(0,1)$ is a Guassian process $B^H$ that satisfies the following properties:
\begin{enumerate}
\item $\ee{B^H_t}=0$, for all $t\in\mathbb{R}$,
\item $\ee{B^H_tB^H_s}=\frac12(|t|^{2H}+|s|^{2H}-|t-s|^{2H})$, for all $t,s\in\mathbb{R}$.
\end{enumerate}
\end{definition}
Similar to the coefficient $\sqrt{2H}$ of \autoref{defn-of-rl-fbm-bfg16-ver}, the coefficient $1/2$ in the covariance structure can vary among references, cf.~\cite{Lim01} or \cite{Fuk23}. However, this choice ensures that $B^H$ has variance $1$ at $t=1$, and, more importantly, facilitates the connection to the definition of a stationary fractional Ornstein-Uhlenbeck process (\autoref{defn-stationary-fOU-num-sec}), where its integral representation uses this choice of std-fBM as its driver, cf.~\cite[Definition~1.1]{CKM03}.
For a more detailed treatment regarding properties of a standard fBM, refer to e.g.\ \cite[Section 1.1]{BMRS19} or \cite[Section 2.2]{Nou12}.
\subsubsection{Stationary fractional Ornstein-Uhlenbeck (fOU) process}
\begin{definition}[{\cite[Section 2]{CKM03}}]
\label{defn-stationary-fOU-num-sec}
Let $\lambda,\sigma>0$ and $H\in(0,1)$.
A stationary fractional Ornstein-Uhlenbeck (fOU) process is a Gaussian process $Y^H$ that admits a path-wise Riemann-Stieltjes integral representation of the form
\[
Y^H_t\coloneqq\sigma\int_{-\infty}^te^{-\lambda (t-u)}\diff B^H_u, \quad t\geq0,
\]
where $B^H$ is a standard fractional Brownian motion with Hurst index $H$ (see \autoref{defn-std-fBM-num-sec}).
\end{definition}
As a stationary process, the auto-covariance of $Y^H$ relies only on the difference $s$ in time, and is given as follows.
\begin{proposition}[{\cite[Remark 2.4]{CKM03}}] \label{cov-of-fou}
Let $\lambda,\sigma>0$ and $H\in(0,1)$.
For any $t,s\in\mathbb{R}$,
\[
\ee{Y^H_tY^H_{t+s}}=\sigma^2\frac{\Gamma(2H+1)\sin(\pi H)}{2\pi}\int_{-\infty}^{\infty} e^{isx}\frac{|x|^{1-2H}}{\lambda^2+x^2} \diff x.
\]
\end{proposition}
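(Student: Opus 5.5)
The statement to prove is the covariance formula for the stationary fOU process, \autoref{cov-of-fou}. I will derive it from a spectral representation of $B^H$ and Fubini/Plancherel-type manipulations.

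\textbf{Step 1: spectral representation of $B^H$.} I will start from the harmonisable representation of standard fBM:
\[
B^H_t=\frac{1}{\sqrt{C_H}}\int_{\mathbb{R}}\frac{e^{itx}-1}{ix}|x|^{\frac12-H}\,\widetilde{W}(\diff x),
\]
where $\widetilde{W}$ is a complex Gaussian white noise with Hermitian symmetry. The constant is
\[
C_H=\frac{2\pi}{\Gamma(2H+1)\sin(\pi H)}.
\]
This normalisation is exactly the one making $\ee{(B^H_1)^2}=1$, which matches the coefficient $\tfrac12$ in \autoref{defn-std-fBM-num-sec}. I will verify the constant via the standard integral
\[
\int_{\mathbb{R}}\bigl|e^{ix}-1\bigr|^2|x|^{-1-2H}\diff x=C_H,
\]
which follows from the gamma-function identity for $\int_0^\infty(1-\cos x)x^{-1-2H}\diff x$. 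This is a routine computation.

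\textbf{Step 2: representation of the Stieltjes integral.} Next I will show that for deterministic $g(u)=\sigma e^{-\lambda(t-u)}\mathbf{1}_{u\le t}$,
\[
\int_{-\infty}^t g(u)\,\diff B^H_u=\frac{1}{\sqrt{C_H}}\int_{\mathbb{R}}\widehat{g}(x)\,|x|^{\frac12-H}\,\widetilde{W}(\diff x),
\qquad
\widehat{g}(x)=\int g(u)e^{iux}\diff u=\frac{\sigma e^{itx}}{\lambda+ix}.
\]
I will do this first for step functions, where it follows from Step 1 because $\widehat{\mathbf{1}_{[a,b]}}=(e^{ibx}-e^{iax})/(ix)$. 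I will then extend it by density and integration by parts. The path-wise Riemann--Stieltjes integral equals $\sigma\bigl(B^H_t-\lambda\int_{-\infty}^t e^{-\lambda(t-u)}B^H_u\diff u\bigr)$, and this can be matched term by term using Step 1 together with Fubini.

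\textbf{Step 3: the covariance.} By the Itô isometry for $\widetilde{W}$,
\[
\ee{Y^H_tY^H_{t+s}}=\frac{1}{C_H}\int_{\mathbb{R}}\frac{\sigma e^{itx}}{\lambda+ix}\,\overline{\frac{\sigma e^{i(t+s)x}}{\lambda+ix}}\,|x|^{1-2H}\diff x
=\frac{\sigma^2}{C_H}\int e^{-isx}\frac{|x|^{1-2H}}{\lambda^2+x^2}\diff x.
\]
The integrand is even in $x$, so $e^{-isx}$ can be replaced by $e^{isx}$. Substituting $1/C_H=\Gamma(2H+1)\sin(\pi H)/(2\pi)$ then gives the claimed formula.

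\textbf{Main obstacle.} The delicate step is Step 2, justifying the spectral form of the improper path-wise integral over $(-\infty,t]$. For $H<1/2$, $B^H$ is rough and the integrand at $x\to0$ behaves like $|x|^{1-2H}$, which is integrable, so the limit $u\to-\infty$ needs care. I would handle this through the integration-by-parts formula, which is stated in \cite{CKM03}, and take $L^2$ limits on truncations $[-M,t]$ with dominated convergence in the spectral domain. If that becomes heavy, I would simply cite \cite[Remark 2.4]{CKM03}, from which the statement is taken.
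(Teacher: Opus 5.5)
The paper gives no proof of this proposition. It quotes the formula from \cite[Remark 2.4]{CKM03} and uses it only as input to the closed-form evaluation in \autoref{Fourier-trans-of-fOU-cov-int}, which is where the paper does its own work. Your harmonisable-representation argument is correct and is the standard spectral route to this formula. Compared with the paper's bare citation, it is self-contained and explains the prefactor. Since $|e^{iy}-1|^2=2(1-\cos y)$, one gets $C_H=4\int_0^\infty(1-\cos y)\,y^{-1-2H}\diff y=2\pi/(\Gamma(2H+1)\sin(\pi H))$. So $\Gamma(2H+1)\sin(\pi H)/(2\pi)$ is just the fBM normaliser $1/C_H$, and the spectral density of $Y^H$ is $|\widehat g|^2$ times that of $B^H$. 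Your computations check out, including the key cancellation in Step 2:
\[
\frac{e^{itx}-1}{ix}-\frac{\lambda}{ix}\Bigl(\frac{e^{itx}}{\lambda+ix}-\frac{1}{\lambda}\Bigr)=\frac{e^{itx}}{\lambda+ix}.
\]
Here the second term is $\lambda$ times the kernel of $\int_{-\infty}^t e^{-\lambda(t-u)}B^H_u\diff u$ obtained by Fubini. That interchange is legitimate for two reasons. The Wiener integral is an isometry from $L^2(\diff x)$ to $L^2(\Omega)$, so it commutes with Bochner integrals. And $\int_{-\infty}^t e^{-\lambda(t-u)}\lVert B^H_u\rVert_{L^2(\Omega)}\diff u=\int_{-\infty}^t e^{-\lambda(t-u)}|u|^H\diff u<\infty$.

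Your ``main obstacle'' paragraph slightly misplaces the difficulty.
\begin{itemize}
\item The pathwise integral over $(-\infty,t]$ exists for every $H\in(0,1)$. The integrand is $C^1$, so the roughness of $B^H$ is irrelevant. Moreover $e^{-\lambda(t-a)}B^H_a\to0$ a.s.\ as $a\to-\infty$, because $B^H$ grows at most polynomially a.s.; this kills the boundary term in the integration by parts.
\item On the spectral side, for $M\geq -t$ the truncation to $[-M,t]$ has kernel $\sigma\bigl(e^{itx}-e^{-\lambda(t+M)}e^{-iMx}\bigr)/(\lambda+ix)$. This is dominated by $2\sigma/|\lambda+ix|$, which is square-integrable against $|x|^{1-2H}\diff x$ for all $H\in(0,1)$, so dominated convergence applies.
\item The singularity of $|x|^{1-2H}$ at $x=0$ occurs for $H>1/2$, not $H<1/2$, and it is integrable in any case.
\end{itemize}
Finally, an a.s.\ limit and an $L^2(\Omega)$ limit of the same sequence agree a.s., which completes the identification.
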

The integral on the RHS can be regarded as the Fourier transform of $f(x)=|x|^{1-2H}/(\lambda^2+x^2)$.
In order to perform numerical simulations,
we need this integral to be implementable using standard mathematical libraries.
The following lemma shows that this quantity admits a closed-form expression using the generalised hypergeometric function ${}_1F_2$.
We then implement the fOU covariance using the function \texttt{hyp1f2} from the Python library \texttt{mpmath}. 
\begin{lemma}\label{Fourier-trans-of-fOU-cov-int}
For any $\lambda>0$, $H\in(0,1)$, and $s\geq 0$, the following holds
\begin{align*}
\widehat{f}_{H,\lambda}(s)
&\coloneqq\int_{-\infty}^{\infty} e^{isx}\frac{|x|^{1-2H}}{\lambda^2+x^2} \diff x\\
&=
\frac{\pi}{\sin(\pi H)}
\left\{\lambda^{-2H} \cosh(\lambda s)
-\frac{s^{2H}}{\Gamma(2H+1)} {}_1F_2\Bigg(1;H+\frac{1}{2},H+1;\frac{\lambda^2 s^2}{4}\Bigg)\right\},
\end{align*}
where ${}_1F_2$ is the generalised hypergeometric function, defined as
\[
{}_1F_2(a;b_1,b_2;z)\coloneqq \sum_{n=0}^\infty \frac{(a)_n}{(b_1)_n(b_2)_n}\frac{z^n}{n!},
\]
where for any $a\in\mathbb{C}$ and $n\in\mathbb{N}\cup\{0\}$, $(a)_n\coloneqq {\Gamma(a+n)}/{\Gamma(a)}$.
In particular, if $H=1/2$, $\widehat{f}_{1/2,\lambda}$ reduces to the usual stationary Ornstein-Uhlenbeck covariance structure:
\[
\widehat{f}_{1/2,\lambda}(s)=\frac{\pi}{\lambda}e^{-\lambda s}.
\]
\end{lemma}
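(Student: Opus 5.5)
The plan is to compute the integral by splitting the rational factor into partial fractions and reducing everything to standard Mellin-type integrals. Since the integrand's non-oscillatory part is even, we first write
\[
\widehat{f}_{H,\lambda}(s)=2\int_0^\infty \cos(sx)\,\frac{x^{1-2H}}{\lambda^2+x^2}\diff x .
\]
Next we expand $\cos(sx)$ as a power series. Term-by-term integration diverges, so we do not integrate termwise directly. Instead, we use the algebraic identity
\[
\frac{x^{1-2H}}{\lambda^2+x^2}=\lambda^{-2}x^{1-2H}\,\frac{1}{1+(x/\lambda)^2}
\]
together with a Mellin–Barnes representation, or equivalently we show that $g(s):=\widehat{f}_{H,\lambda}(s)$ satisfies an inhomogeneous ODE.

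The ODE route is the one we commit to. Differentiating twice under the integral sign (in the distributional or Abel-regularised sense) gives
\[
g''(s)-\lambda^2 g(s)=-2\int_0^\infty \cos(sx)\,x^{1-2H}\diff x ,
\]
because $-x^2-\lambda^2$ cancels the denominator. The right-hand side is the classical Fourier transform of a homogeneous function, namely
\[
-2\,\Gamma(2-2H)\cos\!\left(\tfrac{\pi(2-2H)}{2}\right)s^{2H-2}.
\]
Using the reflection formula, this simplifies to a constant multiple of $s^{2H-2}/\Gamma(2H-1)$ times $\pi/\sin(\pi H)$. Valid for $H\in(0,1)$ by analytic continuation, it equals
\[
-\frac{\pi}{\sin(\pi H)}\,\frac{s^{2H-2}}{\Gamma(2H-1)} .
\]

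We then solve the ODE. A particular solution is obtained by the ansatz $\sum_n c_n s^{2H+2n}$. Matching coefficients gives the recursion
\[
c_{n}(2H+2n)(2H+2n-1)=\lambda^2 c_{n-1},\qquad c_0=-\frac{\pi}{\sin(\pi H)\,\Gamma(2H+1)} .
\]
This produces exactly
\[
-\frac{\pi}{\sin(\pi H)}\,\frac{s^{2H}}{\Gamma(2H+1)}\,{}_1F_2\!\left(1;H+\tfrac12,H+1;\tfrac{\lambda^2s^2}{4}\right),
\]
once we use $(2H+2n)(2H+2n-1)=4(H+n)(H+n-\tfrac12)$. The homogeneous solutions are $A\cosh(\lambda s)+B\sinh(\lambda s)$.

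The constants are fixed by boundary data. Evenness and regularity of the $\cosh$ part give $B=0$, since the non-smooth part of $g$ near $0$ is carried entirely by $s^{2H}$ and $g$ is even in $s$. For $A$, we use $g(0)=2\int_0^\infty x^{1-2H}/(\lambda^2+x^2)\diff x=\pi\lambda^{-2H}/\sin(\pi H)$ by the standard beta integral. When $H<1/2$ the particular term vanishes at $0$, so this value fixes $A$ directly. For $H>1/2$, $g(0)$ still equals this value and the $s^{2H}$ term still vanishes at $0$, so the same argument works. An alternative check on $A$ is the decay condition $g(s)\to0$ as $s\to\infty$, which forces the $\cosh$ growth to cancel the asymptotic growth of ${}_1F_2$.

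The $H=1/2$ case then follows by direct simplification. Here ${}_1F_2(1;1,3/2;z)=\sinh(2\sqrt z)/(2\sqrt z)$, so the bracket becomes
\[
\lambda^{-1}\left(\cosh(\lambda s)-\sinh(\lambda s)\right)=\lambda^{-1}e^{-\lambda s},
\]
and the whole expression reduces to $(\pi/\lambda)e^{-\lambda s}$.

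The main obstacle is rigour in the differentiation step, because $\int\cos(sx)x^{1-2H}\diff x$ is not absolutely convergent. To justify it, we insert a damping factor $e^{-\epsilon x}$, carry out the argument, and let $\epsilon\to0$. We also must confirm $B=0$, which can be handled cleanly by comparing the small-$s$ expansion with an Abelian expansion of $g$.
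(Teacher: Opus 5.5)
Your route is sound and genuinely different from the paper's. The paper does no analysis of its own. After reducing to $2\int_0^\infty\cos(sx)\,x^{1-2H}(\lambda^2+x^2)^{-1}\diff x$, it reads the value off a tabulated Mellin transform of $\cos(bx)(a^2+x^2)^{-\nu}$ with $a=\lambda$, $b=s$, $\nu=1$, $z=2-2H$. It then collapses the two resulting ${}_1F_2$'s with the reflection and duplication formulas (the first becomes $\cosh(\lambda s)$), and handles $s=0$ by a beta integral plus continuity. You instead derive the structure from $(\lambda^2+x^2)\cdot x^{1-2H}(\lambda^2+x^2)^{-1}=x^{1-2H}$, i.e.\ $g''-\lambda^2 g=-\mathcal{F}[|x|^{1-2H}]$ on $s>0$, so that $\cosh$ and the ${}_1F_2$ appear as homogeneous and series particular solutions. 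Your forcing $2\Gamma(2-2H)\cos(\pi H)s^{2H-2}=-\pi s^{2H-2}/(\sin(\pi H)\Gamma(2H-1))$, your $c_0$ and your recursion are all correct. The paper's proof is shorter but rests entirely on a table entry; it even flags an error in the neighbouring $\nu=1$ entry. Yours needs only the Fourier transform of a power and a beta integral. The cost is justifying the ODE and fixing two integration constants. For the ODE, your damping argument works; alternatively, read $\lambda^2 g-g''=\mathcal{F}[|x|^{1-2H}]$ in $\mathcal{S}'$ and note that on $(0,\infty)$ a distributional solution with smooth right-hand side is classical.

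Fixing the second constant is where your write-up is not yet a proof. Evenness cannot give $B=0$: $e^{-\lambda|s|}$ is even and has an $|s|$ term, and it is exactly the $H=\tfrac12$ answer. Saying the non-smooth part of $g$ is carried entirely by $s^{2H}$ is the statement to be proved. The Abelian comparison you mention at the end does close it. Write $x^{1-2H}/(\lambda^2+x^2)=x^{-1-2H}-\lambda^2x^{-1-2H}/(\lambda^2+x^2)$, scale the first piece, and bound the second using $1-\cos(sx)\le s^2x^2/2$. This gives, for every $H\in(0,1)$,
\[
g(0)-g(s)=C_H\,s^{2H}+O(s^2),\qquad C_H=2\int_0^\infty(1-\cos u)\,u^{-1-2H}\diff u .
\]
Your representation with $A=g(0)$ gives $g(s)-g(0)=c_0s^{2H}+B\lambda s+O(s^2)$ as $s\downarrow0$. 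For $H\neq\tfrac12$ the exponents $2H$ and $1$ differ, so $B=0$, and as a check $c_0=-C_H$.

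The same issue affects $H=\tfrac12$. There the forcing vanishes for $s>0$ and your equation for $c_0$ reads $0=0$. Moreover, the would-be particular solution $c_0\,s\,{}_1F_2(1;1,\tfrac32;\lambda^2s^2/4)=c_0\sinh(\lambda s)/\lambda$ is itself homogeneous. So the ODE argument does not produce the general formula at $H=\tfrac12$, and ``direct simplification'' is not yet justified. You have two fixes:
\begin{itemize}
\item invoke continuity in $H$ by dominated convergence, since $|x|^{1-2H}/(\lambda^2+x^2)$ is uniformly dominated for $H$ in compact subsets of $(0,1)$; or
\item use the expansion above, which directly gives coefficient $-C_{1/2}=-\pi$ for $s$, hence $g=(\pi/\lambda)(\cosh(\lambda s)-\sinh(\lambda s))$.
\end{itemize}
Finally, your case split for $A$ is unnecessary. $g$ is continuous at $0$ because the integrand is in $L^1$, and $s^{2H}\to0$ for all $H>0$.
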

\begin{proof}
We give a detailed proof in \autoref{appndx-stn-fOU-cov-1f2}.
\end{proof}
\subsection{Setting and methodology}
\subsubsection{The covariance matrices $\Sigma^{\mathrm{pv}}$ (of path values) and $\Sigma^{\mathrm{ns}}$ (of noises)}\label{subsubsec-cov-mat-pv-ns-defn}\label{num-sec-setting-pv-vs-ns-cov-mat}
Let $T>0$ and $\mathcal{P}_{[0,T]}^N=(0=t_0<t_1<\dots<t_N=T)$ be a time discretisation of the interval $[0,T]$.
When we want to simulate a sample path of an RL-fBM or a std-fBM over this interval, it suffices to start from the time point $t_1$ since we have $\widetilde{W}^H_0=0$ and $B^H_0=0$ a.s.\ by definitions.
Therefore,
for $G$ being either $\widetilde{W}^H$ or $B^H$,
we define $\Sigma^{\mathrm{pv}}=(\Sigma^{\mathrm{pv}}_{ij})_{i,j=1}^N$, where $\Sigma^{\mathrm{pv}}_{ij}=\ee{G_{t_i}G_{t_j}}$, to be the covariance matrix of path values,
and
$\Sigma^{\mathrm{ns}}=(\Sigma^{\mathrm{ns}}_{ij})_{i,j=1}^N$, where $\Sigma^{\mathrm{ns}}_{ij}=\ee{\Delta G_{t_i}\Delta G_{t_j}}$ and $\Delta G_{t_i}=G_{t_i}-G_{t_{i-1}}$, to be the covariance matrix of noises (increments).
From this, if $Z\sim\mathcal{N}(0,I_N)$, we have $(\Sigma^{\mathrm{pv}})^{1/2} Z \stackrel{d}{=} (G_{t_1},\dots,G_{t_N})$ and similarly $(\Sigma^{\mathrm{ns}})^{1/2} Z \stackrel{d}{=} (G_{t_1},\Delta G_{t_2},\dots,\Delta G_{t_N})$, since $G_0=0$ implies $\Delta G_{t_1}=G_{t_1}$.
By taking a cumulative sum, we can recover the path values from the noises, i.e.\ $\mathcal{L}_N(\Sigma^{\mathrm{ns}})^{1/2} Z \stackrel{d}{=}(G_{t_1},\dots,G_{t_N})$.

For the case of a stationary fOU process, since $Y^H_0$ is not almost surely $0$, but instead is distributed as a Gaussian variable, we need to simulate $Y^H_0$ with a correct law as well.
Hence, for the case of a stationary fOU,
we define 
$\Sigma^{\mathrm{pv}}=(\Sigma^{\mathrm{pv}}_{ij})_{i,j=1}^{N+1}$, where $\Sigma^{\mathrm{pv}}_{ij}=\ee{Y^H_{t_{i-1}}Y^H_{t_{j-1}}}$,
and
$\Sigma^{\mathrm{ns}}=(\Sigma^{\mathrm{ns}}_{ij})_{i,j=1}^{N+1}$, where $\Sigma^{\mathrm{ns}}_{ij}=\ee{\Delta Y^H_{t_{i-1}}\Delta Y^H_{t_{j-1}}}$ and $\Delta Y^H_{t_i}=Y^H_{t_i}-Y^H_{t_{i-1}}$, with an additional convention $\Delta Y^H_{t_0}\equiv Y^H_{t_0}$.
If $Z\sim\mathcal{N}(0,I_N)$, we have $(\Sigma^{\mathrm{pv}})^{1/2} Z \stackrel{d}{=} (Y^H_{t_0},Y^H_{t_1},\dots,Y^H_{t_N})$, 
and similarly
$(\Sigma^{\mathrm{ns}})^{1/2} Z \stackrel{d}{=} (Y^H_{t_0},\Delta Y^H_{t_1},\dots,\Delta Y^H_{t_N})$.
By taking a cumulative sum, we can recover the path values from the noises, i.e.\ $\mathcal{L}_N(\Sigma^{\mathrm{ns}})^{1/2} Z \stackrel{d}{=}(Y^H_{t_0},Y^H_{t_1},\dots,Y^H_{t_N})$.
\subsubsection{Time discretisation}
We define the time discretisation for our simulation by $\widehat{\mathcal{P}}_{[0,1]}^N\coloneqq (t_i\equiv i/N)_{i=0}^N$, i.e.\ the uniform grid on $[0,1]$.
By the self-similarity property of the processes under consideration, the analysis of simulations over $[0,1]$ can be generalised to the cases of simulations over $[0,T]$ for arbitrary $T>0$, cf.\ \autoref{remark-self-sim-cost-regardless-of-T}.
\begin{remark}
Even though the algorithms proposed in \autoref{preparing-ket-x-section} can also be applied to non-uniform grids, such as $\big(t_i \equiv (i/N)^k\big)_{i=0}^N$ for $k>0$
(see \cite[Section 3.2.1]{HJT20} or \cite[Remark 7]{BDM21}),
where $0<k<1$ gives a grid clustered near $1$ and $k>1$ a grid clustered near $0$,
numerical experiments conducted for a limited selection of $k$ suggest that the behaviours of $\lambda_{\min}$, $\lambda_{\max}$, and $\lVert\Sigma \rVert_F$ also depend on the choice of $k$.
Hence, a thorough analysis of all non-uniform grids would be a non-trivial task, and we therefore restrict our attention to the uniform grid in the numerical results presented in this paper.
However, across the non-uniform grid cases examined, we observe that when the grid is given as a smooth function of $N$, the behaviours of $\lambda_{\min}$, $\lambda_{\max}$, and $\lVert\Sigma \rVert_F$ also exhibit patterns as smooth functions of $N$.
On this basis, \autoref{assump-LK-plus} appears reasonable in practice, while the correct patterns remain subject to case-by-case examination.
\end{remark}
\subsubsection{Methodology}\label{subsubsec-methodology}
The parameters $\lambda$ and $\sigma$ from \autoref{defn-stationary-fOU-num-sec} for the stationary fOU case are set to be $1$ for simplicity.
We fix increasing sequences of $N \in \mathbb{N}$ and $H \in (0,1)$.
For each fixed $H$, we numerically compute the entries of the covariance matrices $\Sigma^{\mathrm{pv}}$ and $\Sigma^{\mathrm{ns}}$ on the grid $\widehat{\mathcal{P}}_{[0,1]}^N$ for all $N$, using the covariance structures given in \autoref{subsec-cov-structures}.
The covariance between noises is obtained from the path-value covariances via the relation
$
\ee{\Delta Y_{t_i} \Delta Y_{t_j}}
=
\ee{Y_{t_i} Y_{t_j}}
- \ee{Y_{t_i} Y_{t_{j-1}}}
- \ee{Y_{t_{i-1}} Y_{t_j}}
+ \ee{Y_{t_{i-1}} Y_{t_{j-1}}}.
$
For each $N$, we compute the characteristics $\lambda_{\min}$, $\lambda_{\max}$, and $\lVert \Sigma \rVert_F$ of $\Sigma^{\mathrm{pv}}$ and $\Sigma^{\mathrm{ns}}$.
For fixed $H$, these quantities are plotted as functions of $N$.
As illustrated in \autoref{num-results-sub-sec}, each characteristic exhibits an asymptotic power-law dependence on $N$ of the form
$
y \sim A N^{p},
$
where $y$ denotes the characteristic ($\lambda_{\min}$, $\lambda_{\max}$, or $\lVert\Sigma \rVert_F$)
and the exponent $p = p(H)$ depends on $H$.
For each fixed $H$, the exponent $p$ is estimated empirically by selecting $p$ such that the graph of $y^{1/p}$ appears linear in $N$.
After gathering empirical estimates of $p$ for a range of values of $H$, we examine how $p$ depends on $H$ and deduce the relation $p=p(H)$.

\subsection{Numerical results}\label{num-results-sub-sec}
We test for $H=0.05,0.10,\dots,0.95$, and estimate $p$ using the procedure described in \autoref{subsubsec-methodology} for each characteristic. We summarise the estimated exponents for each characteristic of $\Sigma^{\mathrm{pv}}$ in \autoref{table-char-pv-case}, and of $\Sigma^{\mathrm{ns}}$ in \autoref{table-char-ns-case}.
As these exponents are obtained empirically and potential logarithmic corrections are difficult to verify over a limited range of $N$, we therefore report the observed asymptotic relations using the $\widetilde{\Theta}$ notation.

The behaviour of each characteristic relative to $N$ is plotted for the $\Sigma^{\mathrm{pv}}$ case in
\Cref{fig-pv-char-RL-fBM,fig-pv-char-std-fBM,fig-pv-char-fOU},
and for the $\Sigma^{\mathrm{ns}}$ case in
\Cref{fig-ns-char-RL-fBM,fig-ns-char-std-fBM,fig-ns-char-fOU}.
For clarity of presentation, only a representative subset of the values of $H$ is displayed in the figures.
The figures for the $\Sigma^{\mathrm{ns}}$ case are plotted over a finer range of $H$ to reflect more accurately the different behaviour of the exponent $p$ across different intervals of $H$, cf.\ \autoref{table-char-ns-case}.
The graphs are presented on a log-log scale, for which power-law behaviour corresponds to straight lines.
The solid lines represent the numerically computed values of each characteristic, while the dotted lines correspond to fitted power laws of the form $y=AN^p$, where $p=p(H)$ is the corresponding empirically estimated exponent from \Cref{table-char-pv-case,table-char-ns-case}.
The constant $A$ is defined by 
\[
A\coloneqq \left(\frac{y^{1/p}_r-y^{1/p}_\ell}{N_r-N_\ell}\right)^p,
\quad
\text{if $p\neq 0$,}
\quad
\text{and}
\quad
A\coloneqq y_r,
\quad
\text{if $p=0$}, \tagaligneq \label{defn-A-power-law-fit}
\]
where $(N_\ell,y_\ell)$ and $(N_r,y_r)$ denote the leftmost and rightmost data points of the corresponding solid line, respectively.
Note that since the coefficient $A$ corresponds to the intercept in the log-log plot, its value does not affect the claim $y=\widetilde{\Theta}(N^p)$. That is, provided that the exponent $p$ is correct, $A$ may be chosen conservatively to yield upper and lower bounds for a characteristic $y$, which is exactly what is required for a $\widetilde{\Theta}$ asymptotic relation to hold.
Therefore, the choice of $A$ in \eqref{defn-A-power-law-fit} can be regarded as being made for visual guidance.

\begin{remark}\label{remark-emp-result-support-assump}
The observed power-law scaling and empirical fits indicate that the auto-covariance structures of the RL-fBM, std-fBM, and stationary fOU admit asymptotic bounds of the form
$\lambda_{\max}=\widetilde{\Theta}(N^{p(H)})$
and
$\kappa=\widetilde{\Theta}(N^{q(H)})$,
for suitable exponents $p,q$ depending on $H$ that can be reliably measured from finite (and not too large) values of $N$, due to the stable growth/decay patterns they exhibit.
When we set 
$\widetilde{\lambda}_{\max}$ and $\widetilde{\kappa}$
in \autoref{assmp-estimate-lamb-max-kappa}
to be such power-law fits and choose the corresponding constants $A$ conservatively,
the multiplicative factors $L$ and $K$
in \autoref{assmp-estimate-lamb-max-kappa} can be taken to satisfy $L,K=\widetilde{\Theta}(1)$.
This provides empirical support for \autoref{assump-LK-plus}.
\end{remark}

As shown in \Cref{fig-pv-char-RL-fBM,fig-pv-char-std-fBM,fig-pv-char-fOU,fig-ns-char-RL-fBM,fig-ns-char-std-fBM,fig-ns-char-fOU},
the solid lines converge to the dotted lines as $N$ increases, providing empirical support for the validity of the claimed exponents.
For some values of $H$, a larger $N$ may be required before the convergence becomes visually apparent. 
In particular, values of $H$ near the boundaries and values at which the functional form of $p(H)$ changes may require larger $N$ for the solid lines to converge to the dotted lines.
Although the numerical experiments were conducted for only a limited set of values of $H$, additional out-of-sample tests at independently chosen values of $H$ indicate that the exponents reported in \Cref{table-char-pv-case,table-char-ns-case} are robust across $H\in(0,1)$.

\begin{remark}
For the stationary fOU covariance matrix $\Sigma^{\mathrm{ns}}$, the quantities
$\lambda_{\max}$ and $\lVert \Sigma^{\mathrm{ns}} \rVert_F$ are claimed to be
$\widetilde{\Theta}(1)$ for certain values of $H$
(see \autoref{table-char-ns-case} and \autoref{fig-ns-char-fOU}), although from the plots alone
it may be visually unclear whether they are indeed bounded below by a positive constant,
or could instead decay as $\widetilde{\Theta}(N^p)$ for some negative exponent $p$.
However, since $Y^H_0$ is not a constant almost surely, but follows a Gaussian distribution,
the covariance structure implies that
$\Sigma^{\mathrm{ns}}_{11} = \ee{(Y^H_0)^2} \eqqcolon \sigma_{Y^H}^2 > 0$.
Let $\widehat{e}_1 \coloneqq (1,0,\dots,0)^T$.
Then
$\lambda_{\max} \geq \widehat{e}_1^T \Sigma^{\mathrm{ns}} \widehat{e}_1 = \Sigma^{\mathrm{ns}}_{11} = \sigma_{Y^H}^2 > 0$,
and similarly
$\lVert \Sigma^{\mathrm{ns}} \rVert_F \geq |\Sigma^{\mathrm{ns}}_{11}| = \sigma_{Y^H}^2 > 0$,
confirming the existence of a strictly positive constant lower bound that is independent of $N$.
\end{remark}

\subsubsection{Results for $\Sigma^{\mathrm{pv}}$ cases}
\begin{table}[H]
\centering
\renewcommand{\arraystretch}{1.35}
\begin{tabularx}{0.7\textwidth}{|c|Y|}
\hline
\multirow{2}{*}{$\Sigma^{\mathrm{pv}}$ characteristics}
  & {RL-fBM \& std-fBM \& stationary fOU}
  \\
\cline{2-2}
& {${}^\forall\,H\in(0,1)$} \\ \hline
$\lambda_{\min}$ & 
{$\widetilde{\Theta}\left(N^{-2H}\right)$} \\ \hline
$\lambda_{\max}$ &
{$\widetilde{\Theta}\left(N\right)$} \\ \hline
$\lVert\Sigma^{\mathrm{pv}}\rVert_F$ &
{$\widetilde{\Theta}\left(N\right)$} \\ \hline
\end{tabularx}
\caption{Characteristics of the covariance matrices of path values $\Sigma^{\mathrm{pv}}$}
\label{table-char-pv-case}
\end{table}
\vspace{-20pt}
\begin{figure}[H]
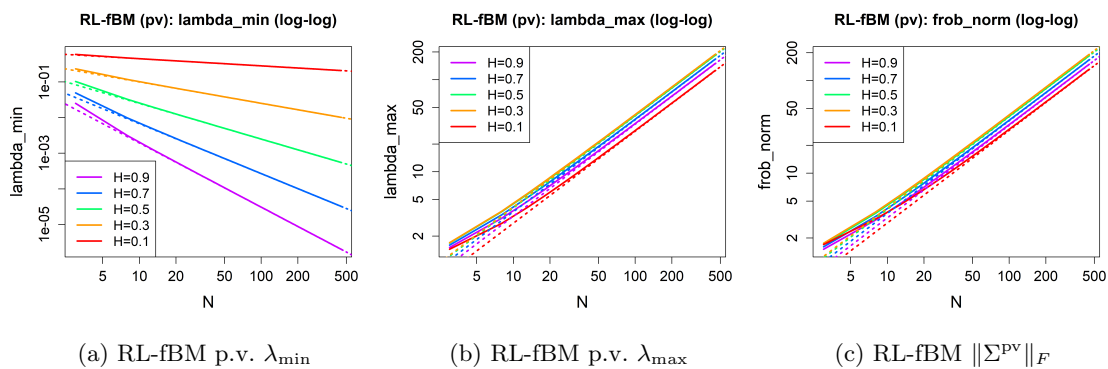

\centering
\begin{subfigure}[b]{0.33\textwidth}
    \centering
    \includegraphics[width=\textwidth]{Rplots/pv\_plot\_1-1.png}
    \caption{RL-fBM p.v.\ $\lambda_{\min}$}
\end{subfigure}\hfill
\begin{subfigure}[b]{0.33\textwidth}
    \centering
    \includegraphics[width=\textwidth]{Rplots/pv\_plot\_1-2.png}
    \caption{RL-fBM p.v.\ $\lambda_{\max}$}
\end{subfigure}\hfill
\begin{subfigure}[b]{0.33\textwidth}
    \centering
    \includegraphics[width=\textwidth]{Rplots/pv\_plot\_1-3.png}
    \caption{RL-fBM $\lVert \Sigma^{\mathrm{pv}} \rVert_F$}
\end{subfigure}
\caption{Characteristics of a Riemann-Liouville fBM covariance matrix of path values}
\label{fig-pv-char-RL-fBM}
\end{figure}
\vspace{-20pt}
\begin{figure}[H]
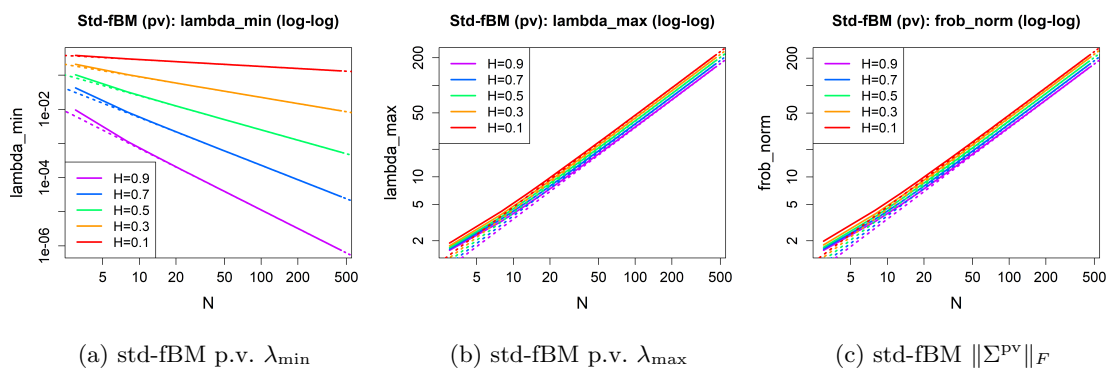

\centering
\begin{subfigure}[b]{0.33\textwidth}
\centering
\includegraphics[width=\textwidth]{Rplots/pv\_plot\_2-1.png}
\caption{std-fBM p.v.\ $\lambda_{\min}$}
\end{subfigure}\hfill
\begin{subfigure}[b]{0.33\textwidth}
\centering
\includegraphics[width=\textwidth]{Rplots/pv\_plot\_2-2.png}
\caption{std-fBM p.v.\ $\lambda_{\max}$}
\end{subfigure}\hfill
\begin{subfigure}[b]{0.33\textwidth}
\centering
\includegraphics[width=\textwidth]{Rplots/pv\_plot\_2-3.png}
\caption{std-fBM $\lVert \Sigma^{\mathrm{pv}} \rVert_F$}
\end{subfigure}
\caption{Characteristics of a standard fBM covariance matrix of path values}
\label{fig-pv-char-std-fBM}
\end{figure}
\vspace{-20pt}
\begin{figure}[H]
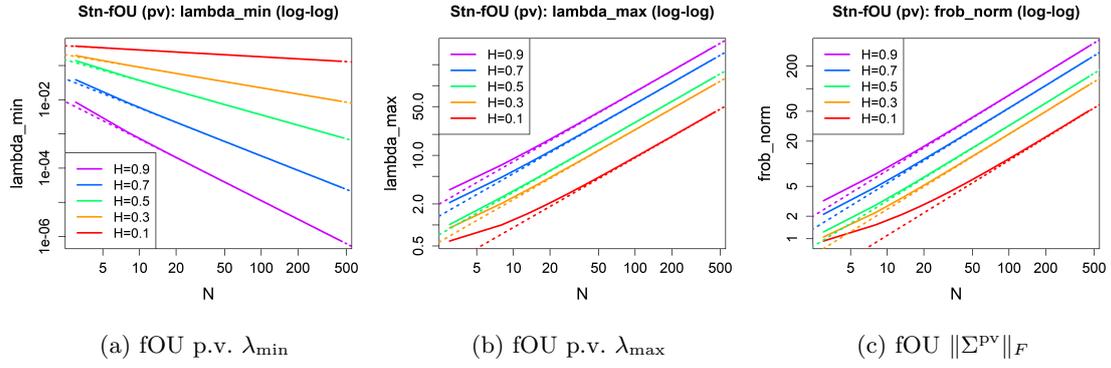

\centering
\begin{subfigure}[b]{0.33\textwidth}
\centering
\includegraphics[width=\textwidth]{Rplots/pv\_plot\_3-1.png}
\caption{fOU p.v.\ $\lambda_{\min}$}
\end{subfigure}\hfill
\begin{subfigure}[b]{0.33\textwidth}
\centering
\includegraphics[width=\textwidth]{Rplots/pv\_plot\_3-2.png}
\caption{fOU p.v.\ $\lambda_{\max}$}
\end{subfigure}\hfill
\begin{subfigure}[b]{0.33\textwidth}
\centering
\includegraphics[width=\textwidth]{Rplots/pv\_plot\_3-3.png}
\caption{fOU $\lVert \Sigma^{\mathrm{pv}} \rVert_F$}
\end{subfigure}
\caption{Characteristics of a stationary fOU covariance matrix of path values}
\label{fig-pv-char-fOU}
\end{figure}
\subsubsection{Results for $\Sigma^{\mathrm{ns}}$ cases}
\label{num-results-subsubsec-ns-cases}
\begin{table}[H]
\centering
\renewcommand{\arraystretch}{1.5}

\begin{tabularx}{0.85\textwidth}{|c|Y|Y|Y|Y|}
\hline
\multirow{2}{*}{$\Sigma^{\mathrm{ns}}$ characteristics}
  & \multicolumn{2}{c|}{RL-fBM \& std-fBM}
  & \multicolumn{2}{c|}{stationary fOU}
  \\
\cline{2-5}
& {$H \in \left(0,\frac12\right)$}
& {$H \in \left[\frac12,1\right)$}
& {$H \in \left(0,\frac12\right)$}
& {$H \in \left[\frac12,1\right)$} \\ \hline

$\lambda_{\min}$ 
& {$\widetilde{\Theta}\left(N^{-1}\right)$}
& {$\widetilde{\Theta}\left(N^{-2H}\right)$}
& {$\widetilde{\Theta}\left(N^{-1}\right)$}
& {$\widetilde{\Theta}\left(N^{-2H}\right)$} \\ \hline

$\lambda_{\max}$ 
& {$\widetilde{\Theta}\left(N^{-2H}\right)$}
& {$\widetilde{\Theta}\left(N^{-1}\right)$}
& \multicolumn{2}{c|}{$\widetilde{\Theta}\left(1\right)$}
 \\ \hline
 
\multirow{2}{*}{$\lVert\Sigma^{\mathrm{ns}}\rVert_F$}
& {$H \in \left(0,\frac34\right)$}
& {$H \in \left[\frac34,1\right)$}
& {$H \in \left(0,\frac14\right)$}
& {$H \in \left[\frac14,1\right)$} \\ \cline{2-5}
& {$\widetilde{\Theta}\left(N^{\frac12-2H}\right)$}
& {$\widetilde{\Theta}\left(N^{-1}\right)$}
& {$\widetilde{\Theta}\left(N^{\frac12-2H}\right)$}
& {$\widetilde{\Theta}\left(1\right)$}  \\ \hline
\end{tabularx}
\caption{Characteristics of the covariance matrices of noises $\Sigma^{\mathrm{ns}}$}
\label{table-char-ns-case}
\end{table}
\vspace{-20pt}
\begin{figure}[H]
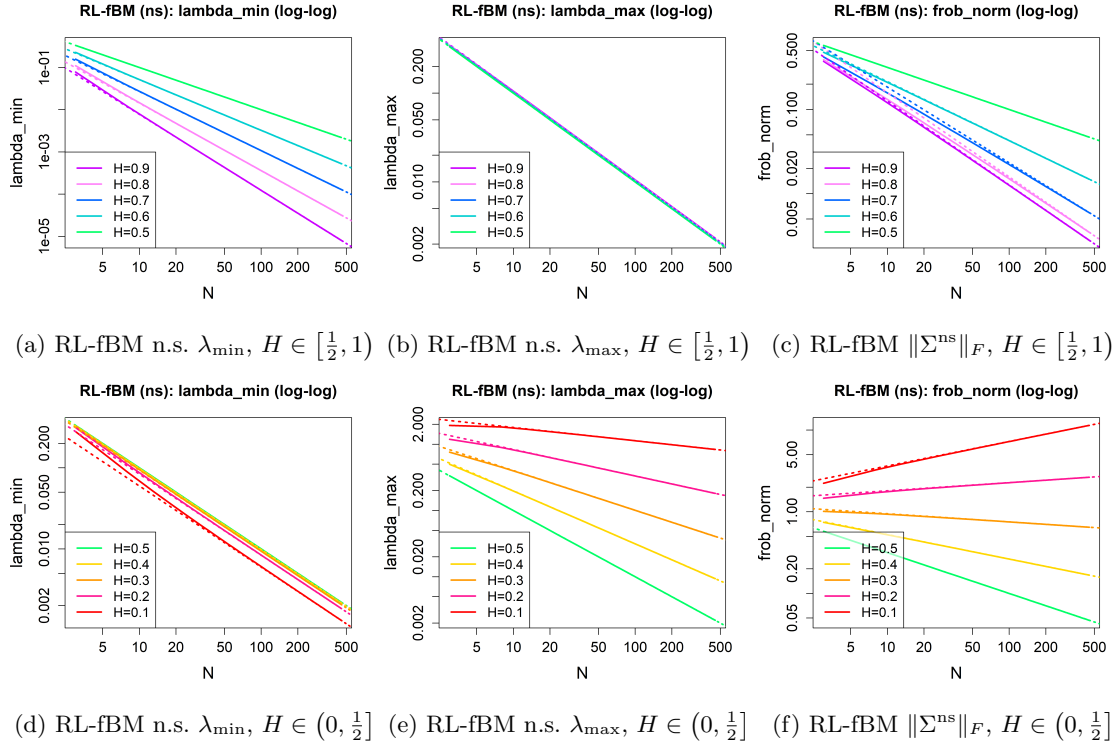

\centering
\begin{subfigure}[b]{0.33\textwidth}
    \centering
    \includegraphics[width=\textwidth]{Rplots/ns\_plot\_1-1-1.png}
    \caption{RL-fBM n.s.\ $\lambda_{\min}$, $H\in\left[\frac12,1\right)$}
\end{subfigure}\hfill
\begin{subfigure}[b]{0.33\textwidth}
    \centering
    \includegraphics[width=\textwidth]{Rplots/ns\_plot\_1-2-1.png}
    \caption{RL-fBM n.s.\ $\lambda_{\max}$, $H\in\left[\frac12,1\right)$}
\end{subfigure}\hfill
\begin{subfigure}[b]{0.33\textwidth}
    \centering
    \includegraphics[width=\textwidth]{Rplots/ns\_plot\_1-3-1.png}
    \caption{RL-fBM $\lVert \Sigma^{\mathrm{ns}} \rVert_F$, $H\in\left[\frac12,1\right)$}
\end{subfigure}\hfill
\begin{subfigure}[b]{0.33\textwidth}
    \centering
    \includegraphics[width=\textwidth]{Rplots/ns\_plot\_1-1-2.png}
    \caption{RL-fBM n.s.\ $\lambda_{\min}$, $H\in\left(0,\frac12\right]$}
\end{subfigure}\hfill
\begin{subfigure}[b]{0.33\textwidth}
    \centering
    \includegraphics[width=\textwidth]{Rplots/ns\_plot\_1-2-2.png}
    \caption{RL-fBM n.s.\ $\lambda_{\max}$, $H\in\left(0,\frac12\right]$}
\end{subfigure}\hfill
\begin{subfigure}[b]{0.33\textwidth}
    \centering
    \includegraphics[width=\textwidth]{Rplots/ns\_plot\_1-3-2.png}
    \caption{RL-fBM $\lVert \Sigma^{\mathrm{ns}} \rVert_F$, $H\in\left(0,\frac12\right]$}
\end{subfigure}
\caption{Characteristics of Riemann-Liouville fBM covariance matrix of increments}
\label{fig-ns-char-RL-fBM}
\end{figure}
\vspace{-15pt}
\begin{figure}[H]
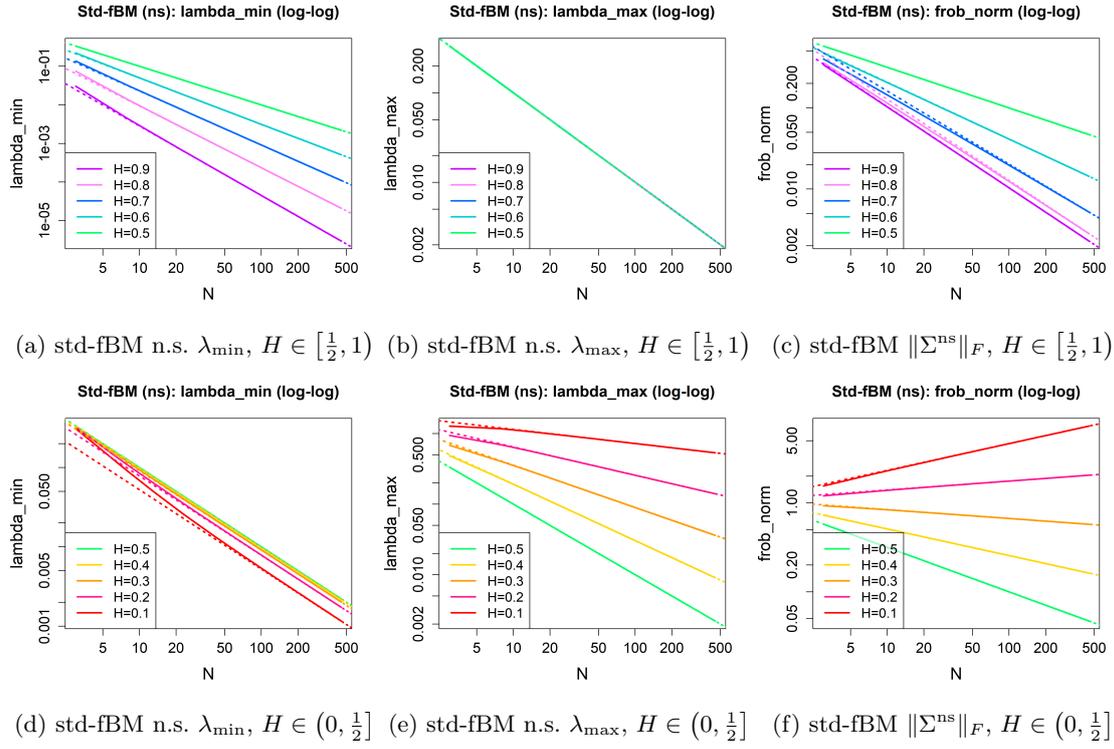

\centering
\begin{subfigure}[b]{0.33\textwidth}
\centering
\includegraphics[width=\textwidth]{Rplots/ns\_plot\_2-1-1.png}
\caption{std-fBM n.s.\ $\lambda_{\min}$, $H\in\left[\frac12,1\right)$}
\end{subfigure}\hfill
\begin{subfigure}[b]{0.33\textwidth}
\centering
\includegraphics[width=\textwidth]{Rplots/ns\_plot\_2-2-1.png}
\caption{std-fBM n.s.\ $\lambda_{\max}$, $H\in\left[\frac12,1\right)$}
\end{subfigure}\hfill
\begin{subfigure}[b]{0.33\textwidth}
\centering
\includegraphics[width=\textwidth]{Rplots/ns\_plot\_2-3-1.png}
\caption{std-fBM $\lVert \Sigma^{\mathrm{ns}} \rVert_F$, $H\in\left[\frac12,1\right)$}
\end{subfigure}\hfill
\begin{subfigure}[b]{0.33\textwidth}
\centering
\includegraphics[width=\textwidth]{Rplots/ns\_plot\_2-1-2.png}
\caption{std-fBM n.s.\ $\lambda_{\min}$, $H\in\left(0,\frac12\right]$}
\end{subfigure}\hfill
\begin{subfigure}[b]{0.33\textwidth}
\centering
\includegraphics[width=\textwidth]{Rplots/ns\_plot\_2-2-2.png}
\caption{std-fBM n.s.\ $\lambda_{\max}$, $H\in\left(0,\frac12\right]$}
\end{subfigure}\hfill
\begin{subfigure}[b]{0.33\textwidth}
\centering
\includegraphics[width=\textwidth]{Rplots/ns\_plot\_2-3-2.png}
\caption{std-fBM $\lVert \Sigma^{\mathrm{ns}} \rVert_F$, $H\in\left(0,\frac12\right]$}
\end{subfigure}
\caption{Characteristics of standard fBM covariance matrix of increments}
\label{fig-ns-char-std-fBM}
\end{figure}
\vspace{-15pt}
\begin{figure}[H]
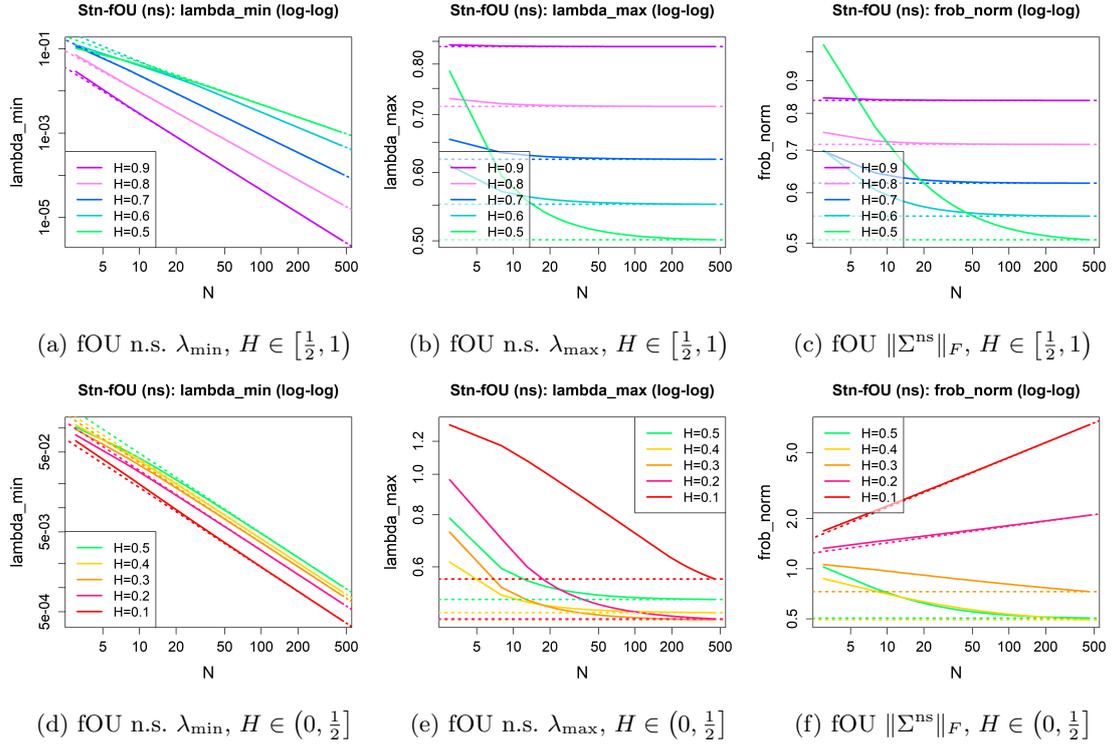

\centering
\begin{subfigure}[b]{0.33\textwidth}
\centering
\includegraphics[width=\textwidth]{Rplots/ns\_plot\_3-1-1.png}
\caption{fOU n.s.\ $\lambda_{\min}$, $H\in\left[\frac12,1\right)$}
\end{subfigure}\hfill
\begin{subfigure}[b]{0.33\textwidth}
\centering
\includegraphics[width=\textwidth]{Rplots/ns\_plot\_3-2-1.png}
\caption{fOU n.s.\ $\lambda_{\max}$, $H\in\left[\frac12,1\right)$}
\label{fig-ns-char-fOU-b}
\end{subfigure}\hfill
\begin{subfigure}[b]{0.33\textwidth}
\centering
\includegraphics[width=\textwidth]{Rplots/ns\_plot\_3-3-1.png}
\caption{fOU $\lVert \Sigma^{\mathrm{ns}} \rVert_F$, $H\in\left[\frac12,1\right)$}
\label{fig-ns-char-fOU-c}
\end{subfigure}\hfill
\begin{subfigure}[b]{0.33\textwidth}
\centering
\includegraphics[width=\textwidth]{Rplots/ns\_plot\_3-1-2.png}
\caption{fOU n.s.\ $\lambda_{\min}$, $H\in\left(0,\frac12\right]$}
\end{subfigure}\hfill
\begin{subfigure}[b]{0.33\textwidth}
\centering
\includegraphics[width=\textwidth]{Rplots/ns\_plot\_3-2-2.png}
\caption{fOU n.s.\ $\lambda_{\max}$, $H\in\left(0,\frac12\right]$}
\label{fig-ns-char-fOU-e}
\end{subfigure}\hfill
\begin{subfigure}[b]{0.33\textwidth}
\centering
\includegraphics[width=\textwidth]{Rplots/ns\_plot\_3-3-2.png}
\caption{fOU $\lVert \Sigma^{\mathrm{ns}} \rVert_F$, $H\in\left(0,\frac12\right]$}
\label{fig-ns-char-fOU-f}
\end{subfigure}
\caption{Characteristics of stationary fOU covariance matrix of increments}
\label{fig-ns-char-fOU}
\end{figure}

\subsubsection{Path simulations via $\Sigma^{\mathrm{pv}}$ and $\Sigma^{\mathrm{ns}}$: complexity comparison}\label{subsubsec-complexity-compare-pv-ns}
Recall from \autoref{main-thm-gen-mat-concrete-cost} and \autoref{main-thm-coro-concrete-cost-sum-of-inc}
that 
the gate complexity requirements 
for preparing $\ket{x}=\vec{x}/\norm{\vec{x}}$ or $\ket{y}=\vec{y}/\norm{\vec{y}}$, where $\vec{x}=(\Sigma^{\mathrm{pv}})^{1/2}\vec{z}$ and $\vec{y}=\mathcal{L}_N(\Sigma^{\mathrm{ns}})^{1/2}\vec{z}$ represent a sample path of the process of interest,
depend on the parameters $\lVert\Sigma \rVert_F/\lambda_{\max}$ and $\kappa\equiv \lambda_{\max}/\lambda_{\min}$.
In particular, the gate complexity is given by
$\widetilde{O}\left(({\lVert\Sigma\rVert_F}/{\lambda_{\max}})\kappa^{1.5}N^c\right)=\widetilde{O}\left(N^{p_1+1.5p_2+c}\right)$
when ${\lVert\Sigma\rVert_F}/{\lambda_{\max}}=\widetilde{\Theta}(N^{p_1})$
and
$\kappa=\widetilde{\Theta}(N^{p_2})$, 
for some appropriate exponents $p_1,p_2$,
with $c=0$ if $\Sigma\equiv \Sigma^{\mathrm{pv}}$ is used via \autoref{main-thm-gen-mat-concrete-cost},
and $c=1$ if $\Sigma\equiv \Sigma^{\mathrm{ns}}$ is used via \autoref{main-thm-coro-concrete-cost-sum-of-inc}.
Using
\Cref{table-char-pv-case,table-char-ns-case},
we deduce the order of $\lVert\Sigma \rVert_F/\lambda_{\max}$, $\kappa$, and the gate complexities in terms of $N$, and summarise these results 
in \Cref{table-prep-cost-pv-case,table-complexity-parameters-ns-case,table-prep-cost-ns-case}.
The gate complexity comparison between state preparation via $\Sigma^{\mathrm{pv}}$ and $\Sigma^{\mathrm{ns}}$ is illustrated in \autoref{cost-comparison-graph-numericals},
where RL-fBM and std-fBM are collectively referred to simply as fBM, since they require the same order of complexity.

As can be seen from \autoref{cost-comparison-graph-numericals},
the minimal gate complexity requirement for preparing a (normalised) sample path of an RL-fBM or a std-fBM
is achieved by using $\Sigma\equiv \Sigma^{\mathrm{pv}}$ via \autoref{main-thm-gen-mat-concrete-cost}
for $H\in(0,0.25]$, and $\Sigma\equiv \Sigma^{\mathrm{ns}}$ via \autoref{main-thm-coro-concrete-cost-sum-of-inc} for $H\in[0.25,1)$.
With this choice, the resulting complexity is at most $\widetilde{O}(N^{2.5})$ for all $H\in(0,1)$, indicating a quantum advantage over the classical $O(N^3)$ complexity of the one-time Cholesky decomposition.
Moreover, a complexity strictly lower than $\widetilde{O}(N^2)$, which is the cost of matrix-vector
multiplication in classical Cholesky-based simulations,
can be achieved for $H\in(0,1/6)$ and $H\in(2/3,5/6)$.

The stationary fOU case follows similarly, but due to the $Y^H_0$ term present in the $\Sigma^{\mathrm{ns}}$ case, 
\autoref{cost-comparison-graph-numericals} shows that the computational advantage of using $\Sigma\equiv \Sigma^{\mathrm{ns}}$ via \autoref{main-thm-coro-concrete-cost-sum-of-inc}
is reduced when compared with the fBM cases.
Sub-quadratic dependence on $N$ is still achievable for $H\in(0,1/6)$ by using
$\Sigma\equiv \Sigma^{\mathrm{pv}}$ via \autoref{main-thm-gen-mat-concrete-cost}, similarly to the fBM cases.
Preparation via 
$\Sigma\equiv \Sigma^{\mathrm{pv}}$ is preferable for $H\in(0,1/3]$,
while
$\Sigma\equiv \Sigma^{\mathrm{ns}}$ is preferable for $H\in[1/3,1)$.
For $H\in(0,2/3)$, the complexity of preparing a stationary fOU sample path is strictly less than
$\widetilde{O}(N^3)$, which is the classical complexity for a one-time Cholesky decomposition.

In order to recover the full information of the sample path, it is also necessary to output the normalising factor $\norm{\vec{x}}$ or $\norm{\vec{y}}$
corresponding to the prepared state $\ket{x}\equiv\vec{x}/\norm{\vec{x}}$ or $\ket{y}\equiv y/\norm{y}$, where $y\equiv \mathcal{L}_N\vec{x}$.
For the purpose of constructing the exponentiation algorithms (to be demonstrated in \autoref{exp-non-lin-section}), a relative-error estimate of the normalising factor obtained via \autoref{qae-w-rel-err-est} or \autoref{qae-for-norm-y}
suffices with a constant $C=\Theta(1)$
(see \autoref{non-lin-transform-concrete-cost-thm} through the use of
\autoref{Hk-using-estimate}, or \autoref{non-lin-thm-sum-of-inc} through the use of
\autoref{Hk-using-estimate-sum-of-inc}),
resulting in the same asymptotic complexity requirement (up to polylogarithmic factor) to the cost for implementing the state-preparation unitary as given in \autoref{main-thm-gen-mat-concrete-cost} or
\autoref{main-thm-coro-concrete-cost-sum-of-inc}.
Since QAE for $\norm{\vec{x}}$ or $\norm{\vec{y}}$ is needed only once and separately from the subsequent circuits, this means that the additional cost arising from QAE is additive to the algorithms that follow. 
If the state-preparation unitary $U_{\ket{x},\varepsilon}$ or $U_{\ket{y},\varepsilon}$ is being used as the building blocks for the subsequent algorithms, this QAE procedure therefore does not affect the overall asymptotic complexity required to implement that algorithm.
On the other hand, if an absolute-error estimate of the normalising factor is required,
the corresponding additive complexity becomes \eqref{qae-cost-abs-err-concrete-cost} from \autoref{qae-w-abs-err-est} or \eqref{qae-cost-sum-of-inc-abs} from \autoref{qae-for-norm-y}.
In this case, the complexity may be reduced due to the change in dependence from
$\kappa^{1.5}$ to $\sqrt{\lambda_{\max}}\,\kappa$, with the improvement arising from
the disappearance of the $\lambda_{\min}^{-1/2}$ factor
(which is always $\widetilde{\Theta}(N^p)$ for a strictly positive $p$ in all cases considered).
However, this potential improvement may be offset by additional multiplicative dependence on
$\norm{\vec{z}}_2$, which is concentrated around $O(\sqrt{N})$, as well as on the
precision parameter $1/\widehat{\varepsilon}$.
Since absolute-error estimates are not used in subsequent sections, we do not pursue a
detailed analysis here and regard this case as application-dependent.

\begin{table}[H]
\centering
\renewcommand{\arraystretch}{1.35}
\begin{tabularx}{0.7\textwidth}{|c|Y|}
\hline
{complexity parameters}
  & {RL-fBM \& std-fBM \& stationary fOU}
  \\
\cline{2-2}
($\Sigma^{\mathrm{pv}}$ cases) & {${}^\forall\,H\in(0,1)$} \\ \hline
$\kappa$ & 
{$\widetilde{\Theta}\left(N^{1+2H}\right)$} \\ \hline
$\lVert\Sigma^{\mathrm{pv}}\rVert_F/\lambda_{\max}$ &
{$\widetilde{\Theta}\left(1\right)$} \\ \hline
{preparing $\ket{x}$ via $\Sigma^{\mathrm{pv}}$} &
{$\widetilde{O}\left(N^{1.5+3H}\right)$} \\ \hline
\end{tabularx}
\caption{Complexity's dependence on $N$ for preparing a sample path via $\Sigma^{\mathrm{pv}}$}
\label{table-prep-cost-pv-case}
\end{table}
\vspace{-15pt}
\begin{table}[H]
\centering
\renewcommand{\arraystretch}{1.5}

\begin{tabularx}{\textwidth}{|c|Y|Y|Y|Y|Y|}
\hline
{complexity parameters}
  & \multicolumn{3}{c|}{RL-fBM \& std-fBM}
  & \multicolumn{2}{c|}{stationary fOU}
  \\
\cline{2-6}
($\Sigma^{\mathrm{ns}}$ cases) 
& \multicolumn{3}{c|}{{${}^\forall H \in (0,1)$}}
& {$H \in \left(0,\frac12\right)$}
& {$H \in \left[\frac12,1\right)$} \\ \hline

$\kappa$ 
& \multicolumn{3}{c|}{$\widetilde{\Theta}\left(N^{\lvert 1-2H\rvert}\right)$}
& {$\widetilde{\Theta}\left(N\right)$} 
& {$\widetilde{\Theta}\left(N^{2H}\right)$}\\ \hline

\multirow{2}{*}{$\displaystyle\frac{\lVert\Sigma^{\mathrm{ns}}\rVert_F}{\lambda_{\max}}$}
& {$H\in\left(0,\frac12\right)$}
& {$H\in\left[\frac12,\frac34\right)$}
& {$H\in\left[\frac34,1\right)$}
& {$H \in \left(0,\frac14\right)$}
& {$H \in \left[\frac14,1\right)$}\\ \cline{2-6}
& {$\widetilde{\Theta}\left(N^{\frac12}\right)$}
& {$\widetilde{\Theta}\left(N^{\frac32-2H}\right)$}
& {$\widetilde{\Theta}\left(1\right)$}
& {$\widetilde{\Theta}\left(N^{\frac12-2H}\right)$}
& {$\widetilde{\Theta}\left(1\right)$}
 \\ \hline
\end{tabularx}
\caption{Complexity parameters' dependence on $N$ ($\Sigma^{\mathrm{ns}}$ cases)}
\label{table-complexity-parameters-ns-case}
\end{table}
\vspace{-15pt}
\begin{table}[H]
\centering
\renewcommand{\arraystretch}{1.5}
\begin{tabularx}{0.75\textwidth}{|c|Y|Y|Y|}
\hline
\multirow{3}{*}{
\shortstack[c]{preparing $\ket{y}$\\[4pt] via $\Sigma^{\mathrm{ns}}$}
}
& \multicolumn{3}{c|}{RL-fBM \& std-fBM}
  \\
\cline{2-4}
& {$H\in\left(0,\frac12\right)$}
& {$H\in\left[\frac12,\frac34\right)$}
& {$H\in\left[\frac34,1\right)$}
\\ \cline{2-4}
& {$\widetilde{O}\left(N^{3-3H}\right)$}
& {$\widetilde{O}\left(N^{1+H}\right)$}
& {$\widetilde{O}\left(N^{-\frac12+3H}\right)$}
\\ \hline
\end{tabularx}\\[3pt]
\begin{tabularx}{0.75\textwidth}{|c|Y|Y|Y|}
\hline
\multirow{3}{*}{
\shortstack[c]{preparing $\ket{y}$\\[4pt] via $\Sigma^{\mathrm{ns}}$}
}
& \multicolumn{3}{c|}{stationary fOU}
  \\
\cline{2-4}
& {$H\in\left(0,\frac14\right)$}
& {$H\in\left[\frac14,\frac12\right)$}
& {$H\in\left[\frac12,1\right)$}
\\ \cline{2-4}
& \multicolumn{1}{Y|}{$\widetilde{O}\left(N^{3-2H}\right)$}
& \multicolumn{1}{Y|}{$\widetilde{O}\left(N^{\frac52}\right)$}
& \multicolumn{1}{Y|}{$\widetilde{O}\left(N^{1+3H}\right)$}
\\ \hline
\end{tabularx}
\caption{Complexity's dependence on $N$ for preparing a sample path via $\Sigma^{\mathrm{ns}}$}
\label{table-prep-cost-ns-case}
\end{table}
\vspace{-20pt}
\begin{figure}[H] 
\centering
\includegraphics[width=0.55\textwidth]{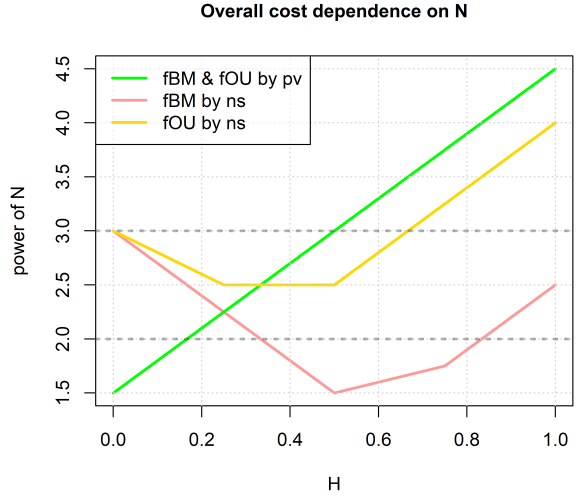}\vspace{-10pt}
\caption{Comparison of overall cost required, when simulate using $\Sigma^{\mathrm{pv}}$ v.s.\ $\Sigma^{\mathrm{ns}}$}
\label{cost-comparison-graph-numericals}
\end{figure}

\section{Exponentiation and simulation of rBergomi variance}\label{exp-non-lin-section}
In this section, we show how to prepare a quantum embedding of an exponentiated Gaussian vector, including, as special cases, the sample path of an exponentiated Gaussian process, with the rBergomi variance/volatility process as a concrete example.
The main result of this section (\autoref{preparing-ket-exp-x-from-ket-x-w-error}, \autoref{non-lin-transform-concrete-cost-thm}, \autoref{non-lin-thm-sum-of-inc}) is the preparation of a state, given in the most abstract form as
\[
\ket{\vec{f}\odot e^{c\vec{x}}}
\equiv
\frac{\vec{f}\odot e^{c\vec{x}}}{\norm{\vec{f}\odot e^{c\vec{x}}}}
\coloneqq
\frac{1}{\sqrt{\sum_{i=1}^N |f_i|^2\cdot e^{2c x_i}}}
\sum_{i=1}^N f_i \cdot e^{c x_i} \ket{i}, \tagaligneq \label{exponentiation-abstract-form}
\]
for $c\in\mathbb{R}\backslash\{0\}$ and $\vec{f}\in\mathbb{R}^N\backslash\{\vec{0}\}$,
where we use the following notation.
\begin{notation}[Hadamard product]
For $\vec{u}=(u_1,\dots,u_N)\in \mathbb{R}^N$ and $\vec{v}=(v_1,\dots,v_N)\in \mathbb{R}^N$, we denote the component-wise product, aka the Hadamard product, by
\[
\vec{u}\odot\vec{v}\coloneqq (u_1v_1,\dots,u_Nv_N)\in \mathbb{R}^N.
\]
\end{notation}
This form includes exponentiated Gaussian state preparations ranging from the most basic cases, such as
$\ket{e^{\vec{x}}}=e^{\vec{x}}/\lVert e^{\vec{x}}\rVert$ (where we let ${}^\forall i$, $f_i\equiv 1$, and $c\equiv 1$ in \eqref{exponentiation-abstract-form}),
to more complex cases in which we require a coefficient in front of the exponential as well as a translation in the exponent (which can also be reduced to another multiplicative coefficient).
The latter is precisely the form taken by the rBergomi variance/volatility process. For more details, see \autoref{rBer-ket-example1}.

Performing QAE on the abstract form \eqref{exponentiation-abstract-form} also allows us to output an estimate of a discrete sum of the form
$\frac1N\sum_{i=1}^N f_i e^{c x_i}$
(see \autoref{QAE-for-time-int-sum-gen-f}, \autoref{QAE-for-time-int-sum-gen-f-sum-of-inc}). 
This includes basic discrete time integrals of exponentiated Gaussian processes, e.g.\ $\frac{1}{N}\sum_{i=1}^N e^{x_i}$, 
as well as the norm $\lVert e^{\vec{x}}\rVert=\sqrt{\sum_{i=1}^N e^{2 x_i}}$, which can be obtained by letting $f_i \equiv N$ for all $i$ (though this can be costly, see \autoref{output-sqrt-of-discrete-sum}).
The discrete sum approximating the integrated rBergomi variance $\int_0^T V_t \diff t$ can also be obtained by the same procedure (see \autoref{time-int-approx-by-QAE}). 
The integrated variance plays a crucial role in the characterisation of the rough Bergomi price process conditioned on a given variance sample path, cf.\ \cite[Equation (2.3)]{MP18}, and also serves as a continuous-time approximation to the realised variance, which is an important measure of price variation and related to volatility forecasting in real market.

\subsection{Implementation of the exponential transformation}
A unitary $U_{\ket{x},\varepsilon}$ as in \autoref{main-thm-for-gen-mat} that prepares a quantum state $\ket{\widetilde{\varphi}}_{\mathrm{Q}\mathrm{I}}=U_{\ket{x},\varepsilon}\ket{0}_{\mathrm{Q}\mathrm{I}}$, where $\norm{\ket{\widetilde{\varphi}}_{\mathrm{Q}\mathrm{I}}-\ket{0}_{\mathrm{Q}}\ket{x}_{\mathrm{I}}}_2\leq\varepsilon$ for some $\varepsilon>0$,
is called a state-preparation block-encoding (SPBE) in the work of \cite{RR23}.
The authors of \cite{RR23} showed that one can create a quantum state embedding a non-linear transformation, including exponentiation, of the amplitudes of $\ket{x}$ by using $U_{\ket{x},\varepsilon}$ as a building-block, see \cite[Appendix B]{RR23}.
Their technique is based on the framework of quantum singular value transformation (QSVT), cf.\ \autoref{lem-qsvt}, which allows for the implementation of a polynomial transformation of the singular values of a block-encoded matrix. 
For the purpose of a non-linear transformation of amplitudes, the target of QSVT is the block-encoding of the diagonal matrix $\mathrm{diag}(\ket{x})$, whose singular values coincide with the amplitudes of $\ket{x}$.
However, we cannot directly apply their results to our situation due to the following two issues. 

The first issue is that the algorithm for the exponentiation (see \cite[Theorem 10, Theorem 5]{RR23}) has complexity linearly depending on the normalising factor $\gamma_f=\sup_{x\in[-1,1]}|a^x|=a$ for any $a>0$, or more precisely on $\gamma_{P_k}=\sup_{x\in[-1,1]}|P_k(x)|$, where $P_k$ is the polynomial that approximates $f(x)=a^x$ (see \cite[Theorem 9]{RR23}).
In our current situation, we want to perform the mapping $\ket{x}\mapsto e^{\norm{\vec{x}}_2 \ket{x}}\equiv e^{\vec{x}}$, 
so, when $\vec{x}$ represents a sample path of a continuous Gaussian process, $\gamma_f=e^{\norm{\vec{x}}_2}\approx O(e^{\sqrt{N}})$ with high probability, cf.\ \autoref{gaussian-sample-path-2-norm-sqrt-N}.
This renders the algorithm too costly to be practical when applied to our case.

The second issue is that \cite[Appendix B]{RR23} prepares the non-linearly transformed quantum state in a way that the block-encoding ancillae are implicitly getting transformed as well, see Lemma 19 and Theorem 11 therein. 
Their method would produce the desired transformation if the non-linear function of transformation $f$ satisfies $f(0)=0$, such that we have $f(\ket{0}_{\mathrm{Q}}\ket{x}_{\mathrm{I}})=\ket{0}_{\mathrm{Q}}f(\ket{x}_{\mathrm{I}})$. 
However, this is not the case for the exponential function $a^x$ as $a^0=1\neq0$ for any $a>0$. 
Consequently, their results cannot be applied as is in our setting.

Our approach to solve the first issue is to use the observation that $\norm{\vec{x}}_2\to\infty$ as $N\to \infty$, but $\norm{\vec{x}}_{\infty}$ stays bounded by a fixed constant $\Xi>0$ with high probability regardless of $N$ (see the discussion prior to \autoref{assump-inf-norm-bound-for-vec-x}). 
This means that the amplitudes of $\ket{x}$ lie in an interval that becomes smaller as $N$ increases. 
Therefore, our function of transformation does not need to perform the exponential map on the whole interval $[-1,1]$, but only over the interval $\left[-\frac{\Xi}{\norm{\vec{x}}_2},\frac{\Xi}{\norm{\vec{x}}_2}\right]\supset\left[-\frac{\norm{\vec{x}}_\infty}{\norm{\vec{x}}_2},\frac{\norm{\vec{x}}_\infty}{\norm{\vec{x}}_2}\right]$. 
The values of the function of transformation outside this interval can be anything bounded.
We show in \autoref{poly-approx-Hk-to-h} how to construct a polynomial that realises such a function, with the property that it is bounded by a constant regardless of $N$ (see \eqref{gamma-H-upp-bnd-claim} and \eqref{gamma-H-upp-bnd-claim-w-eta}).
This allows us to perform the exponentiation procedure with the normalising factor $\gamma_f$ (or $\gamma_{\mathcal{H}^{(\widetilde{\eta})}_k}$ in our case) being $O(1)$, hence removing its asymptotic contribution to the algorithm's overall complexity, similarly to the exponentiation result of \cite[Theorem 5]{RR23}. A detailed treatment for the construction of this polynomial starts from \autoref{subsubsec-polynomial-of-transformation-construction}.

For the second issue, even though \cite{RR23} did not explicitly give a thorough analysis of the case $f(0)\neq 0$ for SPBE, their technique can still be adopted to address this with only a slight modification (see the discussion around \eqref{expand-h-D}).

\subsubsection{Construction of the polynomial of transformation}\label{subsubsec-polynomial-of-transformation-construction}
To construct a mapping $\widetilde{h}$ on $[-1,1]$ that performs $\zeta\mapsto e^{\norm{\vec{x}}_2\zeta}$ on $\left[-\frac{\Xi}{\norm{\vec{x}}_2},\frac{\Xi}{\norm{\vec{x}}_2}\right]$ and is bounded elsewhere, the key idea is to decouple the exponential growth from the factor $\norm{\vec{x}}_2$.
To do this, we decompose $\widetilde{h}$ on the interval $\left[-\frac{\Xi}{\norm{\vec{x}}_2},\frac{\Xi}{\norm{\vec{x}}_2}\right]$ as
$
\widetilde{h}(\zeta)=e^{\norm{\vec{x}}_2\zeta}=e^{\widetilde{C} \widetilde{g}(\zeta)}\equiv (\widetilde{f}\circ\widetilde{g})(\zeta)
$,
where $\widetilde{C}>0$, $\widetilde{f}(\zeta)\coloneqq e^{\widetilde{C}\zeta}$, 
and 
$\widetilde{g}(\zeta)$ is chosen to be approximately linear with slope $\norm{\vec{x}}_2/\widetilde{C}$ on $\left[-\frac{\Xi}{\norm{\vec{x}}_2},\frac{\Xi}{\norm{\vec{x}}_2}\right]$, while remaining bounded on $[-1,1]$ by some constant $M>0$.
Intuitively, 
even though
$\widetilde{h}$ grows increasingly rapidly as $\norm{\vec{x}}_2$ becomes larger, 
this growth is effectively confined to the shrinking interval $\left[-\frac{\Xi}{\norm{\vec{x}}_2},\frac{\Xi}{\norm{\vec{x}}_2}\right]$.
Once outside this interval, the boundedness of $\widetilde{g}$ prevents further increase, so that
$
\gamma_h=\sup_{\zeta\in[-1,1]}|\widetilde{h}(\zeta)|\leq e^{\widetilde{C}M}
$
independently of $\norm{\vec{x}}_2$.

Fortunately, there exists a polynomial approximation to such a function $\widetilde{g}$, see \autoref{poly-lin-lc17-1} provided below. This polynomial is widely used in the so-called linear amplitude amplification algorithm, see e.g.\ \cite[Appendix D(6.)]{MRTC21} and FIG.\ 26 therein for an illustration.

Similarly, for $c\neq 0$, we can define $h(\zeta)=e^{c\widetilde{C}\widetilde{g}(\zeta)}$, by redefining $\widetilde{f}(\zeta)\coloneqq e^{\widetilde{C}\zeta}$ as $\widetilde{f}(\zeta)\coloneqq e^{C\zeta}$ with $C\equiv c\,\widetilde{C}$. This allows arbitrary powers of the original exponentiation\footnote{This modification needs to be made in $\widetilde{f}$, not in $\widetilde{g}$, to satisfy the slope constraint of \autoref{poly-lin-lc17-1}.}, i.e.\ $h= \widetilde{h}^c$.
\begin{lemma}[{\cite[Appendix A, Theorem 10]{LC17}}]\label{poly-lin-lc17-1}
${}^\forall \Gamma\in\left(0,\frac12\right]$, $\varepsilon_{\mathrm{lin}}\in\left(0,O(\Gamma)\right]$, there exists\footnote{\cite{LC17}'s proof is contructive, meaning that one knows how to determine $P_{\mathrm{lin},\Gamma,n}(x)$ for given $\Gamma,\varepsilon_{\mathrm{lin}}$.} an odd polynomial $P_{\mathrm{lin},\Gamma,n}$ of degree $n=O\left(\Gamma^{-1}\log\left(\frac{1}{\varepsilon_{\mathrm{lin}}}\right)\right)$ such that 
\begin{gather}
{}^\forall x\in[-\Gamma,\Gamma],\quad \left|P_{\mathrm{lin},\Gamma,n}(x)-\frac{x}{2\Gamma}\right|
\leq
\frac{\varepsilon_{\mathrm{lin}} |x|}{2\Gamma},  \label{ineq-for-p-lin}
\shortintertext{and}
\max_{x\in[-1,1]}\left|P_{\mathrm{lin},\Gamma,n}(x)\right|
\leq 1.  \label{bound-for-p-lin}
\end{gather}
\end{lemma}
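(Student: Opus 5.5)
Since this lemma is quoted from \cite[Appendix A, Theorem 10]{LC17}, I would not re-derive it from scratch. Instead I would reconstruct the constructive argument behind it. The idea is to build the target as a product of the identity-type slope and a smoothed window. Concretely, consider the function $x\mapsto \frac{x}{2\Gamma}\,w(x)$, where $w$ is close to $1$ on $[-\Gamma,\Gamma]$ and decays to essentially $0$ outside $[-2\Gamma,2\Gamma]$. A natural window is a product of two shifted error functions, $w(x)=\tfrac12\left(\erf(k(x+2\Gamma))-\erf(k(x-2\Gamma))\right)$, with $k=\Theta(\Gamma^{-1}\sqrt{\log(1/\varepsilon_{\mathrm{lin}})})$.

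With this choice:
\begin{itemize}
\item On $[-\Gamma,\Gamma]$ we have $|w(x)-1|\le e^{-k^2\Gamma^2}$, which is small enough that $\left|\frac{x}{2\Gamma}(w(x)-1)\right|\le \frac{\varepsilon_{\mathrm{lin}}|x|}{4\Gamma}$.
\item Outside $[-3\Gamma,3\Gamma]$, $w$ is exponentially small, so $|x w(x)/(2\Gamma)|$ stays below roughly $1/2$ everywhere on $[-1,1]$. The maximum is attained near $|x|\approx 2\Gamma$, where the value is about $1$. To get genuine headroom I would therefore rescale slightly, using the window at $\tfrac32\Gamma$ in place of $2\Gamma$; this caps the product by about $3/4$.
\item The product is odd, because $x$ is odd and $w$ is even.
\end{itemize}

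Next I would polynomially approximate each piece. The standard Jacobi--Anger/Chebyshev truncation of $\erf(k x)$, which follows from expanding $e^{-k^2x^2}$ and integrating, gives degree $O(k\log(1/\varepsilon'))$ for uniform error $\varepsilon'$ on $[-1,1]$. For the shifted versions I would apply the same expansion with the shift absorbed, keeping the degree bound. Multiplying by $x/(2\Gamma)$ raises the uniform error by a factor $1/(2\Gamma)$, so I would choose $\varepsilon'=\Theta(\Gamma\varepsilon_{\mathrm{lin}})$. The total degree is then $O(\Gamma^{-1}\sqrt{\log(1/\varepsilon_{\mathrm{lin}})}\log(1/(\Gamma\varepsilon_{\mathrm{lin}})))$. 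The hypothesis $\varepsilon_{\mathrm{lin}}\le O(\Gamma)$ makes $\log(1/\Gamma)\lesssim\log(1/\varepsilon_{\mathrm{lin}})$, which gives the stated $O(\Gamma^{-1}\log(1/\varepsilon_{\mathrm{lin}}))$. If needed, I would tighten the window width to avoid the extra $\sqrt{\log}$. Finally I would take the odd part $(p(x)-p(-x))/2$, which preserves both the error bound and the sup bound.

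The main obstacle is the \emph{relative} error bound \eqref{ineq-for-p-lin}, which requires error proportional to $|x|$ near $0$, not just a uniform error. My plan is to approximate the even window $w$ by an even polynomial $q$ with uniform error $\varepsilon_{\mathrm{lin}}/2$ and then set $P=\frac{x}{2\Gamma}q(x)$. On $[-\Gamma,\Gamma]$ this gives $|P-\frac{x}{2\Gamma}|\le\frac{|x|}{2\Gamma}(|q-w|+|w-1|)\le \frac{\varepsilon_{\mathrm{lin}}|x|}{2\Gamma}$ automatically. Oddness is then also automatic.

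For the global bound \eqref{bound-for-p-lin}, I need $|x q(x)|\le 2\Gamma$ on $[-1,1]$. Near the window edges this follows from the headroom above. Far away, the factor $|x|/(2\Gamma)$ can be as large as $1/(2\Gamma)$, so $q$ must be uniformly $O(\Gamma)$-small there. That is exactly why $q$'s accuracy should be $\Theta(\Gamma\varepsilon_{\mathrm{lin}})$, or at least $\Theta(\Gamma)$, costing only a $\log(1/\Gamma)$ factor, which the hypothesis absorbs. Finally, I would rescale $P$ by $(1-\varepsilon_{\mathrm{lin}})$ if necessary so that the bound is at most $1$, adjusting the constants in the error budget accordingly.
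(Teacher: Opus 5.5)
The paper gives no proof of its own here; it imports the lemma from \cite{LC17}. Your construction is essentially the one used there: multiply $x/(2\Gamma)$ by an even, erf-based polynomial rectangle window whose uniform error is $\Theta(\Gamma\varepsilon_{\mathrm{lin}})$, and use $\varepsilon_{\mathrm{lin}}=O(\Gamma)$ to absorb $\log(1/\Gamma)$. Your correctness analysis is sound. Writing $P=\frac{x}{2\Gamma}q$ with $q$ even gives oddness and the relative bound \eqref{ineq-for-p-lin} directly. With window edges at $\pm\frac32\Gamma$ and $\norm{q-w}_\infty\le\Gamma/2$, you get $|P|\le\frac34+\frac14=1$ on $[-1,1]$, which is \eqref{bound-for-p-lin}. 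So the final $(1-\varepsilon_{\mathrm{lin}})$ rescaling is unnecessary.

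The gap is the degree bound. You quote degree $O(k\log(1/\varepsilon'))$ for $\erf(kx)$, and $k=\Theta(\Gamma^{-1}\sqrt{\log(1/\varepsilon')})$ is forced. Together these only give $n=O(\Gamma^{-1}\log^{3/2}(1/\varepsilon_{\mathrm{lin}}))$. Your remedy of tightening the window does not help. The plateau accuracy $|w-1|\le\varepsilon_{\mathrm{lin}}/2$ on $[-\Gamma,\Gamma]$, with a transition region of width $\Theta(\Gamma)$, is what forces $k\Gamma\gtrsim\sqrt{\log(1/\varepsilon_{\mathrm{lin}})}$. A narrower transition only increases $k$. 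The surplus $\sqrt{\log}$ comes from the crude erf estimate, not from the window geometry.

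What you need is the sharp bound: $\erf(kx)$ is $\varepsilon'$-approximable on $[-1,1]$ by an odd polynomial of degree $O\big(\sqrt{(k^2+\log(1/\varepsilon'))\log(1/\varepsilon')}\big)$. This comes from a careful truncation of the Bessel/Jacobi--Anger expansion of the Gaussian, as in \cite{LC17}. With $k^2=\Theta(\Gamma^{-2}\log(1/\varepsilon'))$, this becomes $O(\Gamma^{-1}\log(1/\varepsilon'))$. Equivalently, a sign-function approximation with gap $\delta$ and accuracy $\varepsilon'$ costs degree $O(\delta^{-1}\log(1/\varepsilon'))$. The rescaling $y=(x\pm\frac32\Gamma)/2$ that absorbs the shifts changes $k$ only by a constant. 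Then $\varepsilon'=\Theta(\Gamma\varepsilon_{\mathrm{lin}})$ together with $\varepsilon_{\mathrm{lin}}=O(\Gamma)$ yields the stated $n=O(\Gamma^{-1}\log(1/\varepsilon_{\mathrm{lin}}))$.
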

We provide following lemmas for the polynomial approximation of the function $\widetilde{f}(\zeta)\coloneqq e^{C\zeta}$, $C\neq 0$, and about the error bound for the composite function $h=\widetilde{f}\circ \widetilde{g}$.
\begin{lemma}[Stirling's formula bounds]\label{Stirling-bounds2}
For all $n\in\mathbb{N}$,
\[
\sqrt{2\pi n}\left(\frac{n}{e}\right)^n e^{\frac{1}{12n+1}} < n!<\sqrt{2\pi n}\left(\frac{n}{e}\right)^n e^{\frac{1}{12n}}.
\]
\end{lemma}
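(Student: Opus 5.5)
We would prove the two-sided bound through the logarithm, using Robbins' approach: analyse the sequence $d_n \coloneqq \log n! - \left(n+\tfrac12\right)\log n + n$ and show that it decreases to $\tfrac12\log(2\pi)$ at a controlled rate. The statement is equivalent to
\[
\frac{1}{12n+1} < d_n - \tfrac12\log(2\pi) < \frac{1}{12n}.
\]

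The first step is to compute the consecutive difference. A direct computation gives
\[
d_n - d_{n+1} = \left(n+\tfrac12\right)\log\frac{n+1}{n} - 1.
\]
Set $t\coloneqq 1/(2n+1)$, so that $\frac{n+1}{n}=\frac{1+t}{1-t}$ and $n+\tfrac12=\frac{1}{2t}$. Expanding $\tfrac12\log\frac{1+t}{1-t}=\sum_{k\ge0} t^{2k+1}/(2k+1)$ then yields
\[
d_n-d_{n+1}=\sum_{k\geq1}\frac{t^{2k}}{2k+1}.
\]

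The second step bounds this series above and below.

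\emph{Upper bound.} Replacing $1/(2k+1)$ by $1/3$ and summing the geometric series gives
\[
d_n-d_{n+1}<\frac{t^2}{3(1-t^2)}=\frac{1}{12}\left(\frac1n-\frac1{n+1}\right).
\]
Hence $d_n-\frac1{12n}$ is increasing.

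\emph{Lower bound.} Keeping only the first term and comparing with a geometric series in $t^2/3$ gives
\[
d_n-d_{n+1}>\frac{t^2}{3}\cdot\frac{1}{1-t^2/3}\cdot(\text{adjusted}) \geq \frac{1}{12n+1}-\frac{1}{12(n+1)+1}.
\]
Hence $d_n-\frac1{12n+1}$ is decreasing. The lower bound here is where I expect the main obstacle: one must verify the elementary inequality $\frac{t^2}{3}+\frac{t^4}{5}+\dots>\frac{12}{(12n+1)(12n+13)}$. I would first try bounding the series below by $\frac{t^2}{3}\bigl(1+\frac{t^2}{3}+\frac{t^4}{9}+\cdots\bigr)^{-1}$-style estimates, or by simply using the first two terms $\frac{t^2}{3}+\frac{t^4}{5}$, and cross-multiplying. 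The latter reduces to a polynomial inequality in $n$ that holds for all $n\ge1$.

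The third step identifies the limit. Both monotone sequences sandwich $d_n$, so $d_n$ converges to some $d$, and the strict monotonicity gives
\[
d+\frac1{12n+1}<d_n<d+\frac1{12n}.
\]
For strictness, note that each series inequality is strict because the omitted terms are positive. Finally, $e^{d}=\sqrt{2\pi}$ follows from the Wallis product: substitute the asymptotic $n!\sim e^{d}\sqrt n(n/e)^n$ into $\frac{(2^n n!)^4}{((2n)!)^2(2n+1)}\to\frac{\pi}{2}$. Exponentiating then gives the claim. Alternatively, since this is the classical result of Robbins (1955), the paper may simply cite it.
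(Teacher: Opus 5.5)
Your argument is correct, and it is Robbins' proof: the paper simply cites \cite[equations (1), (2)]{Rob55} for $\frac{1}{12n+1}<d_n-\frac12\log(2\pi)<\frac{1}{12n}$ and then exponentiates, so you have reconstructed the cited source rather than taken a different route. The only loose spot is the lower bound, where the ``(adjusted)'' display is unnecessary. The first term alone already suffices, since $\frac{t^2}{3}=\frac{1}{3(2n+1)^2}>\frac{12}{(12n+1)(12n+13)}$ reduces to $24n>23$. By contrast, your $\frac{t^2}{3}\bigl(1+\frac{t^2}{3}+\cdots\bigr)^{-1}=\frac{t^2}{3}\bigl(1-\frac{t^2}{3}\bigr)$ variant is weaker than that single term and fails at $n=1$; Robbins instead uses $2k+1\le 3^k$ to obtain the lower bound $\frac{t^2}{3-t^2}$.
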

\begin{proof}
These bounds are due to Robbins \cite[equations (1), (2)]{Rob55}, combined with the fact that an exponential function is increasing in its argument.
\end{proof}

\begin{lemma}\label{poly-degree-for-exp2}
Let $R,\varepsilon>0$ and $k\coloneqq\max\left\{\ceil*{e^2R},\ceil*{\log\left({1}/{\varepsilon}\right)}\right\}$. The exponential function $f(x)=e^x$ on the interval $[-R,R]$ can be approximated uniformly to within $\varepsilon$ by its Taylor polynomial of degree $k$, $P_k(x)=\sum_{m=0}^kx^m/m!$.
\end{lemma}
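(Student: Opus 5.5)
The plan is to bound the Taylor remainder directly. For $x\in[-R,R]$ we have
\[
|e^x-P_k(x)|=\Big|\sum_{m=k+1}^\infty \frac{x^m}{m!}\Big|\le \sum_{m=k+1}^\infty \frac{R^m}{m!}.
\]
The task is then to show that this tail is at most $\varepsilon$ under the stated choice of $k$. An alternative is the Lagrange form $e^{\xi}R^{k+1}/(k+1)!$ with $|\xi|\le R$. However, the factor $e^{R}$ would then have to be absorbed, which is exactly where the condition $k\ge e^2R$ enters. For that reason I will work with the series tail, which avoids $e^{\xi}$ entirely.

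The first step uses the lower Stirling bound from \autoref{Stirling-bounds2}. Since $\sqrt{2\pi m}\,e^{1/(12m+1)}>1$, we get $m!>(m/e)^m$, and hence $R^m/m!<(eR/m)^m$. For every $m\ge k+1>k\ge e^2R$ we have $eR/m\le eR/k\le e^{-1}$. Each term is therefore bounded by $e^{-m}$, and the tail is at most
\[
\sum_{m=k+1}^\infty e^{-m}=\frac{e^{-(k+1)}}{1-e^{-1}}=\frac{e^{-k}}{e-1}<e^{-k}.
\]

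The second step uses $k\ge\ceil*{\log(1/\varepsilon)}\ge\log(1/\varepsilon)$, which gives $e^{-k}\le\varepsilon$. This yields $\sup_{x\in[-R,R]}|e^x-P_k(x)|<\varepsilon$. If $\varepsilon\ge1$, then $\log(1/\varepsilon)\le0$, and the bound still holds because $k\ge\ceil*{e^2R}\ge1$ gives $e^{-k}/(e-1)<1\le\varepsilon$.

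The only delicate point is making sure the geometric ratio $eR/m\le e^{-1}$ holds uniformly for all $m>k$. This is immediate from monotonicity in $m$ together with $k\ge e^2R$. Everything else is routine, so I expect no real obstacle.
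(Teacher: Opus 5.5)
Your proposal is correct and follows essentially the same route as the paper. Both bound the remainder by the series tail and use the lower Stirling bound to get $m!\ge(m/e)^m$, so each term is at most $e^{-m}$ once $m>k\ge e^2R$; both then sum to $e^{-k}/(e-1)$ and conclude with $k\ge\log(1/\varepsilon)$. Your separate case $\varepsilon\ge 1$ is unnecessary, since $k\ge\log(1/\varepsilon)$ holds by the definition of $k$ in every case, but it does no harm.
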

\begin{proof}
Consider $m\geq k+1$.
Observe that by \autoref{Stirling-bounds2}, we have
\[
\sqrt[m]{m!}
\geq
\left(2\pi m\right)^{\frac{1}{2m}}\cdot \frac{m}{e} \cdot e^{\frac{1}{12m^2+m}}
\geq
\frac{m}{e}
\geq
\frac{k}{e}
\geq
eR. \tagaligneq \label{eq1-exp-taylor-poly-lem}
\]
The approximation error by the degree-$k$ Taylor polynomial 
over $[-R,R]$
is given by
\begin{align*}
\sup_{x\in[-R,R]}
\big| e^x-P_k(x) \big| 
\leq
\sup_{x\in[-R,R]}\sum_{m=k+1}^\infty \frac{|x|^m}{m!}
\leq 
\sum_{m=k+1}^\infty \left(\frac{R}{\sqrt[m]{m!}}\right)^m
\stackrel{\eqref{eq1-exp-taylor-poly-lem}}{\leq}
\sum_{m=k+1}^\infty e^{-m}
=
\frac{e^{-k}}{e-1}.
\end{align*}
As $k
\geq \log\left({1}/{\varepsilon}\right)$, we have
$-k \leq \log(\varepsilon) \leq \log(\varepsilon)+\log(e-1)$, and the claim is proved.
\end{proof}

\begin{lemma}\label{poly-degree-ax-3}
Let $a,\varepsilon>0$ and
$k\coloneqq\max\left\{\ceil*{e^2 \left|\log a\right|},\ceil*{\log\left({1}/{\varepsilon}\right)}\right\}$.
The exponential function $f(x)=a^x$ on the interval $[-1,1]$ can be approximated uniformly to within $\varepsilon$ by its Taylor polynomial of degree $k$, $Q_k(x)=\sum_{m=0}^k (\log a)^mx^m/m!$.
\end{lemma}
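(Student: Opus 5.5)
This is a direct reduction to \autoref{poly-degree-for-exp2} by the substitution $y=(\log a)\,x$. Write $a^x=e^{x\log a}$ and set $R\coloneqq|\log a|$. As $x$ ranges over $[-1,1]$, the variable $y$ ranges over a subset of $[-R,R]$; it equals $[-R,R]$ when $\log a\neq 0$ and is $\{0\}$ otherwise. Composing the Taylor polynomial $P_k(y)=\sum_{m=0}^k y^m/m!$ with this substitution gives exactly $Q_k(x)=P_k((\log a)x)=\sum_{m=0}^k(\log a)^m x^m/m!$. Consequently
\[
\sup_{x\in[-1,1]}\big|a^x-Q_k(x)\big|=\sup_{x\in[-1,1]}\big|e^{(\log a)x}-P_k((\log a)x)\big|\leq\sup_{y\in[-R,R]}\big|e^y-P_k(y)\big|.
\]

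The first case is $a\neq 1$. Then $R>0$, and \autoref{poly-degree-for-exp2} applies with this $R$ and the given $\varepsilon$. The degree it prescribes is $\max\{\ceil{e^2R},\ceil{\log(1/\varepsilon)}\}$, which is exactly the $k$ in the statement. The right-hand side of the display above is therefore at most $\varepsilon$, and the claim follows.

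The second case is $a=1$, which is the only point needing care, because \autoref{poly-degree-for-exp2} is stated for $R>0$. Here $Q_k\equiv 1=a^x$ for every $k\ge 0$ (with the convention $0^0=1$), so the error is zero. Alternatively, the tail-sum argument from the proof of \autoref{poly-degree-for-exp2} bounds the error by $\sum_{m>k}|y|^m/m!=0$ at $y=0$. A further technicality arises if $\varepsilon\geq 1$, since then $k$ might be $0$. The bound in \autoref{poly-degree-for-exp2} still holds in that regime because the same tail estimate is valid for any $k\ge 0$. I would simply remark on this rather than re-derive it.
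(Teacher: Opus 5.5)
Your proposal is correct and is essentially the paper's own proof. The paper also sets $R=|\log a|$, writes $Q_k(x)=P_k((\log a)x)$, and transfers the uniform bound of \autoref{poly-degree-for-exp2} on $[-R,R]$ to $[-1,1]$ via $y=(\log a)x$. Your explicit handling of $a=1$, where the hypothesis $R>0$ of that lemma technically fails, is a small bit of extra care the paper skips; the $\varepsilon\geq 1$ remark is harmless but unnecessary, since for $a\neq 1$ one already has $k\geq\ceil{e^2R}\geq 1$.
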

\begin{proof}
Letting $R\coloneqq \left|\log a\right|$ and $k\coloneqq\max\left\{\ceil*{e^2 R},\ceil*{\log\left({1}/{\varepsilon}\right)}\right\}=\max\left\{\ceil*{e^2 \left|\log a\right|},\ceil*{\log\left({1}/{\varepsilon}\right)}\right\}$,
we can use \autoref{poly-degree-for-exp2} to construct $P_k$ such that
$
\sup_{x\in [-R,R]}\left|e^x-P_k(x)\right|\leq \varepsilon$ holds.
Define a degree-$k$ polynomial $Q_k(x)\coloneqq P_k((\log a)x)$, and denote $y\coloneqq (\log a)x$. 
Since $x\in[-1,1]$ implies $y \in [-R,R]$, we have
$
\sup_{x\in [-1,1]}\left|a^x-Q_k(x)\right|=
\sup_{y\in [-R,R]}\left|e^{y}-P_k(y)\right|
\leq\varepsilon$.
\end{proof}

\begin{lemma}\label{composite-poly-lem}
Let $\Gamma>0$. 
Suppose a degree-$k_1$ polynomial $P_{k_1}$ can approximate a function $f$ uniformly on $[-1,1]$ to within $\varepsilon_f>0$, and a degree-$k_2$ polynomial $Q_{k_2}$ can approximate a function $g$ uniformly on $[-\Gamma,\Gamma]$ to within $\varepsilon_g>0$. Suppose further that for all $x\in[-\Gamma,\Gamma]$, $g(x)\in[-1,1]$ and $Q_{k_2}(x)\in[-1,1]$. Denote by $L_P=\sup_{x\in[-1,1]}|P_{k_1}'(x)|$ a Lipschitz constant for $P_{k_1}$ on $[-1,1]$. 
Then, the composite polynomial $P_{k_1}\circ Q_{k_2}$ of degree $k_1k_2$ satisfies
\[
\sup_{x\in[-\Gamma,\Gamma]}\left|\big(f\circ g\big)(x)-\big(P_{k_1}\circ Q_{k_2}\big)(x)\right|
\leq
\varepsilon_f+L_P\, \varepsilon_g.
\]
\end{lemma}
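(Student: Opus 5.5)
The plan is to fix $x\in[-\Gamma,\Gamma]$, insert the intermediate quantity $P_{k_1}(g(x))$, and bound the two resulting differences separately with the triangle inequality:
\[
\big|f(g(x))-P_{k_1}(Q_{k_2}(x))\big|
\leq
\big|f(g(x))-P_{k_1}(g(x))\big|
+
\big|P_{k_1}(g(x))-P_{k_1}(Q_{k_2}(x))\big|.
\]

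For the first term, the hypothesis $g(x)\in[-1,1]$ for every $x\in[-\Gamma,\Gamma]$ puts $g(x)$ inside the interval on which $P_{k_1}$ approximates $f$ uniformly. Evaluating that uniform bound at the point $y=g(x)$ gives at most $\varepsilon_f$.

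For the second term, both $g(x)$ and $Q_{k_2}(x)$ lie in $[-1,1]$ by hypothesis. The closed segment joining them is therefore contained in $[-1,1]$. Since $P_{k_1}$ is a polynomial, it is differentiable, so the mean value theorem applies on this segment. It yields some $\xi\in[-1,1]$ with
\[
\big|P_{k_1}(g(x))-P_{k_1}(Q_{k_2}(x))\big| = |P_{k_1}'(\xi)|\,|g(x)-Q_{k_2}(x)| \leq L_P\,\varepsilon_g.
\]
Here the last inequality uses the definition of $L_P$ and the uniform approximation of $g$ by $Q_{k_2}$ on $[-\Gamma,\Gamma]$. This step is the only place where the condition $Q_{k_2}(x)\in[-1,1]$ is needed. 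Without it, the segment could leave $[-1,1]$, and $L_P$ would not control the derivative there.

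Adding the two bounds and taking the supremum over $x\in[-\Gamma,\Gamma]$ gives the stated bound $\varepsilon_f+L_P\,\varepsilon_g$. The degree statement is a separate, elementary fact: composing a degree-$k_1$ polynomial with a degree-$k_2$ polynomial produces a polynomial of degree at most $k_1k_2$, and exactly $k_1k_2$ when both leading coefficients are nonzero. I do not expect a real obstacle; the only point to check carefully is that the mean value theorem is applied on a segment lying inside $[-1,1]$, which the two range hypotheses guarantee.
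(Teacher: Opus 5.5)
Your proposal is correct and follows essentially the same route as the paper's proof: insert $P_{k_1}(g(x))$, bound the first difference by $\varepsilon_f$ using $g(x)\in[-1,1]$, and bound the second by $L_P\,\varepsilon_g$ using that both $g(x)$ and $Q_{k_2}(x)$ lie in $[-1,1]$. The only differences are cosmetic. You make the mean value theorem step explicit where the paper simply invokes the Lipschitz constant, and you correctly note that the composite degree is at most $k_1k_2$, whereas the paper calls it exactly $k_1k_2$.
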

\begin{proof}
The fact that $P_{k_1}\circ Q_{k_2}$ has degree $k_1k_2$ is trivial.
To prove the error bound, observe that
\begin{align*}
&\sup_{x\in[-\Gamma,\Gamma]}\left|\big(f\circ g\big)(x)-\big(P_{k_1}\circ Q_{k_2}\big)(x)\right|\\
&\leq \sup_{x\in[-\Gamma,\Gamma]}\left|f\left(g(x)\right)-P_{k_1}\left(g(x)\right)\right|
+ \sup_{x\in[-\Gamma,\Gamma]}\left|P_{k_1}\left(g(x)\right)-P_{k_1}\left(Q_{k_2}(x)\right)\right|\\
&\leq \sup_{g(x)\in[-1,1]}\left|f\left(g(x)\right)-P_{k_1}\left(g(x)\right)\right|
+ L_P\sup_{x\in[-\Gamma,\Gamma]}\left|g(x)-Q_{k_2}(x)\right|\\
&\leq \varepsilon_f
+ L_P\, \varepsilon_g. \qedhere
\end{align*}
\end{proof}
Though it is not assumed in the following proposition, we can think of $\varkappa$ as the variable into which we want to plug the value $\norm{\vec{x}}_2$. 
\begin{proposition}\label{poly-approx-Hk-to-h}
Let $c\in\mathbb{R}\backslash\{0\}$ and $\Xi>0$ be fixed constants.
Suppose we are given $\varkappa$ such that $0<2\Xi\leq\varkappa$.
Then, for any $\varepsilon\in\left(0,2e^{|c|(2\Xi)}\right]$, we can construct a degree-$k$ polynomial $\mathcal{H}_k^{(\varkappa)}$, where $k=O(\varkappa\log^2\left(\frac{\varkappa}{\varepsilon}\right))$, satisfying
\begin{gather}
\sup_{\zeta\in \left[-\frac{\Xi}{\varkappa},\frac{\Xi}{\varkappa}\right]}\left|
\mathcal{H}_k^{(\varkappa)}(\zeta)-e^{c\varkappa\cdot\zeta}
\right|
\leq \varepsilon, \label{approx-precision-claim}
\shortintertext{and}
\gamma_{\mathcal{H}_k^{(\varkappa)}}\coloneqq\sup_{\zeta\in[-1,1]}\left|
\mathcal{H}_k^{(\varkappa)}(\zeta)
\right|
\leq 2e^{|c|(2\Xi)} =O(1). \label{gamma-H-upp-bnd-claim}
\end{gather}
\end{proposition}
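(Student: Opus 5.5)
We will build $\mathcal{H}_k^{(\varkappa)}$ as a composition $P_{k_1}\circ P_{\mathrm{lin},\Gamma,n}$ and apply \autoref{composite-poly-lem}. The outer polynomial is a Taylor polynomial of a fixed exponential. The inner polynomial is the linear-amplification polynomial of \autoref{poly-lin-lc17-1}, which is nearly linear on the small interval $[-\Xi/\varkappa,\Xi/\varkappa]$ and bounded by $1$ on $[-1,1]$.

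\textbf{Inner polynomial.} Set $\Gamma\coloneqq \Xi/\varkappa$; the hypothesis $\varkappa\geq 2\Xi$ gives $\Gamma\in(0,\frac12]$, so \autoref{poly-lin-lc17-1} applies. On $[-\Gamma,\Gamma]$ the polynomial $P_{\mathrm{lin},\Gamma,n}(\zeta)$ approximates $g(\zeta)\coloneqq \zeta/(2\Gamma)=\varkappa\zeta/(2\Xi)$. By \eqref{ineq-for-p-lin}, the error is at most $\varepsilon_{\mathrm{lin}}|\zeta|/(2\Gamma)\leq \varepsilon_{\mathrm{lin}}/2\eqqcolon\varepsilon_g$. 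Moreover $g(\zeta)\in[-\frac12,\frac12]$ there, and by \eqref{bound-for-p-lin} $|P_{\mathrm{lin},\Gamma,n}|\leq 1$ on all of $[-1,1]$. Hence the range hypotheses of \autoref{composite-poly-lem} hold.

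\textbf{Outer polynomial.} Write $e^{c\varkappa\zeta}=f(g(\zeta))$ with $f(y)\coloneqq e^{2c\Xi y}$. By \autoref{poly-degree-ax-3} with $a=e^{2c\Xi}$, the Taylor polynomial $P_{k_1}$ with $k_1=\max\{\ceil{2e^2|c|\Xi},\ceil{\log(2/\varepsilon)}\}=O(\log(1/\varepsilon))$ approximates $f$ on $[-1,1]$ within $\varepsilon_f\coloneqq\varepsilon/2$. Differentiating termwise,
\[
|P_{k_1}'(y)|\leq 2|c|\Xi\sum_{m\geq1}\frac{(2|c|\Xi)^{m-1}}{(m-1)!}=2|c|\Xi e^{2|c|\Xi}\eqqcolon L_P=O(1).
\]

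\textbf{Error and degree.} Choose $\varepsilon_{\mathrm{lin}}\coloneqq\min\{\varepsilon/L_P,\,c_0\Gamma\}$, where $c_0$ is the constant implicit in the admissibility condition $\varepsilon_{\mathrm{lin}}\in(0,O(\Gamma)]$ of \autoref{poly-lin-lc17-1}. Then \autoref{composite-poly-lem} gives error at most $\varepsilon/2+L_P\varepsilon_{\mathrm{lin}}/2\leq\varepsilon$ on $[-\Xi/\varkappa,\Xi/\varkappa]$, which is \eqref{approx-precision-claim}. The inner degree is
\[
n=O\bigl(\Gamma^{-1}\log(1/\varepsilon_{\mathrm{lin}})\bigr)=O\bigl(\varkappa\log(\varkappa/\varepsilon)\bigr),
\]
since $\log(1/\varepsilon_{\mathrm{lin}})=O(\log(1/\varepsilon)+\log\varkappa)$ with $\Xi,c$ fixed. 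The total degree is therefore
\[
k=k_1n=O\bigl(\varkappa\log^2(\varkappa/\varepsilon)\bigr),
\]
where $\log(1/\varepsilon)\lesssim\log(\varkappa/\varepsilon)$ because $\varkappa\geq2\Xi$ is bounded below by a constant.

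\textbf{Boundedness.} Since $|P_{\mathrm{lin},\Gamma,n}|\leq 1$ on $[-1,1]$, we have
\[
\gamma_{\mathcal{H}}\leq\sup_{|y|\leq1}|P_{k_1}(y)|\leq e^{2|c|\Xi}+\varepsilon_f .
\]
Here we use that the Taylor error bound of \autoref{poly-degree-ax-3} holds on the whole interval $[-1,1]$. The assumption $\varepsilon\leq 2e^{2|c|\Xi}$ gives $\varepsilon_f\leq e^{2|c|\Xi}$, which yields \eqref{gamma-H-upp-bnd-claim}.

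\textbf{Main obstacle.} The delicate step is the parameter bookkeeping for the inner polynomial. One must respect the constraint $\varepsilon_{\mathrm{lin}}=O(\Gamma)$ (hence the $\min$) and check that the $\log\varkappa$ and $\log(1/\varepsilon)$ factors combine into the claimed $\log^2(\varkappa/\varepsilon)$ rather than a worse power. The requirement that the inner polynomial lies in $[-1,1]$ on the whole of $[-1,1]$, not only on $[-\Gamma,\Gamma]$, is exactly what makes the $N$-independent bound on $\gamma_{\mathcal{H}}$ work.
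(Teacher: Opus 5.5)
Your proposal is correct and matches the paper's proof essentially step for step. It uses the same factorisation $e^{c\varkappa\zeta}=f(g(\zeta))$ with $f(y)=e^{2c\Xi y}$ and $g(\zeta)=\varkappa\zeta/(2\Xi)$, and the same Taylor polynomial of \autoref{poly-degree-ax-3} (with $\varepsilon_f=\varepsilon/2$) composed with $P_{\mathrm{lin},\Xi/\varkappa,n}$ through \autoref{composite-poly-lem}. It also has the same Lipschitz bound $L_P\le 2|c|\Xi e^{2|c|\Xi}$ and the same estimate $\gamma_{\mathcal{H}}\le\varepsilon_f+e^{2|c|\Xi}$.

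The only difference is bookkeeping for the inner polynomial. The paper sets $\varepsilon_{\mathrm{lin}}=2\Gamma\varepsilon_g$ with $\varepsilon_g=\varepsilon/(2|c|\Xi e^{2|c|\Xi})$, which is automatically $O(\Gamma)$ because $\varepsilon$ is bounded. You instead take a minimum with $c_0\Gamma$ to enforce admissibility explicitly. Both give $\log(1/\varepsilon_{\mathrm{lin}})=O(\log(\varkappa/\varepsilon))$ and hence the same degree bound.
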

\begin{proof}
Denote 
$f(x)\coloneqq e^{c(2\Xi) x}$ 
and 
$g(x)\coloneqq \frac{\varkappa}{2\Xi} x$.
We have that $(f \circ g)(x) = e^{c\varkappa\cdot x}$ is the target function we want to approximate.
Let
$\varepsilon_f\coloneqq \varepsilon/2$
and
$\varepsilon_g\coloneqq \varepsilon/(|c|(2\Xi) e^{|c|(2\Xi)})$.
By \autoref{poly-degree-ax-3},
$P_{k_1}(x)\equiv\sum_{n=0}^{k_1} {\left(c(2\Xi)\right)^n}x^n/{n!}$, where $k_1=\max\left\{\ceil*{e^2\cdot |c|(2\Xi)},\ceil*{\log\left({1}/{\varepsilon_f}\right)}\right\}$, can approximate $f$ uniformly on $[-1,1]$ to within $\varepsilon_f$.
Besides, according to \autoref{poly-lin-lc17-1}, with $\Gamma\coloneqq {\Xi}/{\varkappa}\in\left(0,1/2\right]$, $\varepsilon_{\mathrm{lin}}\coloneqq \varepsilon_g \cdot(2\Gamma)=\varepsilon_g \cdot(2\Xi/\varkappa)\in\left(0,O(\Gamma)\right]$, and $k_2=O\left(\Gamma^{-1}\log\left(\frac{1}{\varepsilon_{\mathrm{lin}}}\right)\right)=O\left(\varkappa\log\left(\frac{\varkappa}{\varepsilon_g}\right)\right)$,
$Q_{k_2}\equiv P_{\mathrm{lin},\frac{\Xi}{\varkappa},k_2}$ approximates $g$ uniformly on $\left[-\frac{\Xi}{\varkappa},\frac{\Xi}{\varkappa}\right]$ to within\footnote{Note that since \eqref{ineq-for-p-lin} holds for $|x|\leq \Gamma\in\left(0,\frac12\right]$, it is true that $\left|P_{\mathrm{lin},\Gamma,k_2}(x)-\frac{x}{2\Gamma}\right|
\leq
\frac{\varepsilon_{\mathrm{lin}} |x|}{2\Gamma}
=
\varepsilon_g |x|
\leq
\frac{\varepsilon_g}{2}$.} 
$\varepsilon_g/2$.
Note that we have
\[
L_P\coloneqq \sup_{x\in[-1,1]}\left|P_{k_1}'(x)\right|
\leq 
|c|(2\Xi) \sum_{n=0}^{k_1-1} \frac{\left(|c|(2\Xi)\right)^n}{n!}\leq |c|(2\Xi) e^{|c|(2\Xi)},
\]
and
${}^\forall x\in \left[-\frac{\Xi}{\varkappa},\frac{\Xi}{\varkappa}\right]$, $|g(x)|\leq 1$ by definition and $|Q_{k_2}(x)|\leq 1$ by \eqref{bound-for-p-lin}
as
$\left[-\frac{\Xi}{\varkappa},\frac{\Xi}{\varkappa}\right]
\subset
[-1,1]$.
Therefore,
by defining $\mathcal{H}_k^{(\varkappa)}\coloneqq P_{k_1}\circ Q_{k_2}$,
the fact that $\mathcal{H}_k^{(\varkappa)}$ has maximal degree $k=k_1k_2=O(\varkappa\log^2\left(\frac{\varkappa}{\varepsilon}\right))$ follows straightforwardly,
and we can use \autoref{composite-poly-lem} to show that
\begin{align*}
\sup_{x\in\left[-\frac{\Xi}{\varkappa},\frac{\Xi}{\varkappa}\right]}\left|\big(f\circ g\big)(x)-\big(P_{k_1}\circ Q_{k_2}\big)(x)\right|
\leq
\varepsilon_f+L_P\, \frac{\varepsilon_g}{2}
\leq 
\frac{\varepsilon}{2}+|c|(2\Xi) e^{|c|(2\Xi)}
\left(
\frac{(\varepsilon/2)}{|c|(2\Xi) e^{|c|(2\Xi)}}
\right)
=\varepsilon.
\end{align*}
This proves \eqref{approx-precision-claim}.
To prove \eqref{gamma-H-upp-bnd-claim}, denote $\gamma_f\coloneqq \sup_{x\in[-1,1]}|f(x)|=e^{|c|(2\Xi)}$.
Recall that $\varepsilon_f\equiv\varepsilon/2\leq e^{|c|(2\Xi)}$ holds by definition of $\varepsilon$, and
from \eqref{bound-for-p-lin},
$\sup_{x\in[-1,1]}|Q_{k_2}(x)|\leq 1$,
so
\begin{align*}
\sup_{x\in[-1,1]}\left|
\mathcal{H}_k^{(\varkappa)}(x)
\right|
&=
\sup_{x\in[-1,1]}\Big|
\left(P_{k_1}\circ Q_{k_2}\right) (x)
-\left(f\circ Q_{k_2}\right) (x)
+\left(f\circ Q_{k_2}\right) (x)
\Big|\\
&\leq 
\sup_{Q_{k_2}(x)\in[-1,1]}\left|
f\left(Q_{k_2}(x)\right)
-P_{k_1}\left(Q_{k_2}(x)\right)
\right|
+
\sup_{Q_{k_2}(x)\in[-1,1]}\left|
f\left(Q_{k_2}(x)\right)
\right|\\
&\leq
\varepsilon_f+\gamma_f\\
&\leq
2e^{|c|(2\Xi)}. \qedhere
\end{align*}
\end{proof}
\begin{proposition}\label{Hk-using-estimate}
Let $c\in\mathbb{R}\backslash\{0\}$ be a fixed constant.
Let $\Sigma$ be a matrix satisfying the same assumption as \autoref{main-thm-for-gen-mat}, and suppose that we have access to the unitaries $U_{\Sigma}$ and $U_{\ket{z}}$ in \autoref{main-thm-for-gen-mat}.
Suppose further\footnote{See \autoref{rem-z-chi-dist}.} that 
we know $\norm{\vec{z}}_2$.
Denote $\ket{x}=\widetilde{\Sigma}^{1/2}\ket{z}/\lVert \widetilde{\Sigma}^{1/2}\ket{z}\rVert_2=\vec{x}/\norm{\vec{x}}_2$ for $\widetilde{\Sigma}\coloneqq \Sigma/\widetilde{\lambda}_{\max}$ and $\vec{x}\equiv\norm{\vec{z}}_2\Sigma^{1/2}\ket{z}=\Sigma^{1/2}\vec{z}$ whose components are unknown.
Suppose there exists a known constant $\Xi=\Theta(1)$ from \autoref{assump-inf-norm-bound-for-vec-x} such that $0<\norm{\vec{x}}_{\infty}\leq \Xi$ and $4\Xi\leq\norm{\vec{x}}_2$ hold. 
For any $\varepsilon_{\mathcal H_k}\in\left(0,e^{|c|(2\Xi)}\cdot\min\left\{4,|c|(2\Xi)\right\}\right]$,
if the block-encoding error $\varepsilon_{U_\Sigma}$ of $U_{\Sigma}$ satisfies
\[
\frac{\varepsilon_{U_\Sigma}}{\widetilde{\lambda}_{\max}}=o\left(\varepsilon_{\mathcal H_k}\cdot (L\widetilde{\kappa})^{-1.5} \cdot \log^{-3}\left(\frac{L\widetilde{\kappa}}{\varepsilon_{\mathcal H_k}}\right)\right),
\tagaligneq \label{cond-for-using-chakraborty-et-al-poly-construct1}
\]
then we can construct a degree-$k$ polynomial $\mathcal{H}^{(\widetilde{\eta})}_k$, where $k=O\left(\norm{\vec{x}}_2\log^2\left(\frac{\norm{\vec{x}}_2}{\varepsilon_{\mathcal H_k}}\right)\right)$, satisfying
\begin{gather}
\sup_{\zeta\in \left[-\frac{\Xi}{\norm{\vec{x}}_2},\frac{\Xi}{\norm{\vec{x}}_2}\right]}\left|
\mathcal{H}^{(\widetilde{\eta})}_k(\zeta)-e^{c\norm{\vec{x}}_2\cdot\zeta}
\right|
\leq \varepsilon_{\mathcal H_k}, \label{approx-precision-claim-w-eta}
\shortintertext{and}
\gamma_{\mathcal{H}^{(\widetilde{\eta})}_k}\coloneqq\sup_{\zeta\in[-1,1]}\left|
\mathcal{H}^{(\widetilde{\eta})}_k(\zeta)
\right|
\leq 2e^{|c|(2\Xi)} =O(1), \label{gamma-H-upp-bnd-claim-w-eta}
\end{gather}
where $\widetilde{\eta}$ is obtainable via QAE and satisfies $\widetilde{\eta}/\norm{\vec{x}}_2\in\left[\frac12,1\right]$. 
The QAE procedure that outputs such $\widetilde{\eta}$ with probability at least $0.99$ uses
\[
O
\left(
\frac{\sqrt{L\widetilde{\kappa}}}{\varepsilon_{\mathcal H_k}}
\left(
\widetilde{q}
+
n
+
\frac{\alpha_{U_\Sigma}}{\widetilde{\lambda}_{\max}}
\,L\widetilde{\kappa}\,\left(a_{U_\Sigma}+T_{U_\Sigma}\right)\log^2\left(\frac{L\widetilde{\kappa}}{\varepsilon_{\mathcal H_k}}\right)
+T_{U_{\ket{z}}}
\right)
\right) \tagaligneq \label{qae-cost-for-Hk-eta-tilde-abstract}
\]
elementary gates, where $\mathbb{Z}^{+}\ni\widetilde{q}=a_{U_\Sigma}+O\left(\log\log(L\widetilde{\kappa}/{\varepsilon_{\mathcal H_k}})\right)$, and an ancillary register of $\widetilde{q}+2$ qubits.
In particular, if \autoref{assump-plus-TU-ket-y-log2-N}, \autoref{assump-QROM-like-unitary}, and \autoref{assump-LK-plus} hold, we have $\varepsilon_{U_\Sigma}=0$ and condition \eqref{cond-for-using-chakraborty-et-al-poly-construct1} is satisfied automatically. Under such a circumstance, the QAE circuit requires
\[
O\left(\frac{1}{\varepsilon_{\mathcal H_k}}
\frac{\lVert \Sigma \rVert_F}{\lambda_{\max}}\,\kappa^{1.5}\,\polylog(N)
\log^2\left(\frac{\kappa}{\varepsilon_{\mathcal H_k}}\right)
\right) \tagaligneq \label{qae-cost-for-Hk-eta-tilde-concrete}
\]
elementary gate depth
and an ancillary register of $\widetilde{q}+2=O(\log(N)+\log\log(\kappa/\varepsilon_{\mathcal H_k}))$ qubits.
\end{proposition}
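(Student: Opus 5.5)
The plan is to reduce the statement to \autoref{poly-approx-Hk-to-h}, using a QAE estimate of $\norm{\vec{x}}_2$ in place of the unknown exact value. I would first invoke \autoref{qae-w-rel-err-est} with a constant $C$ to be fixed later, of order $C=\Theta\left(|c|\,\Xi\, e^{|c|(2\Xi)}/\varepsilon_{\mathcal H_k}\right)$. This produces, with probability at least $0.99$, an estimate $\widehat{\eta}$ with $|\widehat{\eta}-\norm{\vec{x}}_2|\leq \norm{\vec{x}}_2/C$. Here we use that $\norm{\vec{z}}_2$ is known classically and that $\norm{\vec{z}}_2\sqrt{\widetilde{\lambda}_{\max}}\,\lVert\widetilde{\Sigma}^{1/2}\ket{z}\rVert=\norm{\vec{x}}_2$. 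I then set $\widetilde{\eta}\coloneqq\widehat{\eta}/(1+1/C)$. Since $C\geq 3$ is guaranteed by the admissible range of $\varepsilon_{\mathcal H_k}$, this gives $\widetilde{\eta}\leq\norm{\vec{x}}_2$ and $\widetilde{\eta}\geq \norm{\vec{x}}_2(1-1/C)/(1+1/C)\geq\norm{\vec{x}}_2/2$, so $\widetilde{\eta}/\norm{\vec{x}}_2\in[\frac12,1]$. The cost \eqref{qae-cost-for-Hk-eta-tilde-abstract} follows by substituting $C=\Theta(1/\varepsilon_{\mathcal H_k})$ into \autoref{qae-w-rel-err-est}, i.e.\ into \eqref{qae-cost-thm-1st-abstract-cost} with $\varepsilon_{\mathrm{QAE}}=1/(C\sqrt{L\widetilde{\kappa}})$. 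The concrete depth \eqref{qae-cost-for-Hk-eta-tilde-concrete} follows as in \autoref{main-thm-gen-mat-concrete-cost}. Condition \eqref{cond-for-using-chakraborty-et-al-poly-construct1} is what \eqref{cond-for-using-chakraborty-et-al-qae-prop1} becomes under this choice of $\varepsilon_{\mathrm{QAE}}$.

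Next I would build the polynomial by applying \autoref{poly-approx-Hk-to-h} with $\varkappa\coloneqq\widetilde{\eta}$ and the known constant $\widetilde{c}\coloneqq c\,\widehat{\eta}/\widetilde{\eta}=c(1+1/C)$, which satisfies $\widetilde{c}\varkappa=c\widehat{\eta}$, and with precision $\varepsilon_{\mathcal H_k}/2$. The hypothesis $2\Xi\leq\varkappa$ holds because $\widetilde{\eta}\geq\norm{\vec{x}}_2/2\geq 2\Xi$. The domain is covered because $\Xi/\norm{\vec{x}}_2\leq\Xi/\widetilde{\eta}$, so the sup over the smaller interval in \eqref{approx-precision-claim-w-eta} is dominated by the sup in \eqref{approx-precision-claim}. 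The degree is $k=O(\widetilde{\eta}\log^2(\widetilde{\eta}/\varepsilon))=O(\norm{\vec{x}}_2\log^2(\norm{\vec{x}}_2/\varepsilon_{\mathcal H_k}))$.

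It remains to compare $e^{c\widehat{\eta}\zeta}$ with the true target $e^{c\norm{\vec{x}}_2\zeta}$ for $|\zeta|\leq\Xi/\norm{\vec{x}}_2$. By the mean value theorem, the difference is at most $|c|\,|\widehat{\eta}-\norm{\vec{x}}_2|\,\frac{\Xi}{\norm{\vec{x}}_2}\,e^{|c|\Xi(1+1/C)}\leq |c|\Xi e^{|c|(2\Xi)}/C$. The choice of $C$ makes this at most $\varepsilon_{\mathcal H_k}/2$, and the triangle inequality then yields \eqref{approx-precision-claim-w-eta}.

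The main obstacle is the constant bookkeeping in \eqref{gamma-H-upp-bnd-claim-w-eta}. A naive application gives $\gamma\leq 2e^{|\widetilde{c}|(2\Xi)}$, which slightly exceeds $2e^{|c|(2\Xi)}$. My first attempt would avoid altering $c$: apply the proof of \autoref{poly-approx-Hk-to-h} directly with $f(x)=e^{c(2\Xi)x}$ unchanged, and absorb the factor $\widehat{\eta}/\widetilde{\eta}$ into the slope of the linear polynomial instead. Concretely, take $\Gamma=\Xi(1+1/C)/\widetilde{\eta}\leq 1/2$, so that $P_{\mathrm{lin}}$ still satisfies \eqref{bound-for-p-lin}. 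Then $\sup|Q_{k_2}|\leq1$ and the bound $\varepsilon_f+\gamma_f\leq 2e^{|c|(2\Xi)}$ goes through verbatim. The upper limit $\varepsilon_{\mathcal H_k}\leq e^{|c|(2\Xi)}\min\{4,|c|(2\Xi)\}$ is what I expect to use to keep $\varepsilon_f\leq e^{|c|(2\Xi)}$, $C\geq3$, and $\varepsilon_{\mathrm{lin}}=O(\Gamma)$ simultaneously valid.
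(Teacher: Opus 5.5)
The QAE step is sound: \autoref{qae-w-rel-err-est} with $C=\Theta(1/\varepsilon_{\mathcal H_k})$, i.e.\ $\varepsilon_{\mathrm{QAE}}=\Theta(\varepsilon_{\mathcal H_k}/\sqrt{L\widetilde{\kappa}})$. So are the multiplicative shift to a lower estimate $\widetilde{\eta}\in[\norm{\vec{x}}_2/2,\norm{\vec{x}}_2]$, the coverage argument and the cost bookkeeping, and they follow the paper's route (the paper shifts additively, $\widetilde{\eta}=\widehat{\eta}-\widehat{\varepsilon}$). Do fix the hidden constant in $C$ to be at least $6$, so that $C\geq 3$ genuinely follows from $\varepsilon_{\mathcal H_k}\leq 2|c|\Xi e^{2|c|\Xi}$.

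The gap is \eqref{gamma-H-upp-bnd-claim-w-eta}. Your first construction only gives $\gamma\leq 2e^{2|c|\Xi(1+1/C)}$. The exact constant matters downstream: with this bound, $\widehat{H}_k=H_k/(4e^{2|c|\Xi})$ would violate the hypothesis $\sup|\widehat{H}_k|\leq 1/2$ of \autoref{lem-qsvt}.

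Your repair does not work as written. Since $P_{\mathrm{lin},\Gamma}(x)\approx x/(2\Gamma)$, enlarging $\Gamma$ to $\Xi(1+1/C)/\widetilde{\eta}$ \emph{lowers} the slope to $\widetilde{\eta}/(2\Xi(1+1/C))$. The composite then targets $e^{c\widetilde{\eta}\zeta/(1+1/C)}$, not $e^{c\widehat{\eta}\zeta}$, so your mean-value estimate via $|\widehat{\eta}-\norm{\vec{x}}_2|\leq\norm{\vec{x}}_2/C$ is not the relevant one. Also, $\widetilde{\eta}\geq 2\Xi$ only yields $\Gamma\leq(1+1/C)/2$, not $\Gamma\leq 1/2$.

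The choice that really gives slope $\widehat{\eta}/(2\Xi)$ is $\Gamma=\Xi/\widehat{\eta}$. But its linear region need not contain $[-\Xi/\norm{\vec{x}}_2,\Xi/\norm{\vec{x}}_2]$, because $\widehat{\eta}$ may exceed $\norm{\vec{x}}_2$. In this composite construction the slope and the width of the linear region of $P_{\mathrm{lin},\Gamma}$ are reciprocally tied. So you cannot simultaneously target the raw rate $\widehat{\eta}$, guarantee coverage using only the known lower bound, and keep $\sup|f|=e^{2|c|\Xi}$.

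The fix is simply not to aim at $\widehat{\eta}$. Apply \autoref{poly-approx-Hk-to-h} with $\varkappa=\widetilde{\eta}$ and the \emph{original} $c$. The polynomial then approximates $e^{c\widetilde{\eta}\zeta}$ on $[-\Xi/\widetilde{\eta},\Xi/\widetilde{\eta}]\supseteq[-\Xi/\norm{\vec{x}}_2,\Xi/\norm{\vec{x}}_2]$, and \eqref{gamma-H-upp-bnd-claim} gives $\gamma\leq 2e^{2|c|\Xi}$ verbatim. Handle the remaining rate mismatch with the mean value theorem, comparing $e^{c\widetilde{\eta}\zeta}$ with $e^{c\norm{\vec{x}}_2\zeta}$ and using $0\leq\norm{\vec{x}}_2-\widetilde{\eta}\leq 2\norm{\vec{x}}_2/(C+1)$. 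This costs only a factor $2$ in $C$, and it is exactly the paper's argument.
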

\begin{proof}
Set $\varepsilon_{\mathrm{QAE}}\equiv \frac{\varepsilon_{\mathcal H_k}}{4}\frac{1}{\sqrt{L\widetilde{\kappa}}}\frac{e^{-|c|(2\Xi)}}{|c|(2\Xi)}=\Theta\left({\varepsilon_{\mathcal H_k}}/{\sqrt{L\widetilde{\kappa}}}\right)$.
To use \autoref{qae-for-main-thm}, we need that \eqref{cond-for-using-chakraborty-et-al-qae-prop1} holds, which is already assumed in \eqref{cond-for-using-chakraborty-et-al-poly-construct1}. Therefore, by using \autoref{qae-for-main-thm}, we can output $\eta$ that approximates $\norm{\widetilde{\Sigma}^{1/2}\ket{z}}_2=\norm{\frac{\Sigma^{1/2}}{\sqrt{\widetilde{\lambda}_{\max}}}\frac{\vec{z}}{\norm{\vec{z}}_2}}_2=\norm{\vec{x}}_2/\big(\norm{\vec{z}}_2\sqrt{\widetilde{\lambda}_{\max}}\big)$ to within $\varepsilon_{\mathrm{QAE}}$, with resource requirements \eqref{qae-cost-for-Hk-eta-tilde-abstract} or \eqref{qae-cost-for-Hk-eta-tilde-concrete} as claimed.
Denote
$\widehat{\eta}\coloneqq \norm{\vec{z}}_2\sqrt{\widetilde{\lambda}_{\max}}\cdot\eta$
and
$\widehat{\varepsilon}\equiv \norm{\vec{z}}_2\sqrt{\widetilde{\lambda}_{\max}}\ \varepsilon_{\mathrm{QAE}}$.
By the assumption $\varepsilon_{\mathcal H_k}\leq |c|(2\Xi)e^{|c|(2\Xi)}$, we have $\varepsilon_{\mathrm{QAE}}\leq 1/(4\sqrt{L\widetilde{\kappa}})$.
Hence,
\autoref{epsilon-QAE-implies-epsilon-hat} applies (with $C= 4$), and we have $\left|\widehat{\eta}-\norm{\vec{x}}_2\right|\leq\widehat{\varepsilon}\leq\norm{\vec{x}}_2/4$.
Define $\widetilde{\eta}\coloneqq \widehat{\eta}-\widehat{\varepsilon}$. 
Then, $\norm{\vec{x}}_2-\widehat{\varepsilon}
\leq
\widehat{\eta}
\leq
\norm{\vec{x}}_2+\widehat{\varepsilon}$ 
implies 
$\norm{\vec{x}}_2-2\widehat{\varepsilon}
\leq
\widetilde{\eta}
\leq
\norm{\vec{x}}_2$, which further implies
$
\frac{\norm{\vec{x}}_2}{2}
\leq
\widetilde{\eta}
\leq
\norm{\vec{x}}_2
$ since $\widehat{\varepsilon}\leq\norm{\vec{x}}_2/4$.
This shows that 
$\widetilde{\eta}$ is a lower bound for $\norm{\vec{x}}_2$
and proves the claim
$\widetilde{\eta}/\norm{\vec{x}}_2\in\left[\frac12,1\right]$.
Now, as $2\Xi \leq \frac{\norm{\vec{x}}_2}{2}\leq \widetilde{\eta}$ and $\varepsilon_{\mathcal H_k}/2\leq 2e^{|c|(2\Xi)}$, we can apply \autoref{poly-approx-Hk-to-h} to construct a degree-$k$ polynomial $\mathcal{H}^{(\widetilde{\eta})}_k$ such that
\begin{gather}
\sup_{\zeta\in \left[-\frac{\Xi}{\widetilde{\eta}},\frac{\Xi}{\widetilde{\eta}}\right]}\left|
\mathcal{H}^{(\widetilde{\eta})}_k(\zeta)-e^{c\cdot\widetilde{\eta}\cdot\zeta}
\right|
\leq \frac{\varepsilon_{\mathcal H_k}}{2}, \label{approx-precision-claim-widetilde-eta}
\shortintertext{and}
\gamma_{\mathcal{H}^{(\widetilde{\eta})}_k}\coloneqq\sup_{\zeta\in[-1,1]}\left|
\mathcal{H}^{(\widetilde{\eta})}_k(\zeta)
\right|
\leq 2e^{|c|(2\Xi)} =O(1), \label{gamma-H-upp-bnd-claim-widetilde-eta}
\end{gather}
hold with degree $k=O\left(\widetilde{\eta}\log^2\left(\frac{\widetilde{\eta}}{\varepsilon_{\mathcal H_k}}\right)\right)=O\left(\norm{\vec{x}}_2\log^2\left(\frac{\norm{\vec{x}}_2}{\varepsilon_{\mathcal H_k}}\right)\right)$.
This proves \eqref{gamma-H-upp-bnd-claim-w-eta}.
To prove \eqref{approx-precision-claim-w-eta}, observe that
\begin{align*}
&\sup_{\zeta\in \left[-\frac{\Xi}{\widetilde{\eta}},\frac{\Xi}{\widetilde{\eta}}\right]}
\left|
\mathcal{H}^{(\widetilde{\eta})}_k(\zeta)-e^{c\norm{\vec{x}}_2\cdot\zeta}
\right|
\leq 
\sup_{\zeta\in \left[-\frac{\Xi}{\widetilde{\eta}},\frac{\Xi}{\widetilde{\eta}}\right]}
\left|
\mathcal{H}^{(\widetilde{\eta})}_k(\zeta)-e^{c \cdot\widetilde{\eta}\cdot\zeta}
\right|
+
\sup_{\zeta\in \left[-\frac{\Xi}{\widetilde{\eta}},\frac{\Xi}{\widetilde{\eta}}\right]}
\left|
e^{c \cdot\widetilde{\eta}\cdot\zeta}-e^{c\norm{\vec{x}}_2\cdot\zeta}
\right|, \tagaligneq \label{eval-bound-approx-err-Hk-eta-tilde}
\end{align*}
where the first term on the RHS is upper-bounded by $\varepsilon_{\mathcal H_k}/2$ due to \eqref{approx-precision-claim-widetilde-eta}. The rest is to show that the second term on the RHS is also bounded by $\varepsilon_{\mathcal H_k}/2$. 
To see this, observe that by the mean value theorem\footnote{See e.g.\ \cite[5.10 Theorem]{Rud76}.}, there exists some $\xi_{c,\zeta}\in
\Big(
\min\left\{c\widetilde{\eta}\zeta,c\norm{\vec{x}}_2\zeta\right\},
\max\left\{c\widetilde{\eta}\zeta,c\norm{\vec{x}}_2\zeta\right\}
\Big)\eqqcolon\mathcal{R}_{c,\zeta}$, 
such that the difference $\left|
e^{c \cdot\widetilde{\eta}\cdot\zeta}-e^{c\norm{\vec{x}}_2\cdot\zeta}
\right|
$
equals
$
\Big|
c
(\widetilde{\eta}-\norm{\vec{x}}_2)\cdot \zeta
\Big|
\cdot
\left|
e^{\xi_{c,\zeta}}
\right|
$. That is,
\begin{align*}
\sup_{\zeta\in \left[-\frac{\Xi}{\widetilde{\eta}},\frac{\Xi}{\widetilde{\eta}}\right]}
\left|
e^{c \cdot\widetilde{\eta}\cdot\zeta}-e^{c\norm{\vec{x}}_2\cdot\zeta}
\right|
&\leq 
\sup_{\zeta\in \left[-\frac{\Xi}{\widetilde{\eta}},\frac{\Xi}{\widetilde{\eta}}\right]}
\Big|
c
(\widetilde{\eta}-\norm{\vec{x}}_2)\cdot \zeta
\Big|
\cdot
\sup_{\zeta\in \left[-\frac{\Xi}{\widetilde{\eta}},\frac{\Xi}{\widetilde{\eta}}\right]}
\sup_{\xi_{c,\zeta}\in \mathcal{R}_{c,\zeta}}
\left| 
e^{\xi_{c,\zeta}}
\right|. \tagaligneq \label{eval-exp-widetilde-eta-vs-exp-norm-x}
\end{align*}
By monotonicity, ${}^{\forall}\xi_{c,\zeta}\in\mathcal{R}_{c,\zeta}$, $\left|e^{\xi_{c,\zeta}}\right| \leq \max\left\{e^{c\norm{\vec{x}}_2\cdot \zeta},e^{c \cdot \widetilde{\eta} \cdot \zeta}\right\}
\leq e^{|c|\max\left\{\norm{\vec{x}}_2,\widetilde{\eta}\right\} \cdot |\zeta|}
=e^{|c|\norm{\vec{x}}_2\cdot |\zeta|}$.
Since ${}^\forall \zeta\in \left[-\frac{\Xi}{\widetilde{\eta}},\frac{\Xi}{\widetilde{\eta}}\right]$, $|\zeta|\leq \frac{\Xi}{\widetilde{\eta}} \leq \frac{2\Xi}{\norm{\vec{x}}_2}$, we have ${\displaystyle \sup_{\zeta\in \left[-\frac{\Xi}{\widetilde{\eta}},\frac{\Xi}{\widetilde{\eta}}\right]}\sup_{\xi_{c,\zeta}\in \mathcal{R}_{c,\zeta}}}
\left| 
e^{\xi_{c,\zeta}}
\right| 
\leq
{\displaystyle
\sup_{\zeta\in \left[-\frac{\Xi}{\widetilde{\eta}},\frac{\Xi}{\widetilde{\eta}}\right]}
}
e^{|c|\norm{\vec{x}}_2\cdot |\zeta|}
\leq 
e^{|c|(2\Xi)}$ holds for all $c\in\mathbb{R}$.
Therefore,
\begin{align*}
\eqref{eval-exp-widetilde-eta-vs-exp-norm-x} 
&\leq
|c|\cdot2\widehat{\varepsilon} \cdot \frac{2\Xi}{\norm{\vec{x}}_2} \cdot e^{|c|(2\Xi)}
= 2 \varepsilon_{\mathrm{QAE}} \cdot \frac{\norm{\vec{z}}_2}{\norm{\vec{x}}_2}\sqrt{\widetilde{\lambda}_{\max}}
\cdot |c|(2\Xi)
\cdot e^{|c|(2\Xi)}
\equiv \frac{\varepsilon_{\mathcal H_k}}{2}\frac{\sqrt{\widetilde{\lambda}_{\max}}}{\sqrt{L\widetilde{\kappa}}} \frac{\norm{\vec{z}}_2}{\norm{\vec{x}}_2}
\leq \frac{\varepsilon_{\mathcal H_k}}{2},
\end{align*}
where the last inequality follows because $\frac{\sqrt{\widetilde{\lambda}_{\max}}}{\sqrt{L\widetilde{\kappa}}} \frac{\norm{\vec{z}}_2}{\norm{\vec{x}}_2}\leq1$ by the same caluculation as in the proof of \autoref{epsilon-QAE-implies-epsilon-hat}. Combining this with \eqref{eval-bound-approx-err-Hk-eta-tilde} and \eqref{approx-precision-claim-widetilde-eta}, we have shown that
\begin{align*}
&\sup_{\zeta\in \left[-\frac{\Xi}{\widetilde{\eta}},\frac{\Xi}{\widetilde{\eta}}\right]}
\left|
\mathcal{H}^{(\widetilde{\eta})}_k(\zeta)-e^{c\norm{\vec{x}}_2\cdot\zeta}
\right|
\leq 
\frac{\varepsilon_{\mathcal H_k}}2+\frac{\varepsilon_{\mathcal H_k}}2
=
\varepsilon_{\mathcal H_k}.
\end{align*}
Since $\widetilde{\eta}\leq \norm{\vec{x}}_2$, $\left[-\frac{\Xi}{\norm{\vec{x}}_2},\frac{\Xi}{\norm{\vec{x}}_2}\right]\subset \left[-\frac{\Xi}{\widetilde{\eta}},\frac{\Xi}{\widetilde{\eta}}\right]$, and hence the claim \eqref{approx-precision-claim-w-eta} is proved.
\end{proof}
The following corollary is provided to handle the case where $\vec{y}$ represents the cumulative sum of a correlated Gaussian vector.
Note that when $\vec{y}$ is the cumulative sum of increments of a Gaussian process, its law is identical to the law of the path values on the same grid, so we also have $\norm{\vec{y}}_{\infty} < \Xi$ with high probability.
\begin{corollary}\label{Hk-using-estimate-sum-of-inc}
Let $c\in\mathbb{R}\backslash\{0\}$ be a fixed constant.
Let $\Sigma$ be a matrix satisfying the same assumption as \autoref{main-thm-for-gen-mat} and $\ket{z}$ be as in \autoref{main-thm-for-gen-mat}.
Suppose that the assumptions about data loaders in polylogarithmic time (\autoref{assump-plus-TU-ket-y-log2-N} and \autoref{assump-QROM-like-unitary}), and the assumption about parameter estimation (\autoref{assump-LK-plus}) hold in the same way as
\autoref{main-thm-coro-concrete-cost-sum-of-inc}.
Suppose further\footnote{See \autoref{rem-z-chi-dist}.} that we know $\norm{\vec{z}}_2$.
Denote $\ket{y}=\mathcal{L}_N\widetilde{\Sigma}^{1/2}\ket{z}/\lVert \mathcal{L}_N\widetilde{\Sigma}^{1/2}\ket{z}\rVert_2=\vec{y}/\norm{\vec{y}}_2$, for $\widetilde{\Sigma}\coloneqq \Sigma/\widetilde{\lambda}_{\max}$ and  $\vec{y}\equiv\norm{\vec{z}}_2\mathcal{L}_N\Sigma^{1/2}\ket{z}=\mathcal{L}_N\Sigma^{1/2}\vec{z}$ whose components are unknown.
Suppose there exists a known constant $\Xi=\Theta(1)$ from \autoref{assump-inf-norm-bound-for-vec-x} such that $0<\norm{\vec{y}}_{\infty}\leq \Xi$ and $4\Xi\leq\norm{\vec{y}}_2$ hold.
Then, for any $\varepsilon_{\mathcal H_k}\in\left(0,e^{|c|(2\Xi)}\cdot\min\left\{4,|c|(2\Xi)\right\}\right]$, 
we can construct a degree-$k$ polynomial $\mathcal{H}^{(\widetilde{\eta})}_k$ that satisfies
\begin{gather*}
\sup_{\zeta\in \left[-\frac{\Xi}{\norm{\vec{y}}_2},\frac{\Xi}{\norm{\vec{y}}_2}\right]}\left|
\mathcal{H}^{(\widetilde{\eta})}_k(\zeta)-e^{c\norm{\vec{y}}_2\cdot\zeta}
\right|
\leq \varepsilon_{\mathcal H_k}, 
\shortintertext{and}
\gamma_{\mathcal{H}^{(\widetilde{\eta})}_k}\coloneqq\sup_{\zeta\in[-1,1]}\left|
\mathcal{H}^{(\widetilde{\eta})}_k(\zeta)
\right|
\leq 2e^{|c|(2\Xi)} =O(1),
\end{gather*}
where $\widetilde{\eta}$ is obtainable via QAE and satisfies $\widetilde{\eta}/\norm{\vec{y}}_2\in\left[\frac12,1\right]$, and the polynomial degree is $k=O\left(\norm{\vec{y}}_2\log^2\left(\frac{\norm{\vec{y}}_2}{\varepsilon_{\mathcal H_k}}\right)\right)$. 
The QAE procedure that outputs such $\widetilde{\eta}$ with probability at least $0.99$ has
\[
O\left(\frac{1}{\varepsilon_{\mathcal H_k}}
\frac{\lVert \Sigma \rVert_F}{\lambda_{\max}}\,\kappa^{1.5}\,N\polylog(N)
\log^2\left(\frac{\kappa}{\varepsilon_{\mathcal H_k}}\right)
\right) \tagaligneq \label{qae-cost-for-Hk-eta-tilde-concrete-sum-of-inc-case}
\]
elementary gate depth, and an ancillary register of $\widetilde{q}+2=O(\log(N)+\log\log(\kappa/\varepsilon_{\mathcal H_k}))$ qubits is used.
\end{corollary}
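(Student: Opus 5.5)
The plan is to mirror the proof of \autoref{Hk-using-estimate} almost verbatim, replacing $\vec{x}$ by $\vec{y}=\mathcal{L}_N\Sigma^{1/2}\vec{z}$. The QAE call to \autoref{qae-for-main-thm} becomes a call to \autoref{qae-for-main-thm-sum-of-inc-case}. Set
\[
\varepsilon_{\mathrm{QAE}}\coloneqq \frac{\varepsilon_{\mathcal H_k}}{8}\frac{1}{\sqrt{L\widetilde{\kappa}}}\frac{e^{-|c|(2\Xi)}}{|c|(2\Xi)}=\Theta\bigl(\varepsilon_{\mathcal H_k}/\sqrt{L\widetilde{\kappa}}\bigr).
\]
The extra factor $1/2$ compared with \autoref{Hk-using-estimate} absorbs the factor $2$ in the bound $\norm{\vec{z}}/\norm{\vec{y}}\leq 2/\sqrt{\lambda_{\min}}$ from \eqref{norm-y-norm-z-ratio-ineq}. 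Since $\varepsilon_U=0$ under \autoref{assump-QROM-like-unitary}, \autoref{qae-for-main-thm-sum-of-inc-case} outputs $\eta$ with $|\eta-\lVert\mathcal{L}_N\widetilde{\Sigma}^{1/2}\ket{z}\rVert|\leq\varepsilon_{\mathrm{QAE}}$ with probability $0.99$. Plugging $1/\varepsilon_{\mathrm{QAE}}=O(\sqrt{\kappa}\,\polylog(N)/\varepsilon_{\mathcal H_k})$ under \autoref{assump-LK-plus} into \eqref{qae-cost-thm-2nd-concrete-cost} gives \eqref{qae-cost-for-Hk-eta-tilde-concrete-sum-of-inc-case}. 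The ancilla count is inherited directly.

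Next I rescale and shift. Put $\widehat{\eta}\coloneqq\norm{\vec{z}}_2\sqrt{\widetilde{\lambda}_{\max}}\,\eta$ and $\widehat{\varepsilon}\coloneqq\norm{\vec{z}}_2\sqrt{\widetilde{\lambda}_{\max}}\,\varepsilon_{\mathrm{QAE}}$, using $\norm{\vec{z}}_2\sqrt{\widetilde{\lambda}_{\max}}\lVert\mathcal{L}_N\widetilde{\Sigma}^{1/2}\ket{z}\rVert=\norm{\vec{y}}$. The relative-error step, the analogue of \autoref{epsilon-QAE-implies-epsilon-hat}, goes as in the proof of \autoref{qae-for-norm-y}. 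Using $\sqrt{\widetilde{\lambda}_{\max}/(L\widetilde{\kappa})}\leq\sqrt{\lambda_{\min}}$ (\autoref{bounds-for-estimate-for-lamb-min}) together with \eqref{norm-y-norm-z-ratio-ineq}, we get
\[
\widehat{\varepsilon}\leq 2\sqrt{L\widetilde{\kappa}}\,\varepsilon_{\mathrm{QAE}}\norm{\vec{y}}\leq \frac{\varepsilon_{\mathcal H_k}}{4|c|(2\Xi)e^{|c|(2\Xi)}}\norm{\vec{y}}\leq \frac{\norm{\vec{y}}}{4},
\]
where the last step uses $\varepsilon_{\mathcal H_k}\leq |c|(2\Xi)e^{|c|(2\Xi)}$. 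Defining $\widetilde{\eta}\coloneqq\widehat{\eta}-\widehat{\varepsilon}$ then gives $\norm{\vec{y}}/2\leq\widetilde{\eta}\leq\norm{\vec{y}}$.

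Since $2\Xi\leq\norm{\vec{y}}/2\leq\widetilde{\eta}$ and $\varepsilon_{\mathcal H_k}/2\leq 2e^{|c|(2\Xi)}$, \autoref{poly-approx-Hk-to-h} with $\varkappa=\widetilde{\eta}$ and precision $\varepsilon_{\mathcal H_k}/2$ yields $\mathcal{H}^{(\widetilde{\eta})}_k$. It has degree $O(\norm{\vec{y}}\log^2(\norm{\vec{y}}/\varepsilon_{\mathcal H_k}))$ and satisfies $\gamma\leq 2e^{|c|(2\Xi)}$, which is the second claim. For the approximation claim I split the error on $[-\Xi/\widetilde{\eta},\Xi/\widetilde{\eta}]$ into two terms. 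The first is the polynomial error, at most $\varepsilon_{\mathcal H_k}/2$. The second is $|e^{c\widetilde{\eta}\zeta}-e^{c\norm{\vec{y}}\zeta}|$, which I bound by the mean value theorem:
\[
|c|\cdot 2\widehat{\varepsilon}\cdot\frac{2\Xi}{\norm{\vec{y}}}\cdot e^{|c|(2\Xi)}\leq \frac{\varepsilon_{\mathcal H_k}}{2},
\]
using the same ratio bound $\norm{\vec{z}}\sqrt{\widetilde{\lambda}_{\max}}/(\sqrt{L\widetilde{\kappa}}\norm{\vec{y}})\leq 2$. Finally, $\widetilde{\eta}\leq\norm{\vec{y}}$ gives the interval inclusion $[-\Xi/\norm{\vec{y}},\Xi/\norm{\vec{y}}]\subset[-\Xi/\widetilde{\eta},\Xi/\widetilde{\eta}]$, which finishes the proof.

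The only real obstacle is the constant bookkeeping from the factor $2$ in \eqref{norm-y-norm-z-ratio-ineq} versus the plain bound of \autoref{norm-sqrt-sigma-times-ket-z}. This factor must be tracked both in the relative-error condition $\widehat{\varepsilon}\leq\norm{\vec{y}}/4$ and in the mean-value step, and it is handled by the extra constant in $\varepsilon_{\mathrm{QAE}}$. It leaves the asymptotic cost unchanged.
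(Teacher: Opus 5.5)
Your proposal is correct and follows the paper's proof essentially step for step. It uses the same choice $\varepsilon_{\mathrm{QAE}}=\frac{\varepsilon_{\mathcal H_k}}{8}\frac{1}{\sqrt{L\widetilde{\kappa}}}\frac{e^{-|c|(2\Xi)}}{|c|(2\Xi)}$, the call to \autoref{qae-for-main-thm-sum-of-inc-case}, the relative-error bound via \eqref{norm-y-norm-z-ratio-ineq}, the shift $\widetilde{\eta}=\widehat{\eta}-\widehat{\varepsilon}$, and the two-term split with the mean value theorem. Your explicit constant tracking, in which the factor $2$ from $\norm{\vec{z}}/\norm{\vec{y}}\leq 2/\sqrt{\lambda_{\min}}$ cancels against the $1/8$, spells out what the paper compresses into ``the current definition of $\varepsilon_{\mathrm{QAE}}$ already makes up for the factor $2$'', and it checks out.
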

\begin{proof}
Set
$\varepsilon_{\mathrm{QAE}}\equiv \frac{\varepsilon_{\mathcal H_k}}{8}\frac{1}{\sqrt{L\widetilde{\kappa}}}\frac{e^{-|c|(2\Xi)}}{|c|(2\Xi)}=\Theta\left({\varepsilon_{\mathcal H_k}}/{\sqrt{L\widetilde{\kappa}}}\right)$.
Using \autoref{qae-for-main-thm-sum-of-inc-case}, we can output $\eta$ that approximates $\norm{\mathcal{L}_N\widetilde{\Sigma}^{1/2}\ket{z}}_2=\norm{\mathcal{L}_N\frac{\Sigma^{1/2}}{\sqrt{\widetilde{\lambda}_{\max}}}\frac{\vec{z}}{\norm{\vec{z}}_2}}_2=\norm{\vec{y}}_2/\big(\norm{\vec{z}}_2\sqrt{\widetilde{\lambda}_{\max}}\big)$ to within $\varepsilon_{\mathrm{QAE}}$, with resource requirements \eqref{qae-cost-for-Hk-eta-tilde-concrete-sum-of-inc-case} as claimed, since under \autoref{assump-LK-plus}, $\sqrt{L\widetilde{\kappa}}=\widetilde{\Theta}(\sqrt{\kappa})$.
By the assumption $\varepsilon_{\mathcal H_k}\leq |c|(2\Xi)e^{|c|(2\Xi)}$, we have $\varepsilon_{\mathrm{QAE}}\leq 1/(8\sqrt{L\widetilde{\kappa}})$.
Similarly to the discussion under \eqref{norm-y-norm-z-ratio-ineq} in \autoref{qae-for-norm-y},
this implies (with $C= 4$)
$\left|\widehat{\eta}-\norm{\vec{y}}_2\right|\leq\widehat{\varepsilon}\leq\norm{\vec{y}}_2/4$.
The rest follows directly in the same manner as the proof of \autoref{Hk-using-estimate}, 
but with $\norm{\vec{x}}_2$ replaced by $\norm{\vec{y}}_2$. 
The upper bound evaluation for \eqref{eval-exp-widetilde-eta-vs-exp-norm-x}
also needs us to replace $\norm{\vec{z}}_2/\norm{\vec{x}}_2\leq 1/\sqrt{\lambda_{\min}}$ with $\norm{\vec{z}}_2/\norm{\vec{y}}_2\leq 2/\sqrt{\lambda_{\min}}$, which also holds from \eqref{norm-y-norm-z-ratio-ineq} of \autoref{qae-for-norm-y}.
The current definition of $\varepsilon_{\mathrm{QAE}}$ already makes up for the factor $2$ in the numerator.
\end{proof}
\subsubsection{Exponentiation algorithms}
The key subroutine used to realise the non-linear transformation algorithm of \cite{RR23} is called the quantum singular value transformation (QSVT), a technique developed from quantum signal processing (QSP) and reformulated by \cite{GSLW19}. Given a degree-$k$ polynomial $P_k$, QSVT can map $A\equiv U\Lambda V^{\dagger}$ to $P_k(A)\equiv UP_k(\Lambda) V^{\dagger}$. If $P_k$ approximates some non-linear function $f$ in the uniform norm, $P_k(A)$ approximates $f(A)\equiv Uf(\Lambda) V^{\dagger}$ in the induced 2-norm. 
We formulate the main result of QSVT in \autoref{lem-qsvt} given below.
\begin{lemma}[{\cite[Theorem 56 from full paper: arXiv:1806.01838]{GSLW19}}]\label{lem-qsvt}
Given a degree-$k$ real polynomial $P_k$ such that $\sup_{x\in[-1,1]}|P_k(x)|\leq 1/2$, and a $(\alpha,a,\varepsilon)$-block-encoding $U_A$ of a Hermitian matrix $A$, we can construct $U_{P_k(A/\alpha)}$, a $(1,a+2,4k\sqrt{\varepsilon/\alpha}+\delta)$-block-encoding of $P_k(A/\alpha)$ for $\delta\geq0$, by using $k$ calls to $U_A$ and $U_A^{\dagger}$, a single application of controlled-$U_A$, and $O(ka)$ calls to single and two-qubit elementary gates. The description for the QSVT circuit to construct $U_{P_k(A/\alpha)}$ can be computed in $O(\mathrm{poly}\{k,\log(1/\delta)\})$ classical computation time.
\end{lemma}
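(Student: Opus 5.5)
The statement is \cite[Theorem 56]{GSLW19} specialised to a Hermitian matrix, so the plan is to retrace that construction rather than invent a new one. Write $\widetilde{A}\coloneqq \alpha(\bra{0}^{\otimes a}\otimes I)U_A(\ket{0}^{\otimes a}\otimes I)$. Then $U_A$ is an exact $(1,a,0)$-block-encoding of $\widetilde{A}/\alpha$, and $\norm{\widetilde{A}/\alpha-A/\alpha}\leq\varepsilon/\alpha$. I will first build the block-encoding for the exactly encoded matrix $\widetilde{A}/\alpha$ and only at the end pass to $A/\alpha$.

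\textbf{Step 1 (polynomial splitting).} Split $P_k=P_{\mathrm{even}}+P_{\mathrm{odd}}$ into its parity parts, $P_{\mathrm{even}}(x)=\frac12\big(P_k(x)+P_k(-x)\big)$ and $P_{\mathrm{odd}}$ likewise. Since $|P_k|\leq\frac12$ on $[-1,1]$, each part satisfies $|P_{\mathrm{even}}|,|P_{\mathrm{odd}}|\leq\frac12$. Hence $2P_{\mathrm{even}}$ and $2P_{\mathrm{odd}}$ are real polynomials of definite parity bounded by $1$.

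\textbf{Step 2 (phase sequences).} By the real-polynomial QSP result of \cite{GSLW19} (their Corollary 10), each of $2P_{\mathrm{even}}$ and $2P_{\mathrm{odd}}$ is the real part of a complex polynomial realisable by a QSP phase sequence $\Phi$ of length at most $k$. These phases can be computed to precision $\delta$ in $\mathrm{poly}(k,\log(1/\delta))$ classical time. This is the only source of the $\delta$ term.

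\textbf{Step 3 (circuit).} Form the alternating phase-modulation circuit $U_\Phi$ from $U_A$ and $U_A^\dagger$, interleaved with projector-controlled phase rotations $e^{i\phi\sigma_z}$. Each rotation uses one extra qubit together with an $\mathrm{MC}X$ on the $a$ ancillae, costing $O(a)$ gates by \autoref{implement-q-qubit-toffoli}.
\begin{itemize}
\item For a Hermitian matrix the singular value transform coincides with the eigenvalue transform, so $U_\Phi$ block-encodes $P^{\mathbb{C}}(\widetilde{A}/\alpha)$.
\item Taking the average of $U_\Phi$ and $U_{-\Phi}$ via a Hadamard-controlled selection extracts the real part. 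Because the phases enter only through single-qubit rotations, this costs no extra calls to $U_A$.
\item The even and odd parts need degrees differing by one. They are combined by a second one-ancilla LCU, $\frac12\big(2P_{\mathrm{even}}+2P_{\mathrm{odd}}\big)=P_k$. Here the single controlled-$U_A$ supplies the one extra application needed for the lower-degree branch, so the total is $k$ calls to $U_A,U_A^\dagger$ plus one controlled-$U_A$.
\end{itemize}
Counting qubits: the two control qubits reuse the rotation ancilla cleverly, as in \cite{GSLW19}, giving $a+2$ ancillae, normalisation $1$, and $O(ka)$ additional elementary gates.

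\textbf{Step 4 (error).} The remaining and main obstacle is the error term. I would invoke the robustness lemma of \cite{GSLW19} (their Lemma 22), which states that for a degree-$k$ QSVT circuit $\norm{P^{(SV)}(X)-P^{(SV)}(Y)}\leq 4k\sqrt{\norm{X-Y}}$. Applying it with $X=\widetilde{A}/\alpha$ and $Y=A/\alpha$, and noting that both LCU steps are norm-non-increasing averages, gives block-encoding error $4k\sqrt{\varepsilon/\alpha}$ for each branch and hence overall. Adding the $\delta$ from the approximate phases yields a $(1,a+2,4k\sqrt{\varepsilon/\alpha}+\delta)$-block-encoding of $P_k(A/\alpha)$.

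The delicate point is checking that the $\sqrt{\cdot}$ bound survives the non-Hermitian perturbation $\widetilde{A}$ (which need not be Hermitian even though $A$ is). I would handle this exactly as in Lemma 22, by working with singular value transforms throughout, which do not require $\widetilde{A}$ to be Hermitian.
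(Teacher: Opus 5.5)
The paper gives no proof of this lemma; it imports it from \cite{GSLW19}. Your outline (parity split, real-QSP phases, real part via the rotation ancilla, parity LCU with one controlled-$U_A$, robustness of QSVT) faithfully reconstructs that source argument.

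There is, however, a genuine gap in Step 4, and it lies in the statement as written, not only in your sketch. The robustness bound $\norm{P^{(SV)}(X)-P^{(SV)}(Y)}\leq 4k\sqrt{\norm{X-Y}}$ holds for \emph{contractions} $X,Y$, since both need unitary dilations. You get $\norm{\widetilde{A}/\alpha}\leq 1$ for free. An $(\alpha,a,\varepsilon)$-block-encoding, though, only gives $\norm{A/\alpha}\leq 1+\varepsilon/\alpha$, and no cleverer argument can close this. Take $\alpha=1$, $U_A=I$ on any $a\geq 1$ ancillas, $A=(1+\varepsilon)I$ (so the block-encoding error is exactly $\varepsilon$), and $P_k=T_k/2$ with $T_k$ the Chebyshev polynomial. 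Any unitary has top-left block of norm at most $1$. By contrast, $\norm{P_k(A)}=\frac12\cosh\big(k\,\mathrm{arccosh}(1+\varepsilon)\big)\approx\frac14 e^{k\sqrt{2\varepsilon}}$. For $\varepsilon=10^{-4}$ and $k=500$ this is about $294$, while $4k\sqrt{\varepsilon}=20$.

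So you must add the hypothesis $\norm{A}\leq\alpha$. It holds in the only place the paper uses the lemma: $\mathcal{D}=\mathrm{diag}(\ket{0}^{\otimes\widetilde{q}}\ket{x})$ with $\alpha=1$ has norm $\norm{\vec{x}}_\infty/\norm{\vec{x}}_2\leq 1$. With that hypothesis your Step 4 is correct. Apply the bound to each complex phase polynomial and its conjugate, then average to get the real parts. The $\frac12,\frac12$ parity LCU keeps the total at $4k\sqrt{\varepsilon/\alpha}$, plus $\delta$ from the approximate phases.

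A few smaller corrections:
\begin{enumerate}
\item The point you flag as delicate, the non-Hermitian $\widetilde{A}$, is harmless. The robustness lemma holds for arbitrary contractions, and Hermiticity of $A$ is used only to identify $P^{(SV)}(A/\alpha)=P(A/\alpha)$ for each definite-parity part.
\item Correspondingly, your first bullet in Step 3 should say that $U_\Phi$ block-encodes $P^{(SV)}(\widetilde{A}/\alpha)$, not an eigenvalue transform of $\widetilde{A}/\alpha$.
\item The single controlled-$U_A$ is the extra application belonging to the \emph{higher}-degree branch; it is switched off when the lower-degree branch is selected.
\item Only the real-part control is absorbed into the rotation ancilla. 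The parity selector is the genuinely new qubit that brings the count to $a+2$.
\end{enumerate}
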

Simply put, the non-linear transformation algorithm, which is initially developed in \cite{GMF24} and then refined by \cite{RR23}, goes as follows.
It makes use of the state-preparation unitary $U_{\ket{\psi}}$ of a state $\ket{\psi}=\sum_{i=1}^{N}\psi_i\ket{i}$ to create a block-encoding of the diagonal matrix $\mathrm{diag}(\ket{\psi})$ whose entries are amplitudes of $\ket{\psi}$.
Then, QSVT is used to transform these diagonal entries non-linearly, and finally multiplying this onto a quantum state such as $\ket{+_N}=\sum_{i=1}^N\ket{i}/\sqrt{N}$ or any $\ket{f}=\vec{f}/\lVert \vec{f}\rVert$ of interest for weighted sum yields the quantum state with amplitudes transformed non-linearly.
The procedure to create a block-encoding of $\mathrm{diag}(\ket{\psi})$ is discussed in \cite{RR23}, whose result is formulated as follows.
\begin{lemma}[Diagonal block-encoding of amplitudes {\cite[Lemma 6]{RR23}}]\label{rr23-thm2-2}
Given an $n$-qubit state preparation unitary $U_{\ket{\psi}}$ that prepares $U_{\ket{\psi}}\ket{0}^{\otimes n}=\ket{\psi}=\sum_{i=0}^{2^n-1}\psi_i\ket{i}$, where ${}^\forall i$, $\psi_i \in\mathbb{R}$, we can construct a $(1,n+2,0)$-block-encoding $U_{\mathrm{diag}(\ket{\psi})}$ of the diagonal matrix $\mathrm{diag}(\psi_0,\dots,\psi_{N-1})$ with $O(n)$ elementary gate depth\footnote{Although stated originally as `gate depth', the proofs suggest $O(n)$ elementary gate `count' requirement, see Definition 9., Definition 10., Lemma 3, and Lemma 6 in \cite{RR23}.} and a total of $6$ queries to a controlled-$U_{\ket{\psi}}$ circuit\footnote{Each controlled-$U_{\ket{\psi}}$ costs $1$ call to $U_{\ket{\psi}}$ and $O(1)$ elementary gates, cf.\ \cite[Definition 9]{RR23}.\label{footnote-controlled-U}}.
\end{lemma}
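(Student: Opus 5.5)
The plan is to build the block-encoding from a Hadamard-test-type construction in which the two branches of a single control qubit load $\ket{\psi}$ differently, so that the $\ket{0}$-projected ancilla block becomes diagonal with entries $\psi_i$. First, I would construct a $(1,n+1,0)$-block-encoding of $\mathrm{diag}(\psi_i^2)$, or more directly of $\mathrm{diag}(\psi_i)$, via a linear combination of unitaries. Concretely, introduce an $n$-qubit ancilla register $\mathrm{A}$ and a system register $\mathrm{I}$, and use the unitary $W$ that prepares $\ket{\psi}_{\mathrm{A}}$ and then copies the basis index with $n$ CNOTs (the map $\ket{j}_{\mathrm{A}}\ket{i}_{\mathrm{I}}\mapsto\ket{j\oplus i}_{\mathrm{A}}\ket{i}_{\mathrm{I}}$). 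Then
\[
(\bra{0}_{\mathrm{A}}\otimes I)\,\mathrm{CNOT}^{\otimes n}(U_{\ket{\psi}}\otimes I)(\ket{0}_{\mathrm{A}}\otimes I)=\sum_i \psi_i\kb{i},
\]
since $\mathrm{CNOT}^{\otimes n}(\ket{\psi}\ket{i})=\sum_j\psi_j\ket{j\oplus i}\ket{i}$, and projecting $\mathrm{A}$ onto $\ket{0}$ forces $j=i$. This already appears to give a $(1,n,0)$-block-encoding of $\mathrm{diag}(\psi)$, using one query to $U_{\ket{\psi}}$ and $n$ CNOTs, i.e.\ $O(n)$ gates at depth $O(1)$ after parallelisation.

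Next, I would reconcile this with the stated resources, namely $n+2$ ancillas and $6$ controlled queries. I believe the extra structure in \cite{RR23} is needed to make the encoded matrix Hermitian and to symmetrise the construction, so that QSVT (\autoref{lem-qsvt}) applies with the correct parity conventions. The step is to symmetrise $U$ into $\frac12(U+U^\dagger)$ through an LCU with one extra qubit (a Hadamard, then controlled-$U$ and controlled-$U^\dagger$, then a Hadamard). This doubles the query count and yields a Hermitian block-encoding with top-left block $\mathrm{diag}(\psi)$, because $\mathrm{diag}(\psi)$ is real and therefore equal to its own adjoint. A second extra qubit comes from the ancilla used by the controlled operations and the multi-controlled reflections (\autoref{implement-q-qubit-toffoli}). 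The exact count of $6$ controlled queries then follows from how \cite{RR23} implements the controlled-$U_{\ket{\psi}}$ and its inverse within each branch. Since the statement is cited, I would largely defer to their Lemma 6 for the precise constant.

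The main obstacle I expect is checking carefully that the block-encoding is exact and of a Hermitian matrix, and that the off-diagonal blocks introduced by symmetrisation do not contaminate the top-left block. To settle this, I would first verify that $\bra{0}U^\dagger\ket{0}=\mathrm{diag}(\psi)^\dagger$ using realness of $\psi_i$. Then I would tally the gate cost as $O(n)$ from the CNOT layer plus $O(1)$ per controlled call, as in footnote \ref{footnote-controlled-U}.
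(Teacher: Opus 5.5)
The paper gives no argument of its own here; it simply imports \cite[Lemma 6]{RR23}, whose construction (as quoted) uses six controlled queries and $n+2$ ancillas. Your first paragraph is already a complete and correct proof, by a more elementary route.

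With $W\coloneqq\mathrm{CNOT}^{\otimes n}(U_{\ket{\psi}}\otimes I)$, controls in $\mathrm{I}$ and targets in $\mathrm{A}$, one has $\bra{0}_{\mathrm{A}}\bra{k}_{\mathrm{I}}W\ket{0}_{\mathrm{A}}\ket{i}_{\mathrm{I}}=\sum_j\psi_j\braket{0}{j\oplus i}\braket{k}{i}=\delta_{ki}\psi_i$. So $W$ is an exact $(1,n,0)$-block-encoding of $\mathrm{diag}(\psi_0,\dots,\psi_{2^n-1})$, using one query to $U_{\ket{\psi}}$ and one layer of $n$ CNOTs.

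To match the statement you only need to say why this dominates it:
\begin{itemize}
\item Tensoring $W$ with two idle ancillas in $\ket{00}$ leaves the top-left block unchanged, giving a $(1,n+2,0)$-block-encoding.
\item One query is within the allowance of six.
\item The controlled-$W$ required by \autoref{lem-qsvt} costs one controlled-$U_{\ket{\psi}}$ plus $n$ Toffolis, i.e.\ $O(n)$ elementary gates.
\end{itemize}
Your route saves ancillas and queries and can be checked in one line. Since the paper's downstream uses only need $O(1)$ queries and $O(n)$ gates, nothing is lost by using it.

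Your second paragraph, by contrast, rests on a misconception and should be dropped rather than used to reverse-engineer the constants. \autoref{lem-qsvt} requires the \emph{encoded block} to be Hermitian, not the unitary. For $\psi_i\in\mathbb{R}$ the block $\mathrm{diag}(\psi)$ already is Hermitian, so no symmetrisation $\tfrac12(W+W^{\dagger})$ is needed for Hermiticity or parity. That LCU would still be valid, since its block is $\tfrac12(\mathrm{diag}(\psi)+\mathrm{diag}(\psi)^{\dagger})=\mathrm{diag}(\psi)$, but it only earns its cost for complex amplitudes, where it yields $\mathrm{diag}(\mathrm{Re}\,\psi_i)$.

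Attributing a second ancilla to Toffoli or reflection gadgets is unsupported, since your construction contains none. Deferring to \cite{RR23} for ``the precise constant'' is also unnecessary: the stated counts are upper bounds, and the padding argument meets them directly. Likewise, the ``obstacle'' in your last paragraph is not one. The block-encoding condition concerns only the $\ket{0}_{\mathrm{A}}$-projected block, which you computed exactly.
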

\begin{lemma}\label{bl-en-of-tilde-D-is-bl-en-of-D}
Suppose $\ket{\psi}$ and $\ket{\varphi}$ are normalised quantum states of the same number of qubits, and suppose $U$ is a $(1,a,0)$-block-encoding of $\mathrm{diag}(\ket{\psi})$. Then, for $\varepsilon>0$, $\norm{\ket{\psi}-\ket{\varphi}}_\infty\leq \varepsilon$ implies $U$ is a $(1,a,\varepsilon)$-block-encoding of $\mathrm{diag}(\ket{\varphi})$.
\end{lemma}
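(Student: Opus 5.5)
The argument goes directly through the definition of block-encoding (\autoref{block-encoding-def}) and the fact that the operator norm of a diagonal matrix is the largest absolute value of its entries. Since $U$ is a $(1,a,0)$-block-encoding of $\mathrm{diag}(\ket{\psi})$, we have the exact identity
\[
\Big(\bra{0}^{\otimes a}\otimes I\Big)U\Big(\ket{0}^{\otimes a}\otimes I\Big)=\mathrm{diag}(\ket{\psi}).
\]
I will substitute this identity into the quantity that \autoref{block-encoding-def} requires for $\mathrm{diag}(\ket{\varphi})$. That quantity becomes
\[
\norm{\mathrm{diag}(\ket{\varphi})-\mathrm{diag}(\ket{\psi})}=\norm{\mathrm{diag}(\ket{\varphi}-\ket{\psi})}.
\]
Here I use linearity of the map $v\mapsto\mathrm{diag}(v)$. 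This only needs both states to live on the same register, which holds because they have the same number of qubits.

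Next I will evaluate the spectral norm of the diagonal matrix $\mathrm{diag}(\ket{\varphi}-\ket{\psi})$. Let $v=\ket{\varphi}-\ket{\psi}$. For any unit vector $u$,
\[
\norm{\mathrm{diag}(v)u}^2=\sum_i |v_i|^2|u_i|^2\leq \max_i|v_i|^2 .
\]
Equality is attained at a computational basis vector. Hence $\norm{\mathrm{diag}(v)}=\norm{v}_\infty$, which is at most $\varepsilon$ by hypothesis. Therefore $U$ satisfies \autoref{block-encoding-def} for $\mathrm{diag}(\ket{\varphi})$ with $\alpha=1$, the same $a$, and error $\varepsilon$.

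The argument has no genuine obstacle. The only point needing care is the index convention: $\mathrm{diag}(\cdot)$ must be taken over the full $2^n$-dimensional system, including unused entries such as $\ket{0}$ (\autoref{convention-embed-classic-info}), and the same convention must be used for both states. With that fixed, the $\ell_\infty$ bound controls every diagonal entry.
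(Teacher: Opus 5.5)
Your proposal is correct and follows essentially the same route as the paper: substitute the exact block-encoding identity and use that the operator norm of a diagonal matrix equals the largest modulus of its entries, which is at most $\varepsilon$. The only difference is that you verify the diagonal-norm fact directly, whereas the paper justifies it by noting that the induced 2-norm is the largest singular value.
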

\begin{proof}
Observe that $\left\lVert\mathrm{diag}(\ket{\varphi})-\mathrm{diag}(\ket{\psi})\right\rVert=\norm{\ket{\psi}-\ket{\varphi}}_\infty$ because the induced 2-norm of a matrix $A$ is just the largest singular value; when $A$ is diagonal, this equals the absolute value of the entry with largest magnitude. Then, recall the definition of a block-encoding:
\begin{align*}
\left\lVert
\mathrm{diag}(\ket{\varphi})
-1\cdot \left(\bra{0}^{\otimes a}\otimes I\right)
U
\left(\ket{0}^{\otimes a}\otimes I\right)
\right\rVert
&=
\left\lVert
\mathrm{diag}(\ket{\varphi})
-
\mathrm{diag}(\ket{\psi})
\right\rVert
\leq
\varepsilon. \qedhere
\end{align*}
\end{proof}
\autoref{alg-exponentiation} illustrates the steps of how our proposed algorithm (the following \autoref{preparing-ket-exp-x-from-ket-x-w-error}) realises the preparation of an exponentiated correlated Gaussian vector.
\begin{algorithm}[H]
\caption{Constructing an SPBE for an exponentiated correlated Gaussian vector}\label{alg-exponentiation}
\begin{algorithmic}[1]
\Require $\widetilde{\lambda}_{\max},\widetilde{\kappa},L,K,U_{\Sigma},U_{\ket{z}}$ as in \autoref{main-thm-for-gen-mat}, $U_{\ket{f_N}}$ as in \autoref{preparing-ket-exp-x-from-ket-x-w-error}, and $\norm{\vec{z}}\in(0,\infty)$ for some $\vec{z}$ being a realistion of $Z\sim\mathcal{N}(0,I_{N})$.
\Ensure a unitary $U_{\ket{\vec{f}\odot e^{c\vec{x}}},\varepsilon}:\ket{0}_{\mathrm{Q}\mathrm{A}\mathrm{I}}\mapsto\ket{\widetilde{\varphi}}_{\mathrm{Q}\mathrm{A}\mathrm{I}}$ such that $\bigg\lVert\ket{\widetilde{\varphi}}_{\mathrm{Q}\mathrm{A}\mathrm{I}}-\ket{0}_{\mathrm{Q}\mathrm{A}}\ket{\vec{f}\odot e^{c\vec{x}}}_{\mathrm{I}}\bigg\rVert \leq\varepsilon$.
\Statex \hrulefill
\Statex (One-time QAE for defining QSVT polynomial of transformation)
\State Perform a one-time QAE to obtain an estimate $\eta$ of $\norm{\widetilde{\Sigma}^{1/2}\ket{z}}$, where $\widetilde{\Sigma}=\Sigma/\widetilde{\lambda}_{\max}$, via \autoref{qae-for-main-thm}.
\State Use $\eta$ to define $\widetilde{\eta}$ satisfying $\widetilde{\eta}/\norm{\vec{x}}\in\left[\frac12,1\right]$ as suggested in \autoref{Hk-using-estimate}, where $\norm{\vec{x}}\equiv \norm{\Sigma^{1/2}\ket{z}}\norm{\vec{z}}$.
\State Use $\widetilde{\eta}$ to define a polynomial $\mathcal{H}^{(\widetilde{\eta})}_k(\zeta)$ that approximates $\zeta \mapsto e^{c\norm{\vec{x}} \cdot\zeta}$ on the interval $\left[-\frac{\Xi}{\norm{\vec{x}}_2},\frac{\Xi}{\norm{\vec{x}}_2}\right]$ via \autoref{Hk-using-estimate}.
\State Define $\widehat{H}_k$, a scaled version of $\mathcal{H}^{(\widetilde{\eta})}_k$, to satisfy the condition required for using QSVT.
\Statex \hrulefill
\Statex (Preparing the QSVT building-block unitary)
\State Construct a unitary $U_{\ket{\varrho},\varepsilon}^{\widetilde{q},\norm{\vec{x}}_2}\equiv U_{\ket{\varrho},\varepsilon_{\varrho}}:\ket{0}_{\mathrm{A}\mathrm{I}}\mapsto\ket{\varrho}_{\mathrm{A}\mathrm{I}}$ 
such that 
$\lVert \ket{\varrho}_{\mathrm{A}\mathrm{I}}-\ket{0}_{\mathrm{A}}\ket{x}_{\mathrm{I}} \rVert_2 \leq \varepsilon_{\varrho}$ using \autoref{main-thm-for-gen-mat}, with requirements as in \autoref{preparing-ket-exp-x-from-ket-x-w-error}.
\State Construct a unitary $U_{\widetilde{\mathcal{D}}}$, a block-encoding of $\mathcal{D}\equiv \mathrm{diag}(\ket{0}^{\otimes \widetilde{q}}\ket{x})$, using $U_{\ket{\varrho},\varepsilon}^{\widetilde{q},\norm{\vec{x}}_2}$ via \autoref{rr23-thm2-2}.
\Statex \hrulefill
\Statex (Performing QSVT, projecting back transformed amplitudes, and normalisation via QAA)
\State Use QSVT to transform $\mathcal{D}$ according to the polynomial $\widehat{H}_k$ (that is, to
create $U_{\widehat{H}_k(\mathcal{D})}$, a block-encoding of $\widehat{H}_k(\mathcal{D})$, using $U_{\widetilde{\mathcal{D}}}$ as building blocks).
\State Applying $U_{\widehat{H}_k(\mathcal{D})}$ to the state $\ket{0}^{\otimes \widetilde{q}}\ket{f_N}$ produces a state that is close to the direction of $\ket{0}^{\otimes \widetilde{q}}(e^{c \vec{x}}\odot \ket{f_N})$ (in practice, we consider the composition of $U_{\widehat{H}_k(\mathcal{D})}$ and $U_{\ket{f_N}}$ than really creating a state).
\State Perform QAA to obtain a unitary $U_{\ket{\vec{f}\odot e^{c\vec{x}}},\varepsilon}:\ket{0}_{\mathrm{Q}\mathrm{A}\mathrm{I}}\mapsto\ket{\widetilde{\varphi}}_{\mathrm{Q}\mathrm{A}\mathrm{I}}$, where $\ket{\widetilde{\varphi}}_{\mathrm{Q}\mathrm{A}\mathrm{I}}$ approximates the normalised state $\ket{0}_{\mathrm{Q}\mathrm{A}}\ket{e^{c \vec{x}} \odot \vec{f}}_{\mathrm{I}}$, using $U_{\widehat{H}_k(\mathcal{D})}$ and $U_{\ket{f_N}}$ as building blocks.
\end{algorithmic}
\end{algorithm}
\begin{remark}\label{rem-z-chi-dist}
$\norm{\vec{z}}$ can be sampled from chi-distribution independently of $\ket{z}$, cf.\ the discussion following \autoref{rem-abt-LK-assmp-plus}.
\end{remark}
\begin{theorem}\label{preparing-ket-exp-x-from-ket-x-w-error}
Let $\Sigma$ be a matrix satisfying the same assumption as \autoref{main-thm-for-gen-mat},
and suppose that we have access to the unitaries $U_{\Sigma}$ and $U_{\ket{z}}$ in \autoref{main-thm-for-gen-mat}.
Let $\Xi$ and $\vec{x}$ be as in \autoref{Hk-using-estimate}.
Suppose that $\|\vec{z}\|_2$ is known.
Let $c \in \mathbb{R}\backslash\{0\}$ be a fixed constant, 
and
suppose we have access to a unitary $U_{\ket{f_N}}:\ket{0}^{\otimes n}\mapsto \ket{f_N}=\vec{f}/\lVert \vec{f} \rVert$,
for $\vec{f}\in\mathbb{R}^N\backslash \{\vec{0}\}$ whose components are known.
Suppose implementing $U_{\ket{f_N}}$ requires $T_{U_{\ket{f_N}}}$ elementary gates.
For any
$\varepsilon\in(0,2]$,
if the block-encoding error $\varepsilon_{U_\Sigma}$ of $U_{\Sigma}$ satisfies
\[
\frac{\varepsilon_{U_\Sigma}}{\widetilde{\lambda}_{\max}}=o\left(\frac{\varepsilon^2}{\norm{\vec{x}}_2^2}\log^{-4}\left(\frac{\norm{\vec{x}}_2}{\varepsilon}\right)\cdot (L\widetilde{\kappa})^{-1.5} \cdot \log^{-3}\left(\frac{L\widetilde{\kappa}\norm{\vec{x}}_2}{\varepsilon}\right)\right),
\tagaligneq \label{cond-for-using-chakraborty-et-al-non-lin-transform-1}
\]
then
we can construct a unitary $U_{\ket{\vec{f}\odot e^{c\vec{x}}},\varepsilon}$ that maps $\ket{0}_{\mathrm{Q}\mathrm{A}\mathrm{I}}\mapsto\ket{\widetilde{\varphi}}_{\mathrm{Q}\mathrm{A}\mathrm{I}}$ satisfying $\bigg\lVert\ket{\widetilde{\varphi}}_{\mathrm{Q}\mathrm{A}\mathrm{I}}-\ket{0}_{\mathrm{Q}\mathrm{A}}\ket{\vec{f}\odot e^{c\vec{x}}}_{\mathrm{I}}\bigg\rVert \leq\varepsilon$, where 
\[
\ket{\vec{f}\odot e^{c\vec{x}}}
=
\frac{\vec{f}\odot e^{c\vec{x}}}{\norm{\vec{f}\odot e^{c\vec{x}}}}
=
\frac{1}{\sqrt{\sum_{i=1}^N |f_i|^2\cdot e^{2c x_i}}}
\sum_{i=1}^N f_i \cdot e^{c x_i} \ket{i},
\]
for
$\vec{x}$ being a realisation of the random vector $X\sim\mathcal{N}(0,\Sigma)$,
$\mathrm{I}$ is a system register of $n=\ceil{\log_2(N+1)}$ qubits, $\mathrm{Q}$ is an ancillary register of $\widetilde{q}+n+4$ qubits and $\mathrm{A}$ is an ancillary register of $\widetilde{q}$ qubits, for $\mathbb{Z}^{+}\ni\widetilde{q}=a_{U_\Sigma}+O\left(\log\log(\norm{\vec{x}}_2L\widetilde{\kappa}/{\varepsilon})\right)$. 
The implementation of $U_{\ket{\vec{f}\odot e^{c\vec{x}}},\varepsilon}$ requires
\[
T_{U_{\ket{\vec{f}\odot e^{c\vec{x}}},\varepsilon}}
=
O\left(
\norm{\vec{x}}_2
\cdot
\log^3\left(\frac{1}{\varepsilon}\right)
\cdot
\left(\widetilde{q}+n
+
T_{U_{\ket{\varrho},\varepsilon}^{\widetilde{q},\norm{\vec{x}}_2}}
\right)
+
\log\left(\frac{1}{{\varepsilon}}\right)
\cdot
T_{U_{\ket{f_N}}}
\right) \tagaligneq \label{total-abstract-cost-non-linear-transform}
\]
elementary gates, where
\begin{gather*}
T_{U_{\ket{\varrho},\varepsilon}^{\widetilde{q},\norm{\vec{x}}_2}}=O\left(\sqrt{L\widetilde{\kappa}}
\cdot
\log\left(\frac{\norm{\vec{x}}_2}{{\varepsilon}}\right)
\cdot
\Biggl(
\widetilde{q}
+
n+T_{U_{\ket{z}}}+\frac{\alpha_{U_\Sigma}}{\widetilde{\lambda}_{\max}}\,L\widetilde{\kappa}\,\Big(a_{U_\Sigma}+T_{U_\Sigma}\Big)\log^2\left(\frac{\norm{\vec{x}}_2 L\widetilde{\kappa}}{\varepsilon}\right)\Biggr)
\right). \tagaligneq \label{cost-to-prep-ket-x-within-epx-via-main-thm}
\end{gather*}
For creating the description of the QSVT circuit used in $U_{\ket{\vec{f}\odot e^{c\vec{x}}},\varepsilon}$, a one-time
classical computation in $O\left(\mathrm{poly}\left\{\norm{\vec{x}}_2\log^2\left({\norm{\vec{x}}_2}/{\varepsilon}\right),\log\left({1}/{\varepsilon}\right)\right\}\right)$ time is run. For determining the QSVT polynomial of transformation, a one-time QAE procedure is run with success probability at least $0.99$, using
\[
O
\left(
\frac{\sqrt{L\widetilde{\kappa}}}{\varepsilon}
\left(
\widehat{q}
+
n
+
\frac{\alpha_{U_\Sigma}}{\widetilde{\lambda}_{\max}}
\,L\widetilde{\kappa}\,\left(a_{U_\Sigma}+T_{U_\Sigma}\right)\log^2\left(\frac{L\widetilde{\kappa}}{\varepsilon}\right)
+T_{U_{\ket{z}}}
\right)
\right) \tagaligneq \label{qae-cost-for-Hk-eta-tilde-abstract-in-non-lin-thm}
\]
elementary gates and an ancillary register of $\widehat{q}+2$ qubits, where $\mathbb{Z}^{+}\ni\widehat{q}=a_{U_\Sigma}+O\left(\log\log(L\widetilde{\kappa}/{\varepsilon})\right)$.
\end{theorem}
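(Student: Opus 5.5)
The construction follows \autoref{alg-exponentiation}, and the error budget is split into three parts: the SPBE error $\varepsilon_\varrho$, the QSVT polynomial error $\varepsilon_{\mathcal H_k}$, and the QAA error $\varepsilon/2$.

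\textbf{Step 1: the polynomial.} Using \autoref{Hk-using-estimate} with $\varepsilon_{\mathcal H_k}=\Theta(\varepsilon)$, I obtain $\widetilde{\eta}$ by the one-time QAE, at cost \eqref{qae-cost-for-Hk-eta-tilde-abstract}, which is \eqref{qae-cost-for-Hk-eta-tilde-abstract-in-non-lin-thm}. I also obtain the polynomial $\mathcal{H}^{(\widetilde{\eta})}_k$ of degree $k=O(\norm{\vec{x}}_2\log^2(\norm{\vec{x}}_2/\varepsilon))$, with $\gamma:=\gamma_{\mathcal H}\le 2e^{2|c|\Xi}=O(1)$.

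\textbf{Step 2: the SPBE and the QSVT.} By \autoref{main-thm-for-gen-mat}, I build $U_{\ket{\varrho},\varepsilon_\varrho}$ with $\norm{\ket{\varrho}-\ket{0}_{\mathrm{A}}\ket{x}}\le\varepsilon_\varrho$. Since the $\ell_\infty$ distance is at most the $\ell_2$ distance, \autoref{rr23-thm2-2} and \autoref{bl-en-of-tilde-D-is-bl-en-of-D} turn this into a $(1,\widetilde{q}+n+2,\varepsilon_\varrho)$-block-encoding of $\mathcal{D}=\mathrm{diag}(\ket{0}_{\mathrm{A}}\ket{x})$.

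I then apply \autoref{lem-qsvt} with $\widehat H_k:=\mathcal{H}^{(\widetilde{\eta})}_k/(2\gamma)$. The degree of $\mathcal{H}^{(\widetilde{\eta})}_k$ does not have definite parity. I would handle this either by splitting it into even and odd parts and combining them by LCU with a constant overhead, or by accepting the constant-factor loss. The result is a block-encoding of $\widehat H_k(\mathcal{D})$ with error $4k\sqrt{\varepsilon_\varrho}+\delta$.

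For this error to be $O(\varepsilon/\norm{\vec{x}}_2)$, which is needed because of the normalisation below, I choose $\varepsilon_\varrho=\Theta(\varepsilon^2/(k^2\norm{\vec{x}}_2^2))$. This is where the $\varepsilon^2\norm{\vec{x}}_2^{-2}\log^{-4}$ in condition \eqref{cond-for-using-chakraborty-et-al-non-lin-transform-1} comes from. Since $\log(1/\varepsilon_\varrho)=O(\log(\norm{\vec{x}}_2/\varepsilon))$, the cost of $U_{\ket{\varrho},\varepsilon_\varrho}$ matches \eqref{cost-to-prep-ket-x-within-epx-via-main-thm}.

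\textbf{Step 3: handling $h(0)\ne0$ (the main subtlety).} The diagonal of $\mathcal{D}$ carries the entries $x_i/\norm{\vec{x}}$ only on the $\ket{0}_{\mathrm{A}}$ block, and zeros elsewhere. Hence $\widehat H_k(\mathcal{D})$ acts as $\widehat H_k(0)I$ on the $\mathrm{A}\neq0$ subspace, and as $\mathrm{diag}(\widehat H_k(x_i/\norm{\vec{x}}))$ on the $\ket{0}_{\mathrm{A}}$ block. I therefore apply it to $\ket{0}_{\mathrm{A}}\ket{f_N}$ rather than to a uniform state. The component with $\mathrm{A}\neq 0$ then vanishes, and the $\ket{0}_{\mathrm{Q}\mathrm{A}}$ projection is exactly
\[
\frac{1}{2\gamma}\sum_i \frac{f_i}{\norm{\vec f}}\mathcal{H}_k\!\left(\frac{x_i}{\norm{\vec{x}}_2}\right)\ket{i}.
\]
By \autoref{assump-inf-norm-bound-for-vec-x}, every $x_i/\norm{\vec{x}}_2$ lies in $[-\Xi/\norm{\vec{x}}_2,\Xi/\norm{\vec{x}}_2]$, so each entry is within $\varepsilon_{\mathcal H_k}$ of $e^{cx_i}$. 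The vector error is therefore at most $\varepsilon_{\mathcal H_k}$ in $\ell_2$, because $\ket{f_N}$ is a unit vector.

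\textbf{Step 4: amplification.} The unnormalised good amplitude has norm $\norm{\vec f\odot e^{c\vec{x}}}/(2\gamma\norm{\vec f})\ge e^{-|c|\Xi}/(2\gamma)=\Omega(1)$, since $e^{cx_i}\ge e^{-|c|\Xi}$. So \autoref{normalising-block-encoding-prep-state-via-qaa}, run on an exact block-encoding of the perturbed operator, needs only $O(\log(1/\varepsilon))$ rounds, with $a_\ell=\Omega(1)$.

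For the total error, \autoref{lem-for-unit-vector-approx2} together with this $\Omega(1)$ lower bound turns the $O(\varepsilon_{\mathcal H_k}+k\sqrt{\varepsilon_\varrho})$ vector error into an $O(\varepsilon)$ state error. Adding the QAA error gives at most $\varepsilon$.

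For the cost, each QAA round uses one $U_{\ket{f_N}}$ and $O(k)$ calls to the diagonal block-encoding, each of which costs $O(\widetilde q+n+T_{U_{\ket{\varrho}}})$. This gives \eqref{total-abstract-cost-non-linear-transform}, with the $\log^2$ from $k$ times the $\log$ from QAA producing the $\log^3(1/\varepsilon)$. The classical cost of computing the QSVT phases follows from \autoref{lem-qsvt}.

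I expect the hardest step to be the error accounting linking $\varepsilon_\varrho$ to the $k\sqrt{\cdot}$ QSVT loss, together with the parity issue.
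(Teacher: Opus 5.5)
Your construction is the paper's: the polynomial from \autoref{Hk-using-estimate} with $\varepsilon_{\mathcal H_k}=\Theta(\varepsilon)$, the SPBE from \autoref{main-thm-for-gen-mat}, the diagonal block-encoding, QSVT, application to $\ket{0}_{\mathrm{A}}\ket{f_N}$ to remove the $h(0)\neq 0$ entries, and QAA with a constant $a_\ell$. One step fails as written, though: your choice $\varepsilon_\varrho=\Theta\bigl(\varepsilon^2/(k^2\norm{\vec{x}}_2^2)\bigr)$.

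The degree $k$ grows at least linearly in $\norm{\vec{x}}_2$. The reason is that the linear-amplification factor must have slope about $\norm{\vec{x}}_2/(4\Xi)$ at $0$ while staying bounded by $1$ on $[-1,1]$, so Bernstein's inequality forces its degree to be $\Omega(\norm{\vec{x}}_2)$. Your $\varepsilon_\varrho$ is therefore of order $\varepsilon^2\norm{\vec{x}}_2^{-4}$ up to logarithms, not $\varepsilon^2\norm{\vec{x}}_2^{-2}\log^{-4}(\norm{\vec{x}}_2/\varepsilon)$. Invoking \autoref{main-thm-for-gen-mat} at accuracy $\varepsilon_\varrho$ requires $\varepsilon_{U_\Sigma}/\widetilde{\lambda}_{\max}=o\bigl(\varepsilon_\varrho(L\widetilde{\kappa})^{-1.5}\log^{-3}(L\widetilde{\kappa}/\varepsilon_\varrho)\bigr)$. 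Hypothesis \eqref{cond-for-using-chakraborty-et-al-non-lin-transform-1} falls short of this by a factor of order $\norm{\vec{x}}_2^{2}$. So your remark that this choice ``is where the $\varepsilon^2\norm{\vec{x}}_2^{-2}\log^{-4}$ comes from'' is an arithmetic slip, and under the stated hypothesis your SPBE step is not justified.

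The repair comes from your own Step~4. The good amplitude is $\Omega(1)$, so the final normalisation does not magnify errors by $\norm{\vec{x}}_2$. The QSVT error $4k\sqrt{\varepsilon_\varrho}+\delta$ therefore only needs to be a small constant times $\varepsilon$, not $O(\varepsilon/\norm{\vec{x}}_2)$. Take $\varepsilon_\varrho=\Theta(\varepsilon^2/k^2)$ with a small enough constant, as the paper does ($\widetilde{\varepsilon}^{\,2}/(16k^2)$ with $\widetilde{\varepsilon}=\Theta(\varepsilon)$). Since $k^2=O\bigl(\norm{\vec{x}}_2^2\log^4(\norm{\vec{x}}_2/\varepsilon)\bigr)$, the stated condition is then exactly sufficient. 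The gate count is the same in either case, because only $\log(1/\varepsilon_\varrho)$ enters.

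Smaller points:
\begin{itemize}
\item The parity worry is moot: \autoref{lem-qsvt} already covers polynomials of arbitrary parity, which is what the bound $1/2$ and the two extra ancillas are for. Dividing by $2\gamma$ meets that bound.
\item Inside the $\ket{0}_{\mathrm{A}}$ block, $\widehat{H}_k(\mathcal{D})$ also carries the value $\widehat{H}_k(0)$ at $\ket{0}_{\mathrm{I}}$ and at the padding indices $j>N$. These vanish only because $\ket{f_N}$ has no support there.
\item The $\Omega(1)$ amplitude bound used for QAA must be stated for the perturbed operator, i.e.\ the ideal amplitude minus the total perturbation. This holds once the constants are fixed.
\item The $\log^2$ contributed by $k$ is really $\log^2(\norm{\vec{x}}_2/\varepsilon)$, not $\log^2(1/\varepsilon)$. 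The paper's bound shares this looseness.
\end{itemize}
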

\begin{proof}
Denote $h(\zeta)\coloneqq e^{c\norm{\vec{x}}_2\cdot\zeta}$.
Let $\widetilde{\varepsilon}\coloneqq \frac{\varepsilon}{16}e^{-3|c|\Xi}=\Theta(\varepsilon)$ and let $\varepsilon_{\mathcal H_k}\coloneqq 2e^{2|c|\Xi}\widetilde{\varepsilon}$.
Note that \eqref{cond-for-using-chakraborty-et-al-non-lin-transform-1} implies ${\varepsilon_{U_\Sigma}}/{\widetilde{\lambda}_{\max}}=o\left(\varepsilon\cdot (L\widetilde{\kappa})^{-1.5} \cdot \log^{-3}\left(\frac{L\widetilde{\kappa}}{\varepsilon}\right)\right)$,
which further implies the requirement \eqref{cond-for-using-chakraborty-et-al-poly-construct1} of \autoref{Hk-using-estimate}, since $\varepsilon_{\mathcal H_k}=\Theta(\widetilde{\varepsilon})=\Theta(\varepsilon)$.
Also, as $\varepsilon\leq 2$, we have $\widetilde{\varepsilon}\leq 1$ and hence $\varepsilon_{\mathcal H_k}\leq 2e^{|c|(2\Xi)}$, so 
we can use \autoref{Hk-using-estimate} to construct a degree-$k$ polynomial $H_k\coloneqq \mathcal{H}^{(\widetilde{\eta})}_k$, where $k=O\left(\norm{\vec{x}}_2\log^2\left(\frac{\norm{\vec{x}}_2}{\varepsilon_{H_k}}\right)\right)=O\left(\norm{\vec{x}}_2\log^2\left(\frac{\norm{\vec{x}}_2}{\varepsilon}\right)\right)$, satisfying
\[
\sup_{\zeta\in \left[-\frac{\Xi}{\norm{\vec{x}}_2},\frac{\Xi}{\norm{\vec{x}}_2}\right]}\left|
H_k(\zeta)-h(\zeta)
\right|
\leq \varepsilon_{\mathcal H_k}, \tagaligneq \label{non-lin-proof-Hk-approx-err1}
\]
and $\sup_{\zeta\in[-1,1]}\left|H_k(\zeta)\right|\leq 2e^{2|c|\Xi}$. The QAE circuit in \autoref{Hk-using-estimate} requires gate and qubit resources as claimed in \eqref{qae-cost-for-Hk-eta-tilde-abstract-in-non-lin-thm}, following from \eqref{qae-cost-for-Hk-eta-tilde-abstract} and $\varepsilon_{\mathcal H_k}=\Theta(\varepsilon)$.
Let $\widehat{H}_k\coloneqq H_k/(4e^{2|c|\Xi})$, so that $\sup_{\zeta\in[-1,1]}\left|\widehat{H}_k(\zeta)\right|\leq \frac{1}{2}$, and hence we can use this polynomial $\widehat{H}_k$ to perform QSVT (\autoref{lem-qsvt}). 
Let $\varepsilon_{\varrho}
\coloneqq \frac{\widetilde{\varepsilon}^2}{16k^2}$.
Recall that
$k=O\left(\norm{\vec{x}}_2\log^2\left(\frac{\norm{\vec{x}}_2}{\varepsilon}\right)\right)$ implies $\frac{\varepsilon^2}{\norm{\vec{x}}_2^2}\log^{-4}\left(\frac{\norm{\vec{x}}_2}{\varepsilon}\right) = O\left(\frac{\varepsilon^2}{k^2}\right)=O\left(\varepsilon_{\varrho}\right)$.
Also,
$\log\left(\frac{L\widetilde{\kappa}}{\varepsilon_{\varrho}}\right)=O\left(\log\left(\frac{L\widetilde{\kappa}}{\varepsilon^2}\norm{\vec{x}}_2^2\log^4\left(\frac{\norm{\vec{x}}}{\varepsilon}\right)\right)\right)=O\left(\log\left(\frac{L\widetilde{\kappa}}{\varepsilon}\norm{\vec{x}}_2\right)\right)$ 
implies
$\log^{-3}\left(\frac{L\widetilde{\kappa}}{\varepsilon}\norm{\vec{x}}_2\right)
=
O\left(\log^{-3}\left(\frac{L\widetilde{\kappa}}{\varepsilon_{\varrho}}\right)\right)$.
Taken together, we have that
\eqref{cond-for-using-chakraborty-et-al-non-lin-transform-1} implies ${\varepsilon_{U_\Sigma}}/{\widetilde{\lambda}_{\max}}=o\left(\varepsilon_{\varrho}\cdot (L\widetilde{\kappa})^{-1.5} \cdot \log^{-3}\left(\frac{L\widetilde{\kappa}}{\varepsilon_{\varrho}}\right)\right)$, which is exactly the requirement \eqref{eps-U-sig-cond-main-thm-for-gen-mat} to use \autoref{main-thm-for-gen-mat} with accuracy $\varepsilon_{\varrho}$. 
The fact that $\varepsilon_{\varrho}\in(0,2]$ holds trivially from $\widetilde{\varepsilon}\leq 1$ and $k\in\mathbb{Z}^+$. So, we can use \autoref{main-thm-for-gen-mat}
to construct $U_{\ket{\varrho},\varepsilon}^{\widetilde{q},\norm{\vec{x}}_2}\equiv U_{\ket{\varrho},\varepsilon_{\varrho}}:\ket{0}_{\mathrm{A}\mathrm{I}}\mapsto\ket{\varrho}_{\mathrm{A}\mathrm{I}}$ 
such that 
$\lVert \ket{\varrho}_{\mathrm{A}\mathrm{I}}-\ket{0}_{\mathrm{A}}\ket{x}_{\mathrm{I}} \rVert_2 \leq \varepsilon_{\varrho}$, 
where
$\mathrm{A}$ is an ancillary register of $\widetilde{q}=a_{U_\Sigma}+O\left(\log\log(L\widetilde{\kappa}/{\varepsilon_{\varrho}})\right)$ qubits.
The implementation of $U_{\ket{\varrho},\varepsilon}^{\widetilde{q},\norm{\vec{x}}_2}$ requires $T_{U_{\ket{\varrho},\varepsilon}^{\widetilde{q},\norm{\vec{x}}_2}}$ elementary gates from \eqref{cost-to-prep-ket-x-within-epx-via-main-thm} as claimed,
since $1/\varepsilon_{\varrho}=O\left(\frac{\norm{\vec{x}}_2^2}{\varepsilon^2}\log^4\left(\frac{\norm{\vec{x}}_2}{\varepsilon}\right)\right)$ implies $\log\left({1}/{\varepsilon_{\varrho}}\right)=
O\left(\log\left(\frac{\norm{\vec{x}}_2}{\varepsilon}\right)\right)$.
Denote $\mathcal{D}\coloneqq \mathrm{diag}(\ket{0}^{\otimes \widetilde{q}}\ket{x})$ and $\widetilde{\mathcal{D}}\coloneqq \mathrm{diag}(\ket{\varrho})$. 
Then, we can use \autoref{rr23-thm2-2} to prepare a $(1,\widetilde{q}+n+2,0)$-block-encoding $U_{\widetilde{\mathcal{D}}}$ of $\widetilde{\mathcal{D}}$ via $O(\widetilde{q}+n)$ elementary gates and $O(1)$ queries to the controlled-$U_{\ket{\varrho},\varepsilon}^{\widetilde{q},\norm{\vec{x}}_2}$ gate\footnote{See \autoref{footnote-controlled-U}.}, requiring a total of 
\[
T_{U_{\widetilde{\mathcal{D}}}}
=
O\left(
\widetilde{q}+n+T_{U_{\ket{\varrho}}^{\widetilde{q},\varepsilon_{\varrho}}}
\right)
\]
elementary gates. 
By \autoref{bl-en-of-tilde-D-is-bl-en-of-D},
we know that $U_{\widetilde{\mathcal{D}}}$ is also a $(1,\widetilde{q}+n+2,\varepsilon_{\varrho})$-block-encoding of $\mathcal{D}$, because $\lVert \ket{\varrho}-\ket{0}^{\otimes \widetilde{q}}\ket{x} \rVert_\infty \leq \lVert \ket{\varrho}-\ket{0}^{\otimes \widetilde{q}}\ket{x} \rVert_2 \leq \varepsilon_{\varrho}$. 
Applying QSVT (\autoref{lem-qsvt}) with $\delta_0\coloneqq {\widetilde{\varepsilon}}/{4}$ to $\mathcal{D}$ (via block-encoding $U_{\widetilde{\mathcal{D}}}$) according to the polynomial $\widehat{H}_k$, we can construct a $(1,\widetilde{q}+n+4,k\sqrt{\varepsilon_{\varrho}}+\delta_0)$-block-encoding $U_{\widehat{H}_k(\mathcal{D})}$ of $\widehat{H}_k(\mathcal{D})$, using $k$ calls to $U_{\widetilde{\mathcal{D}}}$, 
a single application of controlled-$U_{\widetilde{\mathcal{D}}}$\footnote{See \autoref{footnote-controlled-U}.}, and $O\left(k(\widetilde{q}+n)\right)$ calls to single and two-qubit elementary gates. 
In total, implementing $U_{\widehat{H}_k(\mathcal{D})}$ takes
\[
T_{U_{\widehat{H}_k(\mathcal{D})}}
=
O\left(k\left(\widetilde{q}+n+T_{U_{\widetilde{\mathcal{D}}}}\right)\right)
=
O\left(
\norm{\vec{x}}_2
\cdot
\log^2\left(\frac{1}{\varepsilon}\right)
\cdot
\left(\widetilde{q}+n
+
T_{U_{\ket{\varrho}}^{\widetilde{q},\varepsilon_{\varrho}}}
\right)\right) \tagaligneq \label{cost-for-U-hat-Hk-D}
\]
elementary gates, along with $O(\mathrm{poly}\{k,\log(1/\delta_0)\})=O\left(\mathrm{poly}\left\{\norm{\vec{x}}_2\log^2\left({\norm{\vec{x}}_2}/{\varepsilon}\right),\log\left({1}/{\varepsilon}\right)\right\}\right)$ classical computation time as preprocessing.
Denote by $\widetilde{H}_k(\mathcal{D})$ the matrix that $U_{\widehat{H}_k(\mathcal{D})}$ exactly encodes (i.e., $U_{\widehat{H}_k(\mathcal{D})}$ is a $(1,\widetilde{q}+n+4,0)$-block-encoding of $\widetilde{H}_k(\mathcal{D})$). 
Since
$\norm{H_k(\mathcal{D})-h(\mathcal{D})}=\norm{H_k(\ket{0}^{\otimes \widetilde{q}}\ket{x})-h(\ket{0}^{\otimes \widetilde{q}}\ket{x})}_\infty$ (similarly to the proof of \autoref{bl-en-of-tilde-D-is-bl-en-of-D})
and
all entries of $\ket{0}^{\otimes \widetilde{q}}\ket{x}$ lie in $\left[-{\Xi}/{\norm{\vec{x}}},{\Xi}/{\norm{\vec{x}}}\right]$, 
the QSVT block-encoding error bound for $U_{\widehat{H}_k(\mathcal{D})}$ and \eqref{non-lin-proof-Hk-approx-err1}
imply
\begin{align*}
\norm{\widetilde{H}_k(\mathcal{D})-\frac{h(\mathcal{D})}{4e^{2|c|\Xi}}}
\leq
\norm{\widetilde{H}_k(\mathcal{D})-\widehat{H}_k(\mathcal{D})}
+
\norm{\frac{H_k(\mathcal{D})-h(\mathcal{D})}{4e^{2|c|\Xi}}}
\leq
\left(k\sqrt{\varepsilon_{\varrho}}+\delta_0\right)
+
\frac{\varepsilon_{\mathcal H_k}}{4e^{2|c|\Xi}}
\leq \widetilde{\varepsilon}. \tagaligneq \label{bound-tranformed-diag-qsvt}
\end{align*}
However,
since the ancilla-based embedding introduces additional zero entries in $\ket{0}^{\otimes \widetilde{q}}\ket{x}$,
$h(\mathcal{D})=\mathrm{diag}\left(h\left(\ket{0}_{\mathrm{A}}\ket{x}_{\mathrm{I}}\right)\right)$ 
contains diagonal entries equal to $h(0)\neq 0$
at indices outside $\{1,\dots,N\}$.
If we do not address these entries properly, they can lead to an incorrectly large normalisation factor and, more importantly, to a state different from the intended target,
as can be seen by
\[
h(\ket{0}_{\mathrm{A}}\ket{x}_{\mathrm{I}})
=
h(0)\ket{0}_{\mathrm{A}}\ket{0}_{\mathrm{I}}
+
\sum_{i=1}^{N}h(\zeta_i)\ket{0}_{\mathrm{A}}\ket{i}_{\mathrm{I}}
+
\sum_{j=N+1}^{2^{n}-1}h(0)\ket{0}_{\mathrm{A}}\ket{j}_{\mathrm{I}}
+
\ket{h_x^{\perp}}_{\mathrm{A}\mathrm{I}}, \tagaligneq\label{expand-h-D}
\]
where $\ket{h_x^{\perp}}$ is an unnormalised state such that $\left(\kb{0}_{\mathrm{A}} \otimes I^{\otimes n}\right)\ket{h_x^{\perp}}=0$. Here, only the second term on the RHS of \eqref{expand-h-D} corresponds to the desired component, as for $i\in\{1,\dots,N\}$, $h(\zeta_i)=e^{c \norm{\vec{x}}\zeta_i}=e^{c x_i}$, while other terms are extraneous. 
These extraneous entries are inherited by $\widetilde{H}_k(\mathcal{D})$ as well.
To extract only the desired component, we multiply the matrix $\widetilde{H}_k(\mathcal{D})$ by $\ket{0}^{\otimes \widetilde{q}}\ket{f_N}=\ket{0}^{\otimes \widetilde{q}}\sum_{i=1}^N f_i\ket{i}/\lVert\vec{f}\rVert$, 
whose amplitudes corresponding to the first, third, and fourth terms in \eqref{expand-h-D} are zero.
The bound \eqref{bound-tranformed-diag-qsvt} guarantees that the resulting state is $\widetilde{\varepsilon}$-close to the one produced by the exact operator ${h(\mathcal{D})}/\left(4e^{2|c|\Xi}\right)$, which is
\[
\frac{h(\mathcal{D})}{4e^{2|c|\Xi}}\ket{0}^{\otimes \widetilde{q}}\ket{f_N}
=
\ket{0}^{\otimes \widetilde{q}}
\otimes
\frac{1}{\lVert\vec{f}\rVert}
\sum_{i=1}^N
f_i 
\left(
\frac{e^{c\norm{\vec{x}}_2\cdot \zeta_i}}{4e^{2|c|\Xi}}
\right)
\ket{i}
=\ket{0}^{\otimes \widetilde{q}}
\otimes
\frac{1}{\lVert\vec{f}\rVert \cdot {4e^{2|c|\Xi}}}
\left(
\vec{f}
\odot
e^{c\vec{x}}
\right). \tagaligneq \label{expand-h-D-times-plus-ket}
\]
Since $\norm{\vec{f}\odot e^{c\vec{x}}}=\sqrt{\sum_i |f_i|^2\cdot e^{2c x_i}}\in\left[e^{-|c|\Xi}\lVert \vec{f} \rVert , e^{|c|\Xi}\lVert \vec{f} \rVert \right]$, we have
\[
\norm{\frac{h(\mathcal{D})}{4e^{2|c|\Xi}}\ket{0}^{\otimes \widetilde{q}}\ket{f_N}}
=
\frac{\norm{\vec{f}\odot e^{c\vec{x}}}}{4e^{2|c|\Xi}\lVert\vec{f}\rVert}
\in
\left[
\frac{1}{4e^{3|c|\Xi}}
,
\frac{1}{4e^{|c|\Xi}}
\right]. \tagaligneq \label{bounds-for-amp-of-prepared-exp-x}
\]
From \eqref{bound-tranformed-diag-qsvt}, \eqref{bounds-for-amp-of-prepared-exp-x}, and the assumption $\varepsilon\in(0,2]$, we also have\footnote{Here, we use the notation $A\pm\delta$, for $A\in\mathbb{R}$ and $\delta>0$, to mean the interval $[A-\delta,A+\delta]$.}
\begin{align*}
\norm{\widetilde{H}_k(\mathcal{D})\ket{0}^{\otimes \widetilde{q}}\ket{f_N}}
\in
\norm{\frac{h(\mathcal{D})}{4e^{2|c|\Xi}}\ket{0}^{\otimes \widetilde{q}}\ket{f_N}} \pm \widetilde{\varepsilon}
=
\frac{\norm{\vec{f}\odot e^{c\vec{x}}}}{4e^{2|c|\Xi}\lVert\vec{f}\rVert}
\pm
\frac{\varepsilon}{16e^{3|c|\Xi}}
=
\left[
\frac{1}{8e^{3|c|\Xi}},
\frac{1+2e^{2|c|\Xi}}{8e^{3|c|\Xi}}
\right]. \tagaligneq \label{bounds-for-amp-of-prepared-widehat-Hk-x}
\end{align*}
Therefore, 
we can prepare the state
\begin{align*}
U_{\widehat{H}_k(\mathcal{D})} \left(I^{\otimes (2\widetilde{q}+n+4)} \otimes U_{\ket{f_N}}\right)
\ket{0}_{\mathrm{Q}\mathrm{A}\mathrm{I}}
=\ket{0}_{\mathrm{Q}}
\widetilde{H}_k(\mathcal{D})
\left(\ket{0}_{\mathrm{A}}\ket{f_N}_{\mathrm{I}}\right)+\ket{\Psi^{\perp}}_{\mathrm{Q}\mathrm{A}\mathrm{I}}, \tagaligneq \label{state-after-apply-U-hat-Hk-D}
\end{align*}
where $\ket{\Psi^{\perp}}_{\mathrm{Q}\mathrm{A}\mathrm{I}}$ is an unnormalised state such that $\left(\kb{0}_{\mathrm{Q}}\otimes I^{\otimes(\widetilde{q}+n)}\right)\ket{\Psi^{\perp}}_{\mathrm{Q}\mathrm{A}\mathrm{I}}=0$.
We want to use QAA (\autoref{normalising-block-encoding-prep-state-via-qaa}) to prepare the normalised version of the desired part.
In order to do so, we need to know a lower bound to the target's amplitude, but this is already obtained in \eqref{bounds-for-amp-of-prepared-widehat-Hk-x}.
Accordingly,
by
denoting $\ket{\chi}\coloneqq 
\frac{\widetilde{H}_k(\mathcal{D})\ket{0}^{\otimes \widetilde{q}}\ket{f_N}}{\norm{\widetilde{H}_k(\mathcal{D})\ket{0}^{\otimes \widetilde{q}}\ket{f_N}}}$,
letting $\varepsilon_\chi\coloneqq\frac{\varepsilon}2$,
and 
letting $a_\ell\coloneqq 1/(8e^{3|c|\Xi})=\Theta(1)$ in
\autoref{normalising-block-encoding-prep-state-via-qaa}, we can construct a unitary 
$U_{\ket{\vec{f}\odot e^{c\vec{x}}},\varepsilon}
\equiv U_{\ket{\widetilde{\varphi}},\varepsilon_{\chi}}:\ket{0}_{\mathrm{Q}\mathrm{A}\mathrm{I}}\mapsto \ket{\widetilde{\varphi}}_{\mathrm{Q}\mathrm{A}\mathrm{I}}$,
such that $\norm{\ket{\widetilde{\varphi}}_{\mathrm{Q}\mathrm{A}\mathrm{I}}-\ket{0}_{\mathrm{Q}}\ket{\chi}_{\mathrm{A}\mathrm{I}}}\leq \varepsilon_\chi$.
The implementation of $U_{\ket{\vec{f}\odot e^{c\vec{x}}},\varepsilon}$ requires 
\begin{align*}
&O\left(\frac{1}{{a_\ell}}\log\left(\frac{1}{{\varepsilon_{\chi}}}\right)\left(2\widetilde{q}+2n+T_{U_{\widehat{H}_k(\mathcal{D})}}+T_{U_{\ket{f_N}}}\right)\right)\\
&=
O\left(
\norm{\vec{x}}_2
\cdot
\log^3\left(\frac{1}{\varepsilon}\right)
\cdot
\left(\widetilde{q}+n
+
T_{U_{\ket{\varrho}}^{\widetilde{q},\varepsilon_{\varrho}}}
\right)
+
\log\left(\frac{1}{{\varepsilon}}\right)
\cdot
T_{U_{\ket{f_N}}}
\right)
\end{align*}
elementary gates, where we use \eqref{cost-for-U-hat-Hk-D}.
Lastly, to prove the approximation error, \autoref{lem-for-unit-vector-approx2}
and \eqref{bound-tranformed-diag-qsvt} imply
\begin{align*}
\Big\lVert
\ket{\chi}-
\ket{0}^{\otimes \widetilde{q}}
\ket{\vec{f}\odot e^{c\vec{x}}}
\Big\rVert
=
\norm{
\frac{\widetilde{H}_k(\mathcal{D})\ket{0}^{\otimes \widetilde{q}}\ket{f_N}}{\norm{\widetilde{H}_k(\mathcal{D})\ket{0}^{\otimes \widetilde{q}}\ket{f_N}}}
-
\frac{\frac{h(\mathcal{D})}{4e^{2|c|\Xi}}\ket{0}^{\otimes \widetilde{q}}\ket{f_N}}{\norm{\frac{h(\mathcal{D})}{4e^{2|c|\Xi}}\ket{0}^{\otimes \widetilde{q}}\ket{f_N}}}
}
\leq
\frac{2\widetilde{\varepsilon}}{\norm{\frac{h(\mathcal{D})}{4e^{2|c|\Xi}}\ket{0}^{\otimes \widetilde{q}}\ket{f_N}}}.
\end{align*}
Therefore, the prepared state $\ket{\widetilde{\varphi}}$ satisfies
\begin{align*}
\Big\lVert
\ket{\widetilde{\varphi}}_{\mathrm{Q}\mathrm{A}\mathrm{I}}-\ket{0}_{\mathrm{Q}}
\ket{0}_{\mathrm{A}}
\ket{\vec{f}\odot e^{c\vec{x}}}_{\mathrm{I}}
\Big\rVert
&\leq
\Big\lVert
\ket{\widetilde{\varphi}}_{\mathrm{Q}\mathrm{A}\mathrm{I}}
-
\ket{0}_{\mathrm{Q}}\ket{\chi}_{\mathrm{A}\mathrm{I}}
\Big\rVert
+
\Big\lVert
\ket{0}_{\mathrm{Q}}\ket{\chi}_{\mathrm{A}\mathrm{I}}
-
\ket{0}_{\mathrm{Q}}
\ket{0}_{\mathrm{A}}
\ket{\vec{f}\odot e^{c\vec{x}}}_{\mathrm{I}}
\Big\rVert \\
&\leq
\varepsilon_{\chi} + \frac{2\widetilde{\varepsilon}}{\norm{\frac{h(\mathcal{D})}{4e^{2|c|\Xi}}\ket{0}^{\otimes \widetilde{q}}\ket{f_N}}} \\
&\stackrel{\eqref{bounds-for-amp-of-prepared-exp-x}}{\leq}
\frac{\varepsilon}{2}+{8\,\widetilde{\varepsilon}\, e^{3|c|\Xi}} \\
&=\varepsilon. \qedhere
\end{align*}
\end{proof}

\begin{corollary}\label{non-lin-transform-concrete-cost-thm}
Let $\varepsilon\in(0,2]$.
If \autoref{assump-plus-TU-ket-y-log2-N}, \autoref{assump-QROM-like-unitary}, and \autoref{assump-LK-plus} hold,
implementing 
$U_{\ket{\vec{f}\odot e^{c\vec{x}}},\varepsilon}:\ket{0}_{\mathrm{Q}\mathrm{A}\mathrm{I}}\mapsto\ket{\widetilde{\varphi}}_{\mathrm{Q}\mathrm{A}\mathrm{I}}$ 
such that $\bigg\lVert\ket{\widetilde{\varphi}}_{\mathrm{Q}\mathrm{A}\mathrm{I}}-\ket{0}_{\mathrm{Q}\mathrm{A}}\ket{\vec{f}\odot e^{c\vec{x}}}_{\mathrm{I}}\bigg\rVert_2 \leq\varepsilon$
by \autoref{preparing-ket-exp-x-from-ket-x-w-error} 
requires
\[
T_{U_{\ket{\vec{f}\odot e^{c\vec{x}}},\varepsilon}}
=
{O}\left(
\norm{\vec{x}}_2
\frac{\lVert \Sigma \rVert_F}{\lambda_{\max}}\,\kappa^{1.5}\,\polylog(N)
\log^6\left(\frac{\norm{\vec{x}}_2\kappa}{\varepsilon}\right)
\right) \tagaligneq \label{cost-to-prep-ket-x-within-epx-via-main-thm-concrete}
\]
elementary gate depth and $\mathbb{Z}^{+}\ni 2\widetilde{q}+n+4=O\big(\log(N)+\log\log({\norm{\vec{x}}_2\kappa}/{\varepsilon})\big)$ ancillary qubits.
For creating the description of the QSVT circuit used in $U_{\ket{\vec{f}\odot e^{c\vec{x}}},\varepsilon}:\ket{0}_{\mathrm{Q}\mathrm{A}\mathrm{I}}\mapsto\ket{\widetilde{\varphi}}_{\mathrm{Q}\mathrm{A}\mathrm{I}}$, a one-time classical computation in
$O\left(\mathrm{poly}\left\{\norm{\vec{x}}_2\log^2\left({\norm{\vec{x}}_2}/{\varepsilon}\right),\log\left({1}/{\varepsilon}\right)\right\}\right)$ time is run, 
and for determining the QSVT polynomial of transformation, a one-time QAE procedure with
\[
O\left(\frac{1}{\varepsilon}
\frac{\lVert \Sigma \rVert_F}{\lambda_{\max}}\,\kappa^{1.5}\,\polylog(N)
\log^2\left(\frac{\kappa}{\varepsilon}\right)
\right) \tagaligneq \label{qae-cost-for-Hk-eta-tilde-concrete-in-non-lin-thm-coro}
\]
elementary gate depth and an ancillary register of $\widehat{q}+2=O(\log(N)+\log\log(\kappa/\varepsilon))$ qubits is run with success probability at least $0.99$.
\end{corollary}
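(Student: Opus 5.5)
This corollary follows by specialising the abstract bounds of \autoref{preparing-ket-exp-x-from-ket-x-w-error} to the concrete data-loader parameters. No new construction is needed.

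\textbf{Step 1: exact block-encoding.} Under \autoref{assump-QROM-like-unitary}, \autoref{block-encoding-of-sigma} gives an exact block-encoding $U_\Sigma$ with
\[
\alpha_{U_\Sigma}=\norm{\Sigma}_F,\qquad a_{U_\Sigma}=\ceil{\log_2(N+1)},\qquad \varepsilon_{U_\Sigma}=0,\qquad T_{U_\Sigma}=O(\polylog N).
\]
Since $\varepsilon_{U_\Sigma}=0$, condition \eqref{cond-for-using-chakraborty-et-al-non-lin-transform-1} holds trivially, so \autoref{preparing-ket-exp-x-from-ket-x-w-error} applies.

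\textbf{Step 2: the state-loading unitaries.} Under \autoref{assump-plus-TU-ket-y-log2-N} we have $T_{U_{\ket{z}}}=O(\log^2N)$. The same assumption applies to $U_{\ket{f_N}}$, because the components of $\vec f$ are known, so $T_{U_{\ket{f_N}}}=O(\log^2 N)$. Under \autoref{assump-LK-plus} we have $L=\widetilde{\Theta}(1)$, $\widetilde{\lambda}_{\max}=\widetilde{\Theta}(\lambda_{\max})$ and $\widetilde{\kappa}=\widetilde{\Theta}(\kappa)$. The resulting polylogarithmic factors in $N$ are absorbed into $\polylog(N)$.

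\textbf{Step 3: cost of $U_{\ket{\varrho},\varepsilon}$.} Substituting the above into \eqref{cost-to-prep-ket-x-within-epx-via-main-thm} gives
\[
T_{U_{\ket{\varrho},\varepsilon}^{\widetilde q,\norm{\vec x}_2}}=O\!\left(\frac{\norm{\Sigma}_F}{\lambda_{\max}}\kappa^{1.5}\polylog(N)\log^3\!\left(\frac{\norm{\vec x}_2\kappa}{\varepsilon}\right)\right).
\]
Here $\sqrt{\kappa}$ combines with $\kappa$ to give $\kappa^{1.5}$. One logarithm comes from the amplitude-amplification factor $\log(\norm{\vec x}_2/\varepsilon)$ and two from the QSVT square-root factor $\log^2(\norm{\vec x}_2\kappa/\varepsilon)$.

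\textbf{Step 4: total cost.} Plugging Step 3 into \eqref{total-abstract-cost-non-linear-transform} multiplies it by $\norm{\vec x}_2\log^3(1/\varepsilon)$. The additive terms $\widetilde q+n$ and $\log(1/\varepsilon)\,T_{U_{\ket{f_N}}}$ are dominated. Bounding every logarithm uniformly by $\log(\norm{\vec x}_2\kappa/\varepsilon)$ yields the exponent $3+3=6$ in \eqref{cost-to-prep-ket-x-within-epx-via-main-thm-concrete}. By \autoref{convention-gate-count-vs-depth}, this gate count may be quoted as gate depth, since the polylogarithmic-depth loaders only contribute inside $\polylog(N)$.

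\textbf{Step 5: qubit count.} The register sizes are $\mathrm{Q}$ with $\widetilde q+n+4$ qubits and $\mathrm{A}$ with $\widetilde q$ qubits. With $a_{U_\Sigma}=O(\log N)$ and $\log\log L=O(\log\log\log N)$, the total $2\widetilde q+n+4$ is $O(\log N+\log\log(\norm{\vec x}_2\kappa/\varepsilon))$.

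\textbf{Step 6: QAE cost and classical preprocessing.} The QAE bound \eqref{qae-cost-for-Hk-eta-tilde-concrete-in-non-lin-thm-coro} follows directly from \eqref{qae-cost-for-Hk-eta-tilde-concrete} of \autoref{Hk-using-estimate}, using $\varepsilon_{\mathcal H_k}=\Theta(\varepsilon)$ as set in the theorem's proof. The classical preprocessing time is copied verbatim from \autoref{preparing-ket-exp-x-from-ket-x-w-error}.

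\textbf{Main obstacle.} The only delicate point is the logarithm bookkeeping. One must check that every $\log(1/\varepsilon_\varrho)$ and $\log(L\widetilde\kappa/\varepsilon_\varrho)$ is $O(\log(\norm{\vec x}_2\kappa/\varepsilon)+\log\log N)$. This holds because $\varepsilon_\varrho=\Theta(\varepsilon^2/k^2)$ with $k=\widetilde O(\norm{\vec x}_2)$. One must also make sure the count of logarithmic powers does not exceed six.
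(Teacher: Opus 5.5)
Your proposal is correct and follows essentially the same route as the paper's proof. Both specialise \autoref{preparing-ket-exp-x-from-ket-x-w-error} using the exact block-encoding of \autoref{block-encoding-of-sigma} and the polylogarithmic loaders: the cost of $U_{\ket{\varrho},\varepsilon}^{\widetilde{q},\norm{\vec{x}}_2}$ is obtained as in \autoref{main-thm-gen-mat-concrete-cost}, plugged into \eqref{total-abstract-cost-non-linear-transform}, and the QAE cost is read off from \eqref{qae-cost-for-Hk-eta-tilde-concrete} with $\varepsilon_{\mathcal H_k}=\Theta(\varepsilon)$; your explicit check that $\varepsilon_\varrho=\Theta(\varepsilon^2/k^2)$ with $k=\widetilde{O}(\norm{\vec{x}}_2)$ keeps every logarithm within $O(\log(\norm{\vec{x}}_2\kappa/\varepsilon)+\log\log N)$ is slightly more careful than the paper's bookkeeping and is exactly what justifies the final $\log^6$ factor.
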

\begin{proof}
The proof proceeds the same way as in \autoref{preparing-ket-exp-x-from-ket-x-w-error}.
In accordance with \autoref{main-thm-gen-mat-concrete-cost},
\autoref{assump-plus-TU-ket-y-log2-N}, \autoref{assump-QROM-like-unitary}, and \autoref{assump-LK-plus} imply
$T_{U_{\ket{f_N}}}=O(\log^2 N)$ and
that the implementation of
$U_{\ket{\varrho},\varepsilon}^{\widetilde{q},\norm{\vec{x}}_2}\equiv U_{\ket{\varrho},\varepsilon_{\varrho}}:\ket{0}_{\mathrm{A}\mathrm{I}}\mapsto\ket{\varrho}_{\mathrm{A}\mathrm{I}}$ 
for 
$1/\varepsilon_{\varrho}=O\left(\frac{\norm{\vec{x}}_2^2}{\varepsilon^2}\log^4\left(\frac{1}{\varepsilon}\right)\right)$
requires
\[
T_{U_{\ket{\varrho},\varepsilon}^{\widetilde{q},\norm{\vec{x}}_2}}
=
{O}\left(
\frac{\lVert \Sigma \rVert_F}{\lambda_{\max}}\,\kappa^{1.5}\,\polylog(N)
\log^3\left(\frac{\norm{\vec{x}}_2\kappa}{\varepsilon}\right)
\right)
\]
elementary gate depth, and an ancillary register of $\widetilde{q}=O\big(\log(N)+\log\log({\norm{\vec{x}}_2\kappa}/{\varepsilon})\big)$ qubits.
Plugging these into \eqref{total-abstract-cost-non-linear-transform} proves \eqref{cost-to-prep-ket-x-within-epx-via-main-thm-concrete}.
The classical computation time required is unaffected by the addtional assumptions.
The QAE circuit needed for constructing the polynomial $H_k$ reduces to \eqref{qae-cost-for-Hk-eta-tilde-concrete-in-non-lin-thm-coro}, according to \eqref{qae-cost-for-Hk-eta-tilde-concrete} in \autoref{Hk-using-estimate}.
\end{proof}
\begin{corollary}\label{non-lin-thm-sum-of-inc}
In the setting of \autoref{Hk-using-estimate-sum-of-inc}, let $\vec{f}\in\mathbb{R}^N\backslash \{\vec{0}\}$ be a vector whose components are known.
For any $\varepsilon\in(0,2]$, we can construct a unitary $U_{\ket{\vec{f}\odot e^{c\vec{y}}},\varepsilon}:\ket{0}_{\mathrm{Q}\mathrm{A}\mathrm{I}}\mapsto\ket{\widetilde{\varphi}}_{\mathrm{Q}\mathrm{A}\mathrm{I}}$ such that $\bigg\lVert\ket{\widetilde{\varphi}}_{\mathrm{Q}\mathrm{A}\mathrm{I}}-\ket{0}_{\mathrm{Q}\mathrm{A}}\ket{\vec{f}\odot e^{c\vec{y}}}_{\mathrm{I}}\bigg\rVert \leq\varepsilon$, where 
\[
\ket{\vec{f}\odot e^{c\vec{y}}}
=
\frac{\vec{f}\odot e^{c\vec{y}}}{\norm{\vec{f}\odot e^{c\vec{y}}}}
=
\frac{1}{\sqrt{\sum_{i=1}^N |f_i|^2\cdot e^{2c y_i}}}
\sum_{i=1}^N f_i \cdot e^{c y_i} \ket{i},
\]
$\vec{y}=\mathcal{L}_N\vec{x}$ for a realisation $\vec{x}=\Sigma^{1/2}\vec{z}$ of $X\sim\mathcal{N}(0,\Sigma)$,
$\mathrm{I}$ is a system register of $n=\ceil{\log_2(N+1)}$ qubits, $\mathrm{Q}$ is an ancillary register of $\widetilde{q}+n+4$ qubits, $\mathrm{A}$ is an ancillary register of $\widetilde{q}$ qubits, and $\mathbb{Z}^{+}\ni\widetilde{q}=O\big(\log(N)+\log\log({\norm{\vec{y}}_2\kappa}/{\varepsilon})\big)$. 
The implementation of $U_{\ket{\vec{f}\odot e^{c\vec{y}}},\varepsilon}$ requires
\[
T_{U_{\ket{\vec{f}\odot e^{c\vec{y}}},\varepsilon}}
=
{O}\left(
\norm{\vec{y}}_2
\frac{\lVert \Sigma \rVert_F}{\lambda_{\max}}\,\kappa^{1.5}\,N\polylog(N)
\log^6\left(\frac{\norm{\vec{y}}_2\kappa}{\varepsilon}\right)
\right) \tagaligneq \label{cost-to-prep-ket-x-within-epx-via-main-thm-concrete-sum-of-inc}
\]
elementary gate depth.
For creating the description of the QSVT circuit used in $U_{\ket{\vec{f}\odot e^{c\vec{y}}},\varepsilon}$, 
a one-time classical computation in $O\left(\mathrm{poly}\left\{\norm{\vec{y}}_2\log^2\left({\norm{\vec{y}}_2}/{\varepsilon}\right),\log\left({1}/{\varepsilon}\right)\right\}\right)$ time is run, and for determining the QSVT polynomial of transformation, a one-time QAE procedure with
\[
O\left(\frac{1}{\varepsilon}
\frac{\lVert \Sigma \rVert_F}{\lambda_{\max}}\,\kappa^{1.5}\,N\polylog(N)
\log^2\left(\frac{\kappa}{\varepsilon}\right)
\right) \tagaligneq \label{qae-cost-for-Hk-eta-tilde-concrete-in-non-lin-thm-coro-sum-of-inc}
\]
elementary gate depth and an ancillary register of $\widehat{q}+2=O(\log(N)+\log\log(\kappa/\varepsilon))$ qubits is run with success probability at least $0.99$.
\end{corollary}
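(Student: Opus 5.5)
The plan is to rerun the proof of \autoref{preparing-ket-exp-x-from-ket-x-w-error} almost line by line, with three substitutions: the state-preparation building block becomes $U_{\ket{y},\varepsilon_\varrho}$ from \autoref{main-thm-coro-concrete-cost-sum-of-inc}, the polynomial comes from \autoref{Hk-using-estimate-sum-of-inc}, and the QAE subroutine is \autoref{qae-for-main-thm-sum-of-inc-case}.

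Fix $\widetilde{\varepsilon}\coloneqq \frac{\varepsilon}{16}e^{-3|c|\Xi}$ and $\varepsilon_{\mathcal H_k}\coloneqq 2e^{2|c|\Xi}\widetilde{\varepsilon}$. Since $\varepsilon\le 2$, $\varepsilon_{\mathcal H_k}$ lies in the range allowed by \autoref{Hk-using-estimate-sum-of-inc}. Invoking that corollary gives the estimate $\widetilde{\eta}\in[\norm{\vec{y}}_2/2,\norm{\vec{y}}_2]$ and a polynomial $H_k=\mathcal{H}^{(\widetilde{\eta})}_k$ of degree $k=O(\norm{\vec{y}}_2\log^2(\norm{\vec{y}}_2/\varepsilon))$ with two properties: $H_k$ approximates $\zeta\mapsto e^{c\norm{\vec{y}}_2\zeta}$ to within $\varepsilon_{\mathcal H_k}$ on $[-\Xi/\norm{\vec{y}}_2,\Xi/\norm{\vec{y}}_2]$, and $\sup_{[-1,1]}|H_k|\le 2e^{2|c|\Xi}$. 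The QAE cost of this step is exactly \eqref{qae-cost-for-Hk-eta-tilde-concrete-sum-of-inc-case} with $\varepsilon_{\mathcal H_k}=\Theta(\varepsilon)$, which yields \eqref{qae-cost-for-Hk-eta-tilde-concrete-in-non-lin-thm-coro-sum-of-inc}. Next, rescale to $\widehat{H}_k\coloneqq H_k/(4e^{2|c|\Xi})$ so that \autoref{lem-qsvt} applies.

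Set $\varepsilon_\varrho\coloneqq\widetilde{\varepsilon}^2/(16k^2)$ and build $U_{\ket{y},\varepsilon_\varrho}$ via \autoref{main-thm-coro-concrete-cost-sum-of-inc}. Since $\log(1/\varepsilon_\varrho)=O(\log(\norm{\vec{y}}_2/\varepsilon))$, its depth is $O\bigl(\frac{\norm{\Sigma}_F}{\lambda_{\max}}\kappa^{1.5}N\polylog(N)\log^3(\norm{\vec{y}}_2\kappa/\varepsilon)\bigr)$, and it uses $\widetilde{q}=O(\log N+\log\log(\norm{\vec{y}}_2\kappa/\varepsilon))$ ancillas. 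No condition on $\varepsilon_{U_\Sigma}$ needs checking, because \autoref{block-encoding-of-sigma} gives $\varepsilon_{U_\Sigma}=0$.

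\autoref{rr23-thm2-2} turns this into an exact block-encoding of $\mathrm{diag}(\ket{\varrho})$. By \autoref{bl-en-of-tilde-D-is-bl-en-of-D}, that is an $\varepsilon_\varrho$-block-encoding of $\mathcal{D}=\mathrm{diag}(\ket{0}^{\otimes\widetilde q}\ket{y})$. Applying QSVT with $\delta_0=\widetilde{\varepsilon}/4$ gives the analogue of \eqref{bound-tranformed-diag-qsvt}, namely $\norm{\widetilde H_k(\mathcal D)-h(\mathcal D)/(4e^{2|c|\Xi})}\le\widetilde\varepsilon$. This uses the hypothesis $\norm{\vec y}_\infty\le\Xi$, which guarantees that every amplitude $y_i/\norm{\vec y}_2$ lies in the interval where $H_k$ approximates $h$.

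The extraneous entries $h(0)\ne0$ appearing in \eqref{expand-h-D} are removed exactly as before, by multiplying onto $\ket{0}^{\otimes\widetilde q}\ket{f_N}$. Because $e^{-|c|\Xi}\le e^{cy_i}\le e^{|c|\Xi}$, the amplitude bounds \eqref{bounds-for-amp-of-prepared-exp-x}--\eqref{bounds-for-amp-of-prepared-widehat-Hk-x} carry over verbatim with $\vec x$ replaced by $\vec y$. Fixed-point QAA (\autoref{normalising-block-encoding-prep-state-via-qaa}) with $a_\ell=1/(8e^{3|c|\Xi})=\Theta(1)$ and $\varepsilon_\chi=\varepsilon/2$ then follows. \autoref{lem-for-unit-vector-approx2} closes the error budget at $\varepsilon/2+8\widetilde\varepsilon e^{3|c|\Xi}=\varepsilon$.

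The remaining step is the cost accounting. Plugging into \eqref{total-abstract-cost-non-linear-transform} gives a total of $\norm{\vec y}_2\log^3(1/\varepsilon)$ times the depth of $U_{\ket{y},\varepsilon_\varrho}$, plus $\log(1/\varepsilon)\,T_{U_{\ket{f_N}}}$ with $T_{U_{\ket{f_N}}}=O(\log^2N)$ under \autoref{assump-plus-TU-ket-y-log2-N}. The $\log^2$ from $k$ combines with the $\log^3$ to give the $\log^6$ in \eqref{cost-to-prep-ket-x-within-epx-via-main-thm-concrete-sum-of-inc}, which matches the claim. I expect the main care point to be confirming that the extra factor $N$, coming from $\norm{\mathcal L_N}_F$ in the building block, enters only multiplicatively, and that the $N$-dependence of $\varepsilon_\varrho$ appears only inside logarithms. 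The classical QSVT preprocessing time follows from $k$ and $\delta_0$ exactly as in the theorem.
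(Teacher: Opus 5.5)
Your proposal is correct and follows the paper's own proof. Like the paper, you rerun the proof of \autoref{preparing-ket-exp-x-from-ket-x-w-error} with \autoref{Hk-using-estimate-sum-of-inc} supplying $\widetilde{\eta}$ and $H_k$, and with \autoref{main-thm-coro-concrete-cost-sum-of-inc} supplying the state-preparation block-encoding of $\ket{y}$, then do the cost accounting as in \autoref{non-lin-transform-concrete-cost-thm}. The only blemishes are two constant-level slips inherited verbatim from the paper's proof of \autoref{preparing-ket-exp-x-from-ket-x-w-error}, both fixed by shrinking $\varepsilon$ and $\varepsilon_\varrho$ by constant factors without changing any stated complexity. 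First, the range check for $\varepsilon_{\mathcal H_k}$ also needs $\varepsilon_{\mathcal H_k}\le 2|c|\Xi e^{2|c|\Xi}$, which $\varepsilon\le 2$ does not imply when $|c|\Xi$ is small. Second, \autoref{lem-qsvt} as stated gives error $4k\sqrt{\varepsilon_\varrho}$ rather than $k\sqrt{\varepsilon_\varrho}$, so your budget sums to $7\widetilde{\varepsilon}/4$, not $\widetilde{\varepsilon}$.
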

\begin{proof}
The proof proceeds in the same way as that of \autoref{preparing-ket-exp-x-from-ket-x-w-error}, but
by replacing \autoref{Hk-using-estimate}
with \autoref{Hk-using-estimate-sum-of-inc} for the construction of $H_k$,
and replacing \autoref{main-thm-for-gen-mat}
with \autoref{main-thm-coro-concrete-cost-sum-of-inc}
for the implementation of
$U_{\ket{\varrho},\varepsilon}^{\widetilde{q},\norm{\vec{y}}_2}\equiv U_{\ket{\varrho},\varepsilon_{\varrho}}:\ket{0}_{\mathrm{A}\mathrm{I}}\mapsto\ket{\varrho}_{\mathrm{A}\mathrm{I}}$ 
such that 
$\lVert \ket{\varrho}_{\mathrm{A}\mathrm{I}}-\ket{0}_{\mathrm{A}}\ket{y}_{\mathrm{I}} \rVert_2 \leq \varepsilon_{\varrho}$. The rest of the proof follows directly just by replacing $\vec{x}$ in the proof of \autoref{preparing-ket-exp-x-from-ket-x-w-error} with $\vec{y}$, and the costs \eqref{cost-to-prep-ket-x-within-epx-via-main-thm-concrete-sum-of-inc} and \eqref{qae-cost-for-Hk-eta-tilde-concrete-in-non-lin-thm-coro-sum-of-inc} follow similarly to \autoref{non-lin-transform-concrete-cost-thm}.
\end{proof}
\begin{remark}
The classical computation in $O(\mathrm{poly}\{k,\log(1/\delta)\})$ time (due to \autoref{lem-qsvt})
that arises in \autoref{preparing-ket-exp-x-from-ket-x-w-error}, 
\autoref{non-lin-transform-concrete-cost-thm},
and \autoref{non-lin-thm-sum-of-inc},
where $k\equiv \widetilde{O}\left(\norm{\vec{x}}\right)$ or $\widetilde{O}\left(\norm{\vec{y}}\right)$
and $\delta \equiv \Theta(\varepsilon)$,
is for the rotation-angle-finding procedure inherent in QSVT and all other variants of QSP.
This classical pre-processing is not a trivial task and has expanded into its own sub-field of study.
The method proposed in \cite{NY24}, when applied to our case, has complexity $\widetilde{O}(k^2+k\log(1/\delta))$.
By \autoref{gaussian-sample-path-2-norm-sqrt-N}, if ${\vec{x}}$ or ${\vec{y}}$ represents a sample path of an fBM or fOU process, its norm is concentrated around $O(\sqrt{N})$, and hence this implies $\widetilde{O}(N\log(1/\varepsilon))$ classical computation time, which has a strictly better dependence on $N$ than that of the classical counterpart to our algorithms, which is $O(N^2)$ for the matrix-vector multiplication or $O(N^3)$ for the Cholesky decomposition. For a more detailed review on this topic, see e.g.\ \cite[Section 1.3]{NY24} or the survey \cite{Ske25}.
\end{remark}
\subsubsection{A rough Bergomi variance or volatility sample path}
\begin{example}\label{rBer-ket-example1}
Let $T>0$, 
and $\mathcal{P}^N_{[0,T]}=(0=t_0<t_1<\dots<t_N=T)$ be a time discretisation of $[0,T]$.
If \autoref{assump-plus-TU-ket-y-log2-N}, \autoref{assump-QROM-like-unitary}, and \autoref{assump-LK-plus} hold,
by using \autoref{non-lin-transform-concrete-cost-thm} or \autoref{non-lin-thm-sum-of-inc},
we can construct a unitary $U_{\ket{\upsilon}}$ that prepares a quantum state
\[
\ket{\upsilon}=\frac{\vec{\upsilon}}{\norm{\vec{\upsilon}}}=\frac{1}{\sqrt{\sum_{i=1}^N \upsilon_i^2}}\sum_{i=1}^N \upsilon_i \ket{i}, 
\]
to within $\varepsilon>0$ in $\ell_2$-norm,
where $\vec{\upsilon}$ represents
a sample path of the rough Bergomi variance process $V$, or alternatively the rough Bergomi volatility process $\sqrt{V}$, with Hurst index $H\in\left(0,1\right)$ according to \autoref{rBer-S-V-defn},
on $\mathcal{P}^N_{[0,T]}$. 
The implementation of $U_{\ket{\upsilon}}$ has resource requirements as given in \autoref{non-lin-transform-concrete-cost-thm} or \autoref{non-lin-thm-sum-of-inc}.
In particular,
if we want a sample path $\vec{\upsilon}$ on the uniform grid\footnote{The proposed algorithms do not require a uniform grid to work, but we have only the simulation results for $\norm{\Sigma}_F/\lambda_{\max}$ and $\kappa$ over a uniform grid presented in this work.} $\widehat{\mathcal{P}}^N_{[0,T]}=(iT/N)_{i=0}^N$ over $[0,T]$, 
based on the empirical results of \autoref{subsubsec-complexity-compare-pv-ns},
define 
\[
\widetilde{p}(H)\coloneqq
\begin{cases}
\min\left\{\frac32+3H\,,\,3-3H\right\}, &\text{if $H\in\left(0,\frac12\right]$}, \\
\max\left\{1+H\,,\,-\frac12+3H\right\}, &\text{if $H\in\left[\frac12,1\right)$}.
\end{cases} \tagaligneq \label{defn-pH-rBer-ket-emp-cost}
\]
Note that $\widetilde{p}(H)\in\left[1.5,2.5\right]$ for $H\in\left(0,1\right)$.
Then, 
for $\vec{w}$ being a sample path of a Riemann-Liouville fractional Brownian motion $W^H$ on $\widehat{\mathcal{P}}^N_{[0,T]}$ that constitutes $\vec{\upsilon}$,
the implementation of $U_{\ket{\upsilon}}$ requires\footnote{See \autoref{discussion-example-complexities} for a discussion regarding the parameter $\norm{\vec{w}}_2$ and the total complexity of \eqref{cost-to-prep-ket-rBer-v}.}
\[
T_{U_{\ket{\upsilon}}}
=
{O}\left(
\norm{\vec{w}}_2
N^{\widetilde{p}(H)}\polylog(N)
\log^6\left(\frac{\norm{\vec{w}}_2}{\varepsilon}\right)
\right) \tagaligneq \label{cost-to-prep-ket-rBer-v}
\]
elementary gate depth and a total of $O\big(\log(N)+\log\log({\norm{\vec{w}}_2}/{\varepsilon})\big)$ qubits.
For creating the description of the QSVT circuit used in $U_{\ket{\upsilon}}$, a one-time classical computation in $O\left(\mathrm{poly}\left\{\norm{\vec{w}}_2\log^2\left({\norm{\vec{w}}_2}/{\varepsilon}\right),\log\left({1}/{\varepsilon}\right)\right\}\right)$ time is run, and for determining the QSVT polynomial of transformation, a QAE procedure with
\[
O\left(\frac{1}{\varepsilon}\cdot
N^{\widetilde{p}(H)}\polylog(N)
\log^2\left(\frac{1}{\varepsilon}\right)
\right) 
\]
elementary gate depth and an ancillary register of $O(\log(N)+\log\log(1/\varepsilon))$ qubits is run with success probability at least $0.99$.
\end{example}
\begin{proof}
From \autoref{defn-of-rl-fbm-bfg16-ver} and \autoref{rBer-S-V-defn}, it holds that
\[
V_t
=
\xi_0(t) e^{\eta {\widetilde{W}}^H_{t}-\frac{\eta^2}{2}t^{2H}},\quad\text{for $t\geq 0$,}
\]
where, in usual settings, parameters $\eta$, $H$, and the forward variance curve $\xi_0(t)$ are assumed known from market-data calibration, and hence can be used as input parameters.
Therefore,
we can set the vector $\vec{f}$ as
\[
\forall i\in\{1,\dots,N\},
\quad
f_i
\equiv
\left\{\xi_0(t_i)\right\}^{\widetilde{c}} e^{-\widetilde{c}\frac{\eta^2}{2}t_i^{2H}},
\]
where $\widetilde{c}= 1$ if we want to prepare a sample path of $V$, or $\widetilde{c}= 1/2$ if we want to prepare a sample path of $\sqrt{V}$,
and let $c \equiv \widetilde{c} \eta$ 
in \autoref{non-lin-transform-concrete-cost-thm} or \autoref{non-lin-thm-sum-of-inc}.
The covariance matrix $\Sigma$ on the time discretisation $\mathcal{P}_{[0,T]}^N$ can be calculated via \autoref{cov-of-RL-fbm}, and we have that $\vec{w}\equiv\vec{x}$ (or $\vec{w}\equiv\vec{y}$) will encode an exact sample path of $\widetilde{W}^{H}$ on $\mathcal{P}_{[0,T]}^N$.
Whether we want to use 
$\Sigma^{\mathrm{pv}}$ (via \autoref{non-lin-transform-concrete-cost-thm}) 
or $\Sigma^{\mathrm{ns}}$ (via \autoref{non-lin-thm-sum-of-inc})
depends on which results in fewer computational resource requirements. 
If the time discretisation is given by the uniform grid $\widehat{\mathcal{P}}^N_{[0,T]}=(iT/N)_{i=0}^N$ on $[0,T]$, 
the empirical results from \autoref{subsubsec-complexity-compare-pv-ns} can be used\footnote{See \autoref{remark-self-sim-cost-regardless-of-T} for how the results on $[0,1]$ apply to $[0,T]$ for self-similar processes.}.
This allows us to model the parameters $\norm{\Sigma}_F/\lambda_{\max}$ and $\kappa$ as power laws in $N$.
As can be seen from \autoref{table-prep-cost-pv-case}, \autoref{table-prep-cost-ns-case}, and \autoref{cost-comparison-graph-numericals}, using $\Sigma^{\mathrm{pv}}$ (via \autoref{non-lin-transform-concrete-cost-thm})
for $H\in(0,0.25]$, and using $\Sigma^{\mathrm{ns}}$
(via \autoref{non-lin-thm-sum-of-inc}) otherwise, gives the optimal preparation cost for $\widetilde{W}^H$, hence the definition of $\widetilde{p}(H)$ in \eqref{defn-pH-rBer-ket-emp-cost}.
\end{proof}
\begin{remark}\label{remark-self-sim-cost-regardless-of-T}
The choice of maturity $T>0$ has no effect on the parameters $\lVert \Sigma \rVert_F/\lambda_{\max}$ and $\kappa$ that determine the complexity required to perform the algorithm \autoref{non-lin-transform-concrete-cost-thm} or \autoref{non-lin-thm-sum-of-inc}, 
because,
by self-similarity of RL-fBM (see e.g.\ \cite{Lim01}), we have $\ee{\widetilde{W}^H_{Ts}\widetilde{W}^H_{Tt}}=T^{2H}\ee{\widetilde{W}^H_{s}\widetilde{W}^H_{t}}$. 
This means that a simulation on any grid $\widetilde{\mathcal{P}}_{[0,T]}^N$ is equivalent to a simulation (of a scaled process) on some grid $\widetilde{\mathcal{P}}_{[0,1]}^N$ on $[0,1]$ with covariance matrix $\Sigma^{[0,T]}\equiv T^{2H}\Sigma^{[0,1]}$, provided that $\widetilde{\mathcal{P}}_{[0,T]}^N\equiv T\widetilde{\mathcal{P}}_{[0,1]}^N$. Note that, even though $\lambda_{\max}$, $\lambda_{\min}$, and $\lVert \Sigma \rVert_F$ will be scaled by the same factor $T^{2H}$, 
the ratios $\lVert \Sigma \rVert_F/\lambda_{\max}$ and $\kappa \equiv \lambda_{\max}/\lambda_{\min}$ 
will have the factor $T^{2H}$ cancelled out, and hence their contributions to the implementation cost are unaffected, regardless of the value of $T$.
However, a probabilistic contribution of $T$ to the required complexity may arise through the (random) factor $\norm{\vec{x}}_2$ or $\norm{\vec{y}}_2$ since $\widetilde{W}^{H}_{Tt}\stackrel{d}{=}T^H\widetilde{W}^{H}_{t}$, which equates to a sublinear dependence on $T$. Nevertheless, in practical simulations, the maturity $T$ is usually treated as a fixed constant.
The same principle applies to all other self-similar Gaussian processes.
\end{remark}
\subsection{Using QAE to extract discrete sums and estimate norm}\label{qae-to-estimate-riemann-sum-time-int-subsec}
We first present how to use the QAE procedure to estimate the discrete sum of the form $\frac{1}{N}\sum_{i=1}^N |f_i| e^{cx_i}$, i.e.\ when the coefficients are restricted to non-negative real numbers, in \autoref{QAE-for-time-int-sum-abs-f}. We then use this result to develop the general case $\frac{1}{N}\sum_{i=1}^N f_i e^{cx_i}$ for $\vec{f}\in\mathbb{R}^N\backslash\{\vec{0}\}$ in \autoref{QAE-for-time-int-sum-gen-f}. 

\begin{proposition}
\label{QAE-for-time-int-sum-abs-f}
Under the setting of \autoref{preparing-ket-exp-x-from-ket-x-w-error},
we can output an estimate to the quantity $\frac1N\sum_{i=1}^N |f_i| e^{c x_i}$ to within $\widehat{\varepsilon}>0$ with probability at least $0.99$
by using a QAE procedure with
\[
O\left(
\frac{\lVert\vec{f}\rVert_\infty}{\widehat{\varepsilon}}
\left(
\norm{\vec{x}}_2
\cdot
\log^2\left(\frac{\lVert\vec{f}\rVert_\infty}{\widehat{\varepsilon}}\right)
\cdot
\left(\widetilde{q}+n
+
T_{U_{\ket{\varrho},\varepsilon}^{\widetilde{q},\norm{\vec{x}}_2}}
\right)
+
T_{U_{\ket{g_N}}}
\right\}
\right) \tagaligneq \label{qae-cost-time-int-approx-1-abs-ver}
\]
elementary gates and
$2\widetilde{q}+n+6$ ancillary qubits,
where 
$\mathbb{Z}^{+}\ni\widetilde{q}=a_{U_\Sigma}+O\left(\log\log(\lVert\vec{f}\rVert_\infty\norm{\vec{x}}_2L\widetilde{\kappa}/{\widehat{\varepsilon}})\right)$, $T_{U_{\ket{g_N}}}$ denotes gate complexity for implementing $\ket{g_N}=\vec{g}/\norm{\vec{g}}$, $\vec{g}= (\sqrt{|f_1|},\dots,\sqrt{|f_N|})$,
and $T_{U_{\ket{\varrho},\varepsilon}^{\widetilde{q},\norm{\vec{x}}_2}}$ is 
\begin{gather*}
O\left(\sqrt{L\widetilde{\kappa}}
\cdot
\log\left(\frac{\norm{\vec{x}}_2\lVert\vec{f}\rVert_{\infty}}{\widehat{\varepsilon}}\right)
\cdot
\Biggl(
\widetilde{q}
+
n+T_{U_{\ket{z}}}+\frac{\alpha_{U_\Sigma}}{\widetilde{\lambda}_{\max}}\,L\widetilde{\kappa}\,\Big(a_{U_\Sigma}+T_{U_\Sigma}\Big)\log^2\left(\frac{\norm{\vec{x}}_2 \lVert\vec{f}\rVert_{\infty} L\widetilde{\kappa}}{\widehat{\varepsilon}}\right)\Biggr)
\right),
\end{gather*}
which results from \eqref{cost-to-prep-ket-x-within-epx-via-main-thm} with choice $\varepsilon\equiv\Theta\left(\widehat{\varepsilon}/\lVert\vec{f}\rVert_{\infty}\right)$.
For creating the description of the QSVT circuit used in the oracle for the QAE procedure, a one-time classical computation in $O\left(\mathrm{poly}\left\{\norm{\vec{x}}_2\log^2\left({\norm{\vec{x}}_2\lVert\vec{f}\rVert_\infty}/{\widehat{\varepsilon}}\right),\log\left({\lVert\vec{f}\rVert_\infty}/{\widehat{\varepsilon}}\right)\right\}\right)$ time is run, and for determining the QSVT polynomial of transformation, a one-time prerequisite QAE procedure, whose complexity is additive and upper-bounded by \eqref{qae-cost-time-int-approx-1-abs-ver}, is run with success probability at least $0.99$.
In particular, if \autoref{assump-plus-TU-ket-y-log2-N}, \autoref{assump-QROM-like-unitary}, and \autoref{assump-LK-plus} hold, the required elementary gate depth and ancillary qubit number become
\[
{O}\left(
\frac{\lVert\vec{f}\rVert_\infty\norm{\vec{x}}_2}{\widehat{\varepsilon}}
\frac{\lVert \Sigma \rVert_F}{\lambda_{\max}}\,\kappa^{1.5}\polylog(N)
\log^5\left(\frac{\lVert\vec{f}\rVert_\infty\norm{\vec{x}}_2\,\kappa}{\widehat{\varepsilon}}\right)
\right) \tagaligneq \label{qae-cost-time-int-approx-1-concrete-ver}
\]
and $2\widetilde{q}+n+6=O\big(\log(N)+\log\log({\lVert\vec{f}\rVert_\infty\norm{\vec{x}}_2\kappa}/\widehat{\varepsilon})\big)$, respectively.
\end{proposition}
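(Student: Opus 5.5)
The plan is to reuse the construction of \autoref{preparing-ket-exp-x-from-ket-x-w-error} up to the block-encoding $U_{\widehat{H}_k(\mathcal{D})}$, but with the weight vector $\vec{g}=(\sqrt{|f_1|},\dots,\sqrt{|f_N|})$ in place of $\vec{f}$. The target quantity is then read off as a squared overlap rather than as a normalised state.

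The starting observation is that
\[
\bra{0}^{\otimes \widetilde{q}}\bra{g_N}\,\frac{h(\mathcal{D})}{4e^{2|c|\Xi}}\,\ket{0}^{\otimes \widetilde{q}}\ket{g_N}
=\frac{1}{4e^{2|c|\Xi}\norm{\vec{g}}^2}\sum_{i=1}^N |f_i|e^{cx_i}
=\frac{1}{4e^{2|c|\Xi}\norm{\vec{f}}_1}\sum_{i=1}^N |f_i|e^{cx_i}.
\]
This quantity is nonnegative, real, and equals $\frac{N}{4e^{2|c|\Xi}\norm{\vec{f}}_1}\cdot\frac1N\sum_i|f_i|e^{cx_i}$.

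To estimate it, I would use a Hadamard test, or equivalently the following arrangement. Apply $U_{\ket{g_N}}$, then $U_{\widehat{H}_k(\mathcal{D})}$, then $U_{\ket{g_N}}^{\dagger}$. Mark the all-zero state of registers $\mathrm{Q}\mathrm{A}\mathrm{I}$ with a flag qubit via an MC$X$ gate; this adds the two extra ancillae, giving $2\widetilde{q}+n+6$. The amplitude of the flagged branch is exactly $\bra{0}\bra{g_N}\widetilde{H}_k(\mathcal{D})\ket{0}\ket{g_N}$. By \eqref{bound-tranformed-diag-qsvt}, this is within $\widetilde{\varepsilon}$ of the ideal overlap above. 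I would then apply \autoref{lem-QAE} in its amplitude form \eqref{qae-prop-amp}, with precision $\varepsilon_{\mathrm{QAE}}$.

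Rescaling by $4e^{2|c|\Xi}\norm{\vec{f}}_1/N\le 4e^{2|c|\Xi}\norm{\vec{f}}_\infty$, the total error is $O(\norm{\vec{f}}_\infty(\widetilde{\varepsilon}+\varepsilon_{\mathrm{QAE}}))$. So I would choose $\widetilde{\varepsilon},\varepsilon_{\mathrm{QAE}}=\Theta(\widehat{\varepsilon}/\norm{\vec{f}}_\infty)$, which fixes the parameter choice $\varepsilon\equiv\Theta(\widehat{\varepsilon}/\norm{\vec{f}}_\infty)$ in \eqref{cost-to-prep-ket-x-within-epx-via-main-thm}. The requirement on $\varepsilon_{U_\Sigma}$ is inherited from the theorem's setting with this $\varepsilon$.

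The cost then follows by multiplying. $T_U$ is $O(T_{U_{\widehat{H}_k(\mathcal{D})}}+T_{U_{\ket{g_N}}}+\widetilde{q}+n)$, and $T_{U_{\widehat{H}_k(\mathcal{D})}}$ is given by \eqref{cost-for-U-hat-Hk-D}, with $k=O(\norm{\vec{x}}_2\log^2(\norm{\vec{x}}_2\norm{\vec{f}}_\infty/\widehat{\varepsilon}))$. Multiplying by $1/\varepsilon_{\mathrm{QAE}}$ gives \eqref{qae-cost-time-int-approx-1-abs-ver}. No QAA is needed, which is why the $\log^3(1/\varepsilon)$ factor drops to $\log^2$.

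The prerequisite QAE for $\widetilde{\eta}$ comes from \autoref{Hk-using-estimate}, run with $\varepsilon_{\mathcal H_k}=\Theta(\widehat{\varepsilon}/\norm{\vec{f}}_\infty)$. Its cost \eqref{qae-cost-for-Hk-eta-tilde-abstract} carries $\sqrt{L\widetilde{\kappa}}/\varepsilon_{\mathcal H_k}$ times one $\Sigma^{1/2}$ block-encoding call. This is dominated by \eqref{qae-cost-time-int-approx-1-abs-ver}, since $T_{U_{\ket{\varrho}}}$ already contains the factor $\sqrt{L\widetilde{\kappa}}$ times that cost. The classical QSVT angle-finding time follows directly from \autoref{lem-qsvt}.

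The concrete bound \eqref{qae-cost-time-int-approx-1-concrete-ver} is obtained by substituting the parameters as in \autoref{non-lin-transform-concrete-cost-thm}, counting one extra log factor per QSVT and QAE layer. The two $0.99$-success events would be combined by a powering/union argument, treated as $O(1)$ overhead as in the footnote to \autoref{lem-QAE}.

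The main obstacle I expect is making sure the extraneous diagonal entries $h(0)$ at indices $0$ and $j>N$, and in the ancilla-nonzero part, do not contribute. This works because $\ket{0}^{\otimes\widetilde{q}}\ket{g_N}$ has zero amplitude on all of those entries on both sides of the sandwich, and because the error $\widetilde{H}_k-h/(4e^{2|c|\Xi})$ is controlled in operator norm, so the sandwich error is at most $\widetilde{\varepsilon}$.
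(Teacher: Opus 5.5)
Your proof is correct, but it reads out the sum differently from the paper.

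\textbf{The paper's route.} The paper re-invokes \autoref{preparing-ket-exp-x-from-ket-x-w-error} with $c/2$ in place of $c$ and $\vec g$ in place of $\vec f$. It flags $\ket{0}_{\mathrm{Q}}$ in the state \eqref{state-after-apply-U-hat-Hk-D} and runs QAE in probability form on the squared norm $\Upsilon=\norm{\frac{h(\mathcal{D})}{4e^{|c|\Xi}}\ket{0}^{\otimes\widetilde{q}}\ket{g_N}}^2\propto\sum_i|f_i|e^{cx_i}$. The error is transferred through $\bigl|\norm{\vec a}^2-\norm{\vec b}^2\bigr|\le 2\delta$.

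\textbf{Your route.} You keep the full exponent $c$, uncompute with $U_{\ket{g_N}}^{\dagger}$, and estimate the overlap $\bra{0}\bra{g_N}\widetilde{H}_k(\mathcal{D})\ket{0}\ket{g_N}$. This is linear in $h(\mathcal{D})$, so \eqref{bound-tranformed-diag-qsvt} controls it directly.

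\textbf{Comparison.} The paper's route needs nothing beyond the already-established state and its amplitude bounds \eqref{bounds-for-amp-of-prepared-exp-x}--\eqref{bounds-for-amp-of-prepared-widehat-Hk-x}, at the price of the halving trick. Because a squared norm is always nonnegative, it also forces the $\vec f=\vec f_+-\vec f_-$ split of \autoref{QAE-for-time-int-sum-gen-f}. Your route costs one extra $U_{\ket{g_N}}^{\dagger}$ and a wider MC$X$ per oracle call, both negligible. If upgraded to a genuine Hadamard test on $\mathrm{Re}\,\bra{0}\bra{g_N}\widetilde{H}_k(\mathcal{D})\ket{0}\ket{\mathrm{sgn}(\vec f)\odot\vec g}$, it would return signed sums $\frac1N\sum_i f_ie^{cx_i}$ in a single run.

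\textbf{Two small corrections.}
\begin{itemize}
\item The arrangement you actually describe is not equivalent to a Hadamard test: QAE returns $|a|$, not $\mathrm{Re}\,a$. You should state explicitly that $\bigl||a|-a_{\mathrm{ideal}}\bigr|\le|a-a_{\mathrm{ideal}}|\le\widetilde{\varepsilon}$, which holds because $a_{\mathrm{ideal}}\ge e^{-3|c|\Xi}/4>0$.
\item With your stated degree $k$, the first cost bound carries $\log^2(\norm{\vec x}_2\lVert\vec f\rVert_\infty/\widehat{\varepsilon})$ rather than $\log^2(\lVert\vec f\rVert_\infty/\widehat{\varepsilon})$. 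The paper's \eqref{cost-for-U-hat-Hk-D} silently drops the same $\log\norm{\vec x}_2$. This is harmless for \eqref{qae-cost-time-int-approx-1-concrete-ver}.
\end{itemize}
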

\begin{proof}
We use \autoref{preparing-ket-exp-x-from-ket-x-w-error}, but with parameters $\varepsilon\coloneqq 
\min\left\{2,\frac{1}{2e^{0.5|c|\Xi}}
\frac{1}{\lVert\vec{f}\rVert_\infty}
\frac{\widehat{\varepsilon}}{2}\right\}$, $c$ being $c/2$, and $\vec{f}$ being $\vec{g}= (\sqrt{|f_1|},\dots,\sqrt{|f_N|})$ instead.
That is, we let $h(\zeta)\coloneqq e^{\frac{c}{2}\norm{\vec{x}}_2\cdot\zeta}$, and construct $U_{\widehat{H}_k(\mathcal{D})}$ according to this $h$. 
The one-time prerequisite QAE procedure needed for the construction of $H_k$ has complexity as given by \eqref{qae-cost-for-Hk-eta-tilde-abstract-in-non-lin-thm}, but when compared to \eqref{cost-to-prep-ket-x-within-epx-via-main-thm}, it is obvious that this is upper-bounded by \eqref{qae-cost-time-int-approx-1-abs-ver}.
Considering the expression in \eqref{expand-h-D-times-plus-ket},
let us denote 
\[
\Upsilon
\coloneqq
\norm{\frac{h(\mathcal{D})}{4e^{2\left\lvert\frac{c}{2}\right\rvert\Xi}}\ket{0}^{\otimes \widetilde{q}}\ket{g_N}}^2
=\frac{1}{16e^{2|c|\Xi}}\frac{1}{\lVert \vec{g} \rVert^2}\sum_{i=1}^N g_i^2 \cdot e^{2\left(\frac{c}{2}\right)x_i}
=\frac{1}{16e^{2|c|\Xi}}\frac{1}{\sum_{i=1}^N |f_i|}\sum_{i=1}^N |f_i| \cdot e^{c x_i}
\]
to be the target squared amplitude we want to estimate.
Let
$\varepsilon_{\mathrm{QAE}}\coloneqq \frac{1}{16e^{2|c|\Xi}}
\frac{1}{\lVert\vec{f}\rVert_\infty}
\frac{\widehat{\varepsilon}}{2}$.
Similarly to the proof of \autoref{qae-for-main-thm},
by adding a single-qubit flag register to the state \eqref{state-after-apply-U-hat-Hk-D} and applying an $\mathrm{MC}X$ gate on the flag register conditioned on the register $\mathrm{Q}$ being in $\ket{0}_{\mathrm{Q}}$, we can apply QAE (\autoref{lem-QAE}) to output an estimate $\widetilde{p}$ for the quantity $p\coloneqq \norm{\widetilde{H}_k(\mathcal{D})\ket{0}^{\otimes \widetilde{q}}\ket{g_N}}^2$, satisfying $|\widetilde{p}-p|\leq \varepsilon_{\mathrm{QAE}}$.
The unitary operator that QAE needs to query is 
$U\coloneqq (\mathrm{MC}_{\mathrm{Q}}X\otimes I^{\otimes (\widetilde{q}+ n)}) (I\otimes U_{\widehat{H}_k(\mathcal{D})}) \left(I^{\otimes (2\widetilde{q}+n+5)} \otimes U_{\ket{g_N}}\right)$. 
Again,
by \autoref{implement-q-qubit-toffoli}, $\mathrm{MC}X$ controlled on $\widetilde{q}+n+4$ qubits costs $O(\widetilde{q}+n)$ two-qubit gates and a single-qubit ancillary register. 
Therefore, the system including the ancillas has a total of $2\widetilde{q}+2n+6$ qubits, and the QAE procedure requires
\[
O\left(\frac{1}{\varepsilon_{\mathrm{QAE}}}\left(\widetilde{q}+n+T_{U_{\widehat{H}_k(\mathcal{D})}}+T_{U_{\ket{g_N}}}\right)\right) 
\]
elementary gates in total. Substituting \eqref{cost-for-U-hat-Hk-D} for $T_{U_{\widehat{H}_k(\mathcal{D})}}$ proves \eqref{qae-cost-time-int-approx-1-abs-ver}, and the classical computation time requirement follows similarly to the discussion under \eqref{cost-for-U-hat-Hk-D}.
To prove the error bound,
note that if $\vec{a},\vec{b}$ satisfy $\lVert\vec{a}\rVert,\lVert\vec{b}\rVert\leq 1$ and $\lVert\vec{a}-\vec{b}\rVert\leq \delta$, 
then $\left|\lVert\vec{a}\rVert^2-\lVert\vec{b}\rVert^2\right|\leq 2\delta$. 
Since \eqref{bounds-for-amp-of-prepared-exp-x} and \eqref{bounds-for-amp-of-prepared-widehat-Hk-x} guarantee that $p,\Upsilon\leq 1$, 
\eqref{bound-tranformed-diag-qsvt}\footnote{Recall that $\widetilde{H}_k$ is now created with parameter $c$ being $c/2$.} 
implies
$\left|p-\Upsilon \right|\leq 2\widetilde{\varepsilon}\equiv{\varepsilon}/(8e^{3\left\lvert\frac{c}{2}\right\rvert\Xi})$, where $\widetilde{\varepsilon}$ is defined the same way as in \autoref{preparing-ket-exp-x-from-ket-x-w-error}.
The estimate $\widetilde{p}$ then satisfies
$
\big|
\widetilde{p}-\Upsilon
\big|
\leq 
\big|
\widetilde{p}-p
\big|
+
\big|
p-\Upsilon
\big|
\leq \varepsilon_{\mathrm{QAE}}+{\varepsilon}/(8e^{\frac32|c|\Xi})
\leq
\widehat{\varepsilon}
/(16e^{2|c|\Xi}\lVert\vec{f}\rVert_\infty)$ by definitions of 
$\varepsilon_{\mathrm{QAE}}$ and ${\varepsilon}$.
That is,
\begin{align*}
\left|
\mathcal{C}
\widetilde{p}
-
\left(
\frac{1}{N}\sum_{i=1}^N |f_i| e^{c x_i}
\right)
\right|
=\mathcal{C}\big|\widetilde{p}-\Upsilon\big|
\stackrel{(\dagger)}{\leq} \widehat{\varepsilon}, \tagaligneq \label{discrete-sum-qae-err-bound-ineq}
\end{align*}
where $\mathcal{C} \coloneqq 16e^{2|c|\Xi}
\frac{\norm{\vec{g}}^2}{N}$, and $(\dagger)$ follows from $\norm{\vec{g}}^2=\sum_{i=1}^N|f_i|\leq N\lVert\vec{f}\rVert_\infty$.
To prove
\eqref{qae-cost-time-int-approx-1-concrete-ver}, 
we proceed analogously to the proof of
\autoref{non-lin-transform-concrete-cost-thm}.
\end{proof}
\begin{corollary}\label{QAE-for-time-int-sum-gen-f}
Under the setting of \autoref{preparing-ket-exp-x-from-ket-x-w-error}, we can estimate $\frac1N\sum_{i=1}^N f_i e^{c x_i}$ to within $\widehat{\varepsilon}>0$ using the same resource requirements as in \autoref{QAE-for-time-int-sum-abs-f}.
\end{corollary}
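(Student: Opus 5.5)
The plan is to split $\vec{f}$ into its positive and negative parts and reduce to \autoref{QAE-for-time-int-sum-abs-f}. Write $\vec{f}=\vec{f}^{+}-\vec{f}^{-}$, where $f^{+}_i\coloneqq\max\{f_i,0\}$ and $f^{-}_i\coloneqq\max\{-f_i,0\}$. Both vectors are known and have non-negative entries, so $|f^{\pm}_i|=f^{\pm}_i$. Moreover $\lVert\vec{f}^{\pm}\rVert_\infty\leq\lVert\vec{f}\rVert_\infty$, and
\[
\frac1N\sum_{i=1}^N f_i e^{cx_i}=\frac1N\sum_{i=1}^N |f^{+}_i| e^{cx_i}-\frac1N\sum_{i=1}^N |f^{-}_i| e^{cx_i}.
\]
If one of $\vec{f}^{\pm}$ is the zero vector, the corresponding sum is exactly $0$ and needs no estimation. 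The remaining case then follows directly from \autoref{QAE-for-time-int-sum-abs-f}.

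Otherwise, I apply \autoref{QAE-for-time-int-sum-abs-f} twice: once with $\vec{f}^{+}$ and once with $\vec{f}^{-}$, each time with precision $\widehat{\varepsilon}/2$. The loaders needed are $U_{\ket{g^{\pm}_N}}$ with $\vec{g}^{\pm}=(\sqrt{f^{\pm}_i})_i$. By the triangle inequality, the difference of the two outputs is within $\widehat{\varepsilon}$ of the target.

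For the cost, halving $\widehat{\varepsilon}$ and replacing $\lVert\vec{f}^{\pm}\rVert_\infty$ by the upper bound $\lVert\vec{f}\rVert_\infty$ changes \eqref{qae-cost-time-int-approx-1-abs-ver} and \eqref{qae-cost-time-int-approx-1-concrete-ver} only by constant factors, including inside the logarithms. Running the two procedures sequentially at most doubles the gate count. The ancilla register can be reused, so the qubit count is unchanged. The prerequisite QAE for $\widetilde{\eta}$ and the polynomial $H_k$ do not depend on $\vec{f}$, since they involve only $\norm{\vec{x}}_2$ and $c/2$. They can therefore be computed once and shared, and the same holds for the classical QSVT angle computation. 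The loader costs $T_{U_{\ket{g^{\pm}_N}}}$ are of the same type as $T_{U_{\ket{g_N}}}$: under \autoref{assump-plus-TU-ket-y-log2-N} both are $O(\log^2N)$.

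The main obstacle is the success probability. Each of the two QAE runs succeeds with probability at least $0.99$, so a naive union bound gives only $0.98$. To recover $0.99$, I would invoke the powering lemma of the footnote to \autoref{lem-QAE}: boosting each run to success probability $0.995$ costs only an $O(1)$ factor of repetition and median-taking, which leaves the stated asymptotic resources unchanged. The single shared prerequisite QAE for $\widetilde{\eta}$ is boosted in the same way.
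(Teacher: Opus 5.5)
Your proposal is correct and follows the paper's proof essentially verbatim. You split $\vec{f}=\vec{f}_{+}-\vec{f}_{-}$ into non-negative parts, invoke \autoref{QAE-for-time-int-sum-abs-f} twice at precision $\widehat{\varepsilon}/2$, and conclude by the triangle inequality. Your extra remarks are valid refinements of points the paper's proof leaves implicit: a vanishing part (for which $\ket{g_N}$ would be undefined), sharing the prerequisite QAE once both runs set their precision from the common bound $\lVert\vec{f}\rVert_\infty$, and boosting each run with the powering lemma so the joint success probability stays at $0.99$.
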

\begin{proof}
Denote $u^+\coloneqq \max\{u,0\}$ and $u^{-}\coloneqq -\min\{u,0\}$, so that $u^{+},u^{-}\geq 0$ and $u=u^{+}-u^{-}$ hold for any $u\in\mathbb{R}$. For any given $\mathbb{R}^N\ni\vec{f}=(f_1,\dots,f_N)$, define $\vec{f}_{+}\coloneqq (f_i^{+})_{i=1}^N$ and $\vec{f}_{-}\coloneqq (f_i^{-})_{i=1}^N$, so that we have $\vec{f}=\vec{f}_{+}-\vec{f}_{-}$ and that all components of $\vec{f}_{+}$ and $\vec{f}_{-}$ are non-negative.
Denote $F^{+}\coloneqq \frac{1}{N}\sum_{i=1}^N f_i^{+} e^{cx_i}$, $F^{-}\coloneqq \frac{1}{N}\sum_{i=1}^N f_i^{-} e^{cx_i}$, and $F\coloneqq \frac{1}{N}\sum_{i=1}^N f_i e^{cx_i}$.
Note that $F$ is the target quantity we want to estimate, and $F=F^{+}-F^{-}$ holds.
Then, we can output estimates $G^{+}$ and $G^{-}$ such that $\left|G^{+}-F^{+}\right|\leq \widehat{\varepsilon}/2$ and $\left|G^{-}-F^{-}\right|\leq \widehat{\varepsilon}/2$ by using \autoref{QAE-for-time-int-sum-abs-f} twice. 
Note that this does not change the overall asymptotic cost.
Finally, the estimates $G^{+}$ and $G^{-}$ satisfy
$
\left|
\left(
G^{+}-G^{-}
\right)
-F
\right|
\leq
\left|G^{+}-F^{+}\right|+\left|G^{-}-F^{-}\right|
\leq
\widehat{\varepsilon}$.
\end{proof}

For clarity, we summarise as \autoref{alg-qae-for-discrete-sum} the steps for estimating $\frac1N\sum_{i=1}^N f_i e^{c x_i}$ described in the proofs of \autoref{QAE-for-time-int-sum-abs-f} and \autoref{QAE-for-time-int-sum-gen-f}.

\begin{algorithm}[H]
\caption{Outputting the discrete sum $\frac1N\sum_{i=1}^N f_i e^{cx_i}$ for a sample $\vec{x}$ via QAE}\label{alg-qae-for-discrete-sum}
\begin{algorithmic}[1]
\Require $\widetilde{\lambda}_{\max},\widetilde{\kappa},L,K,U_{\Sigma},U_{\ket{z}}$ as in \autoref{main-thm-for-gen-mat}, $U_{\ket{g_N^+}},U_{\ket{g_N^-}}$, where $\ket{g_N^{\pm}}\coloneqq \vec{g}_{\pm}/\norm{\vec{g}_{\pm}}$, $\vec{g}_{\pm}=(g_i^{\pm})_{i=1}^N$, and $g_i^{\pm}\coloneqq \sqrt{f_i^{\pm}}$, as in the proof of \autoref{QAE-for-time-int-sum-gen-f} (through the setting of \autoref{QAE-for-time-int-sum-abs-f}), and $\norm{\vec{z}}\in(0,\infty)$ for some $\vec{z}$ being a realistion of $Z\sim\mathcal{N}(0,I_{N})$.
\Ensure a classical output that is $\widehat{\varepsilon}$-close to $\frac{1}{N}\sum_{i=1}^N f_i e^{c x_i}$.
\Statex \hrulefill
\Statex (One-time QAE for defining QSVT polynomial of transformation)
\State Do steps 1--4 of \autoref{alg-exponentiation} (a one-time QAE whose complexity is as in \autoref{Hk-using-estimate}), but with parameter $c$ being $c/2$, to construct the QSVT polynomial of transformation $\widehat{H}_k$, which is a scaled version of the polynomial $\mathcal{H}^{(\widetilde{\eta})}_k(\zeta)$ that approximates $\zeta\mapsto e^{\frac{c}{2}\norm{\vec{x}} \cdot\zeta}$ on the interval $\left[-\frac{\Xi}{\norm{\vec{x}}_2},\frac{\Xi}{\norm{\vec{x}}_2}\right]$.
\Statex \hrulefill
\Statex (Preparing the QSVT building-block unitary)
\State Do steps 5--6 of \autoref{alg-exponentiation}
to construct a unitary $U_{\widetilde{\mathcal{D}}}$, which is a block-encoding of $\mathcal{D}\equiv \mathrm{diag}(\ket{0}^{\otimes \widetilde{q}}\ket{x})$.
\Statex \hrulefill
\Statex (Performing QSVT to construct the block-encoding $U_{\widehat{H}_k(\mathcal{D})}$)
\State Do steps 7--8 of \autoref{alg-exponentiation} to create $U_{\widehat{H}_k(\mathcal{D})}$, which is a block-encoding of $\widehat{H}_k(\mathcal{D})$. This is done by performing QSVT according to the polynomial $\widehat{H}_k$ and by using $U_{\widetilde{\mathcal{D}}}$ as the building block for the QSVT circuit. Then, the composite unitary $\widetilde{U}\coloneqq U_{\widehat{H}_k(\mathcal{D})} \left(I^{\otimes (2\widetilde{q}+n+4)} \otimes U_{\ket{g_N^{\pm}}}\right)$ prepares a state close to the direction of $\ket{0}^{\otimes \widetilde{q}}(e^{\frac{c}2 \vec{x}}\odot \ket{g_N^{\pm}})$.
\Statex \hrulefill
\Statex (Main QAE procedure to output an estimate to the desired discrete sum)
\State Use QAE (\autoref{lem-QAE}) to output an estimate $\widetilde{p}_\pm$ to the quantity $\frac{\norm{\vec{g}_{\pm} \odot e^{\frac{c}2 \vec{x}}}^2}{16e^{2|c|\Xi}\norm{\vec{g}_\pm}^2}\propto\frac1N\sum_{i=1}^N f_i^{\pm} e^{cx_i}$ by using $\widetilde{U}$ to construct the oracle for the QAE (the reflection operator) according to \autoref{lem-QAE}.
\Statex \hrulefill
\Statex (Classical post-processing of the QAE estimate(s))
\State Multiply $\widetilde{p}_\pm$ by $\mathcal{C}_\pm \coloneqq 16e^{2|c|\Xi}
\frac{\norm{\vec{g}_\pm}^2}{N}$ to get estimates $G^{\pm}$ of $\frac1N\sum_{i=1}^N f_i^{\pm} e^{cx_i}$, respectively.
\State Output $G^{+} - G^{-}$ as an estimate of $\frac1N\sum_{i=1}^N f_i e^{cx_i}$.
\end{algorithmic}
\end{algorithm}

\begin{corollary}\label{output-sqrt-of-discrete-sum}
Under the setting of \autoref{preparing-ket-exp-x-from-ket-x-w-error}, we can estimate 
$\frac1{\sqrt{N}}\sqrt{\sum_{i=1}^N |f_i| e^{c x_i}}$
to within $\widehat{\varepsilon}>0$ with probability $0.99$
by using a QAE procedure with
\[
O\left(
\frac{\lVert\vec{f}\rVert_\infty^{1/2}}{\widehat{\varepsilon}}
\left(
\norm{\vec{x}}_2
\cdot
\log^2\left(\frac{\lVert\vec{f}\rVert_\infty}{\widehat{\varepsilon}}\right)
\cdot
\left(\widetilde{q}+n
+
T_{U_{\ket{\varrho},\varepsilon}^{\widetilde{q},\norm{\vec{x}}_2}}
\right)
+
T_{U_{\ket{g_N}}}
\right)
\right) \tagaligneq \label{qae-cost-time-int-approx-1-abs-ver-sqrt}
\]
elementary gates and
$2\widetilde{q}+n+6$ ancillary qubits,
where 
$\mathbb{Z}^{+}\ni\widetilde{q}=a_{U_\Sigma}+O\left(\log\log(\lVert\vec{f}\rVert_\infty\norm{\vec{x}}_2L\widetilde{\kappa}/{\widehat{\varepsilon}})\right)$, $T_{U_{\ket{g_N}}}$ denotes gate complexity for implementing $\ket{g_N}=\vec{g}/\norm{\vec{g}}$, $\vec{g}= (\sqrt{|f_1|},\dots,\sqrt{|f_N|})$,
and $T_{U_{\ket{\varrho},\varepsilon}^{\widetilde{q},\norm{\vec{x}}_2}}$ is 
\begin{gather*}
O\left(\sqrt{L\widetilde{\kappa}}
\cdot
\log\left(\frac{\norm{\vec{x}}_2\lVert\vec{f}\rVert_{\infty}}{\widehat{\varepsilon}}\right)
\cdot
\Biggl(
\widetilde{q}
+
n+T_{U_{\ket{z}}}+\frac{\alpha_{U_\Sigma}}{\widetilde{\lambda}_{\max}}\,L\widetilde{\kappa}\,\Big(a_{U_\Sigma}+T_{U_\Sigma}\Big)\log^2\left(\frac{\norm{\vec{x}}_2 \lVert\vec{f}\rVert_{\infty} L\widetilde{\kappa}}{\widehat{\varepsilon}}\right)\Biggr)
\right).
\end{gather*}
For creating the description of the QSVT circuit used in the oracle for the QAE procedure, a one-time classical computation in $O\left(\mathrm{poly}\left\{\norm{\vec{x}}_2\log^2\left({\norm{\vec{x}}_2\lVert\vec{f}\rVert_\infty}/{\widehat{\varepsilon}}\right),\log\left({\lVert\vec{f}\rVert_\infty}/{\widehat{\varepsilon}}\right)\right\}\right)$ time is run, and for determining the QSVT polynomial of transformation, a one-time prerequisite QAE procedure, whose complexity is additive and upper-bounded by \eqref{qae-cost-time-int-approx-1-abs-ver-sqrt}, is run with success probability at least $0.99$.
In particular, if \autoref{assump-plus-TU-ket-y-log2-N}, \autoref{assump-QROM-like-unitary}, and \autoref{assump-LK-plus} hold, the required elementary gate depth and ancillary qubit number become
\[
{O}\left(
\frac{\lVert\vec{f}\rVert_\infty^{1/2}\norm{\vec{x}}_2}{\widehat{\varepsilon}}
\frac{\lVert \Sigma \rVert_F}{\lambda_{\max}}\,\kappa^{1.5}\polylog(N)
\log^5\left(\frac{\lVert\vec{f}\rVert_\infty\norm{\vec{x}}_2\,\kappa}{\widehat{\varepsilon}}\right)
\right) \tagaligneq \label{qae-cost-time-int-approx-1-concrete-ver-sqrt}
\]
and $2\widetilde{q}+n+6=O\big(\log(N)+\log\log({\lVert\vec{f}\rVert_\infty\norm{\vec{x}}_2\kappa}/\widehat{\varepsilon})\big)$, respectively.
\end{corollary}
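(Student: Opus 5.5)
We reuse the machinery of the proof of \autoref{QAE-for-time-int-sum-abs-f}, but estimate the \emph{amplitude} rather than the squared amplitude. As there, we apply \autoref{preparing-ket-exp-x-from-ket-x-w-error} with $c$ replaced by $c/2$ and $\vec{f}$ replaced by $\vec{g}=(\sqrt{|f_1|},\dots,\sqrt{|f_N|})$. This gives the block-encoding $U_{\widehat{H}_k(\mathcal{D})}$, which we compose with $U_{\ket{g_N}}$ and a flag qubit set by $\mathrm{MC}_{\mathrm{Q}}X$. The good-branch norm is then $\sqrt{p}$, with $p=\norm{\widetilde{H}_k(\mathcal{D})\ket{0}^{\otimes\widetilde{q}}\ket{g_N}}^2$. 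Its ideal counterpart is
\[
\sqrt{\Upsilon}=\frac{1}{4e^{|c|\Xi}}\frac{1}{\norm{\vec{g}}}\sqrt{\sum_{i=1}^N|f_i|e^{cx_i}}.
\]
We invoke the amplitude version \eqref{qae-prop-amp} of \autoref{lem-QAE} to obtain $\widetilde{a}$ with $|\widetilde{a}-\sqrt{p}|\leq\varepsilon_{\mathrm{QAE}}$. The output is $\mathcal{C}'\widetilde{a}$, where $\mathcal{C}'\coloneqq 4e^{|c|\Xi}\norm{\vec{g}}/\sqrt{N}$, so that $\mathcal{C}'\sqrt{\Upsilon}$ is exactly the target $\frac{1}{\sqrt N}\sqrt{\sum_i|f_i|e^{cx_i}}$.

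\textbf{Error budget.} We use $\norm{\vec{g}}^2=\sum_i|f_i|\leq N\lVert\vec{f}\rVert_\infty$, so that $\mathcal{C}'\leq 4e^{|c|\Xi}\lVert\vec{f}\rVert_\infty^{1/2}$. The error splits as $\mathcal{C}'|\widetilde{a}-\sqrt{p}|+\mathcal{C}'|\sqrt{p}-\sqrt{\Upsilon}|$. The second term is controlled directly by \eqref{bound-tranformed-diag-qsvt}: the two vectors differ by at most $\widetilde{\varepsilon}$ in norm, so the reverse triangle inequality gives $|\sqrt{p}-\sqrt{\Upsilon}|\leq\widetilde{\varepsilon}$. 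This is cleaner than the squared case, since no factor $2$ and no bound $p,\Upsilon\leq1$ are needed.

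\textbf{Parameter choice and cost.} We take $\varepsilon_{\mathrm{QAE}}$ and the state-preparation accuracy $\varepsilon$ (hence $\widetilde{\varepsilon}=\Theta(\varepsilon)$) both of order $\widehat{\varepsilon}/(e^{|c|\Xi}\lVert\vec{f}\rVert_\infty^{1/2})$, with constants chosen so that each term is at most $\widehat{\varepsilon}/2$. We also cap $\varepsilon\leq2$ as required by \autoref{preparing-ket-exp-x-from-ket-x-w-error}. The QAE cost $O\big(\frac{1}{\varepsilon_{\mathrm{QAE}}}(\widetilde{q}+n+T_{U_{\widehat{H}_k(\mathcal{D})}}+T_{U_{\ket{g_N}}})\big)$ then carries the prefactor $\lVert\vec{f}\rVert_\infty^{1/2}/\widehat{\varepsilon}$.

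We substitute \eqref{cost-for-U-hat-Hk-D} for $T_{U_{\widehat{H}_k(\mathcal{D})}}$, keeping in mind that its $\log^2(1/\varepsilon)$ and the logarithms inside $T_{U_{\ket{\varrho},\varepsilon}^{\widetilde{q},\norm{\vec{x}}_2}}$ now involve $\lVert\vec{f}\rVert_\infty^{1/2}/\widehat{\varepsilon}$. Since this is bounded by a polynomial in $\lVert\vec{f}\rVert_\infty/\widehat{\varepsilon}$, those logarithms may be written with $\lVert\vec{f}\rVert_\infty$, matching \eqref{qae-cost-time-int-approx-1-abs-ver-sqrt}. The ancilla count $2\widetilde{q}+n+6$ and the classical preprocessing follow verbatim from \autoref{QAE-for-time-int-sum-abs-f}. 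The prerequisite QAE producing $\widetilde{\eta}$, run with accuracy $\Theta(\varepsilon)$, has cost \eqref{qae-cost-for-Hk-eta-tilde-abstract-in-non-lin-thm}. This is dominated by the main cost because it lacks the factor $\norm{\vec{x}}_2\geq4\Xi$ and all its other terms are comparable. The concrete bound \eqref{qae-cost-time-int-approx-1-concrete-ver-sqrt} follows exactly as in \autoref{non-lin-transform-concrete-cost-thm}.

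\textbf{Main obstacle.} The only real subtlety is the bookkeeping of the logarithmic factors, in particular confirming that the $\log^2$ arising from the QSVT degree and the $\log^3$ inside the state-preparation cost combine into the stated $\log^5$ in the concrete case. A conceptual check is also needed: the amplitude mode of QAE is legitimate here because the flagged good branch is exactly the $\ket{0}_{\mathrm{Q}}$ part, so its norm is precisely $\sqrt{p}$.
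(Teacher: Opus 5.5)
Your proposal is correct and takes essentially the same route as the paper. You estimate the amplitude $\sqrt{\Upsilon}$ rather than $\Upsilon$ with the same flagged QAE oracle as in \autoref{QAE-for-time-int-sum-abs-f}, rescale by $\sqrt{\mathcal{C}}=4e^{|c|\Xi}\lVert\vec{g}\rVert/\sqrt{N}=O(\lVert\vec{f}\rVert_\infty^{1/2})$, and take $\varepsilon_{\mathrm{QAE}}$ and $\varepsilon$ of order $\widehat{\varepsilon}/\lVert\vec{f}\rVert_\infty^{1/2}$; your explicit reverse-triangle bound $|\sqrt{p}-\sqrt{\Upsilon}|\leq\widetilde{\varepsilon}$ simply spells out the error step the paper leaves as ``in the same manner''.
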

\begin{proof}
We can equally estimate the target amplitude $\sqrt{\Upsilon}$ instead of the squared amplitude $\Upsilon$ in the proof of \autoref{QAE-for-time-int-sum-abs-f}, since QAE can output either quantity using the same cost, cf.\ \autoref{lem-QAE}.
The rest of the proof follows in the same manner to that of \autoref{QAE-for-time-int-sum-abs-f}, but with the definitions of $\varepsilon_{\mathrm{QAE}}$ and $\varepsilon$ being $\Theta\left(\widehat{\varepsilon}/\lVert \vec{f} \rVert_{\infty}^{1/2}\right)$ instead of the original $\Theta\left(\widehat{\varepsilon}/\lVert \vec{f} \rVert_{\infty}\right)$, since we now only need to make up for the term $\sqrt{\mathcal{C}}=\Theta\left(\norm{\vec{g}}/\sqrt{N}\right)=O\left(\lVert \vec{f} \rVert_{\infty}^{1/2}\right)$ to derive the error bound as in \eqref{discrete-sum-qae-err-bound-ineq}.
This reduces the dependence on $\lVert \vec{f} \rVert_{\infty}$ in \eqref{qae-cost-time-int-approx-1-abs-ver} and \eqref{qae-cost-time-int-approx-1-concrete-ver} from $O(\lVert \vec{f} \rVert_{\infty})$ to $O(\lVert \vec{f} \rVert_{\infty}^{1/2})$, giving \eqref{qae-cost-time-int-approx-1-abs-ver-sqrt} and \eqref{qae-cost-time-int-approx-1-concrete-ver-sqrt} respectively.
Other complexities whose dependence on $\lVert \vec{f} \rVert_{\infty}$ is logarithmic are asymptotically unchanged.
\end{proof}
\begin{corollary}\label{QAE-for-time-int-sum-gen-f-sum-of-inc}
Under the setting of \autoref{non-lin-thm-sum-of-inc},
we can output an estimate to the quantity $\frac1N\sum_{i=1}^N f_i e^{c y_i}$ to within $\widehat{\varepsilon}>0$ with probability at least $0.99$
by using a QAE procedure with
\[
{O}\left(
\frac{\lVert\vec{f}\rVert_\infty\norm{\vec{y}}_2}{\widehat{\varepsilon}}
\frac{\lVert \Sigma \rVert_F}{\lambda_{\max}}\,\kappa^{1.5}\, N\polylog(N)
\log^5\left(\frac{\lVert\vec{f}\rVert_\infty\norm{\vec{y}}_2\,\kappa}{\widehat{\varepsilon}}\right)
\right) \tagaligneq \label{qae-cost-time-int-approx-sum-of-inc}
\]
elementary gate depth and $2\widetilde{q}+n+6=O\big(\log(N)+\log\log({\lVert\vec{f}\rVert_\infty\norm{\vec{y}}_2\kappa}/\widehat{\varepsilon})\big)$ ancillary qubits.
For creating the description of the QSVT circuit used in the oracle for the QAE procedure,
a one-time classical computation in $O\left(\mathrm{poly}\left\{\norm{\vec{y}}_2\log^2\left({\norm{\vec{y}}_2\lVert\vec{f}\rVert_\infty}/{\widehat{\varepsilon}}\right),\log\left({\lVert\vec{f}\rVert_\infty}/{\widehat{\varepsilon}}\right)\right\}\right)$ time is run, and for determining the QSVT polynomial of transformation, a one-time prerequisite QAE procedure, whose complexity is additive and upper-bounded by \eqref{qae-cost-time-int-approx-sum-of-inc}, is run with success probability at least $0.99$.
\end{corollary}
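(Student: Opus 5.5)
The plan is to repeat the proofs of \autoref{QAE-for-time-int-sum-abs-f} and \autoref{QAE-for-time-int-sum-gen-f} with $\vec{x}$ replaced by $\vec{y}=\mathcal{L}_N\vec{x}$ throughout. Each ingredient built for $\vec{x}$ is swapped for its cumulative-sum counterpart: \autoref{main-thm-coro-concrete-cost-sum-of-inc} in place of \autoref{main-thm-gen-mat-concrete-cost} for the SPBE of $\ket{y}$, \autoref{Hk-using-estimate-sum-of-inc} in place of \autoref{Hk-using-estimate} for the polynomial, and \autoref{non-lin-thm-sum-of-inc} in place of \autoref{preparing-ket-exp-x-from-ket-x-w-error}.

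\textbf{Nonnegative weights.} First assume $f_i\ge 0$. Run the construction of \autoref{non-lin-thm-sum-of-inc} with exponent $c/2$ and weight vector $\vec{g}=(\sqrt{|f_i|})_i$, at precision $\varepsilon=\Theta(\widehat{\varepsilon}/\lVert\vec f\rVert_\infty)$. This yields $U_{\widehat{H}_k(\mathcal{D})}$ acting on $\ket{0}^{\otimes\widetilde q}\ket{g_N}$. We need two facts, and both carry over unchanged:
\begin{itemize}
\item the amplitude bounds \eqref{bounds-for-amp-of-prepared-exp-x} and \eqref{bounds-for-amp-of-prepared-widehat-Hk-x}, which use only $\lVert\vec y\rVert_\infty\le\Xi$;
\item the operator error bound \eqref{bound-tranformed-diag-qsvt}, which is a QSVT estimate and does not depend on the state.
\end{itemize}
Add a flag qubit and an $\mathrm{MC}X$ gate controlled on $\mathrm{Q}$, then apply QAE (\autoref{lem-QAE}) with $\varepsilon_{\mathrm{QAE}}=\Theta(\widehat\varepsilon/\lVert\vec f\rVert_\infty)$. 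This estimates $\Upsilon=\frac{1}{16e^{2|c|\Xi}}\frac{\sum_i|f_i|e^{cy_i}}{\sum_i|f_i|}$. Multiplying by $\mathcal{C}=16e^{2|c|\Xi}\lVert\vec g\rVert^2/N\le 16e^{2|c|\Xi}\lVert\vec f\rVert_\infty$ gives error at most $\widehat\varepsilon$, exactly as in \eqref{discrete-sum-qae-err-bound-ineq}.

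\textbf{General real weights.} Split $\vec f=\vec f_+-\vec f_-$ as in \autoref{QAE-for-time-int-sum-gen-f}. Run the nonnegative procedure twice, each at accuracy $\widehat\varepsilon/2$, and take the difference of the two estimates. This affects only the constants.

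\textbf{Resource count.} The QAE cost is $O(\varepsilon_{\mathrm{QAE}}^{-1}(\widetilde q+n+T_{U_{\widehat H_k(\mathcal D)}}+T_{U_{\ket{g_N}}}))$. By \eqref{cost-for-U-hat-Hk-D}, $T_{U_{\widehat H_k(\mathcal D)}}=O(k(\widetilde q+n+T_{U_{\ket{\varrho}}}))$. The degree is $k=O(\lVert\vec y\rVert_2\log^2(\lVert\vec y\rVert_2\lVert\vec f\rVert_\infty/\widehat\varepsilon))$. The SPBE of $\ket{y}$ at precision $\varepsilon_\varrho=\Theta(\widetilde\varepsilon^2/k^2)$ costs $O(\frac{\lVert\Sigma\rVert_F}{\lambda_{\max}}\kappa^{1.5}N\polylog(N)\log^3(\cdot))$ by \autoref{main-thm-coro-concrete-cost-sum-of-inc}; the factor $N$ comes from the $\mathcal{L}_N$ block-encoding subnormalisation. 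Assumption \ref{assump-plus-TU-ket-y-log2-N} gives $T_{U_{\ket{g_N}}}=O(\log^2N)$. Multiplying these together gives \eqref{qae-cost-time-int-approx-sum-of-inc}, with the logarithmic factors collected into $\log^5$.

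\textbf{Remaining items.} The prerequisite QAE cost \eqref{qae-cost-for-Hk-eta-tilde-concrete-in-non-lin-thm-coro-sum-of-inc} at $\varepsilon=\Theta(\widehat\varepsilon/\lVert\vec f\rVert_\infty)$ is dominated by the main cost, so it adds nothing asymptotically. The classical angle-finding cost comes from \autoref{lem-qsvt} with degree $k$ and $\delta=\Theta(\widehat\varepsilon/\lVert\vec f\rVert_\infty)$. The qubit count follows as in \autoref{QAE-for-time-int-sum-abs-f}.

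\textbf{Main obstacle.} The delicate step is checking that $\varepsilon_\varrho$ still meets the requirement of \autoref{main-thm-coro-concrete-cost-sum-of-inc}. That corollary already absorbs an extra $1/N$ in its internal $\widetilde\varepsilon$, so I also need the resulting $\log(N\lVert\vec y\rVert_2/\varepsilon_\varrho)$ terms to fall inside $\polylog(N)$ and the $\log^5$ factor. I would verify this by tracking the precision $\widetilde\varepsilon$ through \eqref{eq:chi_y_diff}.
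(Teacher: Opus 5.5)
Your proposal is correct and follows the paper's route. The paper's proof reruns the argument of \autoref{QAE-for-time-int-sum-abs-f} with \autoref{non-lin-thm-sum-of-inc} in place of \autoref{preparing-ket-exp-x-from-ket-x-w-error}, exactly as you do with $\vec{y}$ replacing $\vec{x}$, and your explicit $\vec{f}=\vec{f}_+-\vec{f}_-$ splitting (from \autoref{QAE-for-time-int-sum-gen-f}) is the step needed for general real weights that the paper's one-line proof leaves implicit.
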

\begin{proof}
The proof is conducted the same way as that of \autoref{QAE-for-time-int-sum-abs-f}, but via using \autoref{non-lin-thm-sum-of-inc} instead of \autoref{preparing-ket-exp-x-from-ket-x-w-error}.
\end{proof}
\subsubsection{Calculating a time integral of a rough Bergomi variance or volatility process}
In the following example, we show how to output the discrete sum 
$\frac{T}{N} \sum_{i=1}^N \widetilde{f}_{t_i} \upsilon_{t_i}$,
which
approximates the integral $\int_0^T \widetilde{f}(t) V_t \diff t$, 
or the integral $\int_0^T \widetilde{f}(t) \sqrt{V_t} \diff t$, 
for $V$ being the rBergomi variance process, and $\widetilde{f}$ being any bounded function, e.g.\ a discount factor $\widetilde{f}(t)=e^{-rt}$ for some interest rate $r\in\mathbb{R}$. This includes the case $\widetilde{f}(t)\equiv 1$, where the integral in question becomes the time integral $\int_0^T V_t \diff t$, aka the integrated variance.
\begin{example}\label{time-int-approx-by-QAE}
Let $T>0$, $\widetilde{f}:[0,T]\to\mathbb{R}$ be a bounded function over $[0,T]$,
and $\mathcal{P}^N_{[0,T]}=(0=t_0<t_1<\dots<t_N=T)$ be a time discretisation of $[0,T]$.
If \autoref{assump-plus-TU-ket-y-log2-N}, \autoref{assump-QROM-like-unitary}, and \autoref{assump-LK-plus} hold,
by using \autoref{QAE-for-time-int-sum-gen-f} or \autoref{QAE-for-time-int-sum-gen-f-sum-of-inc},
we can output to within $\widehat{\varepsilon}>0$ the discrete sum
\[
\frac{T}{N} \sum_{i=1}^N \widetilde{f}(t_i) V_{t_i}^{\widetilde{c}}
\approx
\int_0^T \widetilde{f}(t) V_t^{\widetilde{c}} \diff t,
\]
where
$\widetilde{c}\in\{1,1/2\}$ so that $V^{\widetilde{c}}$ is either
the rough Bergomi variance process $V$, or alternatively the rough Bergomi volatility process $\sqrt{V}$, with Hurst index $H\in\left(0,1\right)$ according to \autoref{rBer-S-V-defn}.
The QAE procedure to output this classical information has resource requirements as given in \autoref{QAE-for-time-int-sum-gen-f} or \autoref{QAE-for-time-int-sum-gen-f-sum-of-inc}.
In particular,
if we want the discrete sum to be computed over the uniform grid $\widehat{\mathcal{P}}^N_{[0,T]}=(iT/N)_{i=0}^N$ on $[0,T]$, define $\widetilde{p}(H)$ as in \eqref{defn-pH-rBer-ket-emp-cost}, where we have $\widetilde{p}(H)\in\left[1.5,2.5\right]$ for all $H\in\left(0,1\right)$.
Then, for $\vec{w}$ being a sample path of a Riemann-Liouville fractional Brownian motion $W^H$ on $\widehat{\mathcal{P}}^N_{[0,T]}$ that drives $V$, the QAE procedure requires
\[
{O}\left(
\frac{
\norm{\vec{w}}_2}
{\widehat{\varepsilon}}
N^{\widetilde{p}(H)}\polylog(N)
\log^5\left(\frac{\norm{\vec{\upsilon}}_2}{\widehat{\varepsilon}}\right)
\right) \tagaligneq \label{cost-to-discrete-sum-time-int-concrete}
\]
elementary gate depth and a total of $O\big(\log(N)+\log\log({\norm{\vec{w}}_2}/{\widehat{\varepsilon}})\big)$ qubits. 
For creating the description of the QSVT circuit used in the oracle for the QAE procedure,
a one-time classical computation in
$O\left(\mathrm{poly}\left(\norm{\vec{w}}_2\log^2\left({\norm{\vec{w}}_2}/{\varepsilon}\right),\log\left({1}/{\varepsilon}\right)\right)\right)$ time is run,
and for determining the QSVT polynomial of transformation, a one-time prerequisite QAE procedure, whose complexity is additive and upper-bounded by \eqref{cost-to-discrete-sum-time-int-concrete}, is run with success probability at least $0.99$.
\end{example}
\begin{proof}
Similarly to the proof of \autoref{rBer-ket-example1},
for $H\in(0,1)$, $t\mapsto\xi_0(t)$, and a fixed $\eta>0$, all obtained from market calibration,
we set $\vec{f}=(f_i)_{i=1}^N$ as
\[
\forall i\in\{1,\dots,N\},
\quad
f_i 
\equiv
T 
\cdot
\widetilde{f}(t_i)
\cdot
\left(
\xi_0(t_i) \, e^{-\frac{\eta^2}{2}t_i^{2H}}
\right)^{\widetilde{c}},
\]
where $\widetilde{c}= 1$ if we want the integrated sample path of $V$, or $\widetilde{c}= 1/2$ if we want the integrated sample path of $\sqrt{V}$,
and let $c \equiv \widetilde{c} \eta$ 
in \autoref{QAE-for-time-int-sum-gen-f} or \autoref{QAE-for-time-int-sum-gen-f-sum-of-inc}.
Since the function $\xi_0(t)$ is continuous by \autoref{rBer-S-V-defn}, the product function $t\mapsto \xi_0(t) \exp(-{\eta^2}t^{2H}/{2})$ is bounded on $[0,T]$ (a compact interval) by the extreme value theorem\footnote{See e.g.\ \cite[4.16 Theorem]{Rud76}.}. Since the function $\widetilde{f}$ is also assumed bounded\footnote{A common choice such as $\widetilde{f}(t)=e^{-rt}$ (the discount factor) for any $r\in\mathbb{R}$ is continuous and hence bounded on a compact interval $[0,T]$. So, the assumption that $\widetilde{f}$ is bounded is not too restricting in practical settings.}, the factor $\lVert \vec{f}\rVert_{\infty}=\Theta(1)$ by construction.
The discussion about the choice between preparing the path value vector $\vec{w}\equiv\vec{x}$ using $\Sigma^{\mathrm{pv}}$ via \autoref{QAE-for-time-int-sum-gen-f} and preparing the sum of increments vector $\vec{w}\equiv\vec{y}$ using $\Sigma^{\mathrm{ns}}$ via \autoref{QAE-for-time-int-sum-gen-f-sum-of-inc} follows the same way as that of \autoref{rBer-ket-example1}.
If the desired time discretisation is the uniform grid $\widehat{\mathcal{P}}^N_{[0,T]}=(iT/N)_{i=0}^N$ on $[0,T]$, 
the empirical results from \autoref{subsubsec-complexity-compare-pv-ns} can be used, with $\widetilde{p}(H)$ as defined in \eqref{defn-pH-rBer-ket-emp-cost}, following the same argument as in \autoref{rBer-ket-example1}.
\end{proof}
\begin{remark}\label{discussion-example-complexities}
The parameter $\norm{\vec{w}}_2$ appearing in the complexities \eqref{cost-to-prep-ket-rBer-v} of \autoref{rBer-ket-example1} and \eqref{cost-to-discrete-sum-time-int-concrete} of \autoref{time-int-approx-by-QAE} is $O(\sqrt{N})$ for a fixed $T>0$ by \autoref{gaussian-sample-path-2-norm-sqrt-N} and self-similarity.
These complexities can therefore be expressed as
$\widetilde{O}(N^{\widetilde{p}(H)+1/2}\widehat{\varepsilon}^{-\widetilde{c}})$ for
$\widetilde{c}\in\{0,1\}$, 
where $\widetilde{c}=0$ corresponds to rBergomi variance/volatility-sample-path state preparation and $\widetilde{c}=1$ to QAE-based time integration of a sample path.
Since $\widetilde{p}(H)\in[1.5,2.5]$ for all $H\in(0,1)$,
these complexities range between $\widetilde{O}(N^{2}\widehat{\varepsilon}^{-\widetilde{c}})$ and $\widetilde{O}(N^{3}\widehat{\varepsilon}^{-\widetilde{c}})$, indicating a potential quantum advantage for all $H\in(0,1)$ over the classical one-time Cholesky decomposition $O(N^3)$ complexity, but not over the classical matrix-vector multiplication $O(N^2)$ complexity, in both the sample path simulation (state preparation) and time-integration tasks.
\end{remark}
This suggests a mild quantum advantage in the situation where $1/\varepsilon=O(N^{1-\delta})$ for some $\delta\in(0,1)$ and when the bottleneck of classical computation lies in the cubic complxity arising from the Cholesky decomposition.
It is worth noting that the rough Bergomi model serves here as a financially motivated and representative example within the broader class of log-normal fractional stochastic volatility models.
In particular, fractional Brownian motions have nearly degenerate covariance matrices that allow for low-rank approximation methods.
In this sense, they may not represent the strongest regime in which our proposed algorithms could demonstrate quantum advantage.
When considering the broader log-normal fractional stochastic volatility family, lower complexity may be achievable if the driving Gaussian process exhibits a better-conditioned auto-covariance structure, leading to better-conditioned covariance matrices, similarly to the discussion in \autoref{application-ex-and-post-process-subsec}.

\section{Conclusion and discussion}
\subsection{Contributions}
In this work, we derive quantum algorithms to prepare the exact simulation of a normalised correlated Gaussian vector 
$\ket{x}=\vec{x}/\norm{\vec{x}}$ and its $\norm{\vec{x}}$-compensated, normalised exponentiation in the abstract form 
$\ket{\vec{f} \odot e^{c\vec{x}}}=(\vec{f} \odot e^{c\vec{x}})/\lVert \vec{f} \odot e^{c\vec{x}} \rVert$, 
which allows for a multiplicative prefactor as well as a translation in the exponent, the form required for constructing a rough Bergomi variance.
We also analyse the setting of cumulative sums of correlated Gaussian vectors, including as a special case the path simulation via the sum-of-increments method, which may reduce the overall complexity due to the potentially better-conditioned covariance matrix of Gaussian noises.
We provide the QAE procedures required to estimate the norm $\norm{\vec{x}}$ associated with the prepared sample, and to output classical quantities such as the Riemann sum of the exponentiated correlated Gaussian. 
We present complexities both in an abstract cost model, allowing different data-loading assumptions, and under a concrete polylogarithmic data-loader assumption to allow for the comparison with classical methods whose complexities are given in terms of $N$.

Under the polylogarithmic data-loader assumption, the state-preparation procedure before exponentiation has gate-depth complexity 
$\widetilde{O}\left(\frac{\norm{\Sigma}_F}{\lambda_{\max}}\kappa^{1.5}\right)$, 
which reduces to the fastest $\widetilde{O}(\sqrt{N})$ in the well-conditioned case with $\kappa=\widetilde{O}(1)$. 
Since well-conditioned cases include dense covariance matrices, for which the classical cubic complexity of Cholesky decomposition is inevitable, this constitutes the strongest use case of our proposed algorithm.
Mildly ill-conditioned covariance matrices provide the second most favourable regime. 
This includes fractional Gaussian noises (fGNs), whose covariance matrices satisfy 
$\frac{\norm{\Sigma}_F}{\lambda_{\max}}\kappa^{1.5}=\widetilde{\Theta}(N^{1-\varepsilon})$, 
$\varepsilon\in(0,1/2]$, for $H\in(1/3,5/6)$, providing a quantum advantage over classical FFT-based methods in this parameter range. 
The exponentiation stage introduces an additional multiplicative factor of $\norm{\vec{x}}$ (up to polylogarithmic terms), resulting in gate-depth complexity 
$\widetilde{O}\left(\norm{\vec{x}}\frac{\norm{\Sigma}_F}{\lambda_{\max}}\kappa^{1.5}\right)$.

Motivated by financial applications, we provide numerical analysis for fBMs and stationary fOU processes to determine the overall dependence on $N$, including the implicit dependence through ${\norm{\Sigma}_F}/{\lambda_{\max}}$ and $\kappa$, 
and analyse the overall costs for rough Bergomi variance state preparation and the extraction of the discrete sum approximating the rough Bergomi integrated variance. 
The resulting complexity for state preparation ranges from $\widetilde{O}(N^2)$ to $\widetilde{O}(N^3)$ depending on the Hurst parameter $H\in(0,1)$, and the QAE-based classical output introduces an additional multiplicative $O(1/\varepsilon)$ factor to this cost. 
This yields a mild quantum advantage in scenarios where $1/\varepsilon=O(N^{1-\delta})$ for some $\delta\in(0,1)$ and where the classical bottleneck arises from the cubic complexity of Cholesky decomposition.

Although the fractional Brownian motion does not constitute the strongest regime of quantum advantage in our framework, it remains of practical relevance in financial modelling and therefore serves as a meaningful benchmark.
To the best of our knowledge, this work provides the first analog encoding of the exact fBM path together with its exponentiated process, as well as extensions to more general Gaussian processes.
This essentially addresses a task previously identified as an open problem in quantum finance, as highlighted in \cite[Section 6.1]{BDP23}.
We further present a complete end-to-end complexity analysis, from state preparation to classical output, in which the dependence of $\norm{\Sigma}_F/\lambda_{\max}$ and $\kappa$ on $N$ is made explicit for fBMs and stationary fOU processes. This analysis is carried out within the scope of currently available quantum techniques, whose further development is still to be expected. 
The algorithms presented are formulated in a modular manner, so that future improvements in individual subroutines directly translate into reductions in the overall complexity.
Taken together, our results establish a concrete analytic foundation for future developments in quantum finance.

\subsection{Comparison to related methods}
\subsubsection{Classical methods}

Our simulation is exact by construction and mirrors classical covariance square-root (Cholesky-based) factorisation methods. 
It therefore has advantage over classical approximation and FFT-based methods in terms of preserving exactness and universality, including the capability for simulation on non-uniform grids. 
The magnitude of the speedup depends on the parameter regime,
as discussed in \autoref{application-ex-and-post-process-subsec}, in which the most favourable setting is the class of well-conditioned and dense covariance matrices.

\subsubsection{Quantum methods}
To the best of our knowledge, the only work addressing the analog encoding of fractional stochastic processes is \cite{BDP23}. 
Their approach achieves better asymptotic complexity ($O(\polylog N)$), through a QFT-based implementation and enables coherent analog encoding suitable for quantum Monte-Carlo (QMC) output.
However, this construction entails several structural restrictions. The reliance on a discrete sine transform implemented via QFT imposes boundary constraints $Y_0=Y_T= 0$ and restricts simulations to uniform grids. Besides, their stochastic integral discretisation approach requires state-selection measurement, so the state preparation is not unitary. 
As a result, the amplitude exponentiation technique of \cite{RR23} cannot be directly applied, since it requires a state-preparation unitary or an SPBE building block.
In addition, both path-generation approaches in \cite{BDP23} require either truncated Karhunen-Lo\`{e}ve expansion or stochastic integral discretisation, therefore producing approximations rather than exact simulations.

While their coherent analog encoding allows superposition over sample paths, which further allows QMC applications, the normalisation factor becomes entangled across branches of the superposition. 
This prevents branch-wise compensation in the exponentiation stage, where the non-linear transformation depends explicitly on $\norm{\vec{x}}$ for each individual path.
For this reason, constructing a coherent analog encoding that supports $\norm{\vec{x}}$-compensated exponentiation is not straightforward and requires further methodological development.

\subsection{Possible extensions and future works}
Exploring applications (post-processing procedures) that can make use of the prepared states embedding correlated Gaussian vectors or their exponentiated counterparts to produce quantities of practical interest remains an important direction.
A natural example is the simulation of rough Bergomi price processes, $(S_t)_{t\geq 0}$, as defined in \autoref{rBer-S-V-defn}, on a quantum computer, which extends to the simulation of the associated price processes for more general volatility models within the log-normal fractional stochastic volatility family.
This could potentially provide significant progress in numerical methods for financial mathematics by bypassing the long-standing complexity bottleneck of Cholesky decomposition under appropriate conditions, and could enable new practical possibilities for quantum-enhanced financial derivative pricing, market analysis, and risk management, among others.
However, it can be more complicated than just simulating the variance/volatility process, since the drivers $W$ and $B$ (from \autoref{rBer-S-V-defn}) are correlated and hence the Cholesky decomposition of a larger covariance matrix involving the joint law of $W$ and $W^H$ (driven by $B$) is required, cf.\ \cite[Section~4]{BFG16}.
The square-root decomposition itself is not particularly problematic, since our method can sample Gaussian vectors from such a covariance matrix, provided that the condition number behaves sufficiently well. The actual difficulty is that the sampled enlarged vector is normalised, and using its components separately to reconstruct the price process is not a trivial task. We therefore leave this problem for future investigation.

\section*{Acknowledgment}

TT was supported by JST, the establishment of university fellowships towards the creation of science technology innovation, Grant No.\ JPMJSP2138.
KM is supported by MEXT Quantum Leap Flagship Program (MEXT QLEAP) Grant No.\ JPMXS0120319794, JST Moonshot R\&D Grant No.\ JPMJMS256J, and JST COI-NEXT Program Grant No.\ JPMJPF2014.

\bibliographystyle{alpha}
\bibliography{refs}

\clearpage
\appendix
\newpage
\renewcommand{\thesection}{\Alph{section}}% For Alpha numeric number
\section{Appendix}
\subsection{Proof of \autoref{prop-for-LN-times-unit-vec2} (Bounds on $\big\lVert\mathcal{L}_N\big\rVert_2$)}\label{appndx-proof-of-LN-property}
\begin{lemma}[{\cite[THEOREM 1.]{Yue05}}]\label{lem-tridiag-mat-almost-toeplitz}
Let $\mathcal{T}_N\in\mathbb{R}^{N \times N}$ be a tridiagonal matrix of the form
\[
\mathcal{T}_N
=
\begin{pmatrix*}[r]
b & c & 0 & 0 & \dots & 0 & 0 \\
a & b & c & 0 & \dots & 0 & 0 \\
0 & a & b & c & \dots & 0 & 0 \\
\dots & \dots & \dots & \dots & \dots & \dots & \dots \\
0 & 0 & 0 & 0 & \dots & a & -\sqrt{ac}+b
\end{pmatrix*},
\]
where $\sqrt{ac}\neq0$. Then, the eigenvalues $\lambda_1,\dots,\lambda_N$ of $\mathcal{T}_N$ are given by
\[
\lambda_k
=b+2\sqrt{ac}\cos\left(\frac{2k\pi}{2N+1}\right), \quad k=1,\dots,N.
\]
\end{lemma}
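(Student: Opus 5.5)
The plan is to reduce $\mathcal{T}_N$ by a diagonal similarity and an affine shift to a normalised tridiagonal matrix, and then find its eigenvectors explicitly with a sine ansatz.

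\textbf{Step 1: symmetrise the off-diagonals.} Fix a square root $s\coloneqq\sqrt{ac}\neq 0$ (complex if $ac<0$) and set $r\coloneqq s/c$. Let $D\coloneqq\mathrm{diag}(1,r,r^2,\dots,r^{N-1})$. Conjugation by $D$ leaves the diagonal unchanged. It rescales the superdiagonal entries to $c\,r=s$ and the subdiagonal entries to $a/r=ac/s=s$. Hence $D^{-1}\mathcal{T}_N D=bI_N+sJ_N$, where $J_N$ has ones on both off-diagonals, zeros on the diagonal except $(J_N)_{NN}=-1$. Since similarity preserves the spectrum, it suffices to show that the eigenvalues of $J_N$ are $\mu_k=2\cos\big(\frac{2k\pi}{2N+1}\big)$ for $k=1,\dots,N$.

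\textbf{Step 2: sine ansatz for $J_N$.} Try $v_j=\sin(j\theta)$ for $j=1,\dots,N$, with the convention $v_0=0$. For rows $1,\dots,N-1$, the eigen-equation $v_{j-1}+v_{j+1}=\mu v_j$ holds with $\mu=2\cos\theta$ by the identity $\sin((j-1)\theta)+\sin((j+1)\theta)=2\cos\theta\,\sin(j\theta)$. Row $N$ reads $v_{N-1}-v_N=\mu v_N$. Defining $v_{N+1}=\sin((N+1)\theta)$, the same identity converts this into the boundary condition $\sin((N+1)\theta)+\sin(N\theta)=0$. This condition is equivalent to
\[
2\sin\Big(\tfrac{(2N+1)\theta}{2}\Big)\cos\Big(\tfrac{\theta}{2}\Big)=0 .
\]
It is therefore satisfied by $\theta_k=\frac{2k\pi}{2N+1}$ for $k=1,\dots,N$.

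\textbf{Step 3: completeness.} The step needing most care is checking that these $N$ candidates exhaust the spectrum. Each $\theta_k$ lies in $(0,\pi)$, so $v_1=\sin\theta_k\neq0$ and the vector is nonzero; thus each $\mu_k$ is a genuine eigenvalue. Cosine is strictly decreasing on $(0,\pi)$, so the $N$ values $\mu_k$ are pairwise distinct. An $N\times N$ matrix has at most $N$ eigenvalues, so these are all of them, each simple. Undoing the affine map $\mu\mapsto b+s\mu$ from Step 1 gives $\lambda_k=b+2\sqrt{ac}\cos\big(\frac{2k\pi}{2N+1}\big)$, as claimed.
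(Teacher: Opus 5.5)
Your proof is correct, and it differs from the paper only in that the paper gives no argument of its own here. It imports the lemma from Yueh's Theorem~1, which treats several tridiagonal Toeplitz matrices with perturbed corner diagonal entries at once by analysing the associated three-term recurrence for the eigenvector components under the appropriate boundary conditions.

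Your route specialises to the single corner case actually needed:
\begin{itemize}
\item The diagonal similarity $D=\mathrm{diag}(1,r,\dots,r^{N-1})$ is well defined because $\sqrt{ac}\neq 0$ forces $a,c\neq 0$. It reduces the problem to $bI_N+\sqrt{ac}\,J_N$.
\item The sine ansatz turns the last row into the boundary condition $\sin\bigl(\tfrac{(2N+1)\theta}{2}\bigr)\cos\bigl(\tfrac{\theta}{2}\bigr)=0$.
\item The counting argument gives completeness without computing a characteristic polynomial: the $N$ values $2\cos\theta_k$ are distinct since $\theta_k\in(0,\pi)$.
\end{itemize}

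The citation buys generality across several corner perturbations. Your argument is short and self-contained. It also exhibits the eigenvectors $D\bigl(\sin(j\theta_k)\bigr)_{j=1}^N$, shows that every eigenvalue is simple, and works unchanged when $\sqrt{ac}$ is complex.

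For the only use of the lemma in the paper ($a=c=-1$, $b=2$, hence $r=-1$ and $D=\mathrm{diag}(1,-1,1,\dots)$), your argument would make the appendix bounds on $\sigma_{\min}(\mathcal{L}_N)$ and $\sigma_{\max}(\mathcal{L}_N)$ fully self-contained.
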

\begin{lemma}\label{lem-for-order-of-max-eigen-of-Lambda}
Let
\[
f(N)\coloneqq 2+2\cos\left(\frac{2N\pi}{2N+1}\right).
\]
Then, for $N\in\mathbb{N}$,
\[
\frac{2}{\pi}N
\leq
\frac{1}{\sqrt{f(N)}}
\leq
N.
\]
\end{lemma}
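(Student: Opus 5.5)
The plan is to reduce $f(N)$ to a single sine by trigonometric identities and then apply elementary two-sided bounds on $\sin$ near the origin.

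First, I will rewrite the angle as $\frac{2N\pi}{2N+1}=\pi-\frac{\pi}{2N+1}$. This gives $\cos\left(\frac{2N\pi}{2N+1}\right)=-\cos\left(\frac{\pi}{2N+1}\right)$, and the half-angle identity $1-\cos\theta=2\sin^2(\theta/2)$ then yields
\[
f(N)=2-2\cos\left(\frac{\pi}{2N+1}\right)=4\sin^2(x),\qquad x\coloneqq\frac{\pi}{2(2N+1)}.
\]
Since $N\in\mathbb{N}$, we have $x\in(0,\pi/6]$, so $\sin x>0$ and $1/\sqrt{f(N)}=1/(2\sin x)$. The claim therefore reduces to two-sided bounds on $\sin x$ over $(0,\pi/6]$.

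For the lower bound $\frac{2}{\pi}N\le 1/\sqrt{f(N)}$, I will use $\sin x\le x$. This gives
\[
\frac{1}{2\sin x}\ge\frac{1}{2x}=\frac{2N+1}{\pi}\ge\frac{2N}{\pi}.
\]

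For the upper bound $1/\sqrt{f(N)}\le N$, the usual Jordan inequality $\sin x\ge\frac{2}{\pi}x$ is not enough: it only yields $N+\tfrac12$. This is the one step needing care. I will instead use concavity of $\sin$ on $[0,\pi/6]$. The graph lies above the chord from $(0,0)$ to $(\pi/6,1/2)$, which gives $\sin x\ge\frac{3}{\pi}x$ on that interval. Hence
\[
\frac{1}{2\sin x}\le\frac{\pi}{6x}=\frac{2N+1}{3}\le N,
\]
where the last inequality holds for every $N\ge1$. Combining the two bounds proves the lemma.
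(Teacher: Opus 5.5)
Your proof is correct and follows the paper's argument essentially verbatim: the paper also writes $f(N)=4\sin^2\theta_N$ with $\theta_N=\pi/(2(2N+1))\in(0,\pi/6]$, applies $3/\pi\le\sin\theta/\theta\le 1$ on that interval, and finishes with the same comparisons $2N+1\le 3N$ and $2N+1\ge 2N$. Your chord argument for the constant $3/\pi$, and your remark that Jordan's constant $2/\pi$ would only give $N+\tfrac12$, spell out what the paper cites simply as elementary calculus.
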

\begin{proof}
Let $\theta_N\coloneqq \pi/\big(2\cdot(2N+1)\big)$. It holds that $f(N)=2+2\cos\left(\pi-2\theta_N\right)=4\sin^2\left(\theta_N\right)$.
Since $N\in\mathbb{N}$, $\theta_N\in\left(0,\pi/6\right]$. Recall from elementary differential calculus that ${}^{\forall}\theta_N\in\left(0,\pi/6\right]$,
\[
\frac{3}{\pi}
\leq
\frac{\sin \theta_N}{\theta_N}
\leq
1. \tagaligneq \label{sint-t-relation}
\]
Therefore,
\[
\frac{1}{N^2}
\stackrel{(\ddagger)}{\leq}
\frac{9}{N^2}
\frac{N^2}{(2N+1)^2}
\equiv
\frac{9}{\pi^2}\cdot 4\theta_N^2
\stackrel{(\dagger)}{\leq}
f(N)
\stackrel{(\dagger)}{\leq}
4\theta_N^2
\equiv
\frac{\pi^2}{(2N+1)^2}
\leq
\frac{\pi^2}{(2N)^2},
\]
where $(\dagger)$'s hold from \eqref{sint-t-relation}, and $(\ddagger)$ holds since for $N\in\mathbb{N}$, we have ${N^2}/{(2N+1)^2}\geq 1/9$.
\end{proof}
\begin{lemma}\label{lem-2-norm-singular-val-bounds}
Let $A\in\mathbb{R}^{N\times N}$ be a positive definite matrix, admitting a singular value decomposition of the form
$A=UDV^T$,
for some orthogonal matrices $U,V$ and a positive definite diagonal matrix $D$. Then, for any $\widehat{u}\in\mathbb{R}^N$ with $\norm{\widehat{u}}=1$, we have
\[
\sigma_{\min}(A)
\leq
\norm{A\widehat{u}}
\leq \sigma_{\max}(A),
\]
where $\sigma_{\min}(A)$, $\sigma_{\max}(A)$ denote the minimal and the maximal singular values of $A$ respectively.
\end{lemma}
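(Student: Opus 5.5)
The plan is to reduce $\norm{A\widehat{u}}$ to a weighted average of the squared singular values, by rotating into the basis given by the right singular vectors.

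Write $A=UDV^T$ with $D=\mathrm{diag}(\sigma_1,\dots,\sigma_N)$. The orthogonal matrices $U,V$ preserve the Euclidean norm, and these $\sigma_k$ are the singular values of $A$, so they all lie in $[\sigma_{\min}(A),\sigma_{\max}(A)]$.

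Fix a unit vector $\widehat{u}$ and set $\vec{w}\coloneqq V^T\widehat{u}$. Then $\norm{\vec{w}}=1$ and
\[
\norm{A\widehat{u}}=\norm{UD\vec{w}}=\norm{D\vec{w}}.
\]
Consequently,
\[
\norm{A\widehat{u}}^2=\sum_{k=1}^N\sigma_k^2|w_k|^2 .
\]
This is a convex combination of the values $\sigma_k^2$, because the weights $|w_k|^2$ are nonnegative and sum to $1$. It is therefore bounded below by $\sigma_{\min}(A)^2$ and above by $\sigma_{\max}(A)^2$. Taking square roots, which is monotone on nonnegative reals, gives the claim.

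The argument is the same as in the proof of \autoref{norm-sqrt-sigma-times-ket-z}. The only extra point is to use the orthogonality of $U$ to remove the left factor, so no step is really an obstacle. It is also worth noting that positive definiteness is not actually needed for this argument: any SVD works.
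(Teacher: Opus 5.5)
Your proof is correct and follows essentially the same route as the paper: both set $\widehat{v}=V^T\widehat{u}$, use orthogonality to reduce $\norm{A\widehat{u}}^2$ to $\sum_i\sigma_i^2 v_i^2$ with $\sum_i v_i^2=1$, and bound this between $\sigma_{\min}(A)^2$ and $\sigma_{\max}(A)^2$. Your remark that positive definiteness is never used is also right, which is convenient because the lemma is later applied to the non-symmetric matrix $\mathcal{L}_N$.
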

\begin{proof}
Observe that, since $U$ is orthogonal,
\[
\norm{A\widehat{u}}^2
=(\widehat{u}^TA^T)(A\widehat{u})
=(\widehat{u}^TVDU^T)(UDV^T\widehat{u})
=\widehat{u}^TVD^2V^T\widehat{u}
= \norm{DV^T\widehat{u}}^2.
\]
Now let us denote $\widehat{v}\coloneqq V^T\widehat{u}$.
Since $V$ is orthogonal, we also have $\norm{\widehat{v}}=(\widehat{u}^TV)(V^T\widehat{u})=\norm{\widehat{u}}=1$. Denoting the components of $\widehat{v}$ by $\widehat{v}=(v_1,\dots,v_N)$, we then have
\[
\norm{A\widehat{u}}^2
=\norm{D\widehat{v}}^2
=\sum_{i=1}^N \sigma_i^2 v_i^2
\geq (\sigma_{\min}(A))^2\sum_{i=1}^N v_i^2
=(\sigma_{\min}(A))^2 \norm{\widehat{v}}^2
=(\sigma_{\min}(A))^2,
\]
where $\sigma_i$, $i=1,\dots,N$, denote the singular values of $A$. The upper bound is proved in the same manner, by changing $\sigma_{\min}(A)$ in the above inequality with $\sigma_{\max}(A)$.
\end{proof}
\begin{lemma}[\autoref{prop-for-LN-times-unit-vec2}]
Let $N\in\mathbb{N}$ and $\mathcal{L}_N\in\mathbb{R}^{N\times N}$ be a lower triangular matrix as given in \autoref{defn-of-LN}.
Any $\widehat{u}\in\mathbb{R}^N$ with $\norm{\widehat{u}}=1$ satisfies
\[
\frac12 
\leq
\sigma_{\min}^N{(\mathcal{L}_N)}
\leq
\big\lVert\mathcal{L}_N\widehat{u}\big\rVert
\leq 
\sigma_{\max}^N{(\mathcal{L}_N)}
\leq
N,
\]
where $\sigma_{\min}^N{(\mathcal{L}_N)}$ and $\sigma_{\max}^N{(\mathcal{L}_N)}$ denote the minimal and the maximal singular values of $\mathcal{L}_N$ respectively.
\end{lemma}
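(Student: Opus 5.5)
The middle two inequalities follow directly from \autoref{lem-2-norm-singular-val-bounds}, which applies to any invertible square matrix via its singular value decomposition, and $\mathcal{L}_N$ is invertible since it is lower triangular with unit diagonal. (Positive definiteness is not really needed in that lemma, only $A=UDV^T$ with $D$ positive diagonal.) The real work is bounding the extreme singular values: I aim for $\sigma_{\min}\ge 1/2$ and $\sigma_{\max}\le N$.

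For $\sigma_{\min}$ I pass to the inverse. The matrix $\mathcal{L}_N^{-1}$ is the bidiagonal difference matrix $D_N$, with $1$ on the diagonal and $-1$ on the subdiagonal. Hence $\sigma_{\min}(\mathcal{L}_N)=1/\sigma_{\max}(D_N)$. By the triangle inequality, $\norm{D_N}\le\norm{I}+\norm{S}\le 2$, where $S$ is the shift matrix. This gives $\sigma_{\min}(\mathcal{L}_N)\ge 1/2$ with no spectral computation.

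For $\sigma_{\max}$ the crude route is enough: $\norm{\mathcal{L}_N}_2\le\norm{\mathcal{L}_N}_F=\sqrt{N(N+1)/2}\le N$ for all $N\ge1$. Equivalently, $\norm{\mathcal{L}_N}_2\le\sqrt{\norm{\mathcal{L}_N}_1\norm{\mathcal{L}_N}_\infty}=N$, since every row sum and every column sum is at most $N$.

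The sharp constants the authors likely intend come from the appendix lemmas. First, $(\mathcal{L}_N\mathcal{L}_N^T)^{-1}=D_N^TD_N$. This is tridiagonal with $2$ on the diagonal, $-1$ off the diagonal, and bottom-right entry $1$. That matches \autoref{lem-tridiag-mat-almost-toeplitz} with $a=c=-1$ and $b=2$, since then $-\sqrt{ac}+b=1$. Its eigenvalues are therefore $2-2\cos(2k\pi/(2N+1))$, or with the other branch of $\sqrt{ac}$, $2+2\cos(\cdot)$.

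The smallest of these eigenvalues is $f(N)$ from \autoref{lem-for-order-of-max-eigen-of-Lambda}. So $\sigma_{\max}(\mathcal{L}_N)=1/\sqrt{f(N)}\le N$, which recovers the upper bound with the matching lower order $2N/\pi$. The largest eigenvalue is at most $4$, giving $\sigma_{\min}\ge 1/2$ again.

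The one delicate point is the sign convention for $\sqrt{ac}$ when $a=c=-1$. I would avoid it by relying on the elementary norm bounds above, which already give the stated inequalities. The tridiagonal spectrum would serve only as a sharpness remark.
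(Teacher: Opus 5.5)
Your proof is correct, but it reaches the two outer bounds by a more elementary route than the paper. The middle inequalities come from \autoref{lem-2-norm-singular-val-bounds} in both arguments.

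The paper computes the full spectrum of $\Lambda=(\mathcal{L}_N^{-1})^T\mathcal{L}_N^{-1}$ via Yueh's theorem (\autoref{lem-tridiag-mat-almost-toeplitz}) and reads off $\sigma_{\min}=(\lambda_1^N)^{-1/2}$ and $\sigma_{\max}=(\lambda_N^N)^{-1/2}$. It then needs the trigonometric estimate of \autoref{lem-for-order-of-max-eigen-of-Lambda} to turn $\lambda_N^N=2+2\cos\bigl(2N\pi/(2N+1)\bigr)$ into $\sigma_{\max}\le N$.

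You instead use $\lVert\mathcal{L}_N^{-1}\rVert_2=\lVert I-S\rVert_2\le 2$ and $\lVert\mathcal{L}_N\rVert_2\le\lVert\mathcal{L}_N\rVert_F=\sqrt{N(N+1)/2}\le N$. The Frobenius norm is one the paper itself uses in later corollaries. Your route needs no external theorem and no cosine analysis, and it gives the slightly better constant $\sqrt{N(N+1)/2}\approx N/\sqrt{2}$.

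What the paper's exact spectrum buys is sharpness: $\sigma_{\min}\downarrow 1/2$ and $2N/\pi\le\sigma_{\max}\le N$, so neither bound can be improved in order. The lemma as stated does not require this.

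One correction to your sharpness aside: there is no branch ambiguity for $\sqrt{ac}$. The bottom-right entry $b-\sqrt{ac}=1$ with $b=2$ forces $\sqrt{ac}=+1$. The choice $\sqrt{ac}=-1$ describes a different matrix, whose bottom-right entry is $3$. So the eigenvalues of $\Lambda$ are exactly $2+2\cos\bigl(2k\pi/(2N+1)\bigr)$, $k=1,\dots,N$.

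The formula you listed first, $2-2\cos\bigl(2k\pi/(2N+1)\bigr)$, is simply wrong. For $N=1$ it gives $3$, but $\Lambda=(1)$; for $N=2$ its values sum to $5$, but $\operatorname{tr}\Lambda=3$. Your identification of the smallest eigenvalue with $f(N)$ is consistent only with the $+$ form, which is the correct one. None of this affects your main argument, which never uses the spectrum.
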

\begin{proof}
As $\det(\mathcal{L}_N)=1$,
$\mathcal{L}_N$ is clearly invertible.
By basic computations, we know that
its inverse $\mathcal{L}_N^{-1}$ and the matrix $\Lambda\coloneqq\left(\mathcal{L}_N^{-1}\right)^T\mathcal{L}_N^{-1}$ are of the forms
\[\small
\mathcal{L}_N^{-1}=
\begin{pmatrix}
1 & 0 & 0 & \dots & 0 & 0 \\
-1 & 1 & 0 & \dots & 0 & 0 \\
0 & -1 & 1 & \dots & 0 & 0 \\
\vdots & \vdots & \vdots & \ddots & \vdots & \vdots \\
0 & 0 & 0 & \dots & 1 & 0 \\
0 & 0 & 0 & \dots & -1 & 1 \\
\end{pmatrix},\quad\text{and}\quad
\Lambda
=
\begin{pmatrix}
2 & -1 & 0 & 0 & \dots & 0 & 0 & 0 \\
-1 & 2 & -1 & 0 & \dots & 0 & 0 & 0 \\
0 & -1 & 2 & -1 & \dots & 0 & 0 & 0 \\
\vdots & \vdots & \vdots & \vdots & \dots & \vdots & \vdots & \vdots \\
0 & 0 & 0 & 0 & \dots & -1 & 2 & -1 \\
0 & 0 & 0 & 0 & \dots & 0 & -1 & 1 \\
\end{pmatrix},
\]
where the latter is a tridiagonal matrix of the form that matches $\mathcal{T}_N$ from \autoref{lem-tridiag-mat-almost-toeplitz} with $a=c=-1$, $b=2$, and $\sqrt{ac}=1\neq 0$. Therefore, the eigenvalues $\lambda_1^N,\dots,\lambda_N^N$ of $\Lambda$ are given by
\[
\lambda_k^N
=2+2\cos\left(\frac{2k\pi}{2N+1}\right), \quad k=1,\dots,N. \tagaligneq \label{lem-for-LN-eigens-of-Lambda}
\]
It is obvious from the behaviour of cosine function that $k=1$ and $k=N$ give the maximal and the minimal eigenvalues of $\Lambda$ respectively.
When the singular value decomposition of $\mathcal{L}_N$ is given by $\mathcal{L}_N=UDV^T$, where $U,V\in\mathbb{R}^{N\times N}$ are some orthogonal matrices and $D\in\mathbb{R}^{N\times N}$ is a diagonal matrix whose diagonal entries are the singular values of $\mathcal{L}_N$,
we have that $\mathcal{L}_N^{-1}=VD^{-1}U^T$ and $\Lambda=U\left(D^{-1}\right)^2 U^T$.
This implies that $\sigma_{\min}^N{(\mathcal{L}_N)}=(\lambda_1^N)^{-1/2}$ and $\sigma_{\max}^N{(\mathcal{L}_N)}=(\lambda_N^N)^{-1/2}$ hold.
As $N\in\mathbb{N}$, we have that the argument of the cosine function for $\lambda_1^N$ lies in the interval $[0,2\pi/3]$.
From the monotonicity of the cosine function over $[0,2\pi/3]$, we have that $\lambda_1^N \uparrow 4$ as $N\to\infty$. Therefore, it holds that $\sigma_{\min}^N{(\mathcal{L}_N)}=(\lambda_1^N)^{-1/2}\geq 1/2$ for all $N\in\mathbb{N}$ with $\sigma_{\min}^N{(\mathcal{L}_N)}\downarrow 1/2$ as $N\to\infty$. 
On the other hand, by \autoref{lem-for-order-of-max-eigen-of-Lambda}, we have that $\sigma_{\max}^N{(\mathcal{L}_N)}=(\lambda_N^N)^{-1/2}\leq N$ holds.
Finally, \autoref{lem-2-norm-singular-val-bounds} completes the proof.
\end{proof}

\subsection{Correction to \cite{BFG16} covariance structure of RL-fBM}\label{proof-of-cov-RL-fbm}
Here, we provide a mathematical derivation of \eqref{eq-for-G-in-cov-of-RL-fBM}.
Let us denote by $B$ the Beta function. Then the following integral form holds for $c>b>0$:
\[
{}_2F_1(a,b;c;z)=
\frac{1}{B(b,c-b)}\int_0^1 t^{b-1}(1-t)^{c-b-1}(1-zt)^{-a} \diff t. \tagaligneq \label{int-form-2f1-euler-rep}
\]
This is a standard result, usually referred to as the Euler integral representation of ${}_2F_1$, see e.g.\ \cite[Eq.~(1.2)]{HK74}.
Note that, even though \eqref{int-form-2f1-euler-rep} does not look to have symmetry between the first and the second arguments, i.e.\ $a,b$ of ${}_2F_1(a,b;c;z)$, they are actually symmetric (interchangeable) by definition using infinite series, cf.\ \eqref{defn-of-2f1-inf-sum}.
\begin{proof}[Proof of \autoref{cov-of-RL-fbm}.]
Recall the definition of RL-fBM, \autoref{defn-of-rl-fbm-bfg16-ver}. For $v\geq u$, we have
\[
\ee{\widetilde{W}^H_v\widetilde{W}^H_u}
=\int_0^{u}\frac{(2H)\diff s}{(v-s)^{\frac{1}{2}-H}(u-s)^{\frac{1}{2}-H}}, \tag{$\star$}\label{cov-eq-in-proof-1}
\]
because $\diff \langle B,B\rangle_s=\diff s$, $\int_0^v=\int_0^u+\int_u^v$, and Brownian increments on $[u,v]$ are independent of increments on $[0,u]$, so the term $$\ee{\left(\sqrt{2H}\int_u^v(v-s)^{H-1/2}\diff B_s\right)\widetilde{W}^H_u}=\ee{\sqrt{2H}\int_u^v(v-s)^{H-1/2}\diff B_s}\cdot\ee{\widetilde{W}^H_u}=0,$$ 
and hence \eqref{cov-eq-in-proof-1} holds.
We can further simplify \eqref{cov-eq-in-proof-1} to
\[
\eqref{cov-eq-in-proof-1}=
\frac{1}{u^{1-2H}}\int_0^u 
\frac{(2H)\diff s}
{\left(
\frac{v}{u}-\frac{s}{u}
\right)^{\frac12-H}
\left(
1-\frac{s}{u}
\right)^{\frac12-H}
}
\stackrel{(\dagger)}{=}
u^{2H}
\int_0^1
\frac{(2H)\diff t}
{\left(
\frac{v}{u}-t
\right)^{\frac12-H}
\left(
1-t
\right)^{\frac12-H}
},
\]
where $(\dagger)$ holds by the change of variables $t\coloneqq s/u$.
Here, by letting
\[
G(x)\coloneqq \int_0^1
\frac{(2H)\diff t}
{\left(
x-t
\right)^{\frac12-H}
\left(
1-t
\right)^{\frac12-H}
}, \tagaligneq \label{defn-of-G-int-form}
\]
it then holds that
\[
\ee{\widetilde{W}^H_v\widetilde{W}^H_u}
=\eqref{cov-eq-in-proof-1}= u^{2H}G\left(\frac{v}{u}\right).
\]
The rest is to show that $G$ is indeed as given in \eqref{eq-for-G-in-cov-of-RL-fBM}. That is, for $x\geq 1$,
\[
G(x)=\frac{2H}{\frac12+H}\left(\frac1x\right)^{\frac12-H}
{}_2F_1\left(\frac12-H,1;\frac32+H;\frac1x\right).
\tagaligneq \label{eq-to-prove-for-G}
\] 
Observe that, by \eqref{int-form-2f1-euler-rep},
\begin{align*}
{}_2F_1\left(\frac12-H,1;\frac32+H;\frac1x\right)
&=\frac{1}{B\left(1,\frac12+H\right)}
\int_0^1 t^0(1-t)^{-\frac12+H}\left(1-\frac{t}{x}\right)^{H-\frac12} \diff t \\
&\stackrel{(\ddagger)}{=}\left(\frac12+H\right)
\int_0^1 \frac{\diff t}
{
(1-t)^{\frac12-H}
\left(\frac{x-t}{x}\right)^{\frac12-H}
}\\
&=\left(\frac12+H\right)
\cdot
x^{\frac12-H}
\cdot
\int_0^1 \frac{\diff t}
{
(1-t)^{\frac12-H}
\left(x-t\right)^{\frac12-H}
}\\
&\stackrel{\eqref{defn-of-G-int-form}}{=}
\left(\frac12+H\right)
\cdot
x^{\frac12-H}
\cdot
\frac{1}{2H}\cdot G(x), \tagaligneq \label{eq-for-G-2f1-vs-int-form}
\end{align*}
where $(\ddagger)$ follows from $B(1,x)=1/x$, for all $x>0$\footnote{This follows easily from $B(x,y)=\frac{\Gamma(x)\Gamma(y)}{\Gamma(x+y)}$ and $\Gamma(x+1)=x\Gamma(x)$.}.
Rearranging terms in \eqref{eq-for-G-2f1-vs-int-form}, we can see that \eqref{eq-to-prove-for-G} holds. This completes the proof.
\end{proof}
\subsection{Expression of stationary fOU covariance structure using ${}_1F_2$} \label{appndx-stn-fOU-cov-1f2}
\begin{definition}[Complex argument, real part, imaginary part]
For $z\in\mathbb{C}$ whose polar coordinate form admits $z=re^{i\theta}$ for $r>0$ and $\theta \in (-\pi,\pi]$, denote $\arg z \coloneqq \theta$, $\mathrm{Re}(z)\coloneqq r\cos \theta$, and $\mathrm{Im}(z)\coloneqq r\sin \theta$.
\end{definition}
The Mellin transform is an integral transform closely related to the bilateral Laplace transform: it can be shown that the Mellin transform can be obtained by a variable substitution of the Laplace transform.
For a more detailed treatment of this subject, we refer to e.g.\ \cite{PK01}.
The following result of the Mellin transform of the function $\cos(bx)(a^2+x^2)^{-\nu}$ will be used to prove the expression of the stationary fOU covariance that involves ${}_1F_2$.
\begin{lemma}[{\cite[p.43 item 5.10]{Obe74}}]\label{lem-mellin-cosbx-reciprocal-a2-x2}
Let $\nu,z\in\mathbb{C}$ be such that $0<\mathrm{Re}(z)<1+2\mathrm{Re}(\nu)$. Then, for\footnote{The condition $a,b>0$ is specified in the preface of \cite{Obe74}.} $a,b>0$,
\begin{align*}
\int_0^\infty x^{z-1}\frac{\cos(bx)}{(a^2+x^2)^{\nu}} \diff x
=&
\frac{a^{z-2\nu}}{2}B\left(\frac{z}{2},\nu-\frac{z}{2}\right)
{}_1F_2\left(\frac{z}{2};1+\frac{z}{2}-\nu,\frac12;\frac{a^2b^2}{4}\right)\\
&+
\frac{\sqrt{\pi}\, 2^{z-2\nu-1}\, b^{2\nu-z}}{\Gamma\left(\frac12+\nu-\frac{z}{2}\right)}\Gamma\left(\frac{z}{2}-\nu\right){}_1F_2\left(\nu;\frac12+\nu-\frac{z}{2},1+\nu-\frac{z}{2};\frac{a^2b^2}{4}\right),
\end{align*}
where $B$ is the Beta function, and ${}_1F_2$ is the generalised hypergeometric function defined as
\[
{}_1F_2(a;b_1,b_2;z)\coloneqq \sum_{n=0}^\infty \frac{(a)_n}{(b_1)_n(b_2)_n}\frac{z^n}{n!},
\]
where for any $a\in\mathbb{C}$ and $n\in\mathbb{N}\cup\{0\}$, $(a)_n\coloneqq {\Gamma(a+n)}/{\Gamma(a)}$. 
\end{lemma}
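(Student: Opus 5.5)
Although the statement is quoted from the tables of \cite{Obe74}, it can be verified directly by writing the left-hand side as a Mellin--Barnes integral and evaluating that integral by residues. The plan has four steps.

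\textbf{Step 1: Mellin transforms of the two factors.} For $0<\mathrm{Re}(s)<1$,
\[
\int_0^\infty x^{s-1}\cos(bx)\diff x=\Gamma(s)\cos\left(\frac{\pi s}{2}\right)b^{-s}.
\]
For $0<\mathrm{Re}(s)<2\mathrm{Re}(\nu)$, the substitution $x^2=a^2u$ and the Beta integral give
\[
\int_0^\infty \frac{x^{s-1}}{(a^2+x^2)^{\nu}}\diff x=\frac{a^{s-2\nu}}{2}B\left(\frac s2,\nu-\frac s2\right).
\]

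\textbf{Step 2: Mellin--Barnes representation.} I will use the first transform in its inverse form, $\cos(bx)=\frac{1}{2\pi i}\int_{(\sigma)}\Gamma(w)\cos(\pi w/2)\,(bx)^{-w}\diff w$, inserted into the left-hand side. Interchanging the order of integration and applying the second transform with $s=z-w$ gives
\[
\int_0^\infty x^{z-1}\frac{\cos(bx)}{(a^2+x^2)^{\nu}}\diff x=\frac{1}{2\pi i}\int_{(\sigma)}\Gamma(w)\cos\left(\frac{\pi w}{2}\right)b^{-w}\,\frac{a^{z-w-2\nu}}{2}\,\frac{\Gamma\left(\frac{z-w}{2}\right)\Gamma\left(\nu-\frac{z-w}{2}\right)}{\Gamma(\nu)}\diff w .
\]
The line $\mathrm{Re}(w)=\sigma$ must separate the poles of $\Gamma(w)$ from those of $\Gamma(\nu-\frac{z-w}{2})$. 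Such a $\sigma$ exists exactly when $\mathrm{Re}(z)-2\mathrm{Re}(\nu)<\sigma<\min\{\mathrm{Re}(z),1\}$, and this interval is nonempty precisely under the hypothesis $0<\mathrm{Re}(z)<1+2\mathrm{Re}(\nu)$. The interchange is only conditionally convergent, because the cosine Mellin integral converges only conditionally. I will justify it by first inserting a damping factor $e^{-\epsilon x}$ and letting $\epsilon\downarrow0$, using dominated convergence on the contour integral.

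\textbf{Step 3: residues to the left.} Before closing the contour, I simplify $\Gamma(w)\cos(\pi w/2)$ with the duplication formula for $\Gamma(w)$ and the reflection formula $\cos(\pi w/2)=\pi/[\Gamma(\frac12+\frac w2)\Gamma(\frac12-\frac w2)]$. This gives
\[
\Gamma(w)\cos\left(\frac{\pi w}{2}\right)=\frac{\sqrt{\pi}\,2^{w-1}\,\Gamma\left(\frac w2\right)}{\Gamma\left(\frac12-\frac w2\right)}.
\]
Hence the only poles to the left of the line come from two families:
\begin{enumerate}
\item $\Gamma(w/2)$, with poles at $w=-2m$;
\item $\Gamma(\nu-\frac{z-w}{2})$, with poles at $w=z-2\nu-2m$, for $m=0,1,2,\dots$.
\end{enumerate}
Shifting the contour to the left picks up both families. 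A Stirling estimate shows that the horizontal segments vanish and that the remaining vertical integral tends to zero, because the integrand is of Meijer-$G$ type with a convergent left series; this is the usual ${}_1F_2$ situation.

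\textbf{Step 4: identifying the two series.}
\begin{itemize}
\item The residues at $w=-2m$ produce $\frac{a^{z-2\nu}}{2}\frac{\Gamma(z/2)\Gamma(\nu-z/2)}{\Gamma(\nu)}$ times a power series in $(ab)^2/4$. Rewriting the Gamma ratios as Pochhammer symbols, in particular $(1+\frac z2-\nu)_m$ and $(\frac12)_m$ in the denominator, identifies this series as ${}_1F_2\left(\frac z2;1+\frac z2-\nu,\frac12;\frac{a^2b^2}{4}\right)$. The prefactor is $\frac{a^{z-2\nu}}{2}B\left(\frac z2,\nu-\frac z2\right)$, as claimed.
\item The residues at $w=z-2\nu-2m$ produce $b^{2\nu-z}$ times the factor $\sqrt{\pi}\,2^{z-2\nu-1}\Gamma(\frac z2-\nu)/\Gamma(\frac12+\nu-\frac z2)$ times the second ${}_1F_2$.
\end{itemize}

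I expect the main obstacle to be this bookkeeping. It requires tracking powers of $2$ and the signs $(-1)^m$ through the duplication and reflection formulae, and getting the lower parameters $(\frac12+\nu-\frac z2)_m$ and $(1+\nu-\frac z2)_m$ correct. I would first check the $m=0$ terms against the stated prefactors. When $z/2-\nu$ is an integer the two pole families collide and double poles appear. That case I would obtain by continuity in $z$, since both sides are analytic in $z$ on the stated strip.
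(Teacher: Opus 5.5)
The paper does not derive this lemma. It only quotes the table entry of \cite{Obe74}, and in the subsequent remark it distrusts the neighbouring $\nu=1$ entry of that same table. Your route is therefore different: you reduce the integral via Mellin--Parseval to a $G^{2,1}_{1,3}$-type Mellin--Barnes integral and evaluate it by left residues. That makes it an independent verification rather than a reconstruction, and it is useful for exactly that reason.

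Your Step~4 bookkeeping is correct.
\begin{itemize}
\item At $w=-2m$, one factor $(-1)^m$ comes from the residue of $\Gamma(w/2)$ and another from $\Gamma(\nu-\frac z2-m)=\Gamma(\nu-\frac z2)(-1)^m/(1+\frac z2-\nu)_m$. They cancel, and you recover the first term term by term.
\item At $w=z-2\nu-2m$ you obtain $\sqrt{\pi}\,2^{z-2\nu-1}b^{2\nu-z}\Gamma(\frac z2-\nu)/\Gamma(\frac12+\nu-\frac z2)$ times $\frac{(\nu)_m}{(\frac12+\nu-\frac z2)_m(1+\nu-\frac z2)_m\,m!}\left(\frac{a^2b^2}{4}\right)^m$. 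This includes the power $b^{2\nu-z}$, on which the paper's misprint remark relies.
\end{itemize}

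There is, however, a genuine gap in Step~2. You claim an admissible $\sigma$ exists precisely under $0<\mathrm{Re}(z)<1+2\mathrm{Re}(\nu)$, and this is false.

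First, you dropped the constraint $\sigma>0$. The cosine inversion needs it, as does the Mellin transform of the damped $e^{-\epsilon x}\cos(bx)$. For $\sigma<0$ the pole of $\Gamma(w)$ at $w=0$, with residue $1$, lies to the right of the line, so the line integral no longer represents $\cos(bx)$.

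More seriously, the Beta integral with $s=z-w$ needs $0<\mathrm{Re}(z-w)<2\mathrm{Re}(\nu)$. So the window $\mathrm{Re}(z)-2\mathrm{Re}(\nu)<\sigma<\mathrm{Re}(z)$ is empty unless $\mathrm{Re}(\nu)>0$, and the lemma's hypothesis does not imply that. For example, $\nu=-\frac14$, $z=\frac14$ satisfies $0<\mathrm{Re}(z)<1+2\mathrm{Re}(\nu)=\frac12$, and the left-hand side converges, yet $(a^2+x^2)^{1/4}$ has no Mellin transform in any strip. Similarly, $\nu=0$, $z=\frac12$ gives the empty window $(\frac12,\frac12)$.

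With the corrected window $\max\{0,\mathrm{Re}(z)-2\mathrm{Re}(\nu)\}<\sigma<\min\{\mathrm{Re}(z),1\}$, your argument proves the identity only for $\mathrm{Re}(\nu)>0$. To reach the full stated range you need an extra step: analytic continuation in $\nu$, for fixed $z$, across the half-plane $\mathrm{Re}(\nu)>(\mathrm{Re}(z)-1)/2$.
\begin{itemize}
\item The left-hand side is analytic there. Split at $x=1$ and integrate by parts once on $[1,\infty)$; the tail then becomes absolutely and locally uniformly convergent exactly when $\mathrm{Re}(z)<1+2\mathrm{Re}(\nu)$.
\item The right-hand side is analytic there apart from removable singularities at $\nu-\frac z2\in\{0,1,2,\dots\}$. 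These points all satisfy $\mathrm{Re}(\nu)>0$, so they lie in the range you have already covered.
\end{itemize}
Your ``continuity in $z$'' remark handles only those pole collisions, not the missing range of $\nu$.

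Two smaller points.
\begin{itemize}
\item The sentence ``the line must separate the poles of $\Gamma(w)$ from those of $\Gamma(\nu-\frac{z-w}{2})$'' is mis-stated. Both families lie to the left; the line must separate them from the poles of $\Gamma(\frac{z-w}{2})$ at $w=z+2m$, which is what your Step~3 actually uses.
\item In the damping argument, the limit $\epsilon\downarrow0$ on the left-hand side needs an Abelian argument rather than dominated convergence when $\mathrm{Re}(z)\ge 2\mathrm{Re}(\nu)$, because the integral is then only conditionally convergent.
\end{itemize}

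None of this affects the paper's only use of the lemma, with $\nu=1$ and $z=2-2H$. There $\mathrm{Re}(\nu)>0$, no collisions occur, and your argument applies as it stands.
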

\begin{remark}
\cite{Obe74} also provides the Mellin transform of the case where $\nu=1$, but there was an error in their expression ($b^{-z}$ instead of $b^{2-z}$), which can be checked easily from dimensional homogeneity of intregrals. Therefore, we use a more general form as suggested in \autoref{lem-mellin-cosbx-reciprocal-a2-x2}.
\end{remark}
\begin{lemma}[Properties of Gamma functions]\label{lem-gamma-fn-relations}
For all $z\in\mathbb{C}$, the following relations hold (including $\infty=\infty$ if either side diverges resulting from a pole from the value of $z$):
\begin{enumerate}[label=(\roman*)]
\item $\displaystyle \Gamma(z)\Gamma(1-z)=\frac{\pi}{\sin(\pi z)}$.
\item $\displaystyle \Gamma(z+1/2)=\sqrt{\pi}\,2^{-2z}\frac{\Gamma(2z+1)}{\Gamma(z+1)}$.
\end{enumerate}
\end{lemma}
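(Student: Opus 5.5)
These are the two classical identities for the Gamma function: Euler's reflection formula (i) and the Legendre duplication formula (ii), the latter rewritten with $z$ shifted. I would derive both from standard product and integral representations of $\Gamma$, working first on a region where everything converges. I would then extend to all of $\mathbb{C}$ by the identity theorem, treating the poles separately in the stated sense that both sides are infinite simultaneously.

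For (i), I would start from the Beta integral. For $0<\mathrm{Re}(z)<1$,
\[
\Gamma(z)\Gamma(1-z)=B(z,1-z)\Gamma(1)=\int_0^\infty \frac{t^{z-1}}{1+t}\diff t ,
\]
using the substitution $t=s/(1-s)$. I would evaluate the last integral as $\pi/\sin(\pi z)$ by a keyhole contour, or more simply first for real $z\in(0,1)$ via the partial-fraction expansion of the integrand. Both sides are meromorphic on $\mathbb{C}$ and agree on this strip, so they agree everywhere. At integer $z$, exactly one Gamma factor has a pole, the other factor is finite and nonzero, and $\sin(\pi z)=0$. So both sides are infinite, matching the stated convention.

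For (ii), I would set $w=z+\tfrac12$ and recall Legendre's form $\Gamma(w)\Gamma(w+\tfrac12)=2^{1-2w}\sqrt{\pi}\,\Gamma(2w)$. Rather than substitute, I would prove the stated form directly. For $\mathrm{Re}(z)>0$, consider
\[
B\!\left(z+\tfrac12,z+\tfrac12\right)=\int_0^1 (s(1-s))^{z-1/2}\diff s .
\]
The substitution $s=(1+\sin\phi)/2$, or alternatively symmetry around $s=\tfrac12$ together with $u=4s(1-s)$, turns this into $2^{-2z}B\!\left(\tfrac12,z+\tfrac12\right)$. Writing both Beta functions through Gamma functions gives
\[
\frac{\Gamma(z+\tfrac12)^2}{\Gamma(2z+1)}=2^{-2z}\frac{\sqrt{\pi}\,\Gamma(z+\tfrac12)}{\Gamma(z+1)},
\]
using $\Gamma(\tfrac12)=\sqrt{\pi}$. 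Dividing by $\Gamma(z+\tfrac12)\neq0$ yields (ii) on $\mathrm{Re}(z)>0$, and analytic continuation then gives it for all $z$.

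The main obstacle is the pole bookkeeping in (ii) needed for the ``$\infty=\infty$'' clause. Poles of the left side occur at $z=-\tfrac12,-\tfrac32,\dots$. At these points $\Gamma(2z+1)$ has a pole while $\Gamma(z+1)$ is finite, so the right side also diverges. At negative integers $z=-m$ with $m\geq1$, both $\Gamma(2z+1)$ and $\Gamma(z+1)$ have poles. There the right side is understood as the limit of the ratio, which is finite and equals the finite value $\Gamma(\tfrac12-m)$, and I would verify this by comparing residues. Because $1/\Gamma$ is entire, the cleanest route is to restate (ii) in the entire form
\[
\Gamma(z+1)\,\Gamma\!\left(z+\tfrac12\right)=\sqrt{\pi}\,2^{-2z}\,\Gamma(2z+1).
\]
This form follows from the identity theorem, and the pole correspondences can then be read off from it directly.
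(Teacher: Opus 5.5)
Your proof is correct, but it takes a different route from the paper. The paper derives neither identity. It cites (i) as Euler's reflection formula from a textbook. It obtains (ii) by taking the cited duplication formula $\Gamma(s)\Gamma(s+\tfrac12)=\sqrt{\pi}\,2^{1-2s}\Gamma(2s)$, substituting $s=z+\tfrac12$ to get $\Gamma(z+\tfrac12)\Gamma(z+1)=\sqrt{\pi}\,2^{-2z}\Gamma(2z+1)$, and dividing by $\Gamma(z+1)$.

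You instead prove both identities from the Beta integral and extend them by the identity theorem. Your substitutions check out: $t=s/(1-s)$ for (i), and symmetry plus $u=4s(1-s)$, giving $B(z+\tfrac12,z+\tfrac12)=2^{-2z}B(\tfrac12,z+\tfrac12)$, for (ii). Your route buys self-containedness and a more careful treatment of the poles. At $z=-m$ with $m\ge 1$, the right side of (ii) as written is $\infty/\infty$ while the left side $\Gamma(\tfrac12-m)$ is finite. So the statement's ``$\infty=\infty$'' convention does not actually cover those points. Your fix, reading them as removable singularities or stating (ii) in product form, is the right one; the product form is exactly what the paper's substitution produces before it divides. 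This gap does not affect the paper, which only uses (ii) at $z\in\mathbb{N}\cup\{0\}$ and $z=H\in(0,1)$.

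Three small remarks. First, the product form $\Gamma(z+1)\Gamma(z+\tfrac12)=\sqrt{\pi}\,2^{-2z}\Gamma(2z+1)$ is an identity of meromorphic functions, not an ``entire form''. The genuinely entire version is $1/\bigl(\Gamma(z+1)\Gamma(z+\tfrac12)\bigr)=2^{2z}/\bigl(\sqrt{\pi}\,\Gamma(2z+1)\bigr)$. Second, the residue comparison at $z=-m$ is unnecessary, since the identity theorem already forces the limit value. Third, your ``partial-fraction'' evaluation of $\int_0^\infty t^{z-1}/(1+t)\,\mathrm{d}t$ ultimately rests on the partial-fraction expansion of $\pi/\sin(\pi z)$. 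That is itself a classical theorem of comparable depth, so the keyhole contour is the more self-contained choice.
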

\begin{proof}
The first property is what is usually referred to as the reflection formula; see e.g.\ \cite[p.164 Chapter 6 Theorem 1.4]{SS03}.
The second property is a result from another well-known formula $\Gamma(s)\Gamma(s+1/2)=\sqrt{\pi}\,2^{1-2s}\Gamma(2s)$, usually referred to as the duplication formula; see e.g.\ \cite[p.175 Chapter 6 Exercise 3]{SS03}. By letting $s\coloneqq z+1/2$, we get the relation of the second claim.
\end{proof}
\begin{lemma}[{\cite[p.164 Chapter 6 Lemma 1.5]{SS03}}]
\label{appndx-lem-beta-to-gamma-reflection}
For $0<a<1$, $\displaystyle \int_0^\infty \frac{v^{a-1}}{1+v}\diff v=\frac{\pi}{\sin(\pi a)}$.
\end{lemma}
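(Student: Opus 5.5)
The statement is the classical integral $\int_0^\infty v^{a-1}(1+v)^{-1}\,\diff v=\pi/\sin(\pi a)$ for $0<a<1$. I plan to reduce it to the Beta function and then apply the reflection formula of \autoref{lem-gamma-fn-relations}(i).

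\emph{Step 1: convergence.} Near $0$ the integrand behaves like $v^{a-1}$, which is integrable because $a>0$. Near $\infty$ it behaves like $v^{a-2}$, which is integrable because $a<1$. So the integral is finite and positive.

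\emph{Step 2: reduction to a Beta integral.} Substitute $v=t/(1-t)$ with $t\in(0,1)$. Then $1+v=1/(1-t)$ and $\diff v=(1-t)^{-2}\diff t$. The integrand becomes
\[
t^{a-1}(1-t)^{1-a}\cdot(1-t)\cdot(1-t)^{-2}=t^{a-1}(1-t)^{-a}.
\]
Hence the integral equals
\[
\int_0^1 t^{a-1}(1-t)^{(1-a)-1}\,\diff t=B(a,1-a).
\]

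\emph{Step 3: Gamma functions.} Use $B(x,y)=\Gamma(x)\Gamma(y)/\Gamma(x+y)$, which is already invoked in a footnote of the appendix. With $\Gamma(1)=1$ this gives $B(a,1-a)=\Gamma(a)\Gamma(1-a)$. By \autoref{lem-gamma-fn-relations}(i), $\Gamma(a)\Gamma(1-a)=\pi/\sin(\pi a)$, which proves the claim.

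\emph{Main obstacle.} The only substantive input is the Beta--Gamma identity; everything else is a change of variables. If one wants to avoid relying on the reflection formula, which is arguably circular since \cite{SS03} proves reflection from this very lemma, I would instead prove it directly by contour integration. The plan there is to integrate $e^{a z}/(1+e^{z})$ over the rectangle with vertices $\pm R$ and $\pm R+2\pi i$. The only enclosed pole is at $z=i\pi$, with residue $-e^{a i\pi}$. The vertical sides vanish as $R\to\infty$ precisely because $0<a<1$. Comparing the two horizontal sides gives $(1-e^{2\pi i a})I=-2\pi i e^{i\pi a}$, so $I=\pi/\sin(\pi a)$. Finally, the substitution $v=e^{x}$ identifies $I=\int_{-\infty}^\infty e^{ax}/(1+e^x)\,\diff x$ with the original integral. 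Bounding the vertical sides is the one step requiring care.
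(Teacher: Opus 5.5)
Your proposal is correct. The Beta substitution gives $\int_0^1 t^{a-1}(1-t)^{-a}\,\diff t=B(a,1-a)=\Gamma(a)\Gamma(1-a)$. The contour computation also checks out: the residue at $z=i\pi$ is $-e^{i\pi a}$, the top edge picks up the factor $e^{2\pi i a}$, and the vertical edges are bounded by $e^{aR}/(e^{R}-1)\to 0$ (using $a<1$) and $e^{-aR}/(1-e^{-R})\to 0$ (using $a>0$). Together these give $\pi/\sin(\pi a)$.

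The paper gives no argument of its own; it cites \cite{SS03}. There the lemma is proved exactly as in your fallback: substitute $v=e^{x}$ to obtain $\int_{-\infty}^{\infty}e^{ax}/(1+e^{x})\,\diff x$, then evaluate it over the rectangle with vertices $\pm R$ and $\pm R+2\pi i$.

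Your primary route is genuinely different. Through the Beta--Gamma identity it shows the lemma is equivalent to the reflection formula. It is therefore a proof only if reflection is established independently, for instance from the Weierstrass product for $1/\Gamma$ combined with Euler's product for $\sin \pi z$. In this paper it cannot play that role, because \autoref{lem-gamma-fn-relations}(i) is cited to \cite[Theorem 1.4]{SS03}, whose proof rests on this very lemma. You flagged this correctly, so the contour argument is the one that actually establishes the statement here.

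What the Beta route buys is brevity and transparency. With $a=1-H$, it shows that the paper's two evaluations in the fOU appendix, $B(1-H,H)=\pi/\sin(\pi H)$ via reflection and $\lambda^{2H}\widehat{f}_{H,\lambda}(0)=\pi/\sin(\pi H)$ via this lemma, are the same fact.
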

\begin{lemma}[{\autoref{Fourier-trans-of-fOU-cov-int}}]
\label{Fourier-trans-of-fOU-cov-int-appndx}
For any $\lambda>0$, $H\in(0,1)$, and $s\geq 0$, the following holds
\begin{align*}
\widehat{f}_{H,\lambda}(s)
&\coloneqq\int_{-\infty}^{\infty} e^{isx}\frac{|x|^{1-2H}}{\lambda^2+x^2} \diff x\\
&=
\frac{\pi}{\sin(\pi H)}
\left\{\lambda^{-2H} \cosh(\lambda s)
-\frac{s^{2H}}{\Gamma(2H+1)} {}_1F_2\Bigg(1;H+\frac{1}{2},H+1;\frac{\lambda^2 s^2}{4}\Bigg)\right\}. \tagaligneq \label{lem-fOU-cov-claim-statement-eq-appndx}
\end{align*}
In particular, if $H=1/2$, $\widehat{f}_{1/2,\lambda}$ reduces to the usual stationary Ornstein-Uhlenbeck covariance structure:
\[
\widehat{f}_{1/2,\lambda}(s)=\frac{\pi}{\lambda}e^{-\lambda s}. \tagaligneq \label{lem-fOU-cov-claim-statement-eq-H-is-12-appndx}
\]
\end{lemma}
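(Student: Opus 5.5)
The plan is to use evenness of the integrand and write $\widehat{f}_{H,\lambda}(s)=2\int_0^\infty \cos(sx)\,x^{1-2H}(\lambda^2+x^2)^{-1}\diff x$ for $s>0$; the case $s=0$ will follow by continuity, or directly from \autoref{appndx-lem-beta-to-gamma-reflection}. This integral is exactly the Mellin-type integral in \autoref{lem-mellin-cosbx-reciprocal-a2-x2}, with $z=2-2H$, $\nu=1$, $a=\lambda$ and $b=s$. The hypothesis $0<\mathrm{Re}(z)<1+2\mathrm{Re}(\nu)$ reads $0<2-2H<3$, which holds for $H\in(0,1)$. So $\widehat{f}_{H,\lambda}(s)$ equals twice the sum of the two terms on the right-hand side of that lemma, and the rest of the argument is simplification.

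\textbf{First term.} Here $z/2=1-H$ and $\nu-z/2=H$. The prefactor becomes $\lambda^{-2H}B(1-H,H)=\lambda^{-2H}\Gamma(1-H)\Gamma(H)=\lambda^{-2H}\pi/\sin(\pi H)$ by \autoref{lem-gamma-fn-relations}(i). The hypergeometric factor is ${}_1F_2(1-H;1-H,\tfrac12;w)$ with $w=\lambda^2s^2/4$. The matching upper and lower parameters cancel, leaving ${}_0F_1(;\tfrac12;w)=\cosh(2\sqrt{w})=\cosh(\lambda s)$. This is checked termwise, using $(1/2)_n n!=(2n)!/4^n$.

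\textbf{Second term.} Here $2\nu-z=2H$, $\tfrac12+\nu-\tfrac z2=H+\tfrac12$ and $1+\nu-\tfrac z2=H+1$. The term becomes
\[
\frac{\sqrt{\pi}\,2^{-2H-1}s^{2H}\,\Gamma(-H)}{\Gamma(H+\tfrac12)}\;{}_1F_2\left(1;H+\tfrac12,H+1;w\right).
\]
I then apply the duplication formula \autoref{lem-gamma-fn-relations}(ii) with $z=H$, which gives $\Gamma(H+\tfrac12)=\sqrt{\pi}\,2^{-2H}\Gamma(2H+1)/\Gamma(H+1)$. The coefficient therefore reduces to $\tfrac12\,\Gamma(-H)\Gamma(H+1)/\Gamma(2H+1)$. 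Next, $\Gamma(-H)\Gamma(1+H)=\pi/\sin(-\pi H)=-\pi/\sin(\pi H)$, by reflection with $z=-H$. Multiplying both terms by $2$ gives exactly \eqref{lem-fOU-cov-claim-statement-eq-appndx}.

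The main obstacle is verifying that \autoref{lem-mellin-cosbx-reciprocal-a2-x2} is used correctly, including sign and branch conventions; the Gamma bookkeeping in the second term is where errors creep in. As a check, I would take the case $H=\tfrac12$. There ${}_1F_2(1;1,\tfrac32;w)={}_0F_1(;\tfrac32;w)=\sinh(\lambda s)/(\lambda s)$ and $\Gamma(2)=1$, so the bracket becomes $\lambda^{-1}\cosh(\lambda s)-s\,\sinh(\lambda s)/(\lambda s)=\lambda^{-1}e^{-\lambda s}$. Since $\sin(\pi/2)=1$, this gives \eqref{lem-fOU-cov-claim-statement-eq-H-is-12-appndx}, and it agrees with the classical residue computation $\int_{\mathbb{R}} e^{isx}(\lambda^2+x^2)^{-1}\diff x=\pi e^{-\lambda s}/\lambda$. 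The same check confirms the overall sign and the factor $2$.
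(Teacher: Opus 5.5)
Your proposal is correct and follows the paper's proof essentially step for step. Both use evenness to reduce to $2\int_0^\infty$, apply the Mellin formula of \autoref{lem-mellin-cosbx-reciprocal-a2-x2} with $a=\lambda$, $b=s$, $\nu=1$, $z=2-2H$, simplify the Gamma prefactors by reflection and duplication, identify $\cosh(\lambda s)$ termwise, treat $s=0$ separately, and do the $\sinh$ computation for $H=1/2$. One cosmetic slip: when you say the second coefficient reduces to $\tfrac12\Gamma(-H)\Gamma(H+1)/\Gamma(2H+1)$ you drop the factor $s^{2H}$, though you clearly carry it into the final formula.
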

\begin{proof}
By the property of a Fourier transform of an even function, it holds that 
\[
\widehat{f}_{H,\lambda}(s)
=\int_{-\infty}^{\infty}\cos(sx) \frac{|x|^{1-2H}}{\lambda^2+x^2} \diff x
=2\int_{0}^{\infty}\cos(sx) \frac{x^{1-2H}}{\lambda^2+x^2} \diff x,
\]
whose expression coincides with the Mellin transform of $\cos(sx)(\lambda^2+x^2)^{-1}$.
For $s>0$, we can calculate this integral by letting $a\coloneqq \lambda>0$, $b\coloneqq s>0$, $\nu\coloneqq 1$, and $z\coloneqq 2-2H$ in \autoref{lem-mellin-cosbx-reciprocal-a2-x2}.
Note that the condition $0<\mathrm{Re}(z)<1+2\mathrm{Re}(\nu)=3$ required in \autoref{lem-mellin-cosbx-reciprocal-a2-x2} is satisfied, since $H\in(0,1)$ implies $z\in(0,2)$.
This gives, for $s>0$,
\begin{equation}
\begin{aligned}
\widehat{f}_{H,\lambda}(s)
=&\,
\lambda^{-2H}B\left(1-H,H\right)\,
{}_1F_2\left(1-H;1-H,\frac12;\frac{\lambda^2 s^2}{4}\right)\\
&+\frac{\sqrt{\pi}\, 2^{-2H} s^{2H}}{\Gamma\left(\frac12+H\right)}\Gamma\left(-H\right)\,
{}_1F_2\left(1;\frac12+H,1+H;\frac{\lambda^2 s^2}{4}\right).
\end{aligned}\label{f-hat-after-mellin-1}
\end{equation}
Here, note that $B(1-H,H)=\Gamma(1-H)\Gamma(H)=\pi/\sin(\pi H)$ holds by \textit{(i)} of \autoref{lem-gamma-fn-relations}.
In addition,
\begin{align*}
{}_1F_2\left(1-H;1-H,\frac12;\frac{\lambda^2 s^2}{4}\right)
&=\sum_{n=0}^{\infty}\frac{\Gamma\left(1/2\right)}{\Gamma\left(n+1/2\right)}\left(\lambda s\right)^{2n}\frac{2^{-2n}}{n!}\\
&=\sum_{n=0}^{\infty}\frac{\Gamma\left(1/2\right)}{\Gamma\left(2n+1\right)}\frac{2^{2n}\Gamma(n+1)}{\sqrt{\pi}}\left(\lambda s\right)^{2n}\frac{2^{-2n}}{n!}\\
&=\sum_{n=0}^{\infty}\frac{\left(\lambda s\right)^{2n}}{\left(2n\right)!}\\
&=\cosh\left(\lambda s\right), \tagaligneq \label{appndx-fOU-cov-lem-cosh-expansion-eq}
\end{align*}
holds from the relation $\Gamma(n+1/2)=\sqrt{\pi}\,2^{-2n}\Gamma(2n+1)/\Gamma(n+1)$ given in \textit{(ii)} of \autoref{lem-gamma-fn-relations}, and the properties $\Gamma(1/2)=\sqrt{\pi}$ and $\Gamma(n+1)=n!$.
This proves that the first term of \eqref{f-hat-after-mellin-1} matches the first term of \eqref{lem-fOU-cov-claim-statement-eq-appndx} as claimed.
For the second term of \eqref{f-hat-after-mellin-1}, again, \textit{(i)} and \textit{(ii)} of \autoref{lem-gamma-fn-relations} imply
\begin{align*}
\frac{\sqrt{\pi}\, 2^{-2H} s^{2H}}{\Gamma\left(\frac12+H\right)}\Gamma\left(-H\right)
&=\frac{s^{2H}}{\Gamma\left(2H+1\right)}\Gamma(1+H)\Gamma(-H)
=\frac{-\pi s^{2H}}{\sin(\pi H)\Gamma\left(2H+1\right)},
\end{align*}
where, in the last equality, the reflection formula $\Gamma\left(1-(-H)\right)\Gamma(-H)=\pi/\sin(-\pi H)=-\pi/\sin(\pi H)$ is applied.
Therefore, we have proved that, for $s>0$, $\widehat{f}_{H,\lambda}(s)$ equals \eqref{lem-fOU-cov-claim-statement-eq-appndx} as claimed.
When $s=0$, $\widehat{f}_{H,\lambda}(0)$ reduces to
\[
\widehat{f}_{H,\lambda}(0)
=
2\int_{0}^{\infty}\frac{x^{1-2H}}{\lambda^2+x^2} \diff x
=
\lambda^{-2H}
\int_0^{\infty} \frac{t^{-H}}{1+t}\diff t
=
\lambda^{-2H}
\frac{\pi}{\sin(\pi H)}, \tagaligneq \label{lem-appndx-fou-cov-when-s-is-0-eq}
\]
which holds from the variable substitution $t\coloneqq x^2/\lambda^2$ giving $\diff t =(2x\diff x)/\lambda^2$, applying \autoref{appndx-lem-beta-to-gamma-reflection} with $a=1-H$ (the condition $a\in(0,1)$ required in \autoref{appndx-lem-beta-to-gamma-reflection} is satisfied since $H\in(0,1)$),
and the fact that $\sin\left(\pi(1-H)\right)=\sin(\pi H)$.
It easily follows that as $s\downarrow 0$, the quantity \eqref{lem-fOU-cov-claim-statement-eq-appndx} converges to \eqref{lem-appndx-fou-cov-when-s-is-0-eq}, since $\cosh(\lambda s)$ in the first term and $s^{2H}$ in the second term of \eqref{lem-fOU-cov-claim-statement-eq-appndx} converge to $1$ and $0$, respectively. 
Therefore, the expression \eqref{lem-fOU-cov-claim-statement-eq-appndx} holds for all $s\geq 0$.
To show how the form \eqref{lem-fOU-cov-claim-statement-eq-H-is-12-appndx} for the special case $H=1/2$ can be obtained from \eqref{lem-fOU-cov-claim-statement-eq-appndx}, observe that since $\Gamma(n+1)=n\Gamma(n)$ implies $\Gamma(3/2)=\Gamma(1/2)/2$ and $\Gamma(n+3/2)=(n+1/2)\Gamma(n+1/2)$, we have that
the ${}_1F_2$ in the second term of \eqref{lem-fOU-cov-claim-statement-eq-appndx} becomes
\begin{align*}
{}_1F_2\left(1;1,\frac32;\frac{\lambda^2 s^2}{4}\right)
&=\sum_{n=0}^{\infty}\frac{\Gamma(3/2)}{\Gamma(n+3/2)}(\lambda s)^{2n}\frac{2^{-2n}}{n!}\\
&=\sum_{n=0}^{\infty}
\left(\frac{1}{2n+1}\right)\cdot
\frac{\Gamma\left(1/2\right)}{\Gamma\left(n+1/2\right)}\left(\lambda s\right)^{2n}\frac{2^{-2n}}{n!}\\
&=\sum_{n=0}^{\infty}
\left(\frac{1}{2n+1}\right)\cdot
\frac{\left(\lambda s\right)^{2n}}{\left(2n\right)!}\\
&=\sum_{n=0}^{\infty}
\frac{\left(\lambda s\right)^{2n+1}}{\left(2n+1\right)!}
\cdot \frac{1}{\lambda s}
\\
&=\frac{\sinh\left(\lambda s\right)}{\lambda s},
\end{align*}
similarly to the derivation of \eqref{appndx-fOU-cov-lem-cosh-expansion-eq}.
The rest follows since $\exp(-\lambda s)=\cosh\left(\lambda s\right)-\sinh\left(\lambda s\right)$.
\end{proof}
\subsection{Supplementary discussion on infinity-norm bound assumption} \label{appndx-inf-norm-bnd}
\begin{proposition}[{\cite[Theorem 4.2]{Nou12}}] \label{tail-bnd-gauss-proc-ivan-nourdin}
Let $G=(G_t)_{t\in[0,1]}$ be a centered and continuous Gaussian process. Let $\widehat{\sigma}^2\coloneqq \sup_{t\in[0,1]}\ee{G_t^2}$. Then $\widehat{m}\coloneqq\ee{\sup_{u\in[0,1]}G_u}$ is finite and we have, for all $\Xi>\widehat{m}$,
\[
\pp{\sup_{u\in[0,1]}G_u\geq\Xi}\leq e^{-\frac{(\Xi-\widehat{m})^2}{2\widehat{\sigma}^2}}.
\]
\end{proposition}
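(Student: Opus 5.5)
The statement is the Borell--TIS inequality, so I would prove it by Gaussian concentration for Lipschitz functions. The plan is to reduce to a finite-dimensional Gaussian vector, apply concentration there, and then pass to the limit using continuity of $G$.

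\textbf{Finite reduction.} Fix a finite set $F=\{t_1,\dots,t_m\}\subset[0,1]$ and write $(G_{t_1},\dots,G_{t_m})\stackrel{d}{=}A Z$ with $Z\sim\mathcal{N}(0,I_m)$ and $AA^T$ the covariance matrix (e.g.\ $A$ the positive semi-definite square root from \autoref{defn-sigma12}). Define $\phi(z)\coloneqq\max_i (Az)_i$. For each $i$, $|(Az)_i-(Az')_i|\leq\norm{A_{i,\cdot}}\,\norm{z-z'}$ and $\norm{A_{i,\cdot}}^2=\ee{G_{t_i}^2}\leq\widehat{\sigma}^2$. Hence $\phi$ is $\widehat{\sigma}$-Lipschitz.

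\textbf{Concentration.} The Gaussian concentration inequality for Lipschitz functions of a standard Gaussian vector (Tsirelson--Ibragimov--Sudakov, e.g.\ via the Gaussian log-Sobolev inequality and Herbst's argument) gives
\[
\pp{\phi(Z)-\ee{\phi(Z)}\geq r}\leq e^{-r^2/(2\widehat{\sigma}^2)}, \quad r>0.
\]
So with $M_F\coloneqq\max_{t\in F}G_t$ and $m_F\coloneqq\ee{M_F}$, we get $\pp{M_F\geq m_F+r}\leq e^{-r^2/(2\widehat{\sigma}^2)}$.

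\textbf{Passage to the supremum (main obstacle).} The delicate step is showing $\widehat{m}<\infty$. Take finite sets $F_k\uparrow D$, where $D$ is the dyadic rationals in $[0,1]$. By path continuity, $M_{F_k}\uparrow\sup_{u\in[0,1]}G_u$ almost surely, and the supremum is finite a.s.

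Choose a median-type level $\mu$ with $\pp{\sup G\leq\mu}\geq 3/4$. Then $\pp{M_{F_k}\leq\mu}\geq 3/4$ for every $k$. If $m_{F_k}$ exceeded $\mu+r_0$, where $e^{-r_0^2/(2\widehat{\sigma}^2)}<1/4$, the lower-tail version of the concentration bound (applied to $-\phi$) would give $\pp{M_{F_k}\leq\mu}<1/4$, a contradiction. So $m_{F_k}\leq\mu+r_0$ uniformly in $k$.

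Also $m_{F_k}\geq\ee{G_{t_1}}=0$, so $M_{F_k}\geq G_{t_1}$ gives an integrable lower bound. Monotone convergence then yields $\widehat{m}=\lim_k m_{F_k}<\infty$.

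\textbf{Conclusion.} For $\Xi>\widehat{m}$, the event $\{M_{F_k}>\Xi'\}$ increases to $\{\sup G>\Xi'\}$ for any fixed $\Xi'$. Moreover $m_{F_k}\leq\widehat{m}$, so
\[
\pp{M_{F_k}>\Xi'}\leq e^{-(\Xi'-\widehat{m})^2/(2\widehat{\sigma}^2)}.
\]
Letting $k\to\infty$ and then $\Xi'\uparrow\Xi$, using right-continuity of the bound in $\Xi'$, gives the claim with $\geq$.
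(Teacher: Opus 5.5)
Your proof is correct. The paper gives no proof of its own here: the proposition is quoted from \cite[Theorem 4.2]{Nou12} and used only to derive \autoref{bound-xi-on-norm-x-infty}. So there is no route in the paper to compare against. Your proposal supplies a self-contained derivation along the standard Tsirelson--Ibragimov--Sudakov lines:
\begin{itemize}
\item the $\widehat{\sigma}$-Lipschitz bound for $z\mapsto\max_i(Az)_i$, obtained from $\norm{A_{i,\cdot}}^2=\ee{G_{t_i}^2}$;
\item the quantile argument giving $\sup_k m_{F_k}<\infty$;
\item monotone convergence, using the integrable minorant $G_{t_1}$;
\item the limit taken through the strict events $\{M_{F_k}>\Xi'\}$.
\end{itemize}
The last point is the correct way to handle the boundary, since $\{M_{F_k}\ge\Xi\}$ need not increase pathwise to $\{\sup G\ge\Xi\}$: the supremum may be attained off the dyadics.

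The citation buys the paper brevity. Your argument buys transparency about where path continuity enters, namely the reduction to a countable dense set and $\sup G<\infty$ almost surely. Its only non-elementary input is the concentration inequality for Lipschitz functions with the sharp constant $1/(2L^2)$, obtained from log-Sobolev plus Herbst or from Gaussian isoperimetry. A non-sharp version, such as the Maurey--Pisier bound with exponent $2r^2/(\pi^2L^2)$, would not give the stated exponent.

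There are two small fixes.
\begin{itemize}
\item Since you let $\Xi'\uparrow\Xi$, what you need is left-continuity (indeed continuity) of $\Xi'\mapsto e^{-(\Xi'-\widehat{m})^2/(2\widehat{\sigma}^2)}$, not right-continuity. Explicitly, $\pp{\sup G\ge\Xi}\le\pp{\sup G>\Xi'}$ for every $\Xi'\in(\widehat{m},\Xi)$.
\item You tacitly assume $\widehat{\sigma}<\infty$ when choosing $r_0$, and the statement does not assert this. Your own $\mu$ gives it: $\pp{G_t\le\mu}\ge 3/4$ for every $t$ forces $\sqrt{\ee{G_t^2}}\,F_{Z_1}^{-1}(3/4)\le\mu$, where $F_{Z_1}$ is the standard normal distribution function. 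The degenerate case $\widehat{\sigma}=0$, i.e.\ $G\equiv 0$, is trivial.
\end{itemize}
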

\autoref{tail-bnd-gauss-proc-ivan-nourdin} directly translates to the following.
\begin{proposition} \label{bound-xi-on-norm-x-infty}
Let $G,\widehat{\sigma},\widehat{m}$ be as defined in \autoref{tail-bnd-gauss-proc-ivan-nourdin}. For $\beta\in[0,1)$, and
\[
\Xi\coloneqq \widehat{m}+\widehat{\sigma}\sqrt{2\log\left(\frac{2}{1-\beta}\right)},\quad \text{we have }\quad \pp{\sup_{u\in[0,1]}\lvert G_u\rvert \leq \Xi }
\geq \beta.
\]
\end{proposition}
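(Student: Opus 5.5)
The plan is to reduce the two-sided event to two one-sided events and apply \autoref{tail-bnd-gauss-proc-ivan-nourdin} to each through a union bound. First, $\sup_{u\in[0,1]}\lvert G_u\rvert = \max\{\sup_u G_u,\ \sup_u(-G_u)\}$. Consequently
\[
\pp{\sup_{u\in[0,1]}\lvert G_u\rvert > \Xi}\leq \pp{\sup_{u\in[0,1]} G_u\geq \Xi}+\pp{\sup_{u\in[0,1]} (-G_u)\geq \Xi}.
\]
The process $-G$ is again centered, continuous and Gaussian. Since $-G\stackrel{d}{=}G$ as processes (a centered Gaussian law is determined by its covariance, which is unchanged under negation), $-G$ has the same $\widehat{\sigma}$ and the same $\widehat{m}=\ee{\sup_u(-G_u)}=\ee{\sup_u G_u}$. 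Hence \autoref{tail-bnd-gauss-proc-ivan-nourdin} applies to both terms with identical constants.

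Next, the chosen $\Xi$ must satisfy $\Xi>\widehat{m}$ so that \autoref{tail-bnd-gauss-proc-ivan-nourdin} can be invoked. This holds because $\beta<1$ gives $2/(1-\beta)\geq 2>1$, so $\log(2/(1-\beta))>0$. For $\widehat{\sigma}>0$ this makes the added term strictly positive. The degenerate case $\widehat{\sigma}=0$ means $G\equiv 0$ almost surely; then $\widehat{m}=0$, $\Xi=0$, and the claim is trivial, so it can be set aside. Substituting $\Xi-\widehat{m}=\widehat{\sigma}\sqrt{2\log(2/(1-\beta))}$ into the tail bound makes each term equal to
\[
\exp\left(-\log\tfrac{2}{1-\beta}\right)=\tfrac{1-\beta}{2}.
\]
Summing the two terms gives $\pp{\sup_u\lvert G_u\rvert>\Xi}\leq 1-\beta$, and taking the complement yields the claimed bound $\geq\beta$.

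The only step needing care is justifying that the Borell–TIS constants for $-G$ coincide with those for $G$. The symmetry-in-law argument above handles this. Finiteness of $\widehat{m}$ and $\widehat{\sigma}$ is part of the conclusion of \autoref{tail-bnd-gauss-proc-ivan-nourdin}, so no further obstacle is expected.
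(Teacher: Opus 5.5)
Your proposal is correct and follows essentially the same route as the paper: a union bound over $\sup_u G_u$ and $\sup_u(-G_u)$, then the symmetry $G\stackrel{d}{=}-G$ of a centered Gaussian process, then the Borell--TIS tail bound, with each term contributing $(1-\beta)/2$. Your explicit check that $\Xi>\widehat{m}$ and your separate treatment of the degenerate case $\widehat{\sigma}=0$ are small points the paper's proof leaves implicit.
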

\begin{proof}
For $S$ being a stochastic process, denote $S^{*}\coloneqq \sup_{u\in[0,1]}S_u$ and $S_{*}\coloneqq \inf_{u\in[0,1]}S_u$.
Since $G$ is centered, $G\stackrel{d}{=}(-G)$ as both produce the same covariance matrix $\Sigma$ on arbitrary finite grids, i.e.\ $\ee{(-G_{t_i})(-G_{t_j})}=\ee{G_{t_i}G_{t_j}}$. By subadditivity of probability measures, $\pp{|G|^*>\Xi}\leq \pp{G^*>\Xi}+\pp{G_*<-\Xi} = 2 \pp{G^*>\Xi}$, since $\pp{G_*<-\Xi}=\pp{(-G)^*>\Xi}=\pp{G^*>\Xi}$.
Therefore,
$\pp{|G|^*\leq \Xi}=1-\pp{|G|^*> \Xi}\geq 1-2\pp{G^*>\Xi} \geq 1-2\exp(-(\Xi-\widehat{m})^2/2\widehat{\sigma}^2)=\beta$, where the last inequality follows from \autoref{tail-bnd-gauss-proc-ivan-nourdin}.
\end{proof}
The dependence of $\Xi$ on $\beta$ is asymptotically the same as $F_{Z_1}^{-1}(\beta)\approx \sqrt{-2\log(2\sqrt{\pi}(1-\beta))}$ when $F_{Z_1}^{-1}$ denotes the inverse of a one-dimensional standard normal CDF (see e.g.\ \cite[equation (2)]{BEJ76}\footnote{The asymptotic expansion in \cite{{BEJ76}} uses the inverse error function $\erf^{-1}$, whose relation to the inverse of a standard normal CDF is: $F_{Z_1}^{-1}(\beta)=\sqrt{2}\erf^{-1}(2\beta-1)$.}), 
which means the familiar sigma-multiple rules for $1$-dimensional standard Gaussian distribution apply. 
To accurately model even the occurrence of extreme events, largest simulations require a Gaussian random number generator to be able to sample $\pm 10 \sigma$ events which corresponds to tail probability $1-\beta \approx 1.52\times 10^{-23}$ \cite[Section 1]{TLLV07}.
Even so, the contribution of $\beta$ to $\Xi$, the term $\sqrt{2\log(2/(1-\beta))}$, is roughly $10$. 
Given that Gaussian (pseudo-)random number generators cannot generate samples that are beyond the tail probability of machine precision limits, samples that can exceed $\Xi$ given such a $\beta$ are unobservable in practice.

The factor
$\widehat{\sigma}=\sup_{t\in[0,1]}\ee{G_t^2}$, which is the supremum of variance, usually attains its supremum at $t=1$. When $G$ is std-fBM or RL-fBM, $\widehat{\sigma}=1$\footnote{The RL-fBM has to be defined with the coefficient $\sqrt{2H}$ in the fractional integral for this to hold, cf.\ \autoref{defn-of-rl-fbm-bfg16-ver}.}. 
For a stationary fOU process, the supremum of variance is $\sigma^2\lambda^{-2H}\Gamma(2H+1)/2$ when $H\in(0,1)$.
If $\lambda=\sigma=1$, then $\widehat{\sigma}=\Gamma(2H+1)/2\leq 1$ for all $H\in(0,1)$.

The factor $\widehat{m}$ is guaranteed to be of finite value as proven in the original proof of \autoref{tail-bnd-gauss-proc-ivan-nourdin} (\cite[Theorem 4.2]{Nou12}). However, the following result might be useful for getting an idea about the size of $\widehat{m}$.
\begin{proposition}[{\cite[Theorem 1 (ii)]{BMNZ15}}]
Let $G=(G_t)_{t\in[0,1]}$ be a zero-mean, continuous Gaussian process. If there exist $\mathcal{C}>0$ and $\mathcal{H}\in(0,1)$ such that $\sqrt{\ee{(G_t-G_s)^2}}\leq \mathcal{C}\lvert t-s\rvert^{\mathcal{H}}$ for any $t,s\in[0,1]$ then
\[
\ee{\sup_{u\in[0,1]}G_u}<L\mathcal{C}\sqrt{\frac{2\pi}{\mathcal{H}\log^32}} \mathrm{erfc}\sqrt{\frac{\log 2}{2}H}<16.3\frac{\mathcal{C}}{\sqrt{\mathcal{H}}},
\]
where $L<3.75$ is a constant from generic chaining bound and $\mathrm{erfc}$ is the complementary error function.
\end{proposition}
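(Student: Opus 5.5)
The plan is to bound $\ee{\sup_{u\in[0,1]}G_u}$ by a Talagrand-type generic chaining inequality. We then specialise it to the canonical metric $d(s,t)\coloneqq\sqrt{\ee{(G_t-G_s)^2}}$, using the Hölder hypothesis $d(s,t)\leq\mathcal{C}|t-s|^{\mathcal{H}}$. First, continuity and separability reduce the supremum to a countable dense set, so it suffices to treat finite index sets and pass to the limit by monotone convergence. For a finite $T\subset[0,1]$, the generic chaining bound gives
\[
\ee{\sup_{t\in T}G_t}\leq L\sup_{t\in T}\sum_{k\geq0}2^{k/2}\,d(t,T_k),
\]
for any admissible sequence $(T_k)$ with $|T_0|=1$ and $|T_k|\leq 2^{2^k}$. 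Here $L<3.75$ is the chaining constant, which we take as given. A Dudley entropy-integral form, $\ee{\sup G}\leq L\int_0^\infty\sqrt{\log N(\epsilon)}\diff\epsilon$, would serve equally well, and it is the form that naturally produces the erfc.

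Next we build the admissible sequence, or equivalently control the covering numbers, explicitly. Under the Hölder bound, an $\epsilon$-net of $([0,1],d)$ is obtained from a uniform grid of mesh $2(\epsilon/\mathcal{C})^{1/\mathcal{H}}$. This gives $N(\epsilon)\leq\ceil{\tfrac12(\mathcal{C}/\epsilon)^{1/\mathcal{H}}}$, and $N(\epsilon)=1$ once $\epsilon\geq\mathcal{C}2^{-\mathcal{H}}$. Substituting $\epsilon=\mathcal{C}2^{-\mathcal{H}}e^{-\mathcal{H}v}$, or equivalently choosing $T_k$ to be grids of size $2^{2^k}$, turns the bound into an integral of $\sqrt{\log(\cdot)}$ against an exponential density. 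The substitution $w^2=\dots$ then gives the Gaussian tail integral
\[
\int_a^\infty \sqrt{u}\,e^{-\mathcal{H}u}\diff u\ \text{-type terms},
\]
which evaluate to $\sqrt{\pi}/(2\mathcal{H}^{3/2})\cdot\mathrm{erfc}(\sqrt{\mathcal{H}a})$ plus boundary terms. Tracking the base-2 logarithms coming from $|T_k|\leq2^{2^k}$ yields the factor $\sqrt{2\pi/(\mathcal{H}\log^3 2)}$ and the argument $\sqrt{(\log2)\mathcal{H}/2}$ in erfc. We expect this to match the stated first inequality. (I read the ``$H$'' inside the erfc in the statement as $\mathcal{H}$.)

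The final numerical inequality then follows from crude bounds. We use $\mathrm{erfc}\leq1$ and $L<3.75$, giving
\[
3.75\sqrt{2\pi/\log^3 2}\approx3.75\times4.35\approx16.3 .
\]
Strictness is inherited from $L<3.75$ and from $\mathrm{erfc}(x)<1$ for $x>0$.

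The main obstacle is the bookkeeping of constants in the second step. The exact shape $\sqrt{2\pi/(\mathcal{H}\log^32)}\,\mathrm{erfc}(\cdot)$ depends on how the admissible sets are chosen and where the sum is truncated. I would first try dyadic grids $T_k$ of cardinality $2^{2^k}$ with spacing $2^{-2^k}$, so that $d(t,T_k)\leq\mathcal{C}2^{-\mathcal{H}2^k}$. The resulting sum $\sum_k2^{k/2}2^{-\mathcal{H}2^k}$ would then be bounded by the corresponding integral over $k$, which after the substitution $x=2^k$ becomes $\int x^{-1/2}2^{-\mathcal{H}x}\diff x/\log2$. That integral is exactly the erfc expression above. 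If the constants do not come out exactly right, I would fall back on the cited \cite[Theorem 1 (ii)]{BMNZ15}, since only the final $16.3\,\mathcal{C}/\sqrt{\mathcal{H}}$ bound is used in the paper.
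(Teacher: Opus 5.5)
Your fallback is exactly what the paper does: its proof is only a pointer to equation (7) of \cite{BMNZ15} and the argument after Remarks 1--2 there. Your self-contained skeleton is also the right one. Take $T_0=\{1/2\}$ (it must be a singleton, not a set of $2^{2^0}=2$ points), and for $k\ge1$ take grids $T_k$ of $2^{2^k}$ points with spacing $2^{-2^k}$. Then $d(t,T_k)\le\mathcal{C}2^{-\mathcal{H}2^k}$ for every $k\ge0$, so $\ee{\sup_{u\in[0,1]}G_u}\le L\mathcal{C}\sum_{k\ge0}f(k)$ with $f(x)=2^{x/2}2^{-\mathcal{H}2^x}$.

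The gap is the next step, the one meant to produce the constant. The left-endpoint sum $\sum_{k\ge0}f(k)$ is not dominated by $\int_0^\infty f$: for decreasing $f$ the inequality goes the other way, and here $f$ is not even monotone, since it increases while $2^k<1/(2\mathcal{H}\log 2)$. For example, at $\mathcal{H}=1/2$ the sum is about $2.11$, whereas $\int_0^\infty f=\sqrt{\pi/(\mathcal{H}\log^32)}\,\mathrm{erfc}(\sqrt{\mathcal{H}\log2})\approx1.76$. Moreover, your integral $\frac{1}{\log2}\int_a^\infty x^{-1/2}2^{-\mathcal{H}x}\diff x=\sqrt{\pi/(\mathcal{H}\log^32)}\,\mathrm{erfc}(\sqrt{a\mathcal{H}\log2})$ equals the stated expression for no $a$. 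Matching the erfc argument forces $a=1/2$, and then a factor $\sqrt2$ is missing. The $\log^32$ is indeed base-2 bookkeeping, but the $\sqrt2$ and the $\tfrac12$ inside the erfc come from a half-step-shifted comparison that bounds each factor separately. For $x\in[k,k+1]$ one has $2^{x/2}\ge2^{k/2}$ and $2^{-\mathcal{H}2^{x-1}}\ge2^{-\mathcal{H}2^k}$, hence
\[
\sum_{k\ge0}2^{k/2}2^{-\mathcal{H}2^k}<\int_0^\infty 2^{x/2}2^{-\mathcal{H}2^{x-1}}\diff x=\frac{\sqrt2}{\log 2}\int_{1/2}^\infty y^{-1/2}2^{-\mathcal{H}y}\diff y=\sqrt{\frac{2\pi}{\mathcal{H}\log^3 2}}\,\mathrm{erfc}\Big(\sqrt{\tfrac{\log 2}{2}\,\mathcal{H}}\Big),
\]
after substituting $y=2^{x-1}$ and then $y=w^2$. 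Multiplying by $L\mathcal{C}$ gives the first inequality. Its strictness comes from this comparison, not from $L<3.75$.

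Two smaller corrections. First, drop the Dudley remark. Dudley's bound does not carry the generic-chaining constant $L<3.75$. Also, with your covering numbers and the substitution $\epsilon=\mathcal{C}2^{-\mathcal{H}}e^{-\mathcal{H}v}$, the entropy integral starts at $v=0$ and gives a $\Gamma(3/2)$-type constant with no erfc at all. A positive lower limit would instead leave a boundary term that is absent from the statement. The erfc comes from the chaining sum starting at level $k=0$, which is the lower limit $y=1/2$ above. Second, $3.75\times4.35\approx16.31$ overshoots $16.3$. With $\sqrt{2\pi/\log^32}\approx4.3436$ you get $3.75\times4.3436\approx16.29<16.3$, which together with $\mathrm{erfc}\le1$ gives the last inequality.
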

\begin{proof}
We refer the reader to equation (7) of \cite{BMNZ15} and the proof following Remark 1\&2 therein. 
\end{proof}
For the case of std-fBM, the condition $\ee{(G_t-G_s)^2}\leq \mathcal{C}^2\lvert t-s\rvert^{2\mathcal{H}}$ follows immediately from its covariance structure and stationary increments, with $\mathcal{C}=1$ and $\mathcal{H}=H$, where $H$ is the Hurst index. 
Therefore, $\widehat{m}=\ee{\sup_{u\in[0,1]}B^H_u}<16.3/\sqrt{H}$. 
For the case of RL-fBM, although it is not as straightforward, one can still find a constant $\mathcal{C}$ and have the condition satisfied similarly with $\mathcal{H}=H$, see e.g.\ \cite[equations (18),(19)]{Lim01} which derives $\mathcal{C}^2$ up to a multiplicative constant (the difference lies in the choice of the coefficient we make to define the fractional integral representation for RL-fBMs).
For the case of stationary fOU, we can expand $\ee{(Y^H_t-Y^H_s)^2}=\ee{(Y^H_t)^2}+\ee{(Y^H_s)^2}-2\ee{Y^H_tY^H_s}=2\left(\ee{(Y^H_0)^2}-\ee{Y^H_0Y^H_{|t-s|}}\right)\eqqcolon r(|t-s|)$, which follows from stationarity. 
By \autoref{cov-of-fou} and \autoref{Fourier-trans-of-fOU-cov-int},
this quantity reduces to a function of the form $r(|t-s|)= C_{\sigma,H}\left\{ \lambda^{-2H}(1-\cosh(\lambda |t-s|))+\frac{|t-s|^{2H}}{\Gamma(2H+1)}{}_1F_2\left(1;H+\frac12,H+1;\frac{\lambda^2 |t-s|^2}{4}\right) \right\}$, where $C_{\sigma,H}=\sigma^2\,\Gamma(2H+1)>0$ is a constant.
The series expansion (as in the proof of \autoref{Fourier-trans-of-fOU-cov-int-appndx}) implies that the first term of $r(|t-s|)$ is of order $O(|t-s|^2)$ and the second term is of order $O(|t-s|^{2H})$.
Since $|t-s|\in[0,1]$ and $H\in(0,1)$, the latter always dominates, and hence $r(|t-s|)=O(|t-s|^{2H})$ on this domain.
Then, we can maximise the quotient $\widetilde{r}(|t-s|)\coloneqq r(|t-s|)/|t-s|^{2H}$ over $|t-s|\in[0,1]$ and let $\mathcal{C}^2\coloneqq \sup_{\tau\in[0,1]}\widetilde{r}(\tau)$ (a standard argument). This implies that
the bound $\ee{(Y^H_t-Y^H_s)^2}\leq \mathcal{C}^2\lvert t-s\rvert^{2\mathcal{H}}$ holds with, again, $\mathcal{H}=H$, similarly to the fBM cases. Here, $\sup_{\tau\in[0,1]}\widetilde{r}(\tau)$ is guaranteed to exist, since $\widetilde{r}(\tau)\to\sigma^2$ as $\tau\downarrow 0$, ensuring finiteness, and hence $\widetilde{r}$ is a continuous function on the compact interval $[0,1]$. A detailed calculation is omitted for brevity.
\end{document}